\theoremstyle{definition}
\newtheorem{theorem}{Theorem}[section]
\newtheorem{corollary}[theorem]{Corollary}
\newtheorem{lemma}[theorem]{Lemma}
\newtheorem{proposition}[theorem]{Proposition}
\newtheorem{definition}[theorem]{Definition}
\newtheorem{example}[theorem]{Example}
\newtheorem{example*}[theorem]{Example*}
\newtheorem{examples*}[theorem]{Examples*}
\newtheorem{remark}[theorem]{Remark}
\newtheorem{remark*}[theorem]{Remark*}
\newtheorem*{theorem*}{Theorem}
\newtheorem*{corollary*}{Corollary}
\newtheorem*{lemma*}{Lemma}
\newtheorem*{proposition*}{Proposition}
\tikzstyle{dot}=[inner sep=0.3mm, minimum width=2mm, minimum height=2mm, draw, shape=circle, font={\footnotesize}, tikzit fill=magenta]
\tikzstyle{white dot}=[dot, fill=white, text depth=-0.2mm, tikzit category=ZH-pf]
\tikzstyle{gray dot}=[dot, fill={rgb,255: red,180; green,180; blue,180}, text depth=-0.2mm, tikzit category=ZH-pf]
\tikzstyle{gray phase dot}=[gray dot, fill={rgb,255: red,180; green,180; blue,180}, tikzit fill=magenta]
\tikzstyle{hadamard}=[fill=white, draw, inner sep=0.6mm, minimum height=1.5mm, minimum width=1.5mm, shape=rectangle, tikzit shape=rectangle, tikzit category=ZH-pf]
\tikzstyle{small hadamard}=[fill=white, draw, inner sep=0.6mm, minimum height=1.5mm, minimum width=1.5mm, tikzit shape=rectangle]
\tikzstyle{halfscalar}=[star, fill=black, draw=black, minimum size=8pt, inner sep=0pt]
\tikzstyle{box}=[shape=rectangle, text height=1.5ex, text depth=0.25ex, yshift=0.2mm, fill=white, draw=black, minimum height=3mm, minimum width=5mm, font={\small}]
\tikzstyle{Z dot}=[inner sep=0mm, minimum size=2mm, shape=circle, draw=black, fill=zx_green]
\tikzstyle{Z phase dot}=[minimum size=5mm, font={\footnotesize\boldmath}, shape=rectangle, rounded corners=2mm, inner sep=0.2mm, outer sep=-2mm, scale=0.8, tikzit shape=circle, draw=black, fill=zx_green, tikzit draw=blue]
\tikzstyle{X dot}=[Z dot, shape=circle, draw=black, fill=zx_red]
\tikzstyle{X phase dot}=[Z phase dot, tikzit shape=circle, tikzit draw=blue, fill=zx_red, font={\footnotesize\color{black}\boldmath}]
\tikzstyle{zxnode}=[inner sep=0.3mm, minimum width=2mm, minimum height=2mm, draw, shape=circle, font={\footnotesize}]
\tikzstyle{gn}=[zxnode, fill=zx_green]
\tikzstyle{rn}=[zxnode, fill=zx_red]
\tikzstyle{H box}=[rectangle, fill=yellow, draw=black, xscale=1, yscale=1, font={\small}, inner sep=0.75pt, minimum width=0.15cm, minimum height=0.15cm, tikzit shape=rectangle]
\tikzstyle{ug}=[regular polygon, regular polygon sides=3, fill=zx_red, draw=black, inner sep=0pt, minimum width=1em, tikzit draw=blue]
\tikzstyle{st}=[star, star points=5, fill=white, draw=black, inner sep=1.2pt, line width=1.2pt, tikzit fill=blue, tikzit draw=red, tikzit category=ZH-pf]
\tikzstyle{triangle}=[regular polygon, regular polygon sides=3, fill=white, draw=black, inner sep=0pt, minimum width=1em, tikzit draw=blue, tikzit category=ZH-pf]
\tikzstyle{not}=[fill={rgb,255: red,180; green,180; blue,180}, draw=black, shape=circle, font={$\neg$}, dot]
\tikzstyle{bbindex}=[font={\color{blue}\footnotesize}]
\tikzstyle{wide point}=[fill=white,draw,shape=isosceles triangle,shape border rotate=-90,isosceles triangle stretches=true,inner sep=0pt,minimum width=1.5cm,minimum height=6.12mm,yshift=-0.0mm]
\tikzstyle{medium gray box}=[semilarge box,fill={rgb,255: red,180; green,180; blue,180}]
\tikzstyle{small box}=[rectangle,inline text,fill=white,draw,minimum height=5mm,yshift=-0.5mm,minimum width=5mm,font=\small]
\tikzstyle{small gray box}=[small box,fill={rgb,255: red,180; green,180; blue,180}]
\tikzstyle{medium box}=[rectangle,inline text,fill=white,draw,minimum height=5mm,yshift=-0.5mm,minimum width=8mm,font=\small]
\tikzstyle{wire label}=[font=\footnotesize, tikzit fill=blue, anchor=west, shape=rectangle, inner sep=1pt, xshift=-1mm]
\tikzstyle{gray}=[-, draw={blue!60!white}, tikzit draw=blue]
\tikzstyle{blue}=[-, draw={blue!60!white}, tikzit draw=blue]
\tikzstyle{brace edge}=[-, tikzit draw=blue, decorate, decoration={brace,amplitude=1mm,raise=-1mm}]
\tikzstyle{diredge}=[->]
\tikzstyle{not edge}=[-, dashed, dash pattern=on 2pt off 1.5pt, thick, draw={rgb,255: red,255; green,68; blue,68}]
\tikzstyle{Z}=[white dot]
\tikzstyle{wire}=[none]
\tikzstyle{X}=[gray dot]
\tikzstyle{bbox}=[]
\tikzstyle{simple}=[]
\tikzstyle{quanto}=[]
\newcommand\etc{etc\@ifnextchar.{}{.\@}\xspace}
\newcommand{\intf}[1]{\left\llbracket #1 \right\rrbracket} 
\newcommand{\bra}[1]{\ensuremath{\left\langle #1 \right|}\xspace}
\newcommand{\ket}[1]{\ensuremath{\left|  #1 \right\rangle}\xspace}
\newcommand{\braket}[2]{\ensuremath{\langle#1|#2\rangle}\xspace}
\newcommand{\ketbra}[2]{\ensuremath{\ket{#1}\!\bra{#2}}\xspace}
\newcommand{\abs}[1]{\ensuremath{|#1|}\xspace}
\newcommand{\C}{\mathbb{C}}
\newcommand{\N}{\mathbb{N}}
\newcommand{\zh}{\text{ZH}\xspace}
\newcommand{\ZHR}{\ensuremath{\text{ZH}_R}\xspace}
\newcommand{\ZHRstar}{\ensuremath{\text{ZH}_R^{\dstar}}\xspace}
\newcommand{\ZHRstars}{\ZHRstar}
\newcommand{\RING}[1]{\ensuremath{\textsc{ring}_#1}}
\newcommand{\ZHRhalf}{\ensuremath{\text{ZH}_{R[\half]}}\xspace}
\newcommand{\ZH}{\ensuremath{\text{ZH}}\xspace}
\newcommand{\ZHC}{\ensuremath{\text{ZH}_\mathbb{C}}\xspace}
\newcommand{\half}{\ensuremath{\frac{1}{2}}\xspace}
\newcommand{\tensor}{\ensuremath{\otimes}\xspace}
\tikzstyle{dotpic}=[] 
\tikzstyle{semilarge box}=[rectangle,inline text,fill=white,draw,minimum height=5mm,yshift=-0.5mm,minimum width=12.5mm,font=\small]
\tikzstyle{inline text}=[text height=1.5ex, text depth=0.25ex,yshift=0.5mm]
\tikzstyle{label}=[font=\footnotesize,text height=1.5ex, text depth=0.25ex]
\tikzstyle{white phase dot}=[white dot]
\newcommand{\dotonly}[1]{%
\,\begin{tikzpicture}[dotpic]
\node [#1] (a) at (0,0) {};
\end{tikzpicture}\,}
\newcommand{\dotcounit}[1]{%
\,\begin{tikzpicture}[dotpic,yshift=-1mm]
\node [#1] (a) at (0,0.35) {}; 
\draw (0,-0.3)--(a);
\end{tikzpicture}\,\xspace}
\newcommand{\dotunit}[1]{%
\,\begin{tikzpicture}[dotpic,yshift=1.5mm]
\node [#1] (a) at (0,-0.35) {}; 
\draw (a)--(0,0.3);
\end{tikzpicture}\,\xspace}
\newcommand{\dotmult}[1]{%
\,\begin{tikzpicture}[dotpic]
    \node [#1] (a) {};
    \draw (a) -- (90:0.55);
    \draw (a) (-45:0.6) -- (a);
    \draw (a) (-135:0.6) -- (a);
\end{tikzpicture}\,\xspace}
\newcommand{\spider}[2]{
    \begin{tikzpicture}
        \begin{pgfonlayer}{nodelayer}
            \node [style=none] (0) at (-0.75, 1) {};
            \node [style=none] (1) at (0, 0.75) {$\ldots$};
            \node [style=none] (2) at (0.75, 1) {};
            \node [style=#1] (3) at (0, -0) {$#2$};
            \node [style=none] (4) at (-0.75, -1) {};
            \node [style=none] (5) at (0, -0.75) {$\ldots$};
            \node [style=none] (6) at (0.75, -1) {};
            \node [style=none] (7) at (-1, 1.25) {};
            \node [style=none] (8) at (1, 1.25) {};
            \node [style=none] (9) at (-1, -1.25) {};
            \node [style=none] (10) at (1, -1.25) {};
        \end{pgfonlayer}
        \begin{pgfonlayer}{edgelayer}
            \draw [bend left=15, looseness=1.00] (3) to (6.center);
            \draw [bend right=15, looseness=1.00] (3) to (4.center);
            \draw [bend right=15, looseness=1.00] (0.center) to (3);
            \draw [bend right=15, looseness=1.00] (3) to (2.center);
        \end{pgfonlayer}
    \end{tikzpicture}
}
\newcommand{\bracedSpider}[2]{
    \begin{tikzpicture}[decoration=brace]
        \begin{pgfonlayer}{nodelayer}
            \node [style=none] (0) at (-0.75, 1) {};
            \node [style=none] (1) at (0, 0.75) {$\ldots$};
            \node [style=none] (2) at (0.75, 1) {};
            \node [style=#1] (3) at (0, -0) {$#2$};
            \node [style=none] (4) at (-0.75, -1) {};
            \node [style=none] (5) at (0, -0.75) {$\ldots$};
            \node [style=none] (6) at (0.75, -1) {};
            \node [style=none] (7) at (-1, 1.25) {};
            \node [style=none] (8) at (1, 1.25) {};
            \node [style=none] (9) at (-1, -1.25) {};
            \node [style=none] (10) at (1, -1.25) {};
            \node [style=none] (11) at (0, 1.75) {$n$};
            \node [style=none] (12) at (0, -1.75) {$m$};
        \end{pgfonlayer}
        \begin{pgfonlayer}{edgelayer}
            \draw [bend left=15, looseness=1.00] (3) to (6.center);
            \draw [bend right=15, looseness=1.00] (3) to (4.center);
            \draw [bend right=15, looseness=1.00] (0.center) to (3);
            \draw [bend right=15, looseness=1.00] (3) to (2.center);
            \draw [decorate] (7.center) to (8.center);
            \draw [decorate] (10.center) to (9.center);
        \end{pgfonlayer}
    \end{tikzpicture}
}
\newcommand{\scalar}[2]{
    \,
\begin{tikzpicture}
    \begin{pgfonlayer}{nodelayer}
        \node [style=#1] (2) at (0, -0) {$#2$}; 
    \end{pgfonlayer}
\end{tikzpicture}
\,
}
\newkeycommand{\phase}[style=white phase dot][1]{\,\begin{tikzpicture}
    \begin{pgfonlayer}{nodelayer}
        \node [style=none] (0) at (0, 0.6) {};
        \node [style=\commandkey{style}] (2) at (0, -0) {$#1$}; 
        \node [style=none] (3) at (0, -0.6) {};
    \end{pgfonlayer}
    \begin{pgfonlayer}{edgelayer}
        \draw (2) to (0.center);
        \draw (3.center) to (2);
    \end{pgfonlayer}
\end{tikzpicture}\,}
\newcommand{\gendiagram}[1]{
\begin{tikzpicture}
    \begin{pgfonlayer}{nodelayer}
        \node [style=none] (0) at (-0.75, 1) {};
        \node [style=none] (1) at (0.75, 1) {};
        \node [style=none] (2) at (-0.75, -1) {};
        \node [style=none] (3) at (0.75, -1) {};
        \node [style=none] (4) at (0, 0.75) {$\ldots$};
        \node [style=semilarge box] (5) at (0, -0) { #1 };
        \node [style=none] (6) at (0, -0.75) {$\ldots$};
    \end{pgfonlayer}
    \begin{pgfonlayer}{edgelayer}
        \draw (0.center) to (2.center);
        \draw (1.center) to (3.center);
    \end{pgfonlayer}
\end{tikzpicture}
}
\newcommand{\namedeq}[1]{\,\begin{tikzpicture}
  \begin{pgfonlayer}{nodelayer}
    \node [style=none] (0) at (0, 1) {#1};
    \node [style=none] (1) at (0, 0) {$=$};
  \end{pgfonlayer}
\end{tikzpicture}\,}
\newcommand{\graymult}{\dotmult{gray dot}}
\newcommand{\grayphase}[1]{\phase[style=gray phase dot]{#1}}
\newcommand{\whiteunit}{\dotunit{white dot}}
\newcommand{\whitecounit}{\dotcounit{white dot}}
\newcommand{\whitemult}{\dotmult{white dot}}
\newcommand{\whitedot}{\dotonly{white dot}\xspace}
\newcommand{\whitephase}[1]{\phase[style=white dot]{#1}}
\newcommand{\dstar}{\scalar{st}{}}
\begin{document}
\title{Completeness of the ZH-calculus}
\date{}

\author{Miriam Backens}
\email{m.backens@bham.ac.uk}
\affiliation{School of Computer Science, University of Birmingham, Edgbaston, Birmingham B15 2TT, UK} 

\author{Aleks Kissinger}
\email{aleks0@gmail.com}
\affiliation{Department of Computer Science, University of Oxford, Wolfson Building, Parks Road, Oxford OX1 3QD, UK} 

\author{Hector Miller-Bakewell}
\email{hmillerbakewell@gmail.com}
\homepage{https://hjmb.co.uk/}
\affiliation{Department of Computer Science, University of Oxford, Wolfson Building, Parks Road, Oxford OX1 3QD, UK} 

\author{John van de Wetering}
\email{john@vdwetering.name}
\homepage{http://vdwetering.name}
\affiliation{Institute for Computing and Information Sciences, Radboud Universiteit, Toernooiveld 212, 6525 EC Nijmegen, NL} 
\affiliation{Department of Computer Science, University of Oxford, Wolfson Building, Parks Road, Oxford OX1 3QD, UK} 

\author{Sal Wolffs}
\email{sal.wolffs@gmail.com}
\affiliation{Institute for Computing and Information Sciences, Radboud Universiteit, Toernooiveld 212, 6525 EC Nijmegen, NL} 

\begin{abstract}
	There are various gate sets used for describing quantum computation. A particularly popular one consists of Clifford gates and arbitrary single-qubit phase gates. Computations in this gate set can be elegantly described by the \emph{ZX-calculus}, a graphical language for a class of string diagrams describing linear maps between qubits. The ZX-calculus has proven useful in a variety of areas of quantum information, but is less suitable for reasoning about operations outside its natural gate set such as multi-linear Boolean operations like the Toffoli gate.
	In this paper we study the \emph{ZH-calculus}, an alternative graphical language of string diagrams that does allow straightforward encoding of Toffoli gates and other more complicated Boolean logic circuits.
 	We find a set of simple rewrite rules for this calculus and show it is complete with respect to matrices over $\mathbb Z[\frac12]$, which correspond to the approximately universal Toffoli+Hadamard gateset.
 	Furthermore, we construct an extended version of the ZH-calculus that is complete with respect to matrices over any ring $R$ where $1+1$ is not a zero-divisor.
\end{abstract} 

\tableofcontents

\maketitle
\section{Introduction}

Graphical calculi give us a compact way to express and reason about complex, interacting processes using \textit{string diagrams}. String diagrams represent processes that can compose in sequence or parallel using a notation consisting of boxes or nodes connected by wires. Notably, wires can be left `open' at one or both ends to indicate inputs and outputs of the composed process. For example, depicting sequential composition as $\circ$ and parallel composition as $\otimes$, we can translate expressions such as the following one into a string diagram:
\[
  h \circ (f \otimes g) \qquad \qquad \leadsto \qquad \qquad
  \tikzfig{string-diag-example}
\]
Such notation has proven convenient for expressing compositions and tensor products of linear maps (as used extensively in quantum theory), and more generally for expressing morphisms in a generic symmetric monoidal category.

A \textit{graphical calculus} is both a collection of graphical building-blocks and a collection of equations between string diagrams that we can use to transform one diagram into another, typically equivalent, one. 
One of the best-studied graphical calculi is the \textit{ZX-calculus}~\cite{CD1,CD2,CKbook,vandewetering2020zxcalculus}, which has been applied extensively in the study of quantum circuits and related structures in quantum computation (e.g.\ measurement-based quantum computing~\cite{DP2,Backens2020extraction,kissinger2017MBQC}, fault-tolerant quantum computations~\cite{horsman2017surgery,hanks2019effective,magicFactories}, error-correcting codes~\cite{chancellor2016graphical}, circuit optimisation~\cite{FaganDuncan,cliff-simp,optimisation-paper}, and classical simulation~\cite{kissinger2021simulating,kissinger2022classical}).
ZX-diagrams consist of \emph{spiders}, a type of well-behaved linear map that is depicted by coloured dots:
\ctikzfig{example-ZX-diagram}

Quantum circuits lie at the heart of the \textit{circuit model} of quantum computation, which represents the quantum part of the computation as a large unitary operator arising from compositions and tensor products of basic gates~\cite{NielsenChuang}. There are many choices of basic gates which are \textit{universal}, in the sense that they can construct or approximate arbitrary $2^n \times 2^n$ unitary matrices. A popular choice is the \textit{Clifford+phase} set of gates:
\[
  \text{CNOT} \ :=\ \begin{pmatrix}
    1 & 0 & 0 & 0 \\
    0 & 1 & 0 & 0 \\
    0 & 0 & 0 & 1 \\
    0 & 0 & 1 & 0
  \end{pmatrix}
  \qquad\qquad
  H \ := \ \frac{1}{\sqrt 2} \begin{pmatrix}
    1 & 1 \\ 1 & -1
  \end{pmatrix}
  \qquad\qquad
  Z_{\alpha} \ := \ \begin{pmatrix}
    1 & 0 \\ 0 & e^{i \alpha}
  \end{pmatrix}
\]
This set of gates is universal, here in the sense that it can exactly express any unitary matrix. A common finite restriction of this infinite set is the \textit{Clifford+T} gate set, which replaces $Z_\alpha$ with $T := Z_{\pi/4}$. This set of gates is able to approximate any unitary up to arbitrary finite precision. A further restriction to the \textit{Clifford} gate set $(\text{CNOT}, H, S := T^2)$ is no longer universal, and in fact yields only
circuits that can be efficiently simulated on a classical computer~\cite{aaronsongottesman2004}. Nevertheless, Clifford circuits play an important role in quantum computation~\cite{NielsenChuang}, quantum error correction~\cite{fowler2012surface,gottesman2010introduction,BB84}, and many quantum communication protocols.

Notably, these three important families of circuits map straightforwardly to three fragments of the ZX-calculus: where parameters are restricted to integer multiples of $\frac\pi 2$ (Clifford), restricted to integer multiples of $\frac \pi 4$ (Clifford+T), or are unrestricted (Clifford+phase). Furthermore, each of these fragments of the ZX-calculus has been proven \textit{complete} in the sense that two ZX-diagrams which evaluate to the same linear operator are provably equal using just diagram rewrite rules~\cite{Backens1,SimonCompleteness,ng2017universal}.

However, the ZX-calculus remains closely tied to the structure of the Clifford+phase family of circuits. The further one gets from this family, the more awkward it becomes to work with computations using the ZX-calculus. For instance, the CNOT gate (`controlled-NOT') can be expressed quite simply in the ZX-calculus:
\[
  \left\llbracket
  \ \tikzfig{cnot-intro}\ 
  \right\rrbracket \ =\ \textit{CNOT}
\]
In contrast, the `controlled-controlled-NOT', commonly referred to as the Toffoli gate, is considerably more complicated, with typical presentations requiring many more generators which cannot be easily manipulated using the ZX-calculus rules~\cite[Ex.~12.10]{CKbook}.

The problem with the Toffoli gate is that it acts on basis states in a manner that is not $\mathbb Z_2$-affine. 
For example, the Boolean CNOT is the map $\text{CNOT}:\{0,1\}^2\rightarrow \{0,1\}^2$ given by $\text{CNOT}(x,y) = (x,x\oplus y)$, where $\oplus$ denotes XOR.
Such a Boolean function is called \emph{affine}, in analogy to the theory of polynomials, since it only contains terms of multiplicative degree at most 1 that are then added together (XOR acts as addition on the field $\mathbb Z_2$).
Similarly, we say the NOT gate $NOT(x) = x\oplus 1$ is affine, since again it is a sum of terms of degree at most 1.
In the ZX-calculus there are two main types of generators: Z-spiders and X-spiders.   
Both of these are closely related to affine Boolean functions: the Z-spider is based on the COPY Boolean map $\text{COPY}(x) = (x,x)$, while the X-spider is based on XOR. 
As being affine is baked into the generators, some tricks are required to represent non-affine Boolean functions in the ZX-calculus,
such as the Toffoli gate $\text{TOF}(x,y,z) = (x,y,(x\cdot y)\oplus z)$ that contains the degree-2 term $x\cdot y$.

The way we solve this issue in this paper is by introducing a third diagrammatic type of generator beyond the Z and X spiders: the H-box.
An $n$-input $0$-output H-box is defined by $\ket{x_1\cdots x_n} \mapsto (-1)^{x_1\cdot \ldots \cdot x_n}$.
This can then be used in combination with Z-spiders (i.e.\ copy maps) to represent the `controlled-controlled-Z' gate, and thus the Toffoli gate.

Another way to view the H-box generator is as a generalisation of the Hadamard gate to a linear map with an arbitrary number of inputs and outputs. 

With this new generator comes a new set of graphical rewrite rules that allow us to reason more efficiently about quantum computation involving non-affine Boolean functions.
The resulting graphical language is called the \emph{ZH-calculus}, since its main generators are Z-spiders and H-boxes.
The main topic of this paper is to show that various fragments of the ZH-calculus are complete. 
Specifically, we find a set of particularly simple and compelling rewrite rules (see Figure~\ref{fig:phasefree-rules}) that are complete for the approximately universal Toffoli-Hadamard gate set.%
\footnote{The Toffoli-Hadamard gate set is only approximately universal for `real' quantum computing, where all unitaries contain only real numbers. However, such real unitaries can simulate with constant overhead unitaries with complex entries~\cite{shi2003toffoli}. The notion of approximate universality is here hence different than that of Clifford+$T$, and might better be called `computationally universal'. We will however not make this technical distinction in the remainder of the paper.} 
Arguably, this is the smallest complete rule set for an approximately universal fragment of quantum computing (we discuss this claim in more detail in Section~\ref{sec:motivation}).
Building on this work, we show that for any ring where $2:=1+1$ is not a zero divisor,
we can generalise the ZH-calculus in order to express all matrices over this ring,
and we exhibit a complete set of rules for this generalisation as well.

This article is based on a conference paper by Backens and Kissinger~\cite{backens2018zhcalculus} which first introduced the ZH-calculus and proved completeness for the ZH-calculus with complex parameters,
an unpublished preprint of van de Wetering and Wolffs~\cite{PhasefreeZH2019} which proved completeness for the phase-free fragment of the calculus, and the DPhil thesis of Miller-Bakewell~\cite{millerbakewell2020thesis} which showed how the ZH-calculus and its completeness could be extended to arbitrary rings.
The formal developments in those papers have been expanded and unified, and the technical content has been extended significantly in three ways: we find a simpler complete rule set for the phase-free fragment, we give a completely self-contained proof of phase-free completeness based on an encoding of arithmetic into ZH, and we extend the ZH-calculus to any ring where $2 := 1 + 1$ is not a zero divisor.

\subsection{Related work}
In the earliest complete version of the ZX-calculus, a new generator, the `triangle', was introduced in order to represent non-affine Boolean functions~\cite{ng2017universal}. 
This has later also been used to prove completeness of the fragment corresponding to the Toffoli+Hadamard gate set that we also study~\cite{vilmart2018zxtriangle}, as well as to find an axiomatisation of the ZX-calculus over arbitrary rings~\cite{wang2020algebraic}.
In~\cite{ZXand},
a complete ruleset for classical circuits, i.e.\ natural number matrices, was found 
which uses almost the same ruleset as we do in the phase-free fragment. The difference is
that ~\cite{ZXand} treats 
the AND gate as an atomic non-symmetric generator, while we decompose it into H-boxes.\footnote{In fact, his rules were inspired by the rules of the phase-free ZH-calculus presented in an earlier preprint~\cite{PhasefreeZH2019} (personal communication, Comfort).}
The seminal complete graphical language for the related fragment of integer matrices is the \emph{ZW-calculus}~\cite{hadzihasanovic2015diagrammatic} which has generators based on the two different types of entanglement that are possible in tripartite qubit systems~\cite{CK}.
The related language \RING{R}\ \cite{millerbakewell2020thesis} was designed to
unify the study of qubit graphical calculi parameterised by phase rings,
and indeed both parameterised ZH (Section~\ref{sec:zh-ring})
and the interpretation of ZH as operations on Boolean functions (Section~\ref{sec:motivation}) fit into this scheme.
Therefore there are natural maps from \RING{\mathbb{B}}\ to ZH and
from \RING{R}\ to \ZHR, the second of which is exhibited in
\cite{millerbakewell2020thesis}. 
Note that
\cite{carette2020recipe} 
shows that, under certain assumptions, there are only three basic graphical calculi for qubits: ZX, ZW, and ZH.
The results presented in the current paper prove that the youngest of these calculi, ZH, is complete, concluding the program of finding complete graphical calculi for qubits.

Using the ZH-calculus, an efficient description of \emph{hypergraph states} can be given~\cite{Lemonnier2020hypergraph}, which allows for a description of ZH-diagrams in terms of \emph{path-sums}~\cite{AmyVerification,pathsRenaud,Vilmart2022Completeness}. There is also a close connection between ZH-diagrams and quantum multiple-valued decision diagrams~\cite{vilmart2021quantum}. 
The ZH-calculus has been used to give a graphical description of \emph{AKLT states}, a particularly canonical type of condensed matter system~\cite{east2020akltstates}, and of spin-networks~\cite{d.p.east2021spinnetworks}.

\subsection{Overview of the paper and structure of the main proof}\label{sec:overview}
In Section~\ref{sec:phase-free-ZH} we introduce the ZH-calculus. We present and motivate the rewrite rules and we introduce the notation we will use throughout the paper  (the rules can be found in Figure~\ref{fig:phasefree-rules}, their Boolean counterparts that motivate them can be found in Figure~\ref{fig:boolean-functions}).
Note that in particular, in subsection~\ref{sec:labelledHboxes}, we define `labelled H-boxes', which are derived generators that form an essential ingredient to our proof of completeness of the ZH-calculus.

The majority of the paper, Sections~\ref{sec:avg-intro-mult}--\ref{sec:completeness}, is devoted to this proof of completeness.

First, in Section~\ref{sec:avg-intro-mult}, we prove several special cases of three families of equations that lie at the core of our completeness proof.
These families are called `multiply', `intro', and `average' (see, respectively, \eqref{eq:mult-def}, \eqref{eq:intro-def} and \eqref{eq:average-def}) and apply to labelled H-boxes.

Then, in Section~\ref{sec:normal-forms}, we introduce normal form diagrams. A normal form diagram corresponds in a precise way to the matrix the diagram represents, and hence is unique for a given underlying linear map. Given two diagrams representing the same linear map, the ability to reduce both of them to normal form then shows that they can be transformed into the same diagram, and hence that the calculus is complete. In this section, we give a proof that the reduction to normal form can be done \emph{conditional} on having proved all instances of multiply, intro and average. This conditional proof works by showing that each generator can be transformed into normal form (Lemmas~\ref{lem:H-box-nf} and~\ref{lem:Z-spider-nf}), that a tensor product of normal forms can be reduced to normal form (Corollary~\ref{cor:tensor-product}), and that connecting any of the wires in a normal form results in a diagram that can again be brought to normal form (Corollary~\ref{cor:cap-nf}). With these steps and the condition, any diagram can be brought into normal form (Proposition~\ref{prop:completeness-conditional}). It remains to prove all instances of multiply, intro and average.

In Section~\ref{sec:arithmetic}, we prove that we can do basic arithmetic on the labels of H-boxes: addition of H-box labels in Proposition~\ref{prop:addition}, and multiplication of H-box labels (which is exactly the general multiply rule) in Proposition~\ref{prop:mult-rule-rational}.
Then, in Section~\ref{sec:completeness}, we give proofs of the general average rule (Proposition~\ref{prop:average-integer}) and general intro rule (Proposition~\ref{prop:intro-rational}). Together with the conditional reduction to normal form, this finishes our proof of completeness (Theorem~\ref{thm:ZH-completeness}).
The reason we structured the proof like this, starting with a conditional proof of the reduction to normal form, is because some of the proofs in Sections~\ref{sec:arithmetic} and~\ref{sec:completeness} use this reduction to normal form (with a careful analysis that shows the required instances of average, intro and multiply were all proved in Section~\ref{sec:avg-intro-mult}), since we could not find a more direct way to prove them.

In Section~\ref{sec:zh-ring}, we shift to studying the ZH-calculus over arbitrary rings. We prove its completeness by assuming the multiply, intro, and average rules as axioms.
Then in Section~\ref{sec:alternative-rules}, we study a couple of modifications to the calculus and its rules.
We end with some concluding remarks in Section~\ref{sec:conclusion}.

\section{The ZH-calculus}\label{sec:phase-free-ZH}

The ZH-calculus is a graphical language for expressing maps with zero or more inputs and outputs as certain string diagrams called ZH-diagrams. As a convention, we will draw inputs as wires entering the bottom of the diagram and outputs as wires exiting the top.

Throughout this paper, we will distinguish formal ZH-diagrams $D : m \to n$ with $m$ input wires and $n$ output wires, from their standard interpretation as linear maps $\llbracket D \rrbracket : (\mathbb C^2)^{\otimes m} \to (\mathbb C^2)^{\otimes n}$, where $(\mathbb C^2)^{\otimes m} \cong \mathbb{C}^{2^m}$ is the $m$-fold tensor product of the 2D vector space $\mathbb C^2$.

\begin{remark}
  In categorical terms, ZH-diagrams are morphisms in the free PROP $\mathcal{ZH}$ presented by the ZH-calculus, considered as a
symmetric monoidal category, and $\llbracket - \rrbracket$ is a strong monoidal functor from $\mathcal{ZH}$ to $(\textbf{Vect}_{\mathbb C}, \otimes)$, the category of finite-dimensional complex vector spaces, with the monoidal product given by the tensor product. Though familiarity with symmetric monoidal categories and PROPs is not required to read this paper, the interested reader can find an accessible introduction in Chapter 2 of~\cite{zanasithesis}.
\end{remark}

\subsection{The generators}\label{sec:ZH-generators}

ZH-diagrams have three types of generators:
\emph{Z-spiders} represented by white dots, \emph{H-boxes} represented by white boxes,
and a \emph{star} generator represented by \tikzfig{star}\!.
These generators are interpreted as linear maps on copies of $\C^2$.
We denote the standard (also called \emph{computational}) basis of $\C^2$ in Dirac notation as:
\[ \ket 0 := \begin{pmatrix}
  1 \\ 0
\end{pmatrix}
\qquad
\ket 1 := \begin{pmatrix}
  0 \\ 1
\end{pmatrix}
\]
We will be using this Dirac `ket' notation $\ket{\psi}$ to denote a state in a vector space. We will also denote the `bra' $\bra{\psi}$ for the linear map that calculates the inner product with $\ket{\psi}$. That is, inputting a state $\ket{\phi}$ gives $\bra{\psi}(\ket{\phi}) = \braket{\psi}{\phi}$.

The computational basis states $\ket{0}$ and $\ket{1}$ extend naturally to a basis for $(\C^2)^{\otimes n}$ by interpreting bitstrings as tensor products:
\[ \ket{x_1\ldots x_n} := \ket{x_1}\otimes \cdots \otimes \ket{x_n} \]

We can define the interpretation of Z-spiders and H-boxes in terms of these basis states.
Z-spiders and H-boxes can have any number of inputs and outputs, and are then interpreted as the following linear maps:
\begin{equation*}
\intf{\tikzfig{Z-spider}} := \ket{0}^{\otimes n}\bra{0}^{\otimes m} + \ket{1}^{\otimes n}\bra{1}^{\otimes m} \qquad\quad 
 \intf{\tikzfig{H-spider-free}} := \sum (-1)^{i_1\ldots i_m j_1\ldots j_n} \ket{j_1\ldots j_n}\bra{i_1\ldots i_m}
\end{equation*}
where $\intf{\cdot}$ denotes the map from diagrams to matrices.
The sum in the second equation is over all $i_1,\ldots, i_m, j_1,\ldots, j_n\in\{0,1\}$ so that an H-box represents a matrix with all entries equal to 1, except the bottom right element, which is $-1$.

Finally we have the generator \emph{star}, which has zero inputs and outputs. Its interpretation is:
\begin{equation*}
    \intf{\tikzfig{star}} := \frac{1}{2}
\end{equation*}

Straight and curved wires have the following interpretations:
\begin{equation*}
\intf{\;|\;} := \ketbra{0}{0}+\ketbra{1}{1} \qquad\qquad\qquad
 \intf{\tikzfig{wire-cup}} := \ket{00}+\ket{11} \qquad\qquad\qquad
 \intf{\tikzfig{wire-cap}} := \bra{00}+\bra{11}.
\end{equation*}

When two diagrams are juxtaposed, the corresponding linear map is the tensor product (a.k.a. Kronecker product) of the matrices corresponding to the individual diagrams. A sequential composition of two diagrams is interpreted as the matrix product of the matrices corresponding to the individual diagrams:
\[
 \intf{\gendiagram{$D_1$}\;\gendiagram{$D_2$}} := \intf{\gendiagram{$D_1$}}\otimes\intf{\gendiagram{$D_2$}} \qquad\qquad \intf{\tikzfig{sequential-composition}} := \intf{\gendiagram{$D_2$}}\circ\intf{\gendiagram{$D_1$}}
\]

Some diagram motifs appear so often that we create shorthands for them,
called \emph{derived} generators.
We define the derived grey spider and the grey spider with a NOT as:
\begin{equation}\label{eq:defx}
(X) \quad\ \  \tikzfig{X-spider-dfn-free}\qquad\qquad\qquad (NOT)\quad\ \  \tikzfig{negate-dfn-free}
\end{equation}


The generator \graymult\ acts as XOR on the computational basis while \grayphase{\neg} acts as NOT:
\begin{equation*}
\intf{\graymult} = \ketbra{0}{00}+\ketbra{0}{11}+\ketbra{1}{01}+\ketbra{1}{10} \qquad\qquad\qquad \intf{\grayphase{\neg}}=\ketbra{0}{1}+\ketbra{1}{0}.
\end{equation*}

In some of the later sections there will be diagrams with many NOTs. To avoid overcrowding, we will draw a red dashed edge for an edge between Z-spiders with a NOT on it:
\begin{equation}\label{eq:not-edge-def}
\tikzfig{not-edge-def}
\end{equation}

We also introduce the derived \emph{negate} spider (a white NOT):

\begin{equation}\label{eq:Z-triangle-dfn}
 (Z) \quad\ \ \tikzfig{negate-white-dfn}
\end{equation}

The negate white spider with one input and output acts like the Z gate:
\begin{equation*}
\intf{\phase{\neg}} = \ketbra{0}{0} - \ketbra{1}{1}
\end{equation*}

\subsection{The rules} \label{sec:ZH-rules}

The ZH-calculus comes with a set of rewrite rules shown in Figure~\ref{fig:phasefree-rules}. 
Of the 8 rules in Figure~\ref{fig:phasefree-rules}, 7 are `obvious' in the sense that they express simple properties of the structure of the underlying Boolean functions (see Section~\ref{sec:motivation}). The only outlier is \OrthoRule which seems more arbitrary. In Section~\ref{sec:o-rule} we will see that we can replace \OrthoRule with two smaller rules.

Additionally, the calculus has the meta-rule that \emph{only topology matters}. This means that two diagrams are considered equal when one can be topologically deformed into the other, while respecting the order of the inputs and outputs. Finally, the two generators are considered to be symmetric and undirected, so that the following equations also hold:
\ctikzfig{generator-symmetries-free}
These symmetry properties also hold for the derived grey spider and NOT gate.

\begin{figure}[!tb]
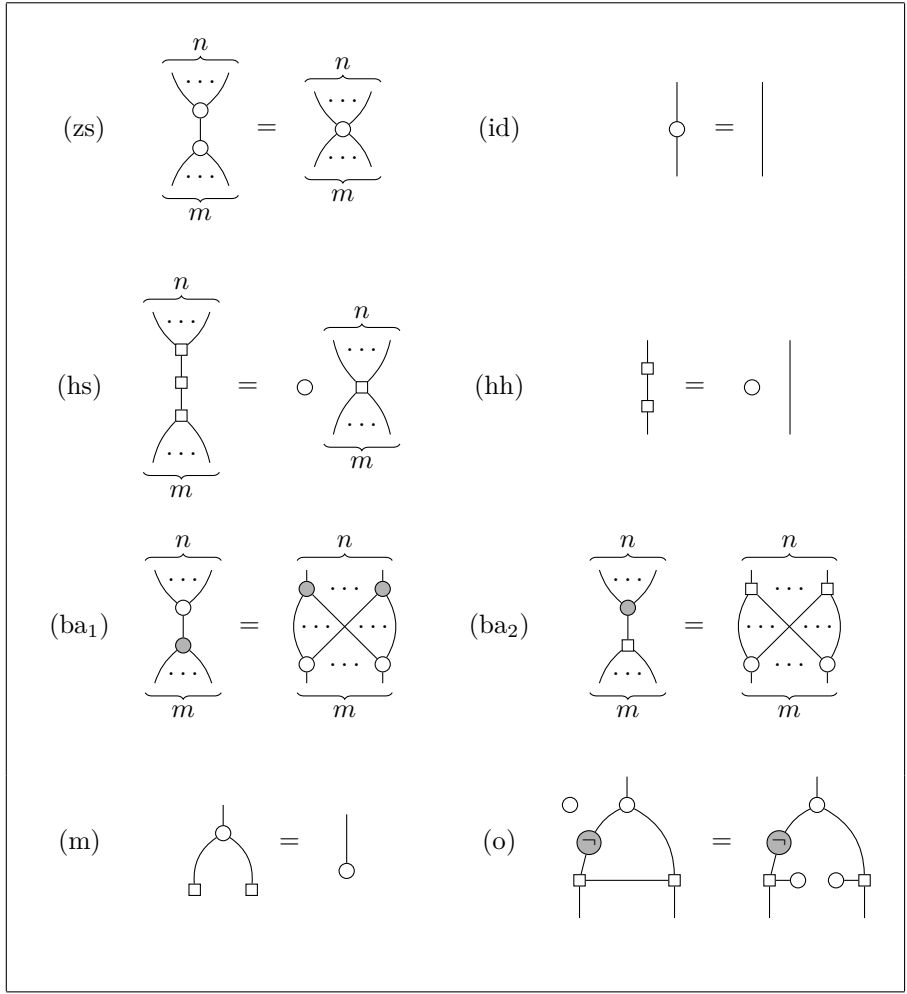

\def\lineSkip{5em}
 \centering
 \scalebox{1.0}{
 \begin{tabular}{|cccc|}
 \hline \rule{0pt}{\lineSkip} 
  \quad (zs) & \tikzfig{Z-spider-rule} & \quad (id) & \tikzfig{Z-special} \quad  \\[\lineSkip]
  \quad (hs) & \tikzfig{H-spider-rule} & \quad (hh) & \tikzfig{H-identity} \quad  \\[\lineSkip]
  \quad (ba$_1$)& \tikzfig{ZX-bialgebra} & \quad (ba$_2$) & \tikzfig{ZH-bialgebra}  \quad \\[\lineSkip]
  \quad (m) & \tikzfig{multiply-rule} & \quad (o) & \tikzfig{ortho-rule}  \quad \\[\lineSkip]
 \hline
 \end{tabular}}
 \caption{The rules of the ZH-calculus.
 Throughout, $m,n$ are nonnegative integers.
 The right-hand sides of both \textit{bialgebra} rules \StrongCompRule and \HCompRule are complete bipartite graphs on $(m+n)$ vertices, with an additional input or output for each vertex.
 The horizontal edges in equation \OrthoRule are well-defined because only the topology matters and we thus do not need to distinguish between inputs and outputs of generators. \SpiderRule and \HFuseRule stand respectively for \emph{Z-spider} and \emph{H-spider}; \IDRule for \emph{identity}; \StrongCompRule and \HCompRule for \emph{bialgebra}; \MultRule for \emph{multiply}; and \OrthoRule for \emph{ortho}.}
 \label{fig:phasefree-rules}
\end{figure}

\begin{proposition} \label{prop:phasefree-sound}
 The ZH-calculus is sound.
\end{proposition}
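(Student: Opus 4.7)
The plan is to verify, rule by rule, that both sides of each equation in Figure~\ref{fig:phasefree-rules} induce the same linear map under $\intf{\cdot}$. Since $\intf{\cdot}$ is defined to send tensor composition of diagrams to the tensor product of matrices and sequential composition to matrix multiplication, it extends uniquely to a strong monoidal functor from the free PROP on the generators to $\textbf{Vect}_{\mathbb C}$. Hence soundness reduces to a finite case analysis: check the meta-rules, the symmetry laws for the generators, and the eight equations in the figure.

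First I would dispatch the meta-rules. The fact that ``only topology matters'' amounts to checking that the cups and caps together with the straight wire satisfy the snake equations, which is an immediate computation using $\intf{\cup} = \ket{00}+\ket{11}$ and $\intf{\cap} = \bra{00}+\bra{11}$. Symmetry of the Z-spider and H-box under permutations of their legs is visible directly from the defining formulas, since both are symmetric tensors: the Z-spider's formula $\ket{0}^{\otimes n}\bra{0}^{\otimes m}+\ket{1}^{\otimes n}\bra{1}^{\otimes m}$ is manifestly symmetric, and the H-box coefficient $(-1)^{i_1\cdots i_m j_1\cdots j_n}$ is symmetric in the indices. Together, these establish that the interpretation is a well-defined functor on the underlying compact closed PROP and that we may indeed write sideways wires.

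Next, I would verify the eight rewrite rules. For \SpiderRule\ and \HFuseRule, plug two spiders (of the same type) joined by a wire into the definition and observe that the sum over the internal index collapses to the expected single-spider form; for Z this uses orthonormality of $\{\ket{0},\ket{1}\}$, and for H it uses $\sum_k (-1)^{a k}(-1)^{k b} = (-1)^{ab}\cdot(\text{something only on endpoints})$ together with the product structure of the H-box coefficient. The rule \IDRule\ (a 2-legged Z-spider is the identity) is immediate from the definition. For \HHRule, expand the two H-boxes and simplify to the identity (possibly up to a scalar factor absorbed by the star generator, whose value $\tfrac12$ is in place precisely to make scalars match). The bialgebra rules \StrongCompRule\ and \HCompRule\ can be checked by evaluating both sides on a product of computational basis vectors: the LHS becomes a single sum over a shared bit, and the RHS becomes a product of $(m+n)$ copies of an elementary map, and one checks that these agree entrywise. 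The rule \MultRule\ and the ortho rule \OrthoRule\ are again direct but slightly longer: expand each diagram into its matrix form by plugging computational basis vectors into each external wire, use the fact that products of $\pm 1$ signs factor, and compare coefficients; the half-scalar contributions from any star generators on the two sides cancel consistently.

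The only step that requires real care, rather than being mechanical, is matching scalar factors across \HHRule, \MultRule, and \OrthoRule, since H-boxes produce sums that are not normalised and the star's value $\tfrac12$ must appear the correct number of times on each side. I would handle this by fixing a canonical basis expansion of each side and counting contributions explicitly. Once all eight rules check, soundness follows because any diagrammatic equation derivable in the calculus is a finite composition of instances of these rules, and each preserves the interpretation.
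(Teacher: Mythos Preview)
Your approach is essentially the same as the paper's: verify soundness of the meta-rules via compact-closed structure and check each rewrite rule by direct computation of the interpretations. The paper is terser and, for the infinite rule families \SpiderRule, \HFuseRule, \StrongCompRule, \HCompRule (each parametrised by $m,n$), explicitly reduces to a finite set of equations plus induction by citing standard Frobenius-algebra and bialgebra arguments, whereas you propose a uniform basis-vector evaluation for arbitrary $m,n$; both routes are valid, though your phrase ``finite case analysis'' slightly understates the work needed for these infinite families.
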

\begin{proof}
It is straightforward to check that the symmetry properties for each generator and the finite rules of Figure~\ref{fig:ZH-rules} are sound by concrete calculation. 
\StrongCompRule and \HCompRule can both be reduced to a finite set of equations for which soundness is easily checked and from which the general infinite family of rules can be derived using induction (see for instance~\cite[Theorem~9.71]{CKbook}). The rules \SpiderRule and \HFuseRule express equations related to Frobenius algebras for which soundness can also be checked in a standard manner, see for instance~\cite{coecke2013new}.
Soundness of the meta rule `only topology matters' follows by considering the string diagrams as morphisms in a compact closed category~\cite{SelingerCPM}.
\end{proof}

The converse to soundness is completeness. Determining that the ZH-calculus is complete will be the main objective of this paper.
A third important property of diagrammatic languages is \emph{universality}.
It can be seen from the definitions of the generators that their interpretations
as linear maps are all matrices over $\mathbb{Z}[\half]$,
and therefore all $\tensor$- and $\circ$-products of the generators also have interpretations
as matrices over $\mathbb{Z}[\half]$. Universality means that we can represent \emph{any}
matrix over $\mathbb{Z}[\half]$ using the ZH-calculus.
We cannot prove this property yet, but it will follow immediately from Theorem~\ref{thm:nf-unique}.
These matrices, and hence the ZH-calculus, are closely related to the approximately universal gate set Toffoli+Hadamard~\cite{Amy2020numbertheoretic}.  We discuss the precise connection in more detail in Section~\ref{sec:tof-had}.

\subsection{Motivation for the rewrite rules}\label{sec:motivation}

Let us give some motivation for the rewrite rules of Figure~\ref{fig:phasefree-rules}.
Each of the three main (derived) generators of the ZH-calculus, the Z-spiders, X-spiders and H-boxes, corresponds to a family of Boolean functions.
Using this correspondence, each of the rules of Figure~\ref{fig:phasefree-rules} can be seen to be equivalent to an equation that holds between Boolean functions.

In order to make this correspondence, we first recall that we can lift a Boolean function $f:\{0,1\}^n \rightarrow \{0,1\}^m$ to a linear map $\hat{f}:\mathbb C^{2^n} \to \mathbb C^{2^m}$ by its action on the computational basis states:
\begin{equation}\label{eq:induced-linear-map}
  \hat{f}\ket{x_1\ldots x_n} = \ket{f(x_1\ldots x_n)}.
\end{equation}

\begin{figure}[!tb]
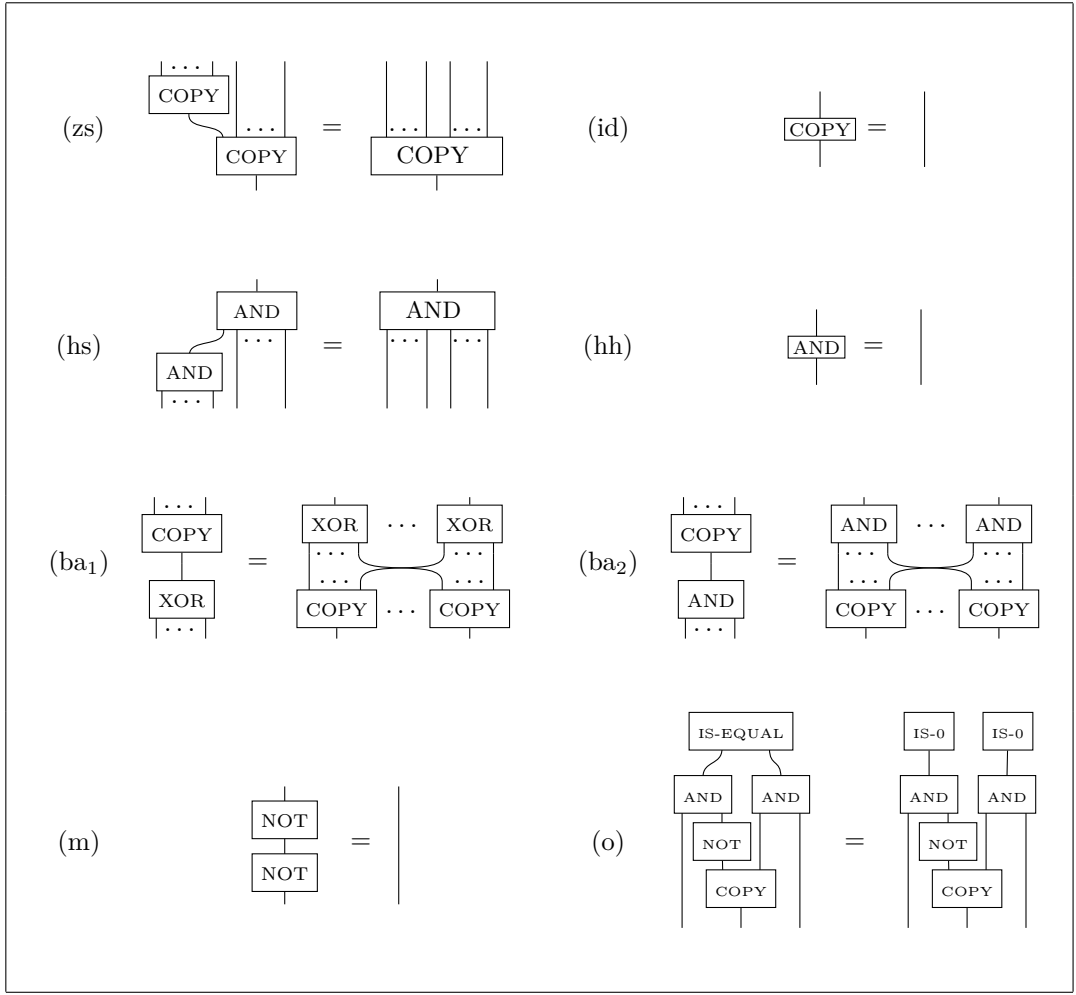

\def\lineSkip{5em}
 \centering
 \scalebox{1.0}{
 \begin{tabular}{|cccc|}
 \hline \rule{0pt}{\lineSkip} 
  \quad (zs) & \tikzfig{copy-fuse} & \quad (id) & \tikzfig{copy-id} \quad  \\[\lineSkip]
  \quad (hs) & \tikzfig{and-fuse} & \quad (hh) & \tikzfig{and-id} \quad  \\[\lineSkip]
  \quad (ba$_1$)& \tikzfig{bialgebra-copy-xor} & \quad (ba$_2$) & \tikzfig{bialgebra-copy-and}  \quad \\[\lineSkip]
  \quad (m) & \tikzfig{classical-notnot} & \quad (o) & \tikzfig{classical-ortho}  \quad \\[\lineSkip]
 \hline
 \end{tabular}}
 \caption{The rules of the ZH-calculus presented as equations between Boolean functions.
 }
 \label{fig:boolean-functions}
\end{figure}

Define the COPY family of Boolean functions COPY$^n: \{0,1\} \rightarrow \{0,1\}^n$ by $\text{COPY}(x) = (x,\ldots, x)$. Then $\widehat{\text{COPY}^n}$ is equal as a linear map to the Z-spider with a single input and $n$ outputs.
Similarly, the X-spider with $n$ inputs and a single output is equal to $\widehat{\text{XOR}^n}$, while the single input, single output NOT generator is, unsurprisingly, equal to $\widehat{\text{NOT}}$.
H-boxes are not induced by Boolean functions, as the linear map of an H-box contains negative numbers.
Yet composing an H-box with a further Hadamard gives the family of Boolean AND functions.
In summary:
\begin{equation}\label{eq:classical-interpretation}
\tikzfig{classical-interpretation}
\end{equation}

Seeing this, one might wonder why we simply did not choose some other generator that can represent the AND directly. The reason for this is that it is in fact useful to represent the inputs and outputs of AND asymmetrically.

Indeed, the COPY and XOR maps are symmetric under interchange of the inputs and outputs, a property also known as \emph{flexsymmetry}~\cite{carette2021when}:
\begin{equation}\label{eq:spider-unbend-wire}
\tikzfig{spider-unbend-wire}
\end{equation}
This property is not true for AND:
\begin{equation}\label{eq:and-unbend-wire}
\tikzfig{and-unbend-wire}
\end{equation}

We can now re-express the rules of Figure~\ref{fig:phasefree-rules} with all the generators presented as Boolean functions, as shown in Figure~\ref{fig:boolean-functions}.
The first five rules in this figure are exactly equal to the rules of Figure~\ref{fig:phasefree-rules} when we translate the Boolean functions back into the generators according to~\eqref{eq:classical-interpretation}. 
The other three are not equal, but can be proven to be equivalent using the first five equations (which we will do below), and hence the two rulesets are equivalent. 
For the representation of \OrthoRule we introduce two new Boolean `predicates':
\begin{equation}\label{eq:boolean-predicates}
  \tikzfig{is-equal} \ \ := \ \ \tikzfig{wire-cap} \qquad \qquad\tikzfig{is-zero} \ \ := \ \ \tikzfig{gray-effect}
\end{equation}
Recall that the linear map corresponding to the cap is $\bra{00} + \bra{11}$, and hence we can interpret IS-EQUAL as selecting for the case where both inputs are equal to one another. 
Similarly IS-0 is just $\bra{0}$ and selects for its input to be $0$.

Let us demonstrate how the rules of Figure~\ref{fig:boolean-functions} suffice to prove the rules of Figure~\ref{fig:phasefree-rules}. We will ignore scalar factors for these derivations (correct scalars can be introduced by repeated use of Lemma~\ref{lem:scalarcancelstars}).
First, let us derive \StrongCompRule from its Boolean counterpart. We denote the usage of the `Boolean version' of \StrongCompRule by $(*)$.
\ctikzfig{ZH-bialgebra-boolean-pf}
Next, we prove \MultRule from the cancellation of two NOT gates. Here $(*)$ denotes the application of the `Boolean version' of \MultRule:
\ctikzfig{multiply-rule-boolean-pf}
Finally, we derive \OrthoRule from the corresponding relation between Boolean functions, where here $(*)$ denotes the Boolean version of \OrthoRule: 
\ctikzfig{ortho-rule-boolean-pf}
The converse proofs that the rules of Figure~\ref{fig:phasefree-rules} imply those of Figure~\ref{fig:boolean-functions} are similar (and will also implicitly follow from completeness of the ZH-calculus).

Looking at Figure~\ref{fig:boolean-functions} we see that \SpiderRule and \HFuseRule express the associativity of COPY, respectively AND, and that \IDRule and \HHRule express the triviality of COPY and AND when they have only a single input. We see that the only difference between \StrongCompRule and \HCompRule is which Boolean function comes before the COPY. These rules are in fact special cases of the rule that any Boolean function copies through COPY~\cite[Prop.~8.19]{CKbook}. 
The rule \MultRule expresses that NOT is self-inverse,
and finally there is \OrthoRule, which requires some more explanation.
On both sides of the equation, the middle input is copied and sent to both AND gates, but one of the copies is negated. 
Hence the output of at least one of the AND gates will be zero, so that the only way for the outputs of these AND gates to be equal is if they are both zero. 
This is the property that \OrthoRule expresses.
As this is still not a particularly concise motivation, we present in Section~\ref{sec:o-rule} an alternative pair of rules that can replace \OrthoRule and which do have a more concise motivation.

It might be surprising that this rule set expressing identities about Boolean functions suffices to reason about quantum computation. However, what is not apparent from this rule set is that the AND gate is `broken up' into two parts, given by the H-boxes and the Hadamard. The Hadamard gives the `Fourier transform' between the Z and X spiders and relates the COPY and XOR maps in a way that does not follow easily from their definition as Boolean functions. 
In fact, in~\cite{ZXand}, a calculus for integer matrices is presented using XOR, COPY and AND as generators, and which uses a rule set that is essentially the same as that of Figure~\ref{fig:boolean-functions}, but with \OrthoRule replaced by the rules we present in Section~\ref{sec:o-rule}. Hence, the difference between this `classical' calculus and our `quantum' calculus is the usage of the Hadamard and the necessity of negative numbers this requires.

Beyond having a nice interpretation in terms of Boolean identities, our rule set is furthermore arguably preferable to that of other (approximately) universal complete graphical calculi for qubits in terms of simplicity. 
This is because:
\begin{enumerate}
 \item It only requires a small number of rules (8, of which one is actually superfluous, cf.~Section~\ref{sec:hh-rule-necessity}).
 \item Each of these rules only involve a few (derived) generators.
 \item A priori, the set of generators does not involve any further information, such as a label taken from a ring or group.
 \item As a result, no rule involves a `side condition' on the elements of the ring or group.
\end{enumerate}

While there are other calculi that satisfy items 1) and 2) --- notably ZX$_{\pi/4}$~\cite{SimonCompleteness}, universal ZX~\cite{euler-zx}, \textsc{ring}~\cite{millerbakewell2020thesis}, and ZQ~\cite{MillerBakewell2020ZQ} also only require a small number of rules --- each of these requires either a complicated side condition on at least one the rules, or directly encodes properties of the underlying ring in its generators. Other calculi like ZW~\cite{hadzihasanovic2015diagrammatic} and $\Delta$ZX~\cite{vilmart2018zxtriangle} also do not have a lot of rules, but many of them lack a clear interpretation.

This is of course not to say that those calculi are `worse', but just that the ZH-calculus has a particularly elegant set of rewrite rules, which somehow neatly capture the complexity of quantum computing.

\subsection{Basic derived rules}
\label{s:basic-derived}
In this section we will prove some basic but useful rewrite rules that we will use throughout the paper.
Since the statements and proofs are often very short we will state all the lemmas first and then all the proofs.
Lemmas~\ref{lem:scalarcancelstars}--\ref{lem:xnot-h-reduce} are different ways to cancel or simplify scalar diagrams. 
Lemmas~\ref{lem:negate-direct}--\ref{lem:znots-cancel} are various ways to simplify the derived generators. 
Lemmas~\ref{lem:h-z-commute}--\ref{lem:z-commute} and \ref{lem:x-z-commute} give ways to commute Z's, NOTs and Hadamards through the other generators.
Lemmas~\ref{lem:copy-x-z}--\ref{lem:copy-znot-h} govern the interaction of states with generators (mostly showing that many states can copy through a generator), and finally Lemma~\ref{lem:hopf-rule} is the Hopf rule that holds between the Z and X spider.
While some of these lemmas are direct special cases of the rules, in particular of \StrongCompRule and \HCompRule, we include them here for convenience of reference, and to give a more complete set of `state commutation' rules.
Note that for Lemma~\ref{lem:h-z-commute}, if $m=0$, then instead of the \tikzfig{star}\!'s, there is a single \tikzfig{dot}; similarly for Lemma~\ref{lem:h-x-commute}, if $m=0$, then instead of the \tikzfig{dot}\!'s, there is a single \tikzfig{star}.

\begin{multicols}{3}
\begin{lemma}\label{lem:scalarcancelstars}
    \begin{equation*}
        \tikzfig{scalar-rule-stars}
    \end{equation*}
\end{lemma}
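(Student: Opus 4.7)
The plan is to prove this scalar identity by a direct rewriting at the level of ZH-diagrams. Since both sides of the equation are scalar diagrams (no inputs or outputs), soundness (Proposition~\ref{prop:phasefree-sound}) immediately guarantees that their interpretations as elements of $\mathbb{Z}[\tfrac{1}{2}]$ coincide. The content of the lemma is therefore \emph{syntactic}: we need to exhibit a derivation using only the rewrite rules of Figure~\ref{fig:phasefree-rules}. Given the remark that the proofs of Lemmas~\ref{lem:scalarcancelstars} onward are often very short, I expect the derivation to consist of just one or two rewrite steps.

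The strategy is to apply the most relevant fusion rules to transform one side of the equation into the other. The natural tools are \SpiderRule for merging Z-spiders, \HFuseRule for merging H-boxes, \IDRule for eliminating single-input/single-output Z-spiders, and in particular \HHRule for cancelling pairs of Hadamards. After applying these reductions, the only elements that should remain on each side are stars and a handful of atomic scalar subdiagrams (empty Z-spiders or empty H-boxes), after which the desired equality should hold by a simple matching of generators.

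The main obstacle is that the star is an independent scalar generator which does not appear in the Z-spider, H-box, or bialgebra rules. Consequently, cancelling a star cannot happen by a generic fusion step; it must come from a rule that explicitly involves the star in its statement. The most plausible source is \HHRule, which is forced to carry a compensating factor of $\tfrac{1}{2}$ to balance the numerical equation $\sqrt{2}H \cdot \sqrt{2}H = 2I$, and that compensating factor is precisely what the star encodes. I therefore expect the derivation to consist of recognising the star-cancellation pattern from the statement of \HHRule (possibly after a preparatory application of \SpiderRule or \IDRule to massage the diagram into the required shape), after which both sides collapse to the same scalar subdiagram.
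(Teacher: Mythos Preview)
Your analysis correctly isolates \HHRule as the rule in Figure~\ref{fig:phasefree-rules} that must carry a star (forced by $\llbracket H\rrbracket^2 = 2I$), and a short syntactic derivation is indeed all that is needed. But what you have written is a plan rather than a proof, and the route you sketch---massage $\dstar\,\whitedot$ with \SpiderRule or \IDRule and then fire \HHRule---does not close on its own. The starting diagram contains no H-boxes; the only way to create them from the rules you list is a backward application of \HHRule, which introduces a \emph{second} star rather than consuming the existing one, and a subsequent forward \HHRule merely undoes the step. No amount of \SpiderRule/\IDRule preparation changes this, because those rules never produce H-boxes.

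The ingredient you are missing is that the derived-generator definition \GreyDef is the \emph{other} place in the calculus where a star occurs. The paper proves this lemma jointly with Lemma~\ref{lem:scalarcancelzx}, whose statement pairs a white and a grey spider, and the derivation accordingly passes through a grey spider obtained by unfolding \GreyDef (in its zero-leg instance this reads exactly $\dstar\,\whitedot = $ grey scalar dot). Your toolbox of \SpiderRule, \HFuseRule, \IDRule, \HHRule alone is not sufficient; you need \GreyDef together with the bialgebra law \StrongCompRule governing the grey--white interaction to make the scalar collapse.
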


\begin{lemma}\label{lem:scalarcancelzx}
    \begin{equation*}
        \tikzfig{scalar-rule-ZX}
    \end{equation*}
\end{lemma}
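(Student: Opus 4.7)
The goal is to derive an equality between two scalar (zero-input, zero-output) ZH-diagrams using only the rules of Figure~\ref{fig:phasefree-rules} together with the just-proved Lemma~\ref{lem:scalarcancelstars}. My plan is to reduce both sides to a common form and match the scalar constants.

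First I would unfold any derived grey spiders or NOTs via the definitions \eqref{eq:defx} and \eqref{eq:Z-triangle-dfn}, so that the diagram is expressed purely in terms of the primitive generators (Z-spiders, H-boxes, and stars). Second, I would apply \SpiderRule and \HFuseRule to fuse any adjacent Z-spiders and H-boxes, and \HHRule to eliminate contractible Hadamard-Hadamard pairs. If the diagram contains a bipartite subdiagram of Z-spiders wired to H-boxes, \HCompRule lets us replace it by a more fused configuration, often exposing further opportunities for \SpiderRule or \HFuseRule. Where NOTs arise in the process of unfolding, \MultRule cancels any pair.

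After this cleanup, what remains on each side should be a short product of scalars: a handful of Z-spiders with no legs (each contributing a factor of $2$), H-boxes with no legs (each contributing $-1$), and stars (each contributing $\tfrac12$). These residual scalar pieces can then be matched to the desired right-hand side using the star-absorption identities from Lemma~\ref{lem:scalarcancelstars}.

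The main obstacle I anticipate is not conceptual but a matter of careful bookkeeping. Because the empty Z-spider has value $2$ and the empty H-box has value $-1$, each intermediate simplification can change the overall multiplicative constant, and these factors need to be tracked against the stars so that the final scalar matches. The correct \emph{target} for each rewriting step is dictated by the interpretation $\intf{-}$ together with soundness (Proposition~\ref{prop:phasefree-sound}), which tells us in advance the numerical value each subdiagram must equal; the task is then to choose a sequence of rule applications that realises this value syntactically. Once the matching is organised, no individual step should require more than a direct appeal to one of the basic rules together with Lemma~\ref{lem:scalarcancelstars}.
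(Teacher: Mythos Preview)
Your approach is correct and matches the paper's: unfold the grey spider via the definition \GreyDef and simplify the resulting scalar pieces. The paper's proof is in fact a one-line chain shared with Lemma~\ref{lem:scalarcancelstars}, so most of the machinery you list (\HFuseRule, \HCompRule, \MultRule, and the scalar bookkeeping) is unnecessary here---unfolding \GreyDef at arity zero already gives $\tikzfig{star}\ \dotonly{white dot}$, which is exactly the content of Lemma~\ref{lem:scalarcancelstars}.
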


\begin{lemma}\label{lem:scalarcancelxh}
    \begin{equation*}
        \tikzfig{scalar-rule-XH}
    \end{equation*}
\end{lemma}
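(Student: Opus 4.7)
The plan is to prove this scalar identity syntactically inside the ZH-calculus, using only the rules of Figure~\ref{fig:phasefree-rules} together with the two preceding scalar lemmas (\ref{lem:scalarcancelstars} and \ref{lem:scalarcancelzx}). Since both sides of the equation have no inputs or outputs, a useful first sanity check is to compute $\llbracket\cdot\rrbracket$ on both sides: the star contributes a factor of $\tfrac{1}{2}$, the scalar Z-spider evaluates to $2$, and the scalar H-box evaluates to $-1$, so we can determine exactly how many stars are supposed to appear on each side before attempting any rewriting. This tells us the ``scalar budget'' that our derivation has to respect.

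The first step I would take is to eliminate the derived X-spider by expanding it via its definition \eqref{eq:defx}, turning the left-hand side into a diagram built only from Z-spiders, H-boxes, and stars. Next, I would collapse all adjacent H-boxes using \HFuseRule and all adjacent Z-spiders using \SpiderRule; any single-input-single-output Z-spider collapses via \IDRule, and any back-to-back pair of H-boxes collapses via \HHRule, which is where the fresh star scalars are produced. The NOT generator, if it appears after expanding \eqref{eq:defx}, is further unfolded via \eqref{eq:not-edge-def} and reduced with \MultRule.

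The remaining task is bookkeeping: every application of \HHRule contributes stars (or of the bialgebra rules, if they show up), and these have to be combined with the stars already present. Here I would invoke Lemma~\ref{lem:scalarcancelstars} to amalgamate stars with one type of scalar generator, and Lemma~\ref{lem:scalarcancelzx} to convert between Z- and X-scalar normalisations as needed. A careful application of these two lemmas should reduce the residue of the rewriting to exactly the right-hand side of Lemma~\ref{lem:scalarcancelxh}.

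The main obstacle will be the scalar accounting rather than any structural manoeuvre: the ZH rules \HHRule and the bialgebra rules emit star scalars implicitly, and if we apply one rule too many or too few, we land on a diagram whose value differs from the target by a factor of $2$ or $-1$. The semantic pre-computation of $\llbracket\cdot\rrbracket$ is therefore not cosmetic — it pins down the exact number of stars that must be consumed or produced, and thus dictates the order in which \HHRule, \MultRule, and Lemmas \ref{lem:scalarcancelstars}--\ref{lem:scalarcancelzx} are invoked.
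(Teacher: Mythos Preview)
Your plan --- unfold the scalar grey spider via \eqref{eq:defx}, simplify with the basic ZH rules, and settle the remaining scalars using Lemmas~\ref{lem:scalarcancelstars} and~\ref{lem:scalarcancelzx} --- is exactly the paper's approach. Indeed the paper proves Lemmas~\ref{lem:scalarcancelxh} and~\ref{lem:scalarcancelhh} together in a single short chain of equalities, and your outline reproduces that chain.

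That said, your write-up is far more elaborate than the derivation warrants. A $0$-ary grey spider has no legs, so unfolding \eqref{eq:defx} cannot introduce any NOTs; the hedges about \MultRule, \eqref{eq:not-edge-def}, and the bialgebra rules are all irrelevant here and can simply be dropped. The entire proof is two or three rewrite steps, and the ``scalar accounting obstacle'' you build machinery for never materialises: once the grey spider is unfolded, one spider/identity simplification and one appeal to the preceding scalar lemma finish the job. Your semantic pre-check is a reasonable sanity test, but as you already note, it plays no role in the syntactic argument.
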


\begin{lemma}\label{lem:scalarcancelhh}
    \begin{equation*}
        \tikzfig{scalar-rule-HH}
    \end{equation*}
\end{lemma}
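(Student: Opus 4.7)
The plan is to prove this scalar identity by a short chain of rewrites using the eight rules of Figure~\ref{fig:phasefree-rules}. The most natural first move is to apply \HFuseRule to fuse the two H-boxes wherever they share connecting wires, or \HHRule to collapse them when they are directly chained by a single wire. If neither simplification applies to the raw left-hand side, I would first use a bialgebra rule --- typically \HCompRule to push a Z-spider through an H-box --- to bring the two H-boxes into a position where \HHRule does apply; \SpiderRule and \IDRule can then be used to tidy up any leftover Z-spider structure.

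After this simplification the diagram should reduce to a plain numerical scalar, possibly decorated by a Z-loop or similar trivial structure. To match the stated right-hand side, I would rebalance the $\tfrac{1}{2}$ contributions by inserting or removing stars and invoking the already-established Lemma~\ref{lem:scalarcancelstars} to cancel redundant pairs of them. Since an H-box loop evaluates on the nose to an integer scalar (by soundness, Proposition~\ref{prop:phasefree-sound}), the number of stars needed is determined entirely by that integer, which pins down the right-hand side exactly.

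The main obstacle is combinatorial rather than conceptual: sequencing the bialgebra-then-\HHRule manoeuvre so that the two H-boxes are brought into cancellable position, and tracking the star count precisely so the final scalar matches on the nose. There is no single heavy lemma to invoke --- the whole proof should be only a couple of lines in the calculus --- so the argument stands or falls on careful bookkeeping of scalars and wire connectivity, rather than on finding a clever rewriting insight. As a sanity check at each step I would interpret both sides under $\intf{-}$ and verify the matrix entries agree, which also serves to cross-check the number of stars needed to absorb any factors of $2$ that arise when the H-boxes close up.
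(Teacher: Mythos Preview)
What you have written is a plan, not a proof: you list \HFuseRule, \HHRule, and \HCompRule as candidate first moves, hedge with ``if neither simplification applies\ldots'', and defer the actual rewrite sequence to unspecified ``careful bookkeeping''. For a scalar lemma of this size the entire content \emph{is} the concrete two- or three-step chain, so declining to commit to one is declining to give the proof.

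The paper's argument is more pointed. It proves Lemmas~\ref{lem:scalarcancelxh} and~\ref{lem:scalarcancelhh} together in a single chain of equalities, and the ingredient linking them---absent from your proposal---is the derived-generator definition \GreyDef, which rewrites a grey dot as a Z-spider decorated with H-boxes and a star. Unfolding \GreyDef on the XH scalar turns it into the HH scalar (modulo the star count), so the two lemmas are literally the same diagram seen two ways, and one short chain establishes both. Your speculated detour through \HCompRule is not what happens; the work is done by \GreyDef together with the same elementary moves already used for Lemmas~\ref{lem:scalarcancelstars} and~\ref{lem:scalarcancelzx}. If you want a standalone argument for Lemma~\ref{lem:scalarcancelhh}, the cleanest route is still to recognise one H-box as part of an unfolded grey dot and reduce to Lemma~\ref{lem:scalarcancelxh}, rather than reaching for a bialgebra rule.
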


\begin{lemma}\label{lem:scalarcancelznot}
    \begin{equation*}
        \tikzfig{scalar-rule-ZNOT}
    \end{equation*}
\end{lemma}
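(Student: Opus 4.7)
The plan is to unfold the derived \grayphase{\neg} (or white \phase{\neg}, depending on which NOT appears in the statement) using its definition \NotDef (respectively \ZDef), so that the diagram is expressed purely in terms of the primitive Z-spiders, H-boxes and stars. This reduces the task to a pure scalar identity between ``atomic'' Z/H/star pieces, which we already know how to handle by the earlier lemmas in this block.

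Once everything is unfolded, the next step is to fuse: apply \SpiderRule to merge any adjacent Z-spiders, \HFuseRule to merge any adjacent H-boxes, and \HHRule to cancel back-to-back H-boxes that arise from expanding the NOT definition. If the diagram obtained is still not in the shape of the RHS, a single application of one of the bialgebra rules \StrongCompRule or \HCompRule will typically be enough to reshape it, after which another round of fusion puts the diagram into the normal form of a disjoint union of single Z-spiders with no legs, H-boxes with no legs, and stars. At that point the diagram is just a product of scalar generators, and I would invoke the preceding scalar cancellation lemmas \ref{lem:scalarcancelstars}, \ref{lem:scalarcancelzx}, \ref{lem:scalarcancelxh} and \ref{lem:scalarcancelhh} to reduce that product to the RHS.

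The main obstacle will be scalar bookkeeping rather than any deep structural argument. Every time \SpiderRule or \HFuseRule is applied to merge a disconnected generator, and every application of the bialgebra rules, introduces or removes factors of $2$ (i.e.\ bare \whitedot) or $\tfrac{1}{2}$ (i.e.\ \dstar), and these must be tallied carefully so that the resulting number of stars and bare Z-spiders on the left matches what appears on the right. I would therefore make a point of writing each rewrite together with the scalar factor it produces, and at the end close the proof by repeated use of \ScalarRule to cancel matched \whitedot-\dstar pairs. As a sanity check before starting the syntactic derivation, I would evaluate $\intf{-}$ on both sides of the asserted equation: by soundness (Proposition~\ref{prop:phasefree-sound}) the two complex numbers must agree, which both confirms the statement and tells me exactly how many factors of \dstar the scalar-cancellation phase needs to produce at the end.
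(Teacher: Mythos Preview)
Your plan is pointed in the right direction --- unfold the \NotDef/\ZDef definition of the NOT, then simplify using spider/H-box fusion and the earlier scalar lemmas --- and this is exactly what the paper does. The paper's proof is a single short chain of diagrammatic equalities, not a multi-phase search.

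That said, what you have written is a strategy, not a proof. You never commit to which definition you are unfolding, which fusion you apply, or which of the scalar lemmas actually fires; you hedge with ``if needed'' and ``typically''. In particular, neither bialgebra rule \StrongCompRule/\HCompRule is used in the paper's derivation, and you do not need all four of Lemmas~\ref{lem:scalarcancelstars}--\ref{lem:scalarcancelhh}. The worry about ``scalar bookkeeping'' is overstated: once the NOT is unfolded, the resulting diagram collapses in two or three steps. Your closing idea of numerically evaluating $\intf{-}$ as a sanity check is fine heuristically, but it is not part of the proof and should not substitute for actually carrying out the rewrites.

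In short: correct approach, but you have described how you would look for a proof rather than given one. Write down the actual sequence of rule applications --- it is much shorter than your plan anticipates.
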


\begin{lemma}\label{lem:xnot-h-reduce}
\begin{equation*}
\tikzfig{xnot-h-reduce}
\end{equation*}
\end{lemma}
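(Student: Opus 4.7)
The plan is to unfold the derived generators on the left-hand side via their definitions \GreyDef\ and \NotDef\ so that everything is expressed in terms of the primitive Z-spider, H-box, and star generators, and then reduce using the H-box rules. Concretely, I would first replace the grey NOT by its definition (a Z-spider with an H-box on the wire carrying the negation, dressed by appropriate stars), so that the original H-box in the diagram becomes adjacent to a freshly introduced H-box coming from the unfolded NOT.

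Once two H-boxes are adjacent on a single wire, the plan is to fuse them together using \HFuseRule, which collapses two H-boxes sharing a wire into one H-box whose arity is the sum of the two arities minus two. The resulting H-box will have two wires connecting to itself (from the original fusion), which is exactly the situation eliminated by \HHRule\ — this is the essential reduction step and is where all the combinatorial collapse happens. After \HHRule\ is applied, whatever Z-spider structure remains on the wire should simplify via \SpiderRule\ (Z-spider fusion) and \IDRule\ (elimination of trivial Z-spiders), leaving only the target right-hand side plus some collection of loose stars.

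The final bookkeeping is purely scalar: the unfolding of \NotDef, the fusion step \HFuseRule, and the eliminations each introduce or remove stars, and these must be combined so that the overall scalar factor matches the right-hand side exactly. For this I would apply the immediately preceding scalar lemmas \ref{lem:scalarcancelstars}--\ref{lem:scalarcancelznot} repeatedly; since those lemmas explicitly describe how stars are absorbed by or produced from each of the primitive pieces, the scalar count can be reduced to a short arithmetic check.

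The main obstacle is the scalar accounting rather than the topology. Topologically, the sequence ``unfold NOT, fuse H-boxes via \HFuseRule, cancel via \HHRule'' is forced by the shape of the diagram. However, scalar factors tend to proliferate in ZH proofs, and the danger is ending with a diagram whose underlying structure is correct but whose scalar is off by a factor like $\tfrac12$ or $2$; that is precisely what the preceding scalar lemmas are designed to correct, so the argument reduces to invoking them in the right order once the H-boxes have collapsed.
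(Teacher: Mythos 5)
Your high-level plan --- unfold the grey NOT via \NotDef, clean up the trivial Z-spider with \IDRule, collapse the resulting H-boxes, and balance the stars --- is essentially the paper's route, but both of the key rewrite steps are mis-stated in ways that would make the derivation unsound if carried out as written. First, \HFuseRule does not fuse two H-boxes that merely share a bare wire: that operation is simply false (two binary H-boxes composed in sequence give twice the identity, whose matrix has zero entries, so it is not any H-box). In the ZH-calculus, H-boxes fuse only when they are joined \emph{through} a binary H-box (a Hadamard), together with a star scalar; that mediated configuration is the actual left-hand side of \HFuseRule. Your arity count (``sum of the arities minus two'') matches the genuine rule, which suggests you half have it in mind, but the mediating Hadamard and the star are not optional decoration --- they are what make the rule sound. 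Fortunately, the diagram obtained by unfolding \NotDef is exactly of this mediated form: the NOT becomes a star, a Hadamard, and a Z-spider carrying a one-legged H-box, and once \IDRule removes the two-legged Z-spider you are left with a chain (one-legged H-box)--(binary H-box)--(one-legged H-box) accompanied by one star, i.e.\ precisely the left-hand side of \HFuseRule with $m=n=0$. A single application then yields the zero-ary H-box, which is the right-hand side of the lemma.

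Second, the step you route through \HHRule has no counterpart in a correct derivation. \HHRule cancels two binary H-boxes composed in sequence (it is the statement that a single-input AND is the identity); it says nothing about self-loops, and no self-loop ever arises in this lemma --- the fusion just described produces an H-box with zero legs, not an H-box with two legs joined to itself. (An H-box self-loop could not be removed by \HHRule anyway, since tracing two legs of an H-box does not yield an H-box.) Relatedly, your concern about scalar proliferation is misplaced here: the single star introduced by unfolding \NotDef is exactly the star consumed by \HFuseRule, so the scalars match on the nose and no appeal to Lemmas~\ref{lem:scalarcancelstars}--\ref{lem:scalarcancelznot} is required.
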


\begin{lemma}\label{lem:negate-direct}
    \begin{equation*}
        \tikzfig{negate-direct}
    \end{equation*}
\end{lemma}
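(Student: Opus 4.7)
The statement of Lemma \ref{lem:negate-direct} (inferred from the filename \texttt{negate-direct} and the surrounding discussion) evidently gives a more direct construction of the white negate generator \phase{\neg} than the definitional one in equation \eqref{eq:Z-triangle-dfn}, which factors through the derived grey NOT (and hence also through the grey spider definition \eqref{eq:defx} involving Hadamards on each leg). The natural target form is a Z-spider with an extra leg capped off by an H-box (possibly together with a star scalar), since an H-box used as a state/effect gives $\ket 0 - \ket 1$, and plugging this into the extra leg of a three-legged Z-spider produces precisely $\ketbra{0}{0} - \ketbra{1}{1}$.

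The plan is as follows. First, expand \phase{\neg} using its definition \eqref{eq:Z-triangle-dfn}: this produces a grey NOT flanked (on at least one side) by a Hadamard, i.e.\ by a single H-box on a wire. Then unfold the grey NOT via \eqref{eq:defx}, which itself replaces any grey generator by a Z-spider whose legs all carry H-boxes. The result is a small diagram consisting of a Z-spider with a few legs, each decorated by an H-box, plus possibly an isolated phase-like H-box acting as the ``$\neg$'' marker.

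Next I would simplify this intermediate diagram using only the basic rewrite rules of Figure~\ref{fig:phasefree-rules}: repeatedly apply \HHRule to cancel back-to-back H-boxes on wires, \HFuseRule to fuse adjacent H-boxes when they meet, \SpiderRule to fuse Z-spiders that become connected after the cancellations, and \IDRule to discard any 2-legged Z-spider that ends up as a bare wire. Scalars introduced by these steps are tracked using Lemma~\ref{lem:scalarcancelstars} so that factors of stars can be cancelled or absorbed as needed, yielding precisely the intended direct form (e.g.\ a single Z-spider with one H-box branch and an appropriate scalar).

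The main obstacle I expect is book-keeping rather than any conceptual difficulty: making sure the orientations and the number of legs match after each rewrite, and keeping the scalars straight so that the final equation holds on the nose rather than just up to an overall scalar. The topology-only meta-rule and the symmetry of Z-spiders and H-boxes should allow legs to be freely bent, so the remaining work is mechanical rewriting in the style of the derivations already shown in Section~\ref{sec:motivation}.
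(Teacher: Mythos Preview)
Your proposal is correct and follows essentially the same approach as the paper: the paper's proof is a short diagram chain that unfolds the definition \ZDef of \phase{\neg} in terms of the grey NOT, then unfolds \NotDef and \GreyDef, and cancels the resulting Hadamards and scalars using \HHRule, \SpiderRule, \IDRule and Lemma~\ref{lem:scalarcancelstars} to arrive at the direct form (a Z-spider with one leg capped by an H-box). Your hedging about scalars and which side carries the Hadamard is the only slack, and as you note this is pure book-keeping rather than a conceptual gap.
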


\begin{lemma}\label{lem:x-spider}
\begin{equation*}
  \tikzfig{X-spider-rule}
\end{equation*}
\end{lemma}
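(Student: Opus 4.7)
The plan is to reduce X-spider fusion to Z-spider fusion by unfolding each X-spider via its defining equation \GreyDef. Recall that a grey (X) spider is defined as a white (Z) spider with an H-box on every leg. So the first step is to rewrite both X-spiders on the left-hand side according to \GreyDef, turning them into a pair of Z-spiders, each of whose legs carries an H-box.

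Next, on the connecting wires between the two X-spiders (the wires along which we want to fuse) there will now be two H-boxes sitting in series. I would cancel each such pair using \HHRule, producing a bare wire (possibly accompanied by a scalar star, depending on the exact form of \HHRule in Figure~\ref{fig:phasefree-rules}; any such stars can be absorbed using Lemma~\ref{lem:scalarcancelstars} at the end). With the interior H-boxes gone, the two Z-spiders are now connected by plain wires and share the same colour, so \SpiderRule applies and they fuse into a single Z-spider. The remaining H-boxes sit on the external legs exactly in the configuration required by \GreyDef, so re-folding via the definition gives a single X-spider with the combined legs, as desired.

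The main obstacle, and really the only one, is scalar bookkeeping: depending on the precise scalar in \HHRule and on how the X-spider definition carries its own normalising scalar, one must verify that the star factors produced by cancelling the interior H-boxes are exactly those swallowed by collapsing two X-spider definitions into one. This is a routine count of stars that can be handled by repeated application of Lemma~\ref{lem:scalarcancelstars} (and, if needed, Lemmas~\ref{lem:scalarcancelxh} or \ref{lem:scalarcancelhh}), but it is the one place where care is required; the combinatorial structure of the fusion itself is an immediate consequence of \HHRule together with \SpiderRule.
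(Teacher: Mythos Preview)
Your approach is correct and is exactly what the paper does: unfold both grey spiders via \GreyDef, cancel the pair of H-boxes on the connecting wire with \HHRule, fuse the resulting white spiders with \SpiderRule, and refold via \GreyDef. The scalar bookkeeping is in fact clean rather than delicate: each application of \GreyDef carries a single \tikzfig{star}, so unfolding two X-spiders produces two stars; one is consumed by \HHRule on the single connecting wire, and the remaining one is exactly what is needed to refold the fused white spider back into a grey spider---no auxiliary scalar lemmas are required.
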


\begin{lemma}\label{lem:x-special}
\begin{equation*}
  \tikzfig{X-special} 
\end{equation*}
\end{lemma}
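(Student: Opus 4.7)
The plan is to unfold both grey X-spiders using the definition \GreyDef, then reduce the resulting Z-spider and H-box configuration using the basic rules from Figure~\ref{fig:phasefree-rules}. The target is to show that the X-spider ``bubble'' (a one-input two-output X-spider stacked on top of a two-input one-output X-spider) equals the identity wire up to the stated scalar.

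First I would apply \GreyDef to each X-spider, which replaces each one by the corresponding Z-spider with H-boxes on every leg. The bubble then becomes a Z-spider with $1$ input and $2$ outputs, each output carrying an H-box, connected by two parallel wires to two H-boxes feeding into a Z-spider with $2$ inputs and $1$ output, and with H-boxes also sitting on the outermost input and output. Along each of the two connecting wires there are now two H-boxes placed back-to-back, so applying \HHRule twice removes them and leaves the two Z-spiders directly joined by two parallel wires.

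Next, apply \SpiderRule to fuse these two Z-spiders across the two connecting edges into a single Z-spider with $1$ input and $1$ output (the $2$ internal wires just disappear into the fusion; the rule \SpiderRule in Figure~\ref{fig:phasefree-rules} is precisely the statement that any two connected Z-spiders merge, with all external legs retained). Then \IDRule collapses this Z-spider$(1,1)$ to a bare wire, so the diagram reduces to an H-box on the input composed with an H-box on the output. A final use of \HHRule cancels this remaining pair of H-boxes, leaving the identity wire.

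The only real care needed is with scalars. Each invocation of \GreyDef introduces the star factors appearing in the definition of the grey spider, and the two applications of \HHRule and the fusion via \SpiderRule may each carry scalar contributions; I would collect these into a product of star generators and then repeatedly invoke Lemma~\ref{lem:scalarcancelstars} to simplify them down to the scalar on the right-hand side of the stated equation. This scalar bookkeeping is the only non-mechanical part of the argument: the underlying topological reduction (H-box cancellation, spider fusion, identity removal) is immediate, but matching the precise scalar prefactor requires tracking every generator introduced by \GreyDef and every cancellation produced by \HHRule and the Z-spider fusion across two wires.
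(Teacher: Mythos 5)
You have misidentified the statement. Lemma~\ref{lem:x-special} is the grey-spider analogue of the axiom \IDRule, not a speciality (``bubble'') law: it asserts that a \emph{single} grey spider with one input and one output equals the bare wire, with no scalar (its interpretation is exactly the identity matrix, since $\tfrac{1}{2}HH=I$ for the unnormalised $H$). This is visible from the paper's systematic pairing of derived X-rules with Z-axioms --- Lemma~\ref{lem:x-spider} mirrors \SpiderRule, Lemma~\ref{lem:x-special} mirrors \IDRule --- from the Boolean reading in Section~\ref{sec:motivation}, where \IDRule expresses the triviality of unary COPY, and from later uses of \IDRule such as replacing two-legged white spiders by cups in Lemma~\ref{lem:big-disconnect}. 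Accordingly, the paper's proof is only a few rewrites long: unfold the grey dot by \GreyDef into a star, two H-boxes and a one-input one-output Z-spider; delete that Z-spider by \IDRule; and cancel the remaining double H-box by \HHRule, the star from \GreyDef accounting for the factor of two. Your toolkit (unfold via \GreyDef, then \IDRule and \HHRule) is the right one, but it should be applied once, to a single spider; there is no second X-spider, no parallel internal wires, and essentially no scalar bookkeeping.

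Even judged as a proof of the bubble equation you state, the argument has a genuine gap at the fusion step. The rule \SpiderRule, as given in Figure~\ref{fig:phasefree-rules} and in its !-boxed form in Section~\ref{sec:bang-boxes}, merges two Z-spiders joined by a \emph{single} wire; it is not ``precisely the statement that any two connected Z-spiders merge'' with all internal wires vanishing. Applied to two Z-spiders joined by two parallel wires, it leaves a self-loop on the merged spider, and removing a self-loop (equivalently, the vanishing of parallel wires between white spiders) is exactly the speciality property --- it is not an axiom of the calculus, and it is the nontrivial content of the equation you set out to prove, so citing \SpiderRule here begs the question. The scalars also signal the mismatch: the bubble's interpretation is $2I$, not $I$, so after the two internal \HHRule cancellations consume the two stars produced by unfolding via \GreyDef, the outer pair of H-boxes has no star left to cancel against, and no amount of applications of Lemma~\ref{lem:scalarcancelstars} will remove the resulting factor of two. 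Both problems disappear once the statement is read correctly.
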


\begin{lemma}\label{lem:xnots-cancel}
\begin{equation*}
  \tikzfig{XNOT-spider-rule} 
\end{equation*}
\end{lemma}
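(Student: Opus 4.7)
The plan is to reduce the claim to the base multiplication rule \MultRule, which already says that a pair of adjacent NOTs annihilates. First I would rewrite each NOT sitting on a leg of the X-spider using \NotDef, which presents a NOT as a small grey spider decorated by a negation. Then applying the X-spider fusion Lemma~\xspiderrule absorbs both of these small grey spiders into the central one, so that the two negations end up adjacent on a single internal wire. At that point \MultRule removes the pair, and a final application of the trivial X-identity Lemma~\xidrule tidies up any leftover $1$-in/$1$-out grey spider, yielding the original X-spider with no NOTs on its legs.

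The main obstacle is the scalar bookkeeping. Since \GreyDef defines the X-spider only up to a factor involving the star generator, each use of X-spider fusion and of the X-identity lemma can introduce or absorb stars, and every NOT unfolded via \NotDef will similarly come with a star or H-box to account for. I would therefore track the stars explicitly on both sides of the equality throughout the derivation and, at the very end, clean them up with Lemma~\ref{lem:scalarcancelstars} to check that the net scalar contribution is the same on each side. Once this bookkeeping is in place, the chain of rewrites collapses to a short, essentially mechanical computation.
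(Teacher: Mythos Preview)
Your plan is essentially what the paper does: unfold each grey NOT via \NotDef, use the X-spider fusion lemma (Lemma~\ref{lem:x-spider}) to merge the underlying grey spiders, then invoke \MultRule so the two H-box ``decorations'' coming from the NOTs annihilate, with Lemma~\ref{lem:scalarcancelstars} absorbing the leftover stars. The scalar bookkeeping you flag is real but entirely routine here, since the stars introduced by \NotDef and by the X-spider fusion lemma are exactly balanced by those needed to refold into a plain grey spider at the end; once you write the chain out it collapses to a handful of steps, just as in the paper's one-line derivation.
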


\begin{lemma}\label{lem:x-with-xnot}
\begin{equation*}
  \tikzfig{X-with-XNOT} 
\end{equation*}
\end{lemma}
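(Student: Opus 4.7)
The plan is to prove the identity by reducing both sides to the primitive Z-spider/H-box fragment, fusing the resulting spiders, and then refolding the diagram into the derived grey/NOT notation. Concretely, I would first unfold every grey spider and every NOT on both sides via the definitions \GreyDef and \NotDef, so that the statement becomes an equation purely between ZH-diagrams built from Z-spiders, H-boxes and the star.

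At that point only the finite rules of Figure~\ref{fig:phasefree-rules}, together with the basic scalar identity of Lemma~\ref{lem:scalarcancelstars}, are needed. I would apply \SpiderRule to merge the adjacent Z-spiders that arise from the unfolding, \HHRule to eliminate pairs of single-legged H-boxes that appear back-to-back along internal wires, and \IDRule to drop any trivial two-legged Z-spider that is left over. This should collapse the interior of each side of the equation into a single Z-spider surrounded by a controlled pattern of H-boxes and a handful of NOT gadgets on the external legs.

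With the diagram in this simplified form I would refold the central Z-spider-plus-Hadamards block into a single grey spider by reading Lemma~\ref{lem:x-spider} backwards. The remaining NOTs should then be shown to sit in exactly the positions dictated by the right-hand side: any pair of NOTs that happens to land on the same external wire is annihilated via Lemma~\ref{lem:xnots-cancel}, and any isolated NOT is absorbed using the definition \NotDef. The main obstacle I anticipate is the combinatorial bookkeeping of NOT gates after unfolding: their parities on each leg must match the target diagram, and I expect to need Lemma~\ref{lem:xnots-cancel} repeatedly to cancel the extra pairs that \GreyDef and \NotDef introduce. Any leftover scalar discrepancy produced during these manipulations can be reconciled by appealing to Lemma~\ref{lem:scalarcancelstars} and the star-absorption identities of Lemmas~\ref{lem:scalarcancelzx}--\ref{lem:scalarcancelznot}.
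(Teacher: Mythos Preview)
Your plan is sound and would establish the lemma, but it does considerably more work than the paper. The paper's proof stays at the level of the derived grey generators: it unfolds only the \NotDef definition to expose the grey spider sitting inside the NOT, invokes Lemma~\ref{lem:x-spider} once to fuse the two adjacent grey spiders, and then re-applies \NotDef to package the result as a single grey NOT-spider. Because the argument never descends to Z-spiders and H-boxes, the scalar bookkeeping and NOT-parity tracking you anticipate simply never arise. Your full unfolding would in effect re-derive the content of Lemma~\ref{lem:x-spider} inside this proof rather than citing it.

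Two small slips worth noting. First, to ``refold the central Z-spider-plus-Hadamards block into a single grey spider'' you want the definition \GreyDef, not Lemma~\ref{lem:x-spider} read backwards; the latter is the fusion rule, and reading it backwards would \emph{split} a grey spider rather than form one. Second, once you have fully unfolded to primitives there are no grey NOTs left in the diagram, so Lemma~\ref{lem:xnots-cancel} (which cancels a pair of grey NOTs) is not directly applicable at that stage; you would instead be cancelling the underlying H-box/Z-spider gadgets via \HHRule and \SpiderRule. None of this breaks your argument, but it does confirm that staying at the grey-spider level, as the paper does, is the cleaner route.
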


\begin{lemma}\label{lem:znots-cancel}
\begin{equation*}
  \tikzfig{ZNots-cancel} 
\end{equation*}
\end{lemma}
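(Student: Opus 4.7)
The plan is to reduce the claim to the previously established involution of gray NOTs by unfolding the derived generator. Concretely, each occurrence of a negate-white spider is replaced by its definition from equation~\eqref{eq:Z-triangle-dfn}, which presents the white NOT as a Hadamard-conjugated gray NOT. After substitution, two consecutive negate-white markers on a single wire become a string of the form $H$--$\neg$--$H$--$H$--$\neg$--$H$, where every $H$ is an unlabelled H-box and each $\neg$ is a gray NOT generator.

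The first step is to apply \HHRule to the two adjacent H-boxes that meet in the middle, collapsing them to a plain wire and leaving two consecutive gray NOTs sandwiched between the outer Hadamards. The second step is to cancel these two gray NOTs. Depending on the precise form \tikzfig{ZNots-cancel} takes, this cancellation is either a direct application of \MultRule or, when the NOTs are attached to a spider leg, an application of the already-proven Lemma~\ref{lem:xnots-cancel} \XInvolutionRule (which is available since it appears earlier in this section). The final step is to apply \HHRule once more to the remaining pair of outer H-boxes, removing them and completing the reduction to the identity, or to the bare Z-spider in the multi-leg version.

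I expect the main (minor) obstacle to be purely bookkeeping: making the two central H-boxes topologically adjacent before invoking \HHRule may require isotoping the diagram, and in the variant of the lemma where the negate-white markers sit on distinct legs of a single Z-spider one first has to use \SpiderRule to pull both legs through the spider so that the gray NOTs, once exposed, land on a common wire where \MultRule applies. Since the calculus enjoys the ``only topology matters'' meta-rule, these deformations carry no cost, and no rule beyond \SpiderRule, \HHRule, \MultRule (or \XInvolutionRule) is needed. If any leftover scalar (in the form of \dstar\ generators) appears from the double unfolding, it is absorbed using Lemma~\ref{lem:scalarcancelstars} \ScalarRule.
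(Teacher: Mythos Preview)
Your high-level strategy—unfold the derived generator and cancel the doubled structure—matches the paper. The issue is in the specifics of the unfolding. Equation~\eqref{eq:Z-triangle-dfn} does not present the white negate as a Hadamard-conjugated gray NOT; it defines it as a white spider with an extra leg attached to a unary H-box (the state $\ket{0}-\ket{1}$). The Hadamard-conjugation identity you invoke is essentially the content of Lemma~\ref{lem:h-not-commute}, which in the paper's ordering is proved only \emph{after} Lemma~\ref{lem:znots-cancel} and so is not available here. Consequently no alternating string of Hadamards and gray NOTs ever appears, and the two applications of \HHRule together with the reduction to Lemma~\ref{lem:xnots-cancel} do not go through.

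The paper's proof stays entirely on the white side: after unfolding \ZDef twice and fusing the two white spiders via \SpiderRule, one is left with a single white spider carrying two H-box legs. These are precisely the data for \MultRule—which in its ZH form merges two H-box states through a white spider into a \whiteunit, not a rule that cancels gray NOTs—and the resulting unit is reabsorbed by \SpiderRule and \IDRule. No \HHRule, no gray generators, and no star bookkeeping enters.
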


\begin{lemma}\label{lem:h-z-commute}
\begin{equation*}
  \tikzfig{H-Z-commute}
\end{equation*}
\end{lemma}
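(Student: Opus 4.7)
The plan is to show that the white negate spider (which acts as the Z gate) can be pushed through an H-box at the cost of becoming a gray NOT on the other side. The verification at the level of interpretations is immediate: as $2 \times 2$ matrices, $\operatorname{diag}(1,-1)\cdot\begin{pmatrix}1 & 1\\ 1 & -1\end{pmatrix} = \begin{pmatrix}1 & 1\\ -1 & 1\end{pmatrix} = \begin{pmatrix}1 & 1\\ 1 & -1\end{pmatrix}\cdot\begin{pmatrix}0 & 1\\ 1 & 0\end{pmatrix}$, so the equation holds semantically; what remains is to derive it formally from the eight rules of Figure~\ref{fig:phasefree-rules}.

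First, I would unfold both derived generators appearing in the statement: the white negate spider via its definition \zdef, and the gray NOT via \notdef. Both definitions rewrite these shorthands in terms of the primitive Z-spiders and H-boxes (and possibly a star scalar), so once unfolded the claim becomes a purely structural identity between ZH-diagrams built from primitives.

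Next I would reduce the diagram obtained on the left by bringing together the H-boxes that appear from the unfolding of \phase{\neg} and the H-box already present in the statement, using \hfuserule to fuse H-boxes joined through Z-spiders, \hhrule to cancel any pair of back-to-back H-boxes, and \spiderrule together with \idrule to eliminate trivial one-legged Z-spiders that arise. After these simplifications, the remaining diagram should match precisely what one obtains by the analogous unfolding of the right-hand side through \notdef, completing the proof up to scalars.

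The main obstacle will be bookkeeping the scalar factors: the definition of \phase{\neg} almost certainly involves a \dstar\ (or a disconnected Z-spider with stars), so the naive rewriting will leave an imbalance that must be absorbed using Lemma~\ref{lem:scalarcancelstars} (and perhaps Lemma~\ref{lem:scalarcancelhh} if an extra empty H-box appears). Provided these scalar corrections are tracked carefully, no use of \StrongCompRule, \HCompRule, \MultRule, or \OrthoRule should be needed — the lemma is genuinely at the level of fusion and unit rules plus the definitions of the derived generators.
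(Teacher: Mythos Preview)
Your approach is correct and essentially matches the paper's proof. The paper's derivation is a short chain of diagrammatic equalities: unfold the definition \ZDef of the white negate (which expresses it via Hadamards and the grey NOT), then cancel the adjacent pair of H-boxes using \HHRule, leaving the grey NOT on the far side of the remaining H-box; the scalar bookkeeping is handled exactly as you anticipate, via Lemma~\ref{lem:scalarcancelstars}. In particular you are right that none of \StrongCompRule, \HCompRule, \MultRule, or \OrthoRule are needed. One minor simplification: you need not separately unfold \NotDef on the right-hand side, since the grey NOT already appears directly once \ZDef is expanded and the H-boxes are cancelled; likewise \HFuseRule is not actually required here, only \HHRule together with the definitions.
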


\begin{lemma}\label{lem:h-x-commute}
\begin{equation*}
  \tikzfig{H-X-commute}
\end{equation*}
\end{lemma}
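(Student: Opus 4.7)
The plan is to reduce the H-X commutation claim to the H-Z commutation already established in Lemma~\ref{lem:h-z-commute}, via the derived definition \GreyDef of the X-spider. First I would expand every X-spider appearing in the claim using \GreyDef, rewriting it as a Z-spider with a Hadamard (a two-legged H-box) sitting on each of its wires. This turns the statement into a purely Z-spider-based identity in which the ambient Hadamards supplied by the claim are now adjacent to the Hadamards introduced by the expansion.

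Second, wherever two H-boxes now sit on the same wire, I would collapse them using \HHRule. In the typical pattern of such a lemma, the Hadamard(s) on the legs that the claim moves through will pair up with the freshly-introduced H-boxes from the \GreyDef expansion, leaving a `bare' Z-spider leg exactly on the side where the Hadamard was supposed to commute. The other legs still carry their Hadamards from the expansion, which together with the Z-spider re-fold (via \GreyDef in the reverse direction) into an X-spider of the required shape.

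Third, I would use Lemma~\ref{lem:h-z-commute} on any residual Hadamard-next-to-Z-spider fragment that did not cancel outright, converting it into the corresponding X-spider fragment. A final refolding step via \GreyDef then assembles the right-hand side of the claim. If the equation carries scalar weight that is slightly off after the \HHRule applications (each cancellation of two H-boxes can leave behind a star-shaped scalar), I would balance it using Lemma~\ref{lem:scalarcancelstars} and the other scalar-cancellation lemmas proved immediately before.

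The main obstacle is purely bookkeeping rather than conceptual: keeping careful track of which legs carry Hadamards before and after the expansion, ensuring that the \HHRule cancellations are applied on exactly the right wires, and verifying that the scalar produced by any unmatched H-box pair is reconciled with the lemmas above. No new combinatorial or algebraic input beyond \GreyDef, \HHRule, Lemma~\ref{lem:h-z-commute}, and scalar cancellation should be needed.
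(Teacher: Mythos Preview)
Your approach is correct and matches the paper's proof: unfold the X-spider via \GreyDef into a Z-spider surrounded by Hadamards, cancel adjacent Hadamard pairs with \HHRule, and refold (with the scalar stars accounted for by Lemma~\ref{lem:scalarcancelstars}). The paper's derivation is exactly this short chain, so there is nothing to add.
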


\begin{lemma}\label{lem:h-not-commute}
\begin{equation*}
  \tikzfig{H-NOT-commute}
\end{equation*}
\end{lemma}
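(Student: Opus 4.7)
The plan is to reduce \ref{lem:h-not-commute} to the already-established Hadamard involution \HHRule, by unfolding the definition of NOT. Since NOT is the grey spider with a $\neg$ phase, and grey spiders are by definition white spiders with Hadamards on every leg \eqref{eq:defx}, a single-wire NOT is precisely a Hadamard, followed by a white $\neg$-spider (i.e.\ the derived Z from \eqref{eq:Z-triangle-dfn}), followed by another Hadamard. Composing such a NOT with an H-box on one of its wires therefore brings two H-boxes adjacent, and they can be collapsed.

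Concretely, I would proceed as follows. First, apply \eqref{eq:defx} to the NOT on the left-hand side, rewriting it as a grey dot with a $\neg$ phase; then apply the $(X)$ half of \eqref{eq:defx} to convert the grey dot into a white dot conjugated by Hadamards, with the inner $\neg$ label remaining on the white dot (which by \eqref{eq:Z-triangle-dfn} is the Z-negate generator). This gives the 1-in-1-out NOT as the sequential composition Hadamard–(white $\neg$)–Hadamard. Second, compose this expansion with the H-box prescribed by the statement; the two adjacent Hadamards can then be cancelled using \HHRule, and the remaining diagram is exactly the right-hand side (Hadamard composed with a white negate, in the orientation dictated by the statement).

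Since all rewrites preserve scalar factors up to the usual cancellations of \dstar\ pairs, any stray scalars produced by invoking \eqref{eq:defx} or \HHRule can be absorbed using Lemmas~\ref{lem:scalarcancelstars}, \ref{lem:scalarcancelzx}, \ref{lem:scalarcancelxh} and \ref{lem:scalarcancelhh} from the start of this subsection. The meta-rule that only topology matters, together with the stated symmetries of generators, guarantees that the two Hadamards produced by the $(X)$ definition can be freely slid onto the wire carrying the H-box to meet it.

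The main obstacle I anticipate is purely bookkeeping: tracking the scalars introduced by the $(X)$ definition and the \HHRule cancellation, and ensuring the orientation of the resulting Z-negate matches the right-hand side (so that the derivation does not accidentally produce a NOT on the wrong side of the Hadamard). No deep combinatorial argument is required — the work is essentially to verify that the Hadamard-conjugation picture of NOT, together with $H^2=\mathrm{id}$, yields the standard commutation relation between NOT and Hadamard.
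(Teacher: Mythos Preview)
Your approach is correct and is essentially the paper's own argument: unfold the grey $\neg$ via the definitions \eqref{eq:defx} so that it becomes a Hadamard-conjugated white $\neg$ (up to the $\dstar$/$\whitedot$ scalars carried by those definitions), cancel the two adjacent Hadamards with \HHRule, and recognise the remaining white $\neg$ as the derived $Z$ of \eqref{eq:Z-triangle-dfn}. The paper's proof is a single chain of diagram rewrites doing exactly this; the only point where your write-up is slightly loose is in speaking of ``applying $(X)$ with the $\neg$ label remaining'' --- strictly, \NotDef is a separate definition from \GreyDef, so what you are really doing is unfolding \NotDef and then observing that the resulting diagram matches the \ZDef definition sandwiched between Hadamards, rather than transporting a phase label through the $(X)$ rule.
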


\begin{lemma}\label{lem:cz-correct}
\begin{equation*}
  \tikzfig{CZ-correct}
\end{equation*}
\end{lemma}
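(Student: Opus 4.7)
The plan is to verify that the displayed diagram indeed represents the CZ gate by reducing it to a computational-basis calculation. The CZ gate acts on computational basis states as $\ket{ab}\mapsto(-1)^{ab}\ket{ab}$, so the identity should be provable either by direct computation under $\intf{\cdot}$ or by a short diagrammatic argument that exploits the copy behaviour of Z-spiders on basis states.

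First, I would recall the definitions: the two-qubit CZ is the diagonal matrix $\mathrm{diag}(1,1,1,-1)$, and the H-box on two inputs and two outputs implements exactly the matrix whose entries are all $1$ except the bottom-right entry, which is $-1$. Since the left-hand side presumably consists of two Z-spiders connected through an H-box (each Z-spider contributing one input wire, one output wire, and a leg into the H-box), I would first insert projectors $\ket{a}\!\bra{a}$ and $\ket{b}\!\bra{b}$ on the two input/output pairs. Using the standard spider interpretation $\ket{0}^{\otimes n}\!\bra{0}^{\otimes m}+\ket{1}^{\otimes n}\!\bra{1}^{\otimes m}$, each Z-spider forces the two external wires and the leg into the H-box to carry the same basis label. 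This reduces the H-box to its scalar entry at position $(a,b)$: namely $1$ when at least one of $a,b$ is $0$, and $-1$ when $a=b=1$. Assembling the four cases reproduces exactly $(-1)^{ab}\ket{ab}\!\bra{ab}$, giving $\mathrm{CZ}$.

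If a fully diagrammatic proof is preferred, I would instead use \SpiderRule to split each Z-spider into the composition of a comultiplication with an identity (so the input and output of each qubit are connected by a single Z-spider, which by \IDRule is equivalent to a wire), and \HFuseRule on any neighbouring H-boxes that might simplify. In particular, the Hadamard-gate presentation $H = \hadadot$ together with \HHRule can be used to convert between the H-box form of CZ and forms involving Hadamards on the qubit wires, matching whatever canonical CZ expression appears on the right-hand side.

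The main obstacle is bookkeeping of scalars, since the defining interpretation of the star generator is $\tfrac12$ and several of our scalar-cancellation lemmas (\ref{lem:scalarcancelstars}--\ref{lem:scalarcancelznot}) are needed to normalise factors produced by the spider decomposition and by sums of H-box entries; a careful count ensures no spurious $\tfrac12$ or $2$ is left over. Once the scalars are tracked correctly, the basis-by-basis calculation (or equivalently the rewrite via \SpiderRule, \HFuseRule, \IDRule, and \HHRule) closes the proof.
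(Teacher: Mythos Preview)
Your proposal conflates two different notions of equality. Lemma~\ref{lem:cz-correct} sits in Section~\ref{s:basic-derived}, whose purpose is to derive rewrite rules \emph{inside} the ZH-calculus; the claim is that the two diagrams are provably equal using only the axioms of Figure~\ref{fig:phasefree-rules}. Verifying that both sides have the same interpretation under $\intf{\cdot}$ is merely a soundness check, not a proof of the lemma. At this point in the paper completeness has not been established, so $\intf{D_1}=\intf{D_2}$ does not yet entail $\zh\vdash D_1=D_2$. Your primary argument---plugging in basis projectors and reading off $(-1)^{ab}$---therefore does not prove anything about derivability in the calculus and would be circular if used later in the completeness proof.

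Your fallback ``fully diagrammatic'' paragraph does not actually give a derivation: it lists some rule names but never specifies a rewrite sequence, and the rules you mention (\HFuseRule, \HHRule) are not the ones doing the work here. The paper's proof is a short chain of genuine diagrammatic rewrites: unfold the derived NOT/X generators via \GreyDef and \NotDef, cancel the resulting Hadamards with \HHRule, and tidy scalars. No basis-state reasoning appears anywhere. To fix your attempt, you need to exhibit an explicit sequence of rule applications transforming one side into the other, staying entirely at the level of ZH-diagrams.
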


\begin{lemma}\label{lem:not-commute}
\begin{equation*}
 \tikzfig{NOT-commute}
\end{equation*}
\end{lemma}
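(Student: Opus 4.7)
The statement (as suggested by its macro name \verb|\NotCopyRule| and position among the copy/commute lemmas) says that a grey NOT generator sitting on one leg of a Z-spider is equivalent to NOT generators placed on all the other legs, i.e.\ the NOT ``copies through'' the white spider. My plan is to reduce this identity to the bialgebra rule \StrongCompRule together with the structural properties of the derived generators proven in the preceding lemmas.

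The first step is to unfold the NOT on the distinguished leg according to the definition \NotDef, which expresses it as a grey (X-)spider carrying some H-box decoration (a Hadamard-dressed state). This exposes a genuine grey spider adjacent to the white spider. Next, I would apply the bialgebra rule \StrongCompRule to commute this grey spider past the white spider; the resulting diagram is a complete bipartite graph in which each original leg of the white spider now carries a copy of the grey-spider/H-box structure. I would then collapse each of these local structures back into the NOT generator on the corresponding leg by applying the X-spider fusion rule (Lemma \ref{lem:x-spider}), the single-wire X-spider identity (Lemma \ref{lem:x-special}), and the H-box rules \HFuseRule and \HHRule as needed; the end result is precisely a white spider with NOTs hanging off the remaining legs, matching the right-hand side of the claim.

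The final step is scalar bookkeeping. Both the bialgebra rewrite and the unfolding/refolding of the NOT generators will generally introduce residual H-box loops or star-scalars, which I would tidy up using Lemma \ref{lem:scalarcancelstars} (and, if necessary, Lemmas \ref{lem:scalarcancelzx}--\ref{lem:scalarcancelznot}) so that the scalar factors on the two sides match exactly.

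The main obstacle is this scalar bookkeeping: \StrongCompRule changes the wire count in a way that depends on the arities $m,n$, and the unfolded NOT contributes its own scalar contribution for each leg; matching these cleanly will require careful counting. A secondary difficulty is making sure that the H-boxes produced by unfolding \NotDef redistribute neatly to \emph{every} new leg after the bialgebra step, so that the re-folding in the second step actually reconstructs the NOT generator on each leg rather than leaving stray H-boxes. Once these bookkeeping issues are handled, the proof is essentially a single application of the bialgebra law sandwiched between the NOT definition and its inverse.
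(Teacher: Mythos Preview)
Your approach is correct and matches the paper's: the proof unfolds the NOT via \NotDef, applies the bialgebra rule \StrongCompRule to push the resulting X-type structure through the white spider, and refolds into NOTs on the remaining legs using exactly the kind of spider/H-box identities you list. The scalar and redistribution bookkeeping you worry about is real but resolves cleanly in a short chain of rewrites, just as in the paper's one-line derivation.
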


\begin{lemma}\label{lem:z-commute}
\begin{equation*}
\tikzfig{Z-commute}
\end{equation*}
\end{lemma}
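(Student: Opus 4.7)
The plan is to reduce the statement to Lemma \ref{lem:not-commute}, which already tells us that the NOT generator copies through a Z-spider, by exploiting the Hadamard/colour duality between the Z and X generators of the calculus. First I would unfold the derived \phase{\neg} using the definition (Z) from~\eqref{eq:Z-triangle-dfn}, and rewrite the X-spider via (X) from~\eqref{eq:defx}, so that every derived generator in the statement is expressed in terms of primitive Z-spiders, H-boxes, a single NOT generator and star scalars.

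Next, I would push the H-boxes that come from unfolding (X) past the Z-spider and past the NOT using Lemmas \ref{lem:h-z-commute} and \ref{lem:h-not-commute}, rearranging the diagram so that the NOT coming from the \phase{\neg} on the distinguished leg ends up adjacent to a Z-spider. At this point Lemma \ref{lem:not-commute} applies and copies that NOT onto every other leg. I then re-fold the Hadamards and NOT on each leg back into a \phase{\neg} by running the definitions (X) and (Z) in reverse, which produces the right-hand side.

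The main obstacle will be bookkeeping the star scalars, since each use of (X), (NOT) and (Z) introduces stars, and the number of such unfoldings scales with the arity of the spider. I would manage this with Lemma \ref{lem:scalarcancelstars} together with the scalar-cancellation rules \ref{lem:scalarcancelzx}--\ref{lem:scalarcancelznot}, checking that the total count of stars produced by unfolding matches the total consumed when re-folding on each leg. Once the scalars are accounted for, no new rewrite rules beyond the ones already listed are needed, and the derivation is essentially a Hadamard-conjugated copy of the proof of Lemma \ref{lem:not-commute}.
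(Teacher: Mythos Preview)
Your plan is correct and is essentially the paper's own argument: both reduce Lemma~\ref{lem:z-commute} to Lemma~\ref{lem:not-commute} by Hadamard/colour duality. Concretely, the paper unfolds the grey spider via (X), uses the $H$-commutation lemmas (\ref{lem:h-z-commute}, \ref{lem:h-not-commute}) so that a grey $\neg$ sits against the resulting white spider, invokes Lemma~\ref{lem:not-commute} to copy it onto the remaining legs, and then re-folds---exactly the route you describe. The scalar bookkeeping you flag is the only fiddly part and works out as you anticipate, since the number of (X)/(Z) unfoldings matches the number of re-foldings leg for leg.
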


\begin{lemma}\label{lem:copy-x-z}
\begin{equation*}
\tikzfig{copy-x-z}
\end{equation*}
\end{lemma}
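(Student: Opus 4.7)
The statement asserts that a grey (X-type) state, possibly further decorated with a NOT, can be pushed through a Z-spider so that each output wire of the Z-spider carries a copy of that same state (up to scalar corrections). My plan is to treat the two sub-cases—bare grey state, and grey state with a NOT—in sequence, reducing the decorated case to the bare one.

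For the bare case, I would first unfold the grey state using the derived-generator equation \GreyDef, so that it becomes a Z-state (a Z-spider with zero inputs and one output) connected to the ambient Z-spider by a single H-box. The key rewrite is then the ZH-bialgebra rule \HCompRule applied to that H-box and the $(1,n)$ Z-spider: this pushes the H-box across the spider, leaving a bipartite fragment in which each of the $n$ output wires receives its own H-box, all wired to a small common Z-spider fragment on the input side. Spider fusion \SpiderRule then absorbs the original Z-state into that fragment and breaks it apart so that each output wire is left carrying a Z-state connected through an H-box, which refolds via \GreyDef into a grey state on each output.

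For the NOT-decorated case, I would first commute the NOT through the $(1,n)$ Z-spider using Lemma~\ref{lem:not-commute}, so that a NOT appears on each of the $n$ output wires. What remains is a bare grey state plugged into the Z-spider, which the previous paragraph handles; combining the resulting bare grey states on each output wire with the NOTs produced on each output recovers a grey-with-NOT state on every output.

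I expect the main obstacle to be bookkeeping the scalar stars across the chain of rewrites, since the bialgebra and spider-fusion steps can generate or absorb stars in a way that is easy to miscount; these are reconciled using Lemma~\ref{lem:scalarcancelstars} together with the other scalar-cancellation lemmas in this subsection. Some care is also needed in the degenerate cases $n=0$ and $n=1$, where \HCompRule degenerates to a scalar identity, but both reduce to direct applications of the scalar lemmas. Beyond this scalar hygiene, the argument is mechanical.
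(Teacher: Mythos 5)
Your proposal has a genuine gap at its central step. In this calculus \HCompRule is a bialgebra between the \emph{grey} spider and the H-box --- its own zero-input instance is precisely Lemma~\ref{lem:copy-x-h} --- and there is no bialgebra-type rule between the Z-spider and the H-box at all, because the identity your step requires is false under the standard interpretation: for $n=2$ it would assert that the linear map of a binary H-box followed by a $(1,2)$ Z-spider is proportional to that of a Z-spider fragment followed by an H-box on each output, yet the first map sends $\ket{0}$ to $\ket{00}+\ket{11}$ while the second sends $\ket{0}$ to a multiple of $(\ket{0}+\ket{1})\otimes(\ket{0}+\ket{1})$. Since the ZH rules are sound (Proposition~\ref{prop:phasefree-sound}), no sequence of rule applications can realise this rewrite. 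Your follow-up step fails for the same kind of reason: \SpiderRule only \emph{fuses} spiders; it cannot ``break apart'' the $(0,n)$ Z-spider obtained by absorbing the Z-state, and indeed $\ket{0\cdots 0}+\ket{1\cdots 1}$ is not proportional to a tensor product of single-legged Z-states once $n\geq 2$. So both rewrites your argument hinges on are unsound --- this is not a matter of scalar bookkeeping.

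The correct argument is the paper's one-line proof: a grey state is, by definition, a grey spider with zero inputs, so the statement is literally the $m=0$ instance of \StrongCompRule (the caption of Figure~\ref{fig:phasefree-rules} allows $m,n$ to be any nonnegative integers); the complete bipartite right-hand side then degenerates to $n$ disjoint grey states and no white spiders, which is exactly the claimed equation. Unfolding the grey state with \GreyDef is what leads you astray, since it erases the grey/white structure that the bialgebra rule acts on. Finally, note that the NOT-decorated case you treat is not part of this lemma; it is the separate Lemma~\ref{lem:copy-xnot-z} (where your idea of commuting the NOT through the spider via Lemma~\ref{lem:not-commute} and reducing to the bare case is a reasonable route).
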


\begin{lemma}\label{lem:copy-xnot-z}
\begin{equation*}
\tikzfig{copy-xnot-z}
\end{equation*}
\end{lemma}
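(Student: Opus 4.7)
The plan is to reduce Lemma~\ref{lem:copy-xnot-z} to the previously proved Lemma~\ref{lem:copy-x-z} (which says that a bare X-state copies through a Z-spider) combined with Lemma~\ref{lem:not-commute} (which commutes a NOT through a Z-spider by duplicating it onto every leg), using the definition \NotDef of the NOT generator to mediate between the two.

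First, I would unfold the X-NOT state on the left-hand side via \NotDef, splitting it into a bare X-state connected to the Z-spider by a wire carrying an isolated NOT gate. This is purely a rewrite of the derived generator and does not require any equational reasoning beyond the definition. The resulting diagram is exactly the setup of Lemma~\ref{lem:copy-x-z} with a single NOT interposed on the wire.

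Next, I would apply Lemma~\ref{lem:not-commute} to push the NOT through the Z-spider, producing one NOT on each of the $n$ output legs. After this step the bare X-state is directly connected to the Z-spider, so Lemma~\ref{lem:copy-x-z} applies and turns the single X-state into $n$ copies, each still decorated by its own NOT on the outgoing wire. I would then refold each (X-state, NOT) pair back into a single X-NOT state by applying \NotDef in reverse, yielding the desired $n$-fold tensor of X-NOTs.

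The main obstacle I anticipate is scalar bookkeeping rather than structural rewriting: both the NOT-commutation and the X-state-copying steps in general introduce scalar subdiagrams (stars or similar), and duplicating the X-state $n$ times multiplies these scalars accordingly. I would handle this by invoking Lemmas~\ref{lem:scalarcancelstars}--\ref{lem:scalarcancelznot}, which between them cover the cancellation and combination identities needed to match the scalar prefactor appearing on the right-hand side of the lemma. Apart from this accounting, the proof is a straightforward chain of three rewrites.
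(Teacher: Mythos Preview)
Your proposal is correct and follows essentially the same route as the paper's proof: decompose the X-NOT state into an X-state with a NOT on the wire, push the NOT through the Z-spider onto every leg via Lemma~\ref{lem:not-commute}, then copy the bare X-state via Lemma~\ref{lem:copy-x-z} and refold. The only minor point is that the unfolding of the X-NOT state into an X-state plus a separate NOT is more directly justified by Lemma~\ref{lem:x-with-xnot} than by \NotDef itself, and in this particular chain no nontrivial scalar residue actually survives, so the scalar-cancellation lemmas are not needed.
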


\begin{lemma}\label{lem:copy-z-x}
\begin{equation*}
\tikzfig{copy-z-x}
\end{equation*}
\end{lemma}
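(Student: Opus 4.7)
The statement asserts that a Z-state copies through an X-spider, analogous to Lemma \ref{lem:copy-x-z} which handles the dual (X-states copying through Z-spiders). Conceptually, this is immediate: the Z-state is $\ket 0 + \ket 1 = \sqrt 2\,\ket +$, and $\ket +$ is an eigenstate of XOR in the sense that it copies through the X-spider, which acts as copying in the $\{\ket +, \ket -\}$ basis. The proof task is therefore just to package this fact inside ZH, taking care of the scalar bookkeeping.

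My plan is to reduce the claim to the spider rule \SpiderRule together with the X-spider definition \eqref{eq:defx}. First, I would expand the X-spider on one side of the equation via \eqref{eq:defx}, turning it into a Z-spider with a Hadamard on every leg. The Hadamard on the leg where the Z-state attaches can then be absorbed: a Hadamard followed (or preceded) by a Z-state simplifies via Lemma \ref{lem:xnot-h-reduce} (or by the dual copy lemma \ref{lem:copy-x-z}, which was already established) to something whose composition with the remaining Z-spider is governed by spider fusion \SpiderRule and, for any resulting single-input spider, by \IDRule. After fusion, each remaining output of the Z-spider still carries a Hadamard coming from the original X-spider; I then refold locally on each output by applying \eqref{eq:defx} in reverse, so that each output reconstitutes as a single-output X-spider with a Z-state sitting on one of its legs — exactly the right-hand side.

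The genuine difficulty is the scalar accounting. Each time a Hadamard is merged with a Z-state, a factor of $\sqrt 2$ is effectively produced, and the overall equation between an $n$-leg X-spider with one Z-state input and the tensor product of $n$ copies of a Z-state (plus leftover X-structure) cannot be scalar-exact without compensating stars. I expect to insert and cancel stars using Lemma \ref{lem:scalarcancelstars}, and possibly the HH-scalar identity of Lemma \ref{lem:scalarcancelhh} if pairs of Hadamards collide during the rearrangement. Commuting Hadamards past Z-spiders where needed is handled by Lemma \ref{lem:h-z-commute}. The argument should close in a handful of steps once the correct number of stars is placed on each side; the main risk, and therefore the step I would verify most carefully, is counting exactly how many star generators are produced by the $n$ Hadamard–Z-state cancellations relative to the single such cancellation absorbed on the input side.
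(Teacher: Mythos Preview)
Your approach is far more elaborate than necessary. The paper dispatches this lemma in one line: it is simply a degenerate instance of the bialgebra rule \StrongCompRule, obtained by setting one of the arities to zero. When the Z-side of \StrongCompRule has no external legs, the complete bipartite graph on the right-hand side collapses to a tensor product of unary X-spiders each fed by a Z-state, which is exactly the copy statement. No unfolding of the X-spider definition, no Hadamard shuffling, and crucially no scalar bookkeeping is required, because the stars are already baked into \StrongCompRule as stated.

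Your route through the definition \eqref{eq:defx}, spider fusion, and Lemmas~\ref{lem:xnot-h-reduce}, \ref{lem:h-z-commute}, \ref{lem:scalarcancelstars} could in principle be made to work, but the scalar accounting you flag as the ``main risk'' is exactly where it is likely to go wrong: unfolding the X-spider produces one star per leg plus one global star correction (see \eqref{eq:grey-spider-dfn}), and you then have to refold $n$ separate unary X-spiders on the other side, each with its own star. Getting this to balance is doable but tedious, and entirely avoidable given that \StrongCompRule already encodes the correctly-scaled statement.
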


\begin{lemma}\label{lem:copy-znot-x}
\begin{equation*}
\tikzfig{copy-znot-x}
\end{equation*}
\end{lemma}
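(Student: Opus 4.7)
The lemma asserts that a Z-NOT state copies through an X-spider, which is the mirror image of Lemma~\ref{lem:copy-xnot-z} under the colour-swap induced by the Hadamard. The cleanest plan is to reduce to that already-proven lemma by pushing everything through Hadamards.

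The plan is as follows. First, unfold the X-spider on the right-hand side (and the X-spider on the left-hand side, if present) using the derived-generator definition \GreyDef, so every grey spider becomes a Z-spider sandwiched between Hadamards on each leg. Next, on the leg carrying the Z-NOT state, use Lemma~\ref{lem:h-z-commute} (or equivalently a small computation combining \ref{lem:h-not-commute} and \ref{lem:negate-direct}) to slide the Hadamard past the white NOT, converting the composite of a Hadamard with a Z-NOT state into an X-NOT state up to a scalar. The diagram is now a Z-spider fed by an X-NOT state, surrounded by Hadamards on the remaining legs.

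At this point apply Lemma~\ref{lem:copy-xnot-z}: the X-NOT state copies through the Z-spider, producing a parallel X-NOT state on every other leg of the Z-spider. Each of these X-NOT states is still sitting just below a Hadamard (inherited from the original \GreyDef unfolding). Push every Hadamard back through its adjacent X-NOT using Lemma~\ref{lem:h-z-commute} again in the reverse direction, turning each X-NOT-then-Hadamard back into a Hadamard-then-Z-NOT. Finally, re-fold the Hadamards back into grey spiders via \GreyDef to recover the claimed right-hand side.

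The only non-routine step is bookkeeping of the scalar factors (the \dstar\ generators): each application of \GreyDef and each colour-change on the NOT introduces a controlled number of stars, and these must be gathered and cancelled using Lemma~\ref{lem:scalarcancelstars} and its companions \ref{lem:scalarcancelzx}--\ref{lem:scalarcancelznot}. I expect this scalar tracking to be the main — and essentially the only — obstacle, since the structural moves are all direct colour-change arguments once the reduction to \ref{lem:copy-xnot-z} is set up.
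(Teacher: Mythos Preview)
Your colour-swap reduction is correct and will go through, but it is more roundabout than the paper's argument. The paper does not unfold the X-spider via \GreyDef or reduce to Lemma~\ref{lem:copy-xnot-z}; instead it uses Lemma~\ref{lem:z-commute} directly, which says that a white NOT on one leg of an X-spider pushes through to all the other legs. Writing the Z-NOT state as the white NOT applied to a white unit, one pushes the white NOT through by \ref{lem:z-commute} and then copies the remaining white unit through by Lemma~\ref{lem:copy-z-x} (itself just \StrongCompRule). This is exactly the colour-dual of the short proof of Lemma~\ref{lem:copy-xnot-z}, which uses \ref{lem:not-commute} and \ref{lem:copy-x-z} in the same pattern. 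Your approach buys a uniform ``reduce everything to the Z-coloured lemma by Hadamard conjugation'' viewpoint, but here it re-derives work already packaged into \ref{lem:z-commute} and forces you into the star bookkeeping you anticipated; the paper's route avoids both the unfolding and the scalars entirely. (A small slip in your write-up: there is no X-spider on the right-hand side of the lemma to unfold.)
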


\begin{lemma}\label{lem:white-not-cancel}
	\begin{equation*}
		\tikzfig{white-not-cancel}
	\end{equation*}
\end{lemma}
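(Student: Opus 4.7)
The statement asserts that two derived white NOTs placed back-to-back cancel to (a scalar times) the plain identity wire. Since $\intf{\phase{\neg}} = \ketbra{0}{0}-\ketbra{1}{1}$ is the $Z$ gate and $Z^2 = I$, the lemma is semantically obvious and soundness is clear by Proposition~\ref{prop:phasefree-sound}; the point of the proof is to give a purely syntactic derivation from the rules of Figure~\ref{fig:phasefree-rules} together with the basic derived rules already established above.

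My plan is to unfold both instances of the white NOT via its definition \ZDef, collapse the interior using Z-spider fusion, and then invoke one of the already-established NOT-cancellation lemmas. In detail: (i) expand each \phase{\neg} using \ZDef, producing a sequence that contains two Z-spiders meeting at the place where the two white NOTs were composed; (ii) apply \SpiderRule to fuse these Z-spiders into a single Z-spider, which now carries two copies of whatever leg-decoration the definition supplies (a NOT-edge leading to a grey node, following the red-dashed convention of equation~\eqref{eq:not-edge-def}); (iii) use the symmetry of the fused Z-spider (the `only topology matters' meta-rule) to bring the two decoration legs adjacent, so that the Z-spider becomes a grey spider carrying two NOTs by Lemma~\ref{lem:negate-direct} or by direct inspection; (iv) cancel the resulting pair of NOTs via Lemma~\ref{lem:znots-cancel} (or Lemma~\ref{lem:xnots-cancel} if the cancellation ends up on the grey side after the fusion step); (v) remove the now-degree-two Z-spider via \IDRule, leaving the plain identity wire.

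The main obstacle I anticipate is scalar bookkeeping rather than any deep rewriting. Each of \SpiderRule, \IDRule, and the NOT-cancellation lemmas can introduce or discard \dstar\ scalars, and the factor of $\frac{1}{2}$ in $\intf{\dstar}$ means miscounting them will give the wrong numerical coefficient. I will therefore track star generators explicitly at every step and clean up at the end using Lemma~\ref{lem:scalarcancelstars}, which is precisely designed for this kind of housekeeping. A minor secondary issue is making sure that when I `flip' the second \phase{\neg} to align with the first, I use only the symmetry properties of the underlying Z-spider (noted after equation~\eqref{eq:spider-unbend-wire}) rather than any symmetry of AND-like generators, which as emphasised in \eqref{eq:and-unbend-wire} would be illegitimate.
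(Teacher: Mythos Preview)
Your proposal appears to misidentify the statement. What you describe---two white NOTs (i.e.\ two $Z$ gates) composed back-to-back cancelling to the identity---is already exactly the content of Lemma~\ref{lem:znots-cancel}, which is stated and proved earlier in Section~\ref{s:basic-derived}. The paper would not restate and reprove the same fact under a second name, and indeed Lemma~\ref{lem:white-not-cancel} is a distinct statement.

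The context makes clear what the lemma actually says. It sits among Lemmas~\ref{lem:copy-x-z}--\ref{lem:copy-znot-h}, which the paper describes as governing ``the interaction of states with generators'', and its only later use is in the proof of Lemma~\ref{lem:zero-becomes-two}, where it is invoked so that ``$H(1)$-boxes can be left out of the diagram by \eqref{eq:unit} and Lemma~\ref{lem:white-not-cancel}''. In that setting, \eqref{eq:unit} turns an $H(1)$-box into a bundle of white (co)units, and the remaining obstruction is the grey NOTs coming from the indexing maps $\iota_{\vec b}$. So Lemma~\ref{lem:white-not-cancel} is the statement that a grey NOT is absorbed by the white (co)unit, i.e.\ $\grayphase{\neg}\circ\whiteunit = \whiteunit$ (semantically, $\text{NOT}(\ket 0+\ket 1)=\ket 1+\ket 0$).

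For that statement your outlined strategy does not apply: there is only one NOT involved, it is grey rather than white, and there is nothing for \ZDef or Lemma~\ref{lem:znots-cancel} to act on. The paper's derivation instead unfolds the grey NOT via \NotDef and simplifies the resulting Hadamards and spiders using the already-established basic lemmas. If you revisit the lemma with this reading, the proof is a short chain of rule applications with no delicate scalar bookkeeping.
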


\begin{lemma}\label{lem:copy-x-h}
\begin{equation*}
\tikzfig{copy-x-h}
\end{equation*}
\end{lemma}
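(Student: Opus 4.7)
I expect Lemma \ref{lem:copy-x-h} to assert that a grey X-state attached to one wire of an H-box equals, up to an appropriate scalar, a collection of disconnected Z-spider dots on all remaining wires of the H-box. This is the H-box analog of the earlier copy lemmas such as \ref{lem:copy-x-z} and \ref{lem:copy-z-x}, and is consistent with the direct computation that a grey dot (interpreted as a scalar multiple of $\ket{0}$) contracted into one index of an H-box removes the phase and leaves each remaining wire summed over $\{0,1\}$ independently.

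The plan is to begin by unfolding the X-state via its derived definition \GreyDef, replacing the grey dot with a white Z-state followed by an H-box on its single connecting wire; this yields a configuration of two H-boxes joined by one wire. I then apply \HFuseRule to simplify the double-H-box structure, along with \IDRule and \HHRule to remove any trivial wires that appear. Next, to disconnect the remaining Z-structure into isolated Z-dots on each free wire of the original H-box, I would invoke the ZH-bialgebra rule \HCompRule, which is designed precisely to let one push a Z-spider through an H-box, converting their interconnection into a bipartite pattern; for a Z-spider of arity $(0,1)$ this bipartite pattern should degenerate into disconnected Z-dots on each H-box wire. Throughout, the overall scalar factor is tracked using star generators and simplified via \ScalarRule.

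The main obstacle is the final disconnection step, since a Z-state attached directly to an H-box does not naively factor into a tensor product (it evaluates algebraically to an OR-like expression, not a product). If \HCompRule does not dispatch this in a single application, an alternative strategy is to proceed by induction on the arity of the H-box: the base case of small arity can be handled by direct computation using \HHRule, \IDRule, and the behaviour of a one-wire H-box acting on the X-state as a Hadamard-like map, while the inductive step reduces the arity by applying \HFuseRule and invoking the inductive hypothesis. Either route should converge on the desired disconnected form, with the scalar bookkeeping handled uniformly by \ScalarRule.
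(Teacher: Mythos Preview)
You have the statement right, but the proof is far simpler than your plan. The paper dispatches Lemma~\ref{lem:copy-x-h} in one sentence, together with Lemmas~\ref{lem:copy-x-z} and~\ref{lem:copy-z-x}: ``These are just applications of \StrongCompRule and \HCompRule.'' Concretely, the lemma is the degenerate instance of \HCompRule obtained by setting one of the two arities in the !-boxed rule to zero, so that one generator on the left-hand side collapses to a unit and the complete-bipartite right-hand side degenerates to a disconnected product of units. There is no unfolding of the grey spider via \GreyDef, no use of \HFuseRule, \IDRule or \HHRule, and no induction.

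The obstacle you anticipate --- that a white Z-state fed directly into an H-box fails to factor as a tensor product --- is real, but it is self-inflicted: it only appears \emph{because} you first unfold the grey unit. The bialgebra rule \HCompRule applies to the original configuration as written, and its zero-arity case immediately produces the disconnected white units on the remaining wires. Your inductive fallback could probably be pushed through, but it is unnecessary machinery for a result the paper treats as an immediate specialisation of an axiom.
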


\begin{lemma}\label{lem:copy-xnot-h}
\begin{equation*}
\tikzfig{copy-xnot-h}
\end{equation*}
\end{lemma}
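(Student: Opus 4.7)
The plan is to prove that the X-NOT state $\grayphase{\neg}$ attached to one input wire of an H-box of arity $(m,n)$ reduces to an H-box of arity $(m{-}1,n)$, up to a scalar. Semantically this is transparent: the X-NOT state interprets as $2\ket{1}$, and feeding $\ket{1}$ into the H-box fixes that input to $1$ while leaving the sign structure $(-1)^{i_1\cdots i_m j_1\cdots j_n}$ on the remaining indices intact, producing exactly an H-box with one fewer input. This is the natural ``X-NOT'' analogue of the preceding Lemma~\ref{lem:copy-x-h}, which handles an X-state (and therefore disconnects the H-box entirely rather than reducing its arity by one).

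My strategy is to reduce to Lemma~\ref{lem:copy-x-h} by commuting the NOT across the Hadamard implicit in the X-spider definition. First, unfold the X-NOT state via \GreyDef and \NotDef so that it becomes a Z-state composed with an H-box of arity $(1,1)$ and a NOT sitting on the connecting wire. Next, apply Lemma~\ref{lem:h-not-commute} to push the NOT past this small Hadamard, turning it into a white NOT on the Z-state side, which by Lemma~\ref{lem:negate-direct} can be absorbed into the Z-spider as a negate-phase. Then apply \HFuseRule to fuse the small H-box with the ambient one, producing a single larger H-box with a Z-state (carrying a negate) still dangling off the former X-NOT wire. Finally, combine \SpiderRule, \IDRule, and \HHRule to collapse the attached Z-structure and identify what remains as the claimed smaller H-box, using Lemma~\ref{lem:scalarcancelstars} to tidy up the stars.

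The main obstacle is the scalar bookkeeping: each fusion of H-boxes, each spider collapse, and each invocation of \HHRule can introduce or remove stars $\dstar$, and matching the scalar on the right-hand side requires a careful application of Lemma~\ref{lem:scalarcancelstars} together with the other scalar-cancellation lemmas. A secondary difficulty is that, in contrast to the X-state case where the H-box fully disconnects into tensor-factors, here the residual is still a nontrivial H-box of arity $(m{-}1,n)$; the core computational step is thus the fact that a Hadamard-routed $\ket{1}$-like input absorbs cleanly into the H-box via \HFuseRule rather than breaking it apart, and the proof must make this syntactically precise.
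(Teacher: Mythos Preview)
Your overall direction is right, and after your steps 1--3 you do reach the key configuration: a $1$-ary H-box, a Hadamard, and the ambient H-box strung together (since the Z-negate state \emph{is} the $1$-ary H-box by \ZDef and \IDRule). The gap is in how you finish.

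In step 4 you claim \HFuseRule fuses the $2$-ary Hadamard with the ambient $n$-ary H-box into a single H-box, with the Z-negate state still dangling. This cannot be right: the composite of a Hadamard with an $n$-ary H-box along a single wire is not an H-box at all (for $n=3$ it is, up to a factor of $2$, the map $\ket{a,b}\mapsto\ket{a,b,ab}$), so no sound rule can produce that merged object. Worse, the residual you describe for step 5---a Z-negate state plugged into an H-box---is exactly the content of Lemma~\ref{lem:copy-znot-h}, which in this paper is proved \emph{after} the present lemma and is singled out as the one basic derived rule requiring \OrthoRule. The rules you list (\SpiderRule, \IDRule, \HHRule) do not discharge it without that forward reference.

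The paper's argument is shorter. Unfolding the X-NOT state by \NotDef directly and removing the degenerate two-leg Z-spider already yields a $1$-ary H-box feeding through a Hadamard; there is no need for the detour via Lemma~\ref{lem:h-not-commute} and Lemma~\ref{lem:negate-direct}. After plugging into the ambient H-box one has three H-boxes in a row, and \HFuseRule collapses them in one move to the smaller H-box, with the star from \NotDef cancelling the scalar via Lemma~\ref{lem:scalarcancelstars}. Your route reaches the same three-H-box configuration, but you then apply \HFuseRule to the wrong pair.
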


\begin{lemma}\label{lem:copy-znot-h}
\begin{equation*}
\tikzfig{copy-znot-h}
\end{equation*}
\end{lemma}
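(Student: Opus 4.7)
The plan is to reduce Lemma~\ref{lem:copy-znot-h} to the already-established Lemma~\ref{lem:copy-xnot-h} by unfolding the white-NOT state via its defining equation~\eqref{eq:Z-triangle-dfn}. This mirrors the strategy of the preceding copy lemmas: when a derived generator appears as a state feeding into an H-box, unfold its definition into primitive ZH data, push what remains through the H-box using an earlier copy lemma, and then refold on the other side.

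First, I rewrite the $\whitephase{\neg}$ state on the left-hand side via \ZDef, exhibiting it as a small subdiagram built from a grey NOT $\grayphase{\neg}$ together with an H-box (and possibly a trivial Z-spider). Next, I use H-spider fusion \HFuseRule to merge the outer H-box of this unfolded state into the target H-box through which we are copying. What remains is a single grey NOT feeding into a (now enlarged) H-box, to which \XInvolutionRule or the previously proved Lemma~\ref{lem:copy-xnot-h} applies directly, yielding a grey NOT on each remaining leg. Finally, on the right-hand side I refold those grey NOTs back into white NOTs via \ZDef, using \IDRule and Lemma~\ref{lem:x-special} to eliminate any redundant unary white or grey spiders that appear as a by-product of unfolding and refolding.

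The main obstacle will be scalar bookkeeping. Each application of \ZDef and \HFuseRule can introduce or consume stars \dstar, and the number of remaining legs of the target H-box determines how many stars ultimately survive on each side; moreover, passing a $\whitephase{\neg}$ through an H-box produces an additional factor of $2$ compared with passing a $\grayphase{\neg}$. I expect to balance these by one or two applications of Lemma~\ref{lem:scalarcancelstars}, combined where needed with Lemmas~\ref{lem:scalarcancelhh} and~\ref{lem:scalarcancelznot}. Tracking the precise number of stars as the arity of the H-box varies is the subtle part of the derivation; the structural manipulations themselves are mechanical once the scalars are under control.
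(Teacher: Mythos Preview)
Your approach has a genuine gap: it never invokes the ortho rule \OrthoRule, yet the paper singles out Lemma~\ref{lem:copy-znot-h} as \emph{the} basic derived rule whose proof requires \OrthoRule, and Section~\ref{sec:o-rule} (Theorem~\ref{thm:o-alt}) shows that, modulo the remaining axioms, \OrthoRule is equivalent to the conjunction of Lemma~\ref{lem:copy-znot-h} and Lemma~\ref{lem:dedup}. An \OrthoRule-free derivation of Lemma~\ref{lem:copy-znot-h} would collapse that equivalence, so one should be suspicious from the outset.

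The concrete failure is your use of \HFuseRule. You propose to unfold the white-NOT via \ZDef (schematically $Z = H \circ \text{NOT} \circ H$, up to a star) and then ``merge the outer H-box of this unfolded state into the target H-box'' using \HFuseRule. But \HFuseRule does not let two H-boxes joined by a bare wire fuse into a single H-box: contracting an $n$-legged and an $m$-legged H-box along one wire yields a tensor with entries $1 + (-1)^{(i_1\cdots i_{n-1}) + (j_1\cdots j_{m-1})}$, which is not of the form $(-1)^{i_1\cdots i_{n-1} j_1\cdots j_{m-1}}$. So the step where you absorb the Hadamard coming from \ZDef into the target H-box is not licensed by \HFuseRule, and once that step fails there is no way to reduce the problem to Lemma~\ref{lem:copy-xnot-h}. (Semantically this is visible too: feeding $\ket{0}-\ket{1}$ into an H-box gives the tensor with entries $1-(-1)^{i_2\cdots i_n}$, whereas feeding $\ket{1}$ in gives entries $(-1)^{i_2\cdots i_n}$; no scalar relates them.) The paper's proof instead works directly with \OrthoRule over three displayed lines of rewriting, and this use of \OrthoRule is essential rather than incidental.
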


\begin{lemma}\label{lem:x-z-commute}
\begin{equation*}
\tikzfig{X-Z-commute}
\end{equation*}
\end{lemma}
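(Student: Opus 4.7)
The plan is to reduce the claimed commutation to a commutation through a bare Z-spider, which we have already proved as Lemma~\ref{lem:not-commute}, by unfolding the derived generators involved. I read the statement as saying that a negate (Z) on one leg of an X-spider is equivalent to negates on each of its other legs (up to scalars): the X-coloured analogue of the NOT-through-Z-spider rule, and a natural companion to the other commutation lemmas collected in this subsection.

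First I would unfold the X-spider on the left-hand side via \GreyDef, replacing it with Hadamards on every leg and a bare Z-spider at the centre. On the distinguished leg, I unfold the negate generator using \ZDef into an H-NOT-H fragment. The pair of Hadamards on that leg cancels via \HHRule, leaving a bare NOT attached directly to one leg of the underlying Z-spider, with Hadamards still decorating the other legs.

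At this point Lemma~\ref{lem:not-commute} fires: the NOT copies through the Z-spider to a NOT on each remaining leg. To refold, I reintroduce a paired Hadamard via \HHRule on each output leg so that the inner Hadamard groups with the central Z-spider (restoring the X-spider by \GreyDef) while the outer Hadamard combines with the NOT, reassembling a negate by reading \ZDef in reverse. This gives the right-hand side.

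The main obstacle is scalar bookkeeping rather than structural rewriting: every \HHRule cancellation and every unfolding of a derived generator produces star factors, and the phase-free calculus insists that these be tracked exactly. I would use Lemmas~\ref{lem:scalarcancelstars}, \ref{lem:scalarcancelhh}, and \ref{lem:scalarcancelznot} systematically to absorb or cancel the extra stars, checking that the net scalar produced on the left matches the net scalar on the right (in particular, one star per leg may need to be accounted for when converting between Hadamard-decorated Z-spider form and bare X-spider form). Once the scalars are balanced, the derivation is a short composition of the previously established lemmas.
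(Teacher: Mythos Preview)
Your reading of the statement and your plan are both correct, and they match the paper's approach: unfold the X-spider into a Hadamard-conjugated Z-spider, turn the outer negate into an inner NOT, push it through via Lemma~\ref{lem:not-commute}, and refold. The paper's one-line diagrammatic proof does exactly this, using Lemma~\ref{lem:h-not-commute} to swap Hadamard and NOT/negate on each leg rather than explicitly invoking \ZDef; this saves one unfolding/refolding step per leg but is otherwise identical to what you describe, and the scalar bookkeeping is indeed handled by the cancellation lemmas you cite.
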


\begin{lemma}\label{lem:hopf-rule}
\begin{equation*}
\tikzfig{hopf-rule}
\end{equation*}
\end{lemma}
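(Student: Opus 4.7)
The plan is to derive the Hopf rule by unfolding the derived X-spider in terms of Z-spiders and H-boxes via the definition \GreyDef, and then applying the ZH-bialgebra rule \HCompRule followed by spider fusion. Starting from the LHS — a Z-spider with one input and two outputs connected by two parallel wires to an X-spider with two inputs and one output — I would first replace the X-spider using \GreyDef, so that each of the two parallel wires acquires an H-box, and an extra H-box appears on the remaining external output.

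Next, I would apply \HCompRule to the resulting Z–H–Z middle configuration. The relevant instance of the ZH-bialgebra rewrites two parallel H-box-mediated edges between two Z-spiders into a much simpler form in which the Z-spiders are no longer jointly connected through the middle. After this rewriting, spider fusion \SpiderRule can fuse any adjacent Z-spiders that have been created, \HHRule cancels redundant pairs of H-boxes on single wires, and \IDRule removes any trivial Z-spider with one input and one output. The outcome should be a diagram in which the top external input feeds into a Z-counit and the bottom external output comes from an H-boxed Z-state, which by \GreyDef is exactly the X-unit appearing on the RHS of the Hopf rule.

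The last step is to reconcile the scalar factors. Both \HCompRule and the spider fusion operations introduce (or absorb) explicit star-scalar generators, and the RHS of the Hopf rule will carry its own stars that must match. Here I would appeal to \ScalarRule and the basic scalar-cancellation lemmas \ref{lem:scalarcancelzx}--\ref{lem:scalarcancelznot}, together with the Hadamard-commutation lemmas \ref{lem:h-z-commute}--\ref{lem:h-not-commute} to route any stray H-boxes to places where they can be removed by \HHRule.

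The main obstacle will be the scalar bookkeeping: getting the correct number of stars on each side is the part most likely to go wrong, since the bialgebra rule introduces scalars whose count grows with the size of the bipartite graph it produces, and the derived definition \GreyDef itself hides a normalising H-box. I expect the combinatorial diagram-rewriting part to be short, with essentially one application of \HCompRule surrounded by a handful of fusion and identity moves; the careful accounting of star-scalars against those on the RHS will be where the proof actually requires work.
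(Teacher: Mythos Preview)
Your plan to unfold the X-spider via \GreyDef is a legitimate opening move, but the key rewriting step you identify does not work as stated. After unfolding, the middle of the diagram is two Z-spiders joined by two parallel wires, each carrying a unary H-box. You claim that one application of \HCompRule collapses this, but \HCompRule matches a \emph{single} H-box joined to a \emph{single} Z-spider by one wire (with a complete bipartite graph on the other side); two separate unary H-boxes on parallel wires are not an instance of either side of that rule. If you instead try to read the configuration as the bipartite side of \HCompRule (two Z's, two H's, $K_{2,2}$), it still fails to match: in the bipartite side every node carries exactly one external leg, whereas here the two H-boxes have no free legs at all. So the ``one application of \HCompRule surrounded by a handful of fusion and identity moves'' that you anticipate does not exist; any route through the unfolded picture needs a genuinely different intermediate step (for instance the multiply rule \MultRule, or several interleaved uses of \HCompRule, \SpiderRule and \HHRule with a self-loop cancellation), and the scalar bookkeeping is then considerably heavier than you suggest.

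The paper's proof does not unfold the X-spider at all. It works directly at the Z/X level using \StrongCompRule together with the already-established spider and copy lemmas, so the argument is a short chain of equalities and the star count is immediate rather than something to be reconciled at the end.
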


\end{multicols}

We now prove all the above lemmas.
The proofs are all quite basic and straightforward, except that of Lemma~\ref{lem:copy-znot-h} which is the only one to require \OrthoRule.

\begin{proof}[Proof of Lemmas \ref{lem:scalarcancelstars} and \ref{lem:scalarcancelzx}]
  \[\tikzfig{scalar-rule-proof} \qedhere\]
\end{proof}

\begin{proof}[Proof of Lemmas \ref{lem:scalarcancelxh} and \ref{lem:scalarcancelhh}]
  \[\tikzfig{HH-scalar-cancel-proof}\qedhere\]
\end{proof}

\begin{proof}[Proof of Lemma \ref{lem:scalarcancelznot}]
  \[\tikzfig{scalar-rule-ZNOT-proof}\qedhere\]
\end{proof}

\begin{proof}[Proof of Lemma \ref{lem:xnot-h-reduce}]
  \[\tikzfig{xnot-h-reduce-proof} \qedhere\]
\end{proof}

\begin{proof}[Proof of Lemma \ref{lem:negate-direct}]
    \[\tikzfig{negate-direct-proof}\qedhere\]
\end{proof}

\begin{proof}[Proof of Lemma \ref{lem:x-spider}]
  \[\tikzfig{X-spider-proof} \qedhere\]
\end{proof}

\begin{proof}[Proof of Lemma \ref{lem:x-special}] 
  \[\tikzfig{X-special-proof} \qedhere\]
\end{proof}

\begin{proof}[Proof of Lemma \ref{lem:xnots-cancel}] 
  \[\tikzfig{XX-cancel-proof} \qedhere\]
\end{proof}

\begin{proof}[Proof of Lemma \ref{lem:x-with-xnot}] 
  \[\tikzfig{X-with-XNOT-proof} \qedhere\]
\end{proof}

\begin{proof}[Proof of Lemma \ref{lem:znots-cancel}] 
  \[\tikzfig{ZZ-cancel-proof} \qedhere\]
\end{proof}

\begin{proof}[Proof of Lemma \ref{lem:h-z-commute}]
  \[\tikzfig{H-Z-commute-proof} \qedhere\]
\end{proof}

\begin{proof}[Proof of Lemma \ref{lem:h-x-commute}]
  \[\tikzfig{H-X-commute-proof} \qedhere\]
\end{proof}

\begin{proof}[Proof of Lemma \ref{lem:h-not-commute}]
  \[\tikzfig{H-NOT-commute-proof} \qedhere \]
\end{proof}

\begin{proof}[Proof of Lemma \ref{lem:cz-correct}]
  \[\tikzfig{CZ-correct-proof} \qedhere\]
\end{proof}

\begin{proof}[Proof of Lemma \ref{lem:not-commute}]
  \[\tikzfig{NOT-commute-proof} \qedhere \]
\end{proof}

\begin{proof}[Proof of Lemma \ref{lem:z-commute}]
  \[\tikzfig{Z-commute-proof} \qedhere \]
\end{proof}

\begin{proof}[Proof of Lemmas  \ref{lem:copy-x-z}, \ref{lem:copy-z-x}, and \ref{lem:copy-x-h}]
These are just applications of \StrongCompRule and \HCompRule.
\end{proof}

\begin{proof}[Proof of Lemma \ref{lem:copy-xnot-z}]
  \[\tikzfig{copy-xnot-z-proof} \qedhere \]
\end{proof}

\begin{proof}[Proof of Lemma \ref{lem:copy-znot-x}]
  \[\tikzfig{copy-znot-x-proof} \qedhere \]
\end{proof}

\begin{proof}[Proof of Lemma \ref{lem:white-not-cancel}] 
  \[\tikzfig{white-not-cancel-proof} \qedhere\]
\end{proof}

\begin{proof}[Proof of Lemma \ref{lem:copy-xnot-h}]
  \[\tikzfig{copy-xnot-h-proof} \qedhere \]
\end{proof}

\begin{proof}[Proof of Lemma \ref{lem:copy-znot-h}]
  \[\tikzfig{H-copy-proof-1}\]
  \[\tikzfig{H-copy-proof-2}\]
  \[\tikzfig{H-copy-proof-3} \qedhere\]
  
\end{proof}

\begin{proof}[Proof of Lemma \ref{lem:x-z-commute}]
  \[\tikzfig{X-Z-commute-proof} \qedhere\]
\end{proof}
\begin{proof}[Proof of Lemma \ref{lem:hopf-rule}]
  \[\tikzfig{hopf-rule-proof} \qedhere\]
\end{proof}

\subsection{Labelled H-boxes}\label{sec:labelledHboxes}

The matrix corresponding to an H-box is filled with $1$'s, except for the bottom right position where there is a $-1$. Calculating the matrix of some other diagrams in the ZH-calculus, we see that their matrix is similar, but instead of a $-1$ in the bottom right corner there is some other number $a\in\mathbb{Z}$. In order to make this connection clearer, we introduce some new notation, where we write an H-box with a label of $a$ inside it, to denote it is equal to a vector of $1$'s with an $a$ in the bottom position.
When talking about labelled H-boxes in text, we will often write $H(a)$ to denote we are referring to an H-box labelled by $a$ (leaving the arity implicit, which should be clear from context).

We begin with H-boxes of arity 1, corresponding to vectors $\left(\begin{smallmatrix}1\\a\end{smallmatrix}\right)$, which we will define inductively for non-negative integers using a `successor' gadget; H-boxes labelled by negative integers will then be defined from the positive ones using a `negate' gadget.

\begin{definition}
 Let
 \begin{equation}\label{eq:H-box-minus1}
  \tikzfig{H-box-minus1}
 \end{equation}
 and for any $a\in\mathbb{Z}$ such that $a\geq 0$, define:
\begin{equation}\label{eq:succesor}
    \tikzfig{H-box-successor}
\end{equation}
\end{definition}

The `successor gadget' used in this definition might appear a bit strange. As a matrix, it is in fact equal to the previously considered \emph{triangle generator} in the ZX-calculus~\cite{vilmart2018zxtriangle,ng2017universal}:
\ctikzfig{triangle-dfn}
This triangle has the following interpretation as a matrix:
\[\intf{\tikzfig{triangle-white}} = \ketbra{0}{0}+\ketbra{1}{0} + \ketbra{1}{1}\]
We could have introduced the triangle as an additional derived generator. However, as can be seen from its matrix, it is not self-adjoint and hence the orientation of the node matters. To avoid the complications that come from that, we elect not to use the triangle and to simply write out its form as a ZH-diagram as in~\eqref{eq:succesor}.

$H(-1)$ is not the only H-box with a particularly simple representation.
Indeed, we have
\begin{equation}\label{eq:H-box-0}
 \tikzfig{H-box-0}
\end{equation}
\begin{equation}\label{eq:H-box-1}
 \tikzfig{H-box-1}
\end{equation}

\begin{definition}
 For any $a\in\mathbb{Z}$ such that $a<-1$, we define the corresponding degree-1 H-box by applying a negate spider:
\begin{equation}\label{eq:def-negative-numbers}
    \tikzfig{H-box-negation}
\end{equation}
\end{definition}

The negation gadget plays nicely with the H-box labels.
In particular by Lemma~\ref{lem:znots-cancel}, H-box labels satisfy $-(-a) = a$.
We also have consistency with the simpler representations of $H(a)$ for $a\in\{-1,0,1\}$:
\ctikzfig{negation-consistency}

\begin{definition}
 Labelled H-boxes of arbitrary arity are defined as:
\begin{equation} \label{eq:labelledHboxhigherarity}
\tikzfig{labeledHbox-higharity}
\end{equation}
\end{definition}

This is consistent with \HHRule for arity-1 H-boxes of arbitrary label (using Lemma~\ref{lem:scalarcancelstars}).

Certain higher-arity H-boxes have relatively simple label-free forms, in particular, for the $-1$-labelled H-box we have:
\begin{equation}\label{eq:minus-one-high-arity}
 \tikzfig{minus-1-high-arity}
\end{equation}
for the $0$-labelled H-box we have:
\begin{equation}\label{eq:zero-high-arity}
 \tikzfig{zero-high-arity}
\end{equation}
for the $1$-labelled H-box we have:
\begin{equation}\label{eq:unit}
 \tikzfig{one-high-arity}
\end{equation}
and for the $2$-labelled H-box we have:
\begin{equation}\label{eq:H-box-2-def}
	\tikzfig{H-box-2}
\end{equation}

There are some scalar cancellation rules associated to these labelled H-boxes:

\begin{multicols}{3}
\begin{lemma}\label{lem:scalarcancelxh-gen}
 For any $a\in \mathbb{Z}$:
 \ctikzfig{scalarcancelxh-gen}
\end{lemma}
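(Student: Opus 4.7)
The plan is to prove the generalized scalar-cancellation statement by induction on the label $a$, handling the positive integers by induction on $a$, reducing negatives to positives via the definition \eqref{eq:def-negative-numbers}, and anchoring the induction on the already-proved base case \ref{lem:scalarcancelxh}.

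First I would establish the two base cases directly. For $a = -1$ the labelled H-box $H(-1)$ is just the ordinary (unlabelled) H-box by the definition in \eqref{eq:labelledHboxes}, so the claim reduces to the original Lemma \ref{lem:scalarcancelxh}. For $a = 0$, applying the state equation \eqref{eq:zero-high-arity} rewrites the $H(0)$-labelled H-box as a pair of disconnected Z-spider states, at which point the cancellation follows from the other already-proved scalar identities (Lemmas \ref{lem:scalarcancelstars}--\ref{lem:scalarcancelzx}) together with \IDRule and \HFuseRule to collapse redundant spiders and H-boxes.

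For the inductive step on nonnegative integers I would expand $H(a+1)$ using the successor gadget definition \eqref{eq:succesor}, which rewrites $H(a+1)$ as $H(a)$ connected through a small triangle-like ZH-fragment built from Z-spiders and unlabelled H-boxes. Composing the X-effect used in the statement with this successor gadget should allow me to push the X-effect past the gadget using the commutation/copy rules \ColourRule, \XSpiderRule, \StateCopy, and the Hadamard commutations from Lemmas \ref{lem:h-z-commute}--\ref{lem:h-x-commute}, turning the left-hand side into the same diagram with $a$ in place of $a+1$ times a scalar coming from the gadget. By the inductive hypothesis this in turn reduces to the desired form. Finally, the case of negative $a$ follows from the positive case by \eqref{eq:def-negative-numbers}: insert a Z gate on the input wire of $H(-a)$, commute it through the X-spider/effect using Lemma \ref{lem:x-z-commute} and \ref{lem:not-commute}, and then apply the already-proved positive-$a$ statement; any leftover Z gates cancel by Lemma \ref{lem:znots-cancel}.

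The main obstacle I expect is the inductive step, specifically the bookkeeping of scalars when the successor gadget is absorbed. The successor gadget contains both Z-spiders and H-boxes, and distributing the X-effect across it will generate several stray scalar pieces (stars and small H/Z-fragments); making these collapse precisely into the right-hand side will require careful sequencing of \StrongCompRule, \HCompRule, \MultRule, and the basic scalar lemmas \ref{lem:scalarcancelstars}--\ref{lem:scalarcancelhh}. Once the scalar equalities are set up correctly the induction itself is routine, but the scalar accounting in the successor step is where the real work lies.
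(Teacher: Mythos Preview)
Your proposal is correct and follows essentially the same approach as the paper: reduce negative labels to nonnegative ones via \eqref{eq:def-negative-numbers}, anchor the induction at $a=-1$ using Lemma~\ref{lem:scalarcancelxh}, and step via the successor gadget \eqref{eq:succesor}. The paper organises it slightly more economically by starting the induction at $a=-1$ (so no separate $a=0$ base case is needed) and doing the sign-reduction first; also, the inductive step turns out to be cleaner than you fear, requiring only \eqref{eq:succesor}, Lemma~\ref{lem:copy-x-z}, \IDRule, Lemma~\ref{lem:copy-x-h}, and Lemma~\ref{lem:scalarcancelhh} rather than the full battery of commutation lemmas you list.
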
 
\vfill\null
\columnbreak

\begin{lemma}\label{lem:scalar-2}~
	\ctikzfig{scalar-2}
\end{lemma}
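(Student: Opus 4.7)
The plan is to reduce the $2$-labelled H-box appearing in the statement to a diagram built only from primitive generators, and then to apply the scalar-cancellation machinery established earlier in this subsection. Since I cannot see the precise right-hand side, the strategy I describe works for any identity of the form ``$H(2)$ decorated with some stars is equal to a fixed primitive scalar''.

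First, I would unfold $H(2)$ using its definition~\eqref{eq:H-box-2-def}, which writes it via the successor gadget~\eqref{eq:succesor} starting from the $1$-labelled H-box, itself just a plain H-box by~\eqref{eq:unit}. After this substitution the diagram contains only ordinary H-boxes, Z-spiders, a NOT gate, and any stars present in the statement. This reduces the problem to a purely diagrammatic identity in the base calculus.

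Second, I would simplify the unfolded diagram by fusing H-boxes using \HFuseRule and Z-spiders using \SpiderRule, pushing NOT gates through H-boxes and spiders via Lemmas \ref{lem:h-not-commute} and \ref{lem:not-commute}, and cancelling pairs of NOTs using Lemmas \ref{lem:znots-cancel} and \ref{lem:xnots-cancel}. Any leftover Hadamard-like structure can be collapsed with Lemma \ref{lem:xnot-h-reduce} or \ref{lem:negate-direct}. The target of this phase is to produce a configuration matching the left-hand side of one of the basic scalar-cancellation lemmas \ref{lem:scalarcancelxh}, \ref{lem:scalarcancelhh}, or \ref{lem:scalarcancelznot}.

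Finally, I would apply those scalar-cancellation lemmas together with \ref{lem:scalarcancelstars} to bring the diagram into the claimed normal form. The main obstacle is bookkeeping of the scalar factor: semantically, $H(2)$ evaluates to $2$ while each $\dstar$ evaluates to $\tfrac12$, so the count of stars on each side must be tracked carefully throughout the rewriting. The likeliest source of error is accidentally introducing or dropping one star when the successor gadget is expanded; if this happens, a single extra application of \ref{lem:scalarcancelstars} should restore the count, after which the remaining topological rearrangements are routine.
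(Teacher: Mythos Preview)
Your approach is correct and matches the paper's: the statement asserts that the zero-arity $H(2)$ box equals the scalar white dot, and the paper's proof is a single line of diagrammatic rewriting that unfolds $H(2)$ via its definition~\eqref{eq:H-box-2-def} (successor applied to $H(1)=\whiteunit$) and then simplifies using the basic rules and lemmas of Section~\ref{s:basic-derived}, exactly as you outline. Your list of candidate lemmas is broader than needed, but the core idea---unfold the successor gadget and collapse the resulting scalar subdiagram---is precisely what the paper does.
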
 

\vfill\null
\columnbreak

\begin{lemma}\label{lem:scalar-cancel-2}~ 
	\ctikzfig{scalar-cancel-2}
\end{lemma}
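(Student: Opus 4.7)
The plan is to prove this scalar identity in the same style as Lemmas \ref{lem:scalarcancelstars}--\ref{lem:scalarcancelznot} and the newly-stated Lemma \ref{lem:scalar-2}: by direct diagrammatic manipulation using only the eight base rules and the basic derived lemmas already established. Since the diagram is purely scalar (zero inputs and zero outputs), topology does not constrain us — every piece can be slid and reorganised freely — so the proof should reduce to an algebraic bookkeeping exercise with the generators.

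The first step is to unfold every occurrence of $H(2)$ using equation \eqref{eq:H-box-2-def}, which rewrites each labelled box as a configuration of unlabelled H-boxes, a Z-spider, and a star. Once everything is expressed in the primitive vocabulary, I would apply \SpiderRule and \HFuseRule to fuse neighbouring spiders of matching colour, and \HHRule to annihilate any pair of unlabelled H-boxes left connected by a single wire. The commutation lemmas \ref{lem:h-z-commute} and \ref{lem:z-commute} can be used beforehand to bring the relevant generators into adjacency if the natural unfolding does not leave them that way.

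The main obstacle, I expect, is keeping the scalar coefficient straight: each star contributes a factor of $\frac{1}{2}$, while the unfolded $H(2)$ gadget ultimately produces a factor of $2$ when its constituent pieces combine. The rules of the calculus do not compute numerical products directly, so the cancellation has to be achieved syntactically — by exposing a subdiagram of the shape handled by Lemma \ref{lem:scalar-2} (which provides the diagrammatic witness that a certain configuration evaluates to the scalar $2$), and then pairing it against the stars using \ScalarRule to collapse star-and-$H(2)$ pairs to the empty diagram.

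Concretely, after unfolding I would look for a bipartite subgraph on which \StrongCompRule or \HCompRule applies, use it to move the central Z-spider past an H-box so that two unlabelled H-boxes become neighbours and get removed by \HHRule, and then recognise the leftover scalar as an instance of Lemma \ref{lem:scalar-2}. A final invocation of \ScalarRule should discharge the remaining stars. If the cancellation comes out off by a sign, this would be repaired with Lemma \ref{lem:znots-cancel} or \ref{lem:white-not-cancel}, which let a stray Z-NOT be absorbed without changing the diagram.
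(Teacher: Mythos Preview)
Your plan would eventually succeed, but it takes an unnecessarily long detour. You propose to unfold $H(2)$ via \eqref{eq:H-box-2-def}, massage the primitives with \StrongCompRule/\HCompRule and \HHRule, and only then recognise an instance of Lemma~\ref{lem:scalar-2} before finishing with \ScalarRule. But Lemma~\ref{lem:scalar-2}, which you have just cited as available, already says precisely that the $0$-arity $H(2)$ box equals the scalar white dot $\whitedot$. There is nothing to unfold or rearrange: the paper's proof is a two-step chain, applying Lemma~\ref{lem:scalar-2} to replace $H(2)$ by $\whitedot$ and then Lemma~\ref{lem:scalarcancelstars} to cancel the star against it.

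In effect, your preliminary unfolding-and-simplification phase would re-derive Lemma~\ref{lem:scalar-2} inside the body of this proof, and the contingencies you anticipate (bialgebra moves to bring H-boxes adjacent, a possible sign fix via Lemma~\ref{lem:znots-cancel} or~\ref{lem:white-not-cancel}) never arise. The moral is to treat Lemma~\ref{lem:scalar-2} as a black-box rewrite on the labelled generator rather than expanding the label first.
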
 
\end{multicols}

\begin{proof}[Proof of Lemma~\ref{lem:scalarcancelxh-gen}]
 First, note that if $a<-1$, we can change the sign:
 \ctikzfig{scalarcancelxh-gen-neg}
 It thus suffices to consider $-1\leq a$.
 For these values, we prove the result by induction, with the base case $a=-1$ being Lemma~\ref{lem:scalarcancelxh}. The inductive step is
 \[
  \tikzfig{scalarcancelxh-gen-ind} \qedhere
 \]
\end{proof}

\begin{proof}[Proof of Lemma~\ref{lem:scalar-2}]
\[\tikzfig{scalar-2-proof} \qedhere\]
\end{proof}

\begin{proof}[Proof of Lemma~\ref{lem:scalar-cancel-2}]
\[\tikzfig{scalar-cancel-2-proof} \qedhere\]
\end{proof}
\subsection{!-box notation}\label{sec:bang-boxes}

Many of the calculations in the remainder of the paper are greatly simplified by the use of \textit{!-box notation}~\cite{kissinger2014pattern}. A !-box (pronounced `bang box') in a string diagram represents a part of the diagram that is able to fan out arbitrarily. That is, the contents of a !-box, along with any wires into or out of the !-box, can be copied $n$ times for any non-negative integer $n$.
For example, the !-box diagram below represents the following family of (concrete) string diagrams, one for each $n$:
\[ \tikzfig{bang-box-example} \quad \longleftrightarrow \quad
 \left\{
 \ \ \tikzfig{bang-box-example0}\ \ ,\quad
 \ \ \tikzfig{bang-box-example1}\ \ ,\quad
 \ \ \tikzfig{bang-box-example2}\ \ ,\quad
 \ \ \tikzfig{bang-box-example3}\ \ ,\quad
 \ \ \ldots\ \  \right\}
\]
All of the resulting string diagrams are well-defined because all of our generators can have arbitrary arities. We can also use !-boxes in diagram equations, as long as each !-box on the LHS has a corresponding !-box on the RHS, and the inputs/outputs in each !-box match. Such a rule represents a family of equations where each \textit{pair} of corresponding !-boxes is replicated $n$ times, e.g.\ we can re-express \eqref{eq:unit} as:
\[
\tikzfig{unit-bangboxed} \quad \longleftrightarrow \quad
\left\{
\ \ \tikzfig{unit-bb0}\ \ ,\quad
\ \ \tikzfig{unit-bb1}\ \ ,\quad
\ \ \tikzfig{unit-bb2}\ \ ,\quad
\ \ \ldots\ \ 
\right\}
\]
Note the dashed box on the right-hand side of the first equation denotes an empty diagram.
With this notation, the definition of grey spiders \eqref{eq:defx} becomes
\begin{equation}
\label{eq:grey-spider-dfn}
 \tikzfig{X-spider-dfn-bb}
\end{equation}
Additionally, the rules \SpiderRule, \HFuseRule, \StrongCompRule, and \HCompRule from Figure~\ref{fig:phasefree-rules} become:
\[
 \text{(zs)}\quad \tikzfig{Z-spider-rule-bb} \qquad
 \text{(hs)}\quad \tikzfig{H-spider-rule-bb} \qquad\!
 \text{(ba$_1$)}\quad \tikzfig{ZX-bialgebra-bb} \qquad
 \text{(ba$_2$)}\quad \tikzfig{ZH-bialgebra-bb}
\]

Note that the red dashed NOT-edges defined in \eqref{eq:not-edge-def} behave well when crossing !-box borders. \[
  \tikzfig{rededge-bangboxed} \quad \longleftrightarrow \quad
  \left\{
  \ \ \tikzfig{rededge-0}\ \ ,\quad
  \ \ \tikzfig{rededge-1}\ \ ,\quad
  \ \ \tikzfig{rededge-2}\ \ ,\quad
  \ \ \tikzfig{rededge-3}\ \ ,\quad
  \ \ \ldots\ \ 
  \right\}
  \]

\subsection{Annotated !-boxes}\label{sec:annotated-bb}

Many of our diagrams -- in particular the normal forms that will be defined in Section~\ref{sec:normal-form} -- have a repeating structure involving multiple H-boxes with different labels.
To write such diagrams more concisely, we introduce a limited extension of the usual !-box notation, which allows !-boxes to be indexed by the elements of a totally ordered finite set.
Standard !-boxes allow infinite families of diagrams to be represented in a single diagram, whereas annotated !-boxes simply offer a more concise representation of a single diagram with repeated structure, instantiated once for each element of the finite set.

\begin{definition}\label{def:annotated-bb}
 An annotated !-box is a !-box with a label of the form `$x\in X$', where $X$ is a totally ordered finite set and $x$ is a variable that may appear in labels inside the !-box.

 In the following, we will use a large box labelled $D_x$ to denote an arbitrary ZH-diagram whose labels may contain the parameter $x$.
 Given a diagram containing an annotated !-box, we recursively define an equivalent diagram without the annotated !-box.
 The base case is that of a !-box indexed over the empty set, which is instantiated zero times:
 \ctikzfig{annotated-bb-base}
 For the recursive case, $X$ is non-empty.
 Thus it has a maximum element, which we denote $m:=\max X$. 
 We construct an equivalent diagram where the !-box is indexed by the smaller set $X\setminus\{m\}$.
 This is done by copying the contents of the !-box (including all external wires) once, immediately to the right of the !-box, and replacing each occurrence of $x$ inside the new subdiagram by $m$:
 \begin{equation}\label{eq:annotated-bb-recursive}
  \tikzfig{annotated-bb-recursive}
 \end{equation}
\end{definition}

\begin{remark}
 This definition straightforwardly extends to diagrams containing multiple disjoint annotated !-boxes, which is all we need in this paper.
 
 It can also be applied in reverse: any time a diagram contains multiple copies of the same subdiagram, up to changes in labels, these can be combined into an annotated !-box.
\end{remark}

We will often use bit strings for indexing annotated !-boxes; 
the motivation for this will become clear with the introduction of normal-form diagrams in Section~\ref{sec:normal-form}.
As an example, indexing over the finite set $\mathbb B^2 := \{ 00, 01, 10, 11 \}$, we can write expressions such as:
\begin{equation}\label{eq:indexed-ex}
\tikzfig{indexed-example}
\ \ :=\ \ \ 
\tikzfig{index-example-rhs}
\end{equation}

Since annotated !-boxes correspond to unique diagrams (rather than infinite families of diagrams), their appearance in equations is less constrained than that of unlabelled !-boxes.
Nevertheless, some care is needed if the annotated !-boxes contain inputs or outputs of the diagram as a whole.

The simplest such equations correspond directly to allowed equations for unlabelled !-boxes: they have corresponding annotated !-boxes on the LHS and RHS, which are both indexed by the \textit{same} finite set.
Inputs and outputs coming out of a labelled !-box are matched to those of the same index on the other side of the equality.
For example:
\[
\left(\ \tikzfig{index-example-rule}\ \right)
\ \ := \ \ 
\left( \ \tikzfig{index-example-rule-inst}\ \right)
\]
If an annotated !-box contains diagram inputs or outputs, it must either have a corresponding annotated !-box on the other side of the equality, or it must match a !-box labelled by a sub- or superset (possibly the empty set) as in \eqref{eq:annotated-bb-recursive}.
This is to avoid complications caused by the need to keep the order of external wires consistent.

For annotated !-boxes that do not contain any diagram inputs or outputs, such ordering issues do not arise because of the meta rule `only topology matters'. Lemmas~\ref{lem:annotated-expansion} and~\ref{lem:annotated-split} below give examples of diagram equations where the annotated !-boxes on the two sides do not match in the same strict way.

Note that we can recover the behaviour of normal, un-labelled !-boxes by interpreting a !-box without a label as being indexed by an \textit{arbitrary} totally ordered finite set, e.g.
\[
\tikzfig{Z-spider-rule-bb} \qquad
\longleftrightarrow \qquad
\tikzfig{Z-spider-rule-bb-index} \quad \textrm{(for any totally ordered finite sets $X$ and $Y$)} \]
In particular, this means that any derived rewrite rule involving un-labelled !-boxes can also be applied to annotated !-boxes with arbitrary labels.

\begin{lemma}\label{lem:annotated-expansion}
 An annotated !-box indexed by a bit string can be `expanded' according to one of the bits, as long as it does not contain any inputs or outputs of the diagram as a whole.
 Here, the box labelled $D_{\vec{b}}$ denotes an arbitrary diagram parameterised by $\vec{b}$.

 \ctikzfig{annotated-expansion-prime}
\end{lemma}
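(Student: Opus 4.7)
The plan is to show that both sides of the equation unfold, via the recursive definition of annotated !-boxes from Definition~\ref{def:annotated-bb}, to the same collection of $2^n$ copies of the parameterised subdiagram $D_{\vec b}$ (one for each $\vec b \in \mathbb{B}^n$), and then to appeal to the meta-rule ``only topology matters'' to conclude equality.

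First I would apply the recursive clause~\eqref{eq:annotated-bb-recursive} to the LHS repeatedly, unfolding the annotated !-box completely. Because the hypothesis forbids any diagram inputs or outputs inside the !-box, at every stage of the recursion the copied subdiagram is connected to the rest of the ambient diagram only through the replicated !-box border wires, and no external-wire ordering constraint is introduced. After fully unfolding (once for each element of $\mathbb{B}^n$) the LHS becomes a parallel juxtaposition of one copy of $D_{\vec b}$ for every $\vec b \in \mathbb{B}^n$.

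Next I would unfold the two annotated !-boxes on the RHS in the same way. Each of them is indexed by the remaining $n-1$ bits with the $i$-th bit fixed to $0$ or to $1$, so it yields $2^{n-1}$ copies of $D_{\vec b}$ with the corresponding label substitution; taken together they produce exactly one copy of $D_{\vec b}$ for every $\vec b \in \mathbb{B}^n$. As a multiset of labelled subdiagrams the RHS therefore agrees with the LHS, and in each case the copies are attached to the rest of the diagram through the same pattern of !-box border wires (since these wires themselves carry no label depending on $\vec b$).

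The main (and essentially only) subtlety is to argue that the two sides are equal \emph{as diagrams} and not merely as multisets of constituents. Since the annotated !-box contains no diagram inputs or outputs, the two enumerations of the $2^n$ copies can at most differ in the left-to-right order in which the copies appear alongside the rest of the diagram. By the meta-rule ``only topology matters'', such reorderings yield equal diagrams, and this gives the claimed equality.
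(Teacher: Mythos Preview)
Your proposal is correct and follows essentially the same approach as the paper: fully expand the annotated !-boxes on both sides via Definition~\ref{def:annotated-bb}, observe that each side yields one copy of $D_{\vec b}$ for every $\vec b\in\mathbb B^n$ (using the decomposition $\mathbb B^n = \{0\vec c\mid\vec c\in\mathbb B^{n-1}\}\uplus\{1\vec c\mid\vec c\in\mathbb B^{n-1}\}$), and invoke ``only topology matters'' to dispose of any difference in ordering. Your write-up is somewhat more explicit than the paper's about why the no-inputs/outputs hypothesis is needed, but the argument is the same.
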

\begin{proof}
 Note that the set $\mathbb B^{n}$ can be split into two pieces, based on whether the most significant bit is $0$ or $1$:
 \[
  \mathbb B^{n} = \{ 0 \vec{c} \ |\ \vec{c} \in \mathbb B^{n-1}\} \uplus \{ 1 \vec{c} \ |\ \vec{c} \in \mathbb B^{n-1} \}
 \]
 The result therefore follows by completely expanding the annotated !-boxes on each side according to Definition~\ref{def:annotated-bb} and applying the meta-rule `only topology matters'.
\end{proof}

\begin{lemma}\label{lem:annotated-split}
 An annotated !-box containing two disconnected diagram components can be split into two boxes indexed over the same set, as long as the original annotated !-box does not contain any inputs or outputs of the diagram as a whole.
 Here, $X$ is an arbitrary totally ordered finite set and the boxes labelled $D_x$ and $D_x'$ denote arbitrary diagrams parameterised by $x$.

 \ctikzfig{annotated-split-prime}
\end{lemma}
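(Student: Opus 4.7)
The plan is to prove this by induction on the cardinality of the indexing set $X$, using the recursive expansion of annotated !-boxes from Definition~\ref{def:annotated-bb}. The crucial structural hypothesis is that the !-box contains no diagram inputs or outputs, which together with the fact that $D_x$ and $D_x'$ are disconnected means all wires internal to each copy stay within that copy; hence the meta-rule `only topology matters' will let us freely rearrange the copies spatially.

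First I would dispatch the base case $X = \emptyset$: both the LHS and the RHS expand to the empty diagram by the first clause of Definition~\ref{def:annotated-bb}, so the equation holds trivially. For the inductive step, suppose $|X| = k+1$ and the result holds for all totally ordered finite sets of size at most $k$. Let $m := \max X$ and $X' := X \setminus \{m\}$. I would apply the recursive clause \eqref{eq:annotated-bb-recursive} to the LHS, which replaces the annotated !-box indexed by $X$ with an annotated !-box indexed by $X'$ containing $D_x$ and $D_x'$ side-by-side, together with one additional external copy of $D_m$ and $D_m'$ placed next to it. Since $D_m$ and $D_m'$ are disconnected from each other and from the remaining !-box (no wires cross because the !-box has no external inputs or outputs), the meta-rule allows us to separate these two external copies.

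Next I would apply the recursive clause \eqref{eq:annotated-bb-recursive} twice to the RHS, once for the annotated !-box containing $D_x$ and once for the annotated !-box containing $D_x'$. Each application yields an annotated !-box indexed by $X'$ together with one external copy ($D_m$ in the first case, $D_m'$ in the second). Up to topological rearrangement, this produces exactly two disconnected !-boxes indexed by $X'$ alongside copies of $D_m$ and $D_m'$. Applying the inductive hypothesis to the annotated !-box on the LHS side (which is indexed by the smaller set $X'$ and still contains two disconnected components) shows that it equals the two separate annotated !-boxes over $X'$ appearing on the RHS. This completes the induction.

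The main obstacle in this proof is purely bookkeeping: one must argue carefully that the disconnectedness of $D_x$ and $D_x'$, combined with the absence of global inputs and outputs inside the !-box, means that after each recursive unfolding all the newly produced subdiagrams can be topologically reordered without altering any connectivity. Once this is clearly justified, the induction itself is routine. I do not expect any genuine difficulty beyond being careful about the application of the meta-rule, which is exactly the reason the statement assumes no diagram inputs or outputs occur inside the annotated !-box.
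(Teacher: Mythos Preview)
Your proposal is correct and follows essentially the same approach as the paper: both arguments rest entirely on unfolding Definition~\ref{def:annotated-bb} and invoking the meta-rule `only topology matters'. The only difference is presentational---the paper fully expands both sides in one step and observes that the resulting multisets of disconnected components coincide, whereas you make this explicit via induction on $|X|$.
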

\begin{proof}
 Note that for any set $X$,
 \[
  \biguplus_{x\in X} \{D_x,D_x'\} = \{D_x\mid x\in X\} \uplus \{D_x'\mid x\in X\}
 \]
 The result therefore follows follows from Definition~\ref{def:annotated-bb} and the meta-rule `only topology matters'.
\end{proof}

The condition about annotated !-boxes in Lemmas~\ref{lem:annotated-expansion} and~\ref{lem:annotated-split} not containing any inputs or outputs of the diagram as a whole is to avoid issues caused by the need to keep the order of external wires consistent.
We have drawn the contents of the annotated !-boxes as being connected only to Z-spiders, since this covers all our applications. Nevertheless, the result generalises straightforwardly to X-spiders and H-boxes as well.

The following notation will be useful when working with annotated !-boxes.

\begin{definition}\label{def:indexing-map}
 Let $\mathbb B^n$ be the set of all $n$-bitstrings. For any $\vec{b} := b_1\ldots b_n \in \mathbb B^n$, define the \textit{indexing map} $\iota_{\vec{b}}$ as follows:
 \[
  \iota_{\vec{b}} \; = \;
  \tikzfig{indexing-box} \; = \; \left(\grayphase{\neg}\right)^{1 - b_1} \ldots \left(\grayphase{\neg}\right)^{1 - b_n},
 \]
 where $\left(\grayphase{\neg}\right)^{1} = \grayphase{\neg}$ and $\left(\grayphase{\neg}\right)^{0} = \;\tikzfig{identity}\;$, analogous to how $\begin{pmatrix}0&1\\1&0\end{pmatrix}^0 = \begin{pmatrix}1&0\\0&1\end{pmatrix}$.
\end{definition}

\begin{lemma}\label{lem:iota-copy}
 The $\iota_{\vec{b}}$ operator copies through white spiders, i.e.\ for any $\vec{b}\in\mathbb B^n$:
 \ctikzfig{iota-copy}
\end{lemma}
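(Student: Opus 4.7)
The plan is to reduce the statement to Lemma~\ref{lem:not-commute} applied wirewise. By Definition~\ref{def:indexing-map}, the operator $\iota_{\vec{b}}$ is a tensor product of $n$ factors, where the $i$-th factor is a single grey NOT gate (when $b_i = 0$) or the identity (when $b_i = 1$). So the LHS of the claimed equation consists of $n$ parallel Z-spiders, one per wire of the bundle, each carrying an optional NOT on its input leg according to the bits of $\vec{b}$.

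First I would treat each wire independently. For each index $i$ with $b_i = 0$, Lemma~\ref{lem:not-commute} (``NOT copies through a white spider'') applied to the $i$-th Z-spider moves the NOT through, producing one NOT on each of its output legs. For each $i$ with $b_i = 1$, the factor is the identity and there is nothing to do (or equivalently, the identity is ``copied'' trivially by \SpiderRule). Performing these $n$ independent rewrites in parallel places a copy of $(\grayphase{\neg})^{1-b_i}$ on the $i$-th wire of every outgoing bundle, which by Definition~\ref{def:indexing-map} is exactly $\iota_{\vec{b}}$ on each outgoing bundle.

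The main bookkeeping is matching the annotated !-box structure on the output side: the output !-box instantiates once for each outgoing bundle, and each instantiation must receive its own full copy of $\iota_{\vec{b}}$. This is exactly the content of Lemma~\ref{lem:not-commute} when stated in !-box form, since that lemma already produces one NOT per copy of the Z-spider's output !-box. Thus no new ideas are needed; the only subtlety is confirming that the $n$ wirewise applications assemble cleanly into a single rewrite consistent with the !-box conventions laid out in Section~\ref{sec:annotated-bb}. I expect no genuine obstacle: the statement is essentially an $n$-fold parallelisation of Lemma~\ref{lem:not-commute}, with the $b_i = 1$ wires contributing vacuously.
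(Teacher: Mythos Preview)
Your proposal is correct and matches the paper's own proof, which simply states that the result follows immediately from Lemma~\ref{lem:not-commute} via Definition~\ref{def:indexing-map}. Your elaboration of the wirewise argument and the !-box bookkeeping is more detailed than the paper's one-line justification, but the underlying reasoning is identical.
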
 
\begin{proof}
 This follows immediately from Lemma~\ref{lem:not-commute} via Definition~\ref{def:indexing-map}.
\end{proof}

\section{Special cases of multiply, intro and average}\label{sec:avg-intro-mult}

Now that we have all the necessary definitions and basic rewrite rules, we can start our proof of completeness. We will proceed as outlined in Section~\ref{sec:overview}, first introducing three families of rewrite rules that use the labelled H-boxes of Section~\ref{sec:labelledHboxes}, which will play an important part in proving completeness.
Proving that these rewrite rules hold in full generality is difficult however. In this section we will only prove these rewrite rules for certain small integers.
The general proofs are postponed until Section~\ref{sec:completeness}, after we have introduced and studied the normal form diagrams in Section~\ref{sec:normal-forms}.

Note that starting from here, we will no longer necessarily write all the lemmas used in a rewrite step in a proof. In particular, we will often suppress uses of Lemma~\ref{lem:scalarcancelstars} and just treat \tikzfig{star} and \tikzfig{dot} as each other's inverses. We will often also let uses of spider fusion \spiderrule be implicit.

\subsection{The multiply rule}\label{sec:mult-rule}

We claim that the following identity holds in the ZH-calculus for all integers $a$ and $b$:
\begin{equation*}\label{eq:mult-def}
    \tag{$M_{a,b}$}\tikzfig{multiply-rule-phased}
\end{equation*}
We refer to this as the \emph{multiply} rule. Proving this is quite involved and will be postponed until Section~\ref{sec:arithmetic}.
The usefulness of this rule comes from the following generalisation to arbitrary arity.

\begin{lemma}\label{prop:multiply-bb}
 If the ZH-calculus proves
$M_{a,b}$, then it also proves the following:
 \ctikzfig{multiply-rule-bb}
\end{lemma}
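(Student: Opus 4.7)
The plan is to derive the bang-boxed version of $M_{a,b}$ from its single-wire instance by a short, uniform rewrite that isolates the ``multiplication'' subdiagram on one wire, applies the hypothesis there, and then restores the original arity using spider fusion. No induction on the size of the bang-box should be required: the bang-boxed wires are simply carried along by a spider-fusion step and never actually participate in the application of $M_{a,b}$.

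Concretely, I would start from the LHS of the bang-boxed equation, which contains a central Z-spider attached to the two labelled H-boxes $H(a)$ and $H(b)$, together with some bang-boxed bundle of external wires coming out of that Z-spider. Using \SpiderRule in reverse, I would split the central Z-spider into two: one Z-spider that keeps the bang-boxed external wires and sprouts a single internal wire, and a 1-to-2 Z-spider on that internal wire that connects to $H(a)$ and $H(b)$. The latter subdiagram is exactly the single-wire LHS of $M_{a,b}$; applying the hypothesis rewrites it to a single $H(ab)$ on the internal wire. A final use of \SpiderRule refuses the internal wire into the external Z-spider, producing the RHS of the bang-boxed equation. If the bang-box is instantiated zero times, the intermediate ``external'' Z-spider is a zero-legged spider (a scalar), and one uses \IDRule together with the basic scalar cancellation lemmas to dispose of it; if it is instantiated once, the derivation collapses to $M_{a,b}$ itself.

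The main obstacle is tracking the \dstar\ scalars. The statement of $M_{a,b}$ comes with a specific number of stars on each side, and when we embed it inside the bang-boxed statement we must make sure the global scalar counts match those dictated by the bang-boxed diagram. I expect this to be routine rather than subtle: because the local rewrite acts entirely on a one-wire subdiagram, the stars produced on each side are exactly those in $M_{a,b}$, and matching them against the stars appearing in the bang-boxed statement should require only the scalar cancellation lemmas \ref{lem:scalarcancelstars}--\ref{lem:scalarcancelznot} together with \SpiderRule and \IDRule. Should the bang-box actually be arranged so that the H-boxes (rather than the Z-spider) carry the bundle of wires, the same strategy applies with the roles of \SpiderRule and \HFuseRule exchanged, and \StrongCompRule or \HCompRule used to pull the bang-boxed wires through the colour change before and after the application of $M_{a,b}$.
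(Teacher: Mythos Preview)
Your primary reading of the diagram is off: in the bang-boxed multiply rule the \emph{Z-spider} sits inside the !-box, so each instantiation spawns a fresh white spider, and correspondingly $H(a)$ and $H(b)$ acquire one leg per instance. There is no single ``central Z-spider'' whose external wires you could peel off with \SpiderRule; your first paragraph would prove a different (and trivial) statement. Your hedged reading at the end is the correct one.

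Under that reading your sketch is essentially what the paper does, with one correction: \HFuseRule only fuses \emph{phase-free} H-boxes, so you cannot ``exchange the roles of \SpiderRule and \HFuseRule'' directly on $H(a)$ and $H(b)$. The right tool is the definition \eqref{eq:labelledHboxhigherarity} of higher-arity labelled H-boxes, which expresses a multi-leg $H(a)$ as a single-leg $H(a)$ attached to a phase-free H-box. The paper unfolds both $H(a)$ and $H(b)$ this way, then uses \HCompRule together with \HHRule and \SpiderRule to collapse the phase-free pieces against the bang-boxed Z-spiders, leaving precisely the one-wire configuration of $M_{a,b}$; after applying the hypothesis it refolds via \eqref{eq:labelledHboxhigherarity}. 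No additional star bookkeeping beyond these steps is required, so your worries about scalar tracking are unfounded once you use the correct unfolding.
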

\begin{proof}
    \ctikzfig{multiply-rule-bb-proof1}
     \[
      \tikzfig{multiply-rule-bb-proof2} \qedhere
     \]
\end{proof}
This states that if two labelled H-boxes are connected to exactly the same set of white spiders that we can fuse the H-boxes together by multiplying their labels.
For certain small values of $b$ we can easily prove this rule.

\begin{lemma}\label{lem:mult-simple-values}
	Let $b\in \{0,1,-1\}$. Then the ZH-calculus proves \eqref{eq:mult-def} for all $a\in \mathbb{Z}$.
	\ctikzfig{multiply-rule-bb}
\end{lemma}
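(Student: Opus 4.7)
My plan is to observe that by Lemma~\ref{prop:multiply-bb}, it suffices to establish the scalar rule~\eqref{eq:mult-def} for $b \in \{0,1,-1\}$ and every $a \in \mathbb{Z}$; once the base case is in hand, the bang-boxed statement displayed in the lemma follows automatically. I will treat the three values of $b$ one at a time, in each case exploiting the particularly simple form of $H(b)$ to carry out the rewrite directly, without appealing to any of the non-trivial multiply, introduction, or average rules that will be established in later sections.

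For $b=1$, the point is that~\eqref{eq:unit} exhibits $H(1)$ as the trivial (empty) diagram at the relevant arity, so substituting it into the LHS of $M_{a,1}$ simply deletes the $H(1)$ and leaves $H(a)$, which matches the RHS $H(a\cdot 1) = H(a)$ in a single step. For $b = -1$, I note that $H(-1)$ is by convention the original unlabelled H-box, and~\eqref{eq:def-negative-numbers} defines $H(-a)$ as $H(a)$ with an additional $Z$-gate on its defining state. My plan is to expand $H(a)$ using the state form~\eqref{eq:labelledHboxes}, fuse the two adjacent H-boxes along their shared wire using \HFuseRule, and then use Lemma~\ref{lem:h-z-commute} (commuting a Hadamard past a $Z$-like generator) together with \IDRule to reshape the resulting diagram into precisely the form prescribed by~\eqref{eq:def-negative-numbers}.

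For $b = 0$, I would expand $H(0)$ using~\eqref{eq:zero-high-arity}, which exhibits it as a diagram whose effect is to project away the all-ones contribution on its wires. Propagating this zero-signal through the white spider on the $H(a)$-side using \HCompRule and Lemma~\ref{lem:copy-z-x} forces $H(a)$ to be applied to a $\ket{0}$-like input on at least one wire, at which point~\eqref{eq:unit}, \HFuseRule, and a white-spider fusion collapse $H(a)$ to a trivial scalar contribution, leaving only $H(0) = H(a \cdot 0)$ as required.

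The main obstacle will be the $b=0$ case: for $b=\pm 1$ the H-box $H(b)$ has such a simple structural description that the rule becomes a near-immediate one- or two-step rewrite, but for $b=0$ the absorption property $a \cdot 0 = 0$ is not visible at the level of the generators and has to be produced by genuine propagation of a zero-projection through the intervening white spider. The argument will be essentially independent of $a \in \mathbb{Z}$ in each case, since all three proofs manipulate only the local shape of $H(b)$ on one side of the diagram.
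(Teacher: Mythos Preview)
Your overall plan matches the paper's proof: reduce to the single-wire instance via Lemma~\ref{prop:multiply-bb} and then treat $b\in\{1,-1,0\}$ separately, with $b=0$ being the only case requiring an actual calculation. The $b=1$ case and the identification of $b=0$ as the substantive case are exactly as in the paper.

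Your $b=-1$ argument, however, is overcomplicated and somewhat confused. In $M_{a,-1}$ the two boxes $H(a)$ and $H(-1)$ are each attached to a common white spider; they do not share a wire, so there is nothing to fuse with \HFuseRule, and ``expanding $H(a)$ via~\eqref{eq:labelledHboxes}'' does not make sense for general $a\in\mathbb Z$ (that equation only gives explicit forms for a few small labels). The paper's argument is a one-liner: the unlabelled H-box state attached to a white spider is precisely the definition \ZDef of the white $\neg$-spider, so the LHS is literally a $Z$-gate applied to $H(a)$, which is $H(-a)$ by~\eqref{eq:def-negative-numbers}. No \HFuseRule, no Lemma~\ref{lem:h-z-commute}, no \IDRule needed. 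You already cite~\eqref{eq:def-negative-numbers}, so you have the right ingredient; just drop the detour and invoke \ZDef directly.

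For $b=0$ your sketch is in the right spirit. The paper's calculation unfolds $H(0)$ (which is a grey-spider-based gadget), pushes it through the white spider, and uses that plugging the resulting state into $H(a)$ kills the $a$-dependence, leaving $H(0)$. Your mention of \HCompRule and a copy lemma is along these lines; just be aware that the actual derivation is a short chain of concrete rewrites rather than a single ``propagation'' step.
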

\begin{proof}
	By Lemma~\ref{prop:multiply-bb} it suffices to show the case where the !-box is expanded a single time.
	For $b=1$, this follows from \eqref{eq:unit} and \SpiderRule. For $b=-1$, this follows from \ZDef and \eqref{eq:def-negative-numbers}. For $b=0$ we calculate:
	\[\tikzfig{mult-0-a-proof} \qedhere\]
\end{proof}

\subsection{The intro rule}
\label{s:intro}

A second family of rules that is particularly useful is the following:

\begin{equation*}\label{eq:intro-def}
    \tag{$I_a$}\tikzfig{intro-rule}
\end{equation*}

We refer to this as the \emph{intro} rule as it allows us to introduce new wires to an H-box. As is the case for the multiply rule, this rule has a generalisation to arbitrary arity. But unlike the multiply rule, it will often be useful to apply the rule multiple times in succession to connect it to a larger set of wires.

\begin{lemma}\label{lem:intro-bangboxed}
    If the ZH-calculus proves $I_a$, then it also proves the following !-boxed version\footnote{We use a mixed notation without !-boxes for the inputs and outputs for easier intelligibility, but will occasionally use the more formal notation with further !-boxes where the extra rigour can be achieved without confusing the notation.}.
    The natural numbers $m$ and $n$ are arbitrary, but note that the annotated !-box depends on $n$.
    \ctikzfig{intro-rule-bangboxed-prime}
\end{lemma}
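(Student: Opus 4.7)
The plan is to proceed by induction on the cardinality of the indexing set of the !-box, using $I_a$ itself as the one-wire base step and Z-spider fusion \SpiderRule to reshape the intermediate diagrams so that the induction hypothesis applies. Writing $k$ for the number of times the !-box is expanded, the $k=0$ case is immediate from the convention that a !-box indexed by the empty set is just erased (cf.\ the base clause of Definition~\ref{def:annotated-bb}): both sides reduce to the starting H-box on its original wires.

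For the inductive step, assume the statement for $k-1$. The key observation is that expanding the !-box once according to \eqref{eq:annotated-bb-recursive} produces a diagram that has the structure of the $(k-1)$-fold expansion sitting next to a single extra copy of the !-box's contents. On each of the $n$ existing wires, the Z-spider that distributes to the $k-1$ copies and the Z-spider that distributes to the new copy can be fused into a single Z-spider with the right number of outputs using \SpiderRule; and conversely such a Z-spider can always be split back apart. So rewriting the LHS for $k$ is essentially the same as rewriting the diagram obtained from the $k=1$ case, but with the ``new wire'' gadget actually standing in for the full $(k-1)$-fold gadget on one side. Concretely, I would first apply the $k=1$ instance of the statement — which is exactly $I_a$ itself — to peel off one wire of the !-box, then invoke the induction hypothesis on the remaining $k-1$ wires still packaged inside the !-box, and finally use \SpiderRule (together with Lemma~\ref{lem:iota-copy} to move any $\iota_{\vec b}$-style NOTs through the spiders when needed) to re-amalgamate the branching Z-spiders so that the result matches the fully expanded RHS.

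The main obstacle I anticipate is purely bookkeeping: ensuring that after applying $I_a$ once, the wires, Z-spider branchings, and copies of the $H(a)$ gadget are arranged in precisely the form required by the induction hypothesis, rather than merely isomorphic to it up to spider surgery. To sidestep this cleanly I would actually formulate the induction on the diagram obtained by first expanding the annotated !-box via Definition~\ref{def:annotated-bb} $k$ times, so that the inductive step is the syntactic move ``peel off the rightmost expanded copy,'' and the appeal to $I_a$ lines up literally with one copy of the !-box contents. Scalar factors from the introduced $H(a)$ gadgets can be absorbed using Lemma~\ref{lem:scalarcancelxh-gen}, and the meta-rule ``only topology matters'' takes care of routing the auxiliary wires. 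No nontrivial arithmetic on labels is required at this stage — the argument is entirely structural, relying only on $I_a$ (which is assumed), \SpiderRule, and the definitional unfolding of annotated !-boxes.
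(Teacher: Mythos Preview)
Your approach is essentially the same as the paper's: both argue by induction on $n$, using $I_a$ to handle one new wire at a time, with the $n=0$ case trivial. Two small points are worth flagging.

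First, the $n=1$ instance of the !-boxed statement is \emph{not} literally $I_a$. The paper spends a short derivation getting from $I_a$ to the $n=1$ case with arbitrary $m$: on the RHS the original $m$ wires must pass through freshly introduced Z-spiders that fan out to both H-box copies, and this reshaping takes a few steps. Since your inductive step invokes ``the $k=1$ instance'' as if it were given for free, that derivation is still owed; it is easy, but it is not nothing.

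Second, the paper runs the inductive step in the opposite order from yours: it applies the induction hypothesis first (handling $n$ wires), then proves a small intermediate lemma that pushes the $(n{+}1)$-st wire into all $2^n$ H-boxes at once. Your ordering (peel off one wire via $I_a$, then apply the IH to each of the two resulting H-boxes) also works, but you must apply the IH twice and then fuse the duplicated Z-spiders on the $m$ original wires and on the shared $k-1$ new spiders; this is precisely the bookkeeping you anticipate, and it is real rather than cosmetic. Finally, no scalar factors arise anywhere in this argument, so the appeal to Lemma~\ref{lem:scalarcancelxh-gen} is spurious, and Lemma~\ref{lem:iota-copy} is likewise not needed for the induction itself.
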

\begin{proof}
First, let us prove the following equation:
\begin{equation}\label{eq:intro-rule-bb-lemma}
 \tikzfig{intro-rule-bb-lemma}
\end{equation}
Now, we prove the claim by induction on $n$, beginning with the base case where $n = 1$. Note that $n = 0$ is valid but trivial.
In the derivation below we use a !-box to show that $m$ is arbitrary.
\[\tikzfig{intro-rule-bb-proof} \]

For the inductive step, assume the lemma holds for $n$ copies of the lower white node. In the calculation below, (*) denotes the induction step.
\ctikzfig{intro-rule-bb-inductive-step-1-prime}
\ctikzfig{intro-rule-bb-inductive-step-2-prime}
\[\tikzfig{intro-rule-bb-inductive-step-3-prime} \qedhere \]
\end{proof}

\begin{example}
 When applying the !-boxed intro rule to an H-box and $n$ white spiders, the H-box is copied $2^n$ times, with each copy being connected to the $n$ white spiders via a different combination of NOT edges and plain edges. Every wire originally incident on the H-box is replaced by a white spider, which connects to all the new H-boxes. For example, with $n=2$ and a single wire incident on the original H-box:
 \ctikzfig{intro-bb-example}
 This corresponds to the outer product
 \[
  \begin{pmatrix}1\\a\end{pmatrix} \left( \begin{pmatrix}1&1\end{pmatrix} \otimes \begin{pmatrix}1&1\end{pmatrix} \right)
  = \begin{pmatrix}1\\a\end{pmatrix} \begin{pmatrix}1&1&1&1\end{pmatrix}
 = \begin{pmatrix}1&1&1&1\\a&a&a&a\end{pmatrix},
 \]
 as will become clear from the normal form definition in Section~\ref{sec:normal-form}.
\end{example}

We will prove that the intro rule holds for all integers in Section~\ref{sec:completeness}.
But first we will directly prove the intro rule for certain simple integers.
To do this we will need two lemmas.

\begin{multicols}{2}
\begin{lemma}\label{lem:intro-half}~
	\ctikzfig{intro-half}
\end{lemma}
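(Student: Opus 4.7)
The plan is to prove \ref{lem:intro-half} by a direct rewrite-based calculation, rather than by induction. Since the name and the surrounding discussion suggest this lemma isolates the appearance of a factor of $\frac{1}{2}$ (the $\dstar$ scalar) when an extra wire is introduced into a small-label H-box, my first move is to unfold both sides into the basic generators: expand any labelled H-box using the definitions in equations \eqref{eq:labelledHboxes}--\eqref{eq:def-natural-number} (and, if $H(2)$ appears, the defining equation \eqref{eq:H-box-2-def}), and rewrite every red dashed edge using \eqref{eq:not-edge-def}. Once everything is in terms of Z-spiders, H-boxes, NOTs, and stars, both sides should be amenable to the core rule set of Figure~\ref{fig:phasefree-rules}.

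From there I would work on the LHS. The typical pattern is to use \SpiderRule to fuse or split a white node so that the H-box has a local handle that can participate in a bialgebra step; then apply \HCompRule (or \StrongCompRule, if the relevant edges are plain) to push an H-box past a Z-spider, duplicating it. After the bialgebra step, one typically gets two H-boxes connected through a small subdiagram that can be collapsed using \HFuseRule and \HHRule. At this stage the scalar bookkeeping begins: stray short H--H composites yield scalars that are neutralised by Lemmas~\ref{lem:scalarcancelhh} and~\ref{lem:scalarcancelxh}, while isolated NOTs and white dots are handled by \XInvolutionRule and \ZInvolutionRule. Where the $\dstar$ scalar appears on one side but not the other, I would bring it in or remove it using Lemma~\ref{lem:scalarcancelstars} together with Lemma~\ref{lem:scalar-2}, which packages the key relation between $\dstar$ and $H(2)$.

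The main obstacle I anticipate is not any single algebraic step but rather keeping the scalars aligned: because the $\dstar$ generator evaluates to $\frac{1}{2}$ and H-boxes naturally produce factors of $\pm 1$ and $2$ under fusion (via \HFuseRule and the definition of $H(2)$), an off-by-one in the number of stars is the easiest way to produce a wrong answer. My safeguard would be to rewrite one side all the way down to a diagram of the form ``stars times a combinatorially standard ZH skeleton'' and do the same to the other side; then compare skeletons by topology and compare star-counts by an explicit tally. If the direct approach gets bogged down, I would fall back on the indirect route of composing with computational basis states $\whiteunit$ and $\grayunit$ on the free wire and checking the $\ket 0$ and $\ket 1$ cases separately using \StateCopy and Lemmas~\ref{lem:copy-x-z}--\ref{lem:copy-znot-h}; this reduces the identity to two closed scalar diagrams, each of which is a straightforward exercise in the cancellation lemmas \ref{lem:scalarcancelstars}--\ref{lem:scalar-cancel-2}.
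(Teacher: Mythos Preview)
Your primary strategy—unfold the labelled H-boxes, apply spider fusion and a bialgebra step, then clean up scalars—is in the same spirit as the paper's proof, which is also a direct rewrite calculation (the paper happens to work right-to-left). However, your description stays at the level of generic tactics (``split a white node, apply \HCompRule or \StrongCompRule, collapse with \HFuseRule/\HHRule, cancel scalars'') without identifying which concrete sub-diagram the bialgebra rule should fire on, or exhibiting the chain that actually produces the required star count. Since you yourself flag the scalar bookkeeping as the crux, and that is indeed the entire content of the lemma, what you have written is a plausible plan but not yet a proof.

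Your fallback approach has a genuine gap. Plugging two states into the free wire and checking that the resulting closed diagrams agree is a \emph{semantic} verification; to turn it into a \emph{syntactic} proof in the ZH-calculus you would need a case-analysis lemma saying that if $D_1$ and $D_2$ become provably equal after composing with each of two spanning states, then $D_1 = D_2$ outright. No such lemma is available among the basic derived rules of Section~\ref{s:basic-derived} or the labelled-H-box lemmas of Section~\ref{sec:labelledHboxes}. The natural way to establish one is via a diagrammatic decomposition of the identity into basis projectors, which in ZH is essentially an instance of the intro rule for $H(0)$ together with the normal-form machinery—precisely the material that Lemma~\ref{lem:intro-half} is a stepping stone toward. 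So the fallback is circular. (As a minor point, $\whiteunit$ is the unnormalised $\ket{+}$, not a computational basis state; the pair you actually want for a basis-plug argument is $\grayunit$ and its NOT.)
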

\begin{lemma}\label{lem:cancel-two-half}~
	\ctikzfig{cancel-two-half}
\end{lemma}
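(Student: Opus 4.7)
The plan is to reduce the equation to a combination of scalar cancellation identities already established in Sections~\ref{s:basic-derived} and~\ref{sec:labelledHboxes}. Since the lemma involves the label-$2$ box $H(2)$ together with stars (each representing the scalar $\frac{1}{2}$), the overall scalar bookkeeping must balance: factors of $2$ coming from $H(2)$ are to be absorbed by factors of $\frac{1}{2}$ coming from stars. The key is to arrange the diagram so that the established cancellation rules fire cleanly.

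First I would unfold $H(2)$ using its explicit definition from equation~\eqref{eq:H-box-2-def}, which expresses it via the successor gadget applied to $H(1)$; combined with equation~\eqref{eq:unit}, this turns the labelled box into a small network of plain H-boxes, white spiders, and possibly a Z (from the negation piece of~\eqref{eq:def-negative-numbers} or the successor construction). After this unfolding, I would apply spider fusion \SpiderRule to coalesce adjacent white spiders and, where helpful, H-box fusion \HFuseRule, so that the stars end up adjacent to isolated H-boxes of the form handled by Lemma~\ref{lem:scalarcancelxh}, Lemma~\ref{lem:scalarcancelhh}, Lemma~\ref{lem:scalar-2}, or Lemma~\ref{lem:scalar-cancel-2}.

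Second, I would push stars inwards towards the appropriate H-boxes using Lemma~\ref{lem:scalarcancelstars} (or Lemmas~\ref{lem:scalarcancelzx}--\ref{lem:scalarcancelznot}) to cancel them in bulk. If unfolding $H(2)$ introduces a Z gate, I would commute it through white spiders via Lemma~\ref{lem:z-commute} or Lemma~\ref{lem:not-commute}, and cancel pairs of Zs using Lemma~\ref{lem:znots-cancel}, so that no stray sign remains to obstruct the final scalar cancellation. The role of the adjacent Lemma~\ref{lem:intro-half} may be useful here if the target form requires introducing a lone star next to a white spider; in that case I would apply it in reverse on the right-hand side and match the two sides at an intermediate step.

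The main obstacle is the careful bookkeeping of the $\frac{1}{2}$ and $2$ factors and the placement of stars relative to H-boxes: the scalar cancellation lemmas are phrased for very specific local configurations, so if the naive unfolding of $H(2)$ produces stars in the wrong place, I would introduce auxiliary spiders or wires via \IDRule or \HHRule (or the unit/counit properties of white spiders) to reshape the diagram locally until a cancellation lemma applies. Once all stars and all copies of $2$ have been paired off in this way, what remains should be exactly the diagram on the right-hand side, completing the proof.
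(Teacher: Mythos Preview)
Your proposal is too vague to be a proof: it amounts to ``unfold $H(2)$, push scalars around, and cancel''. That is a generic plan, not a derivation, and in particular it does not engage with the actual shape of the statement. The lemma is a \emph{!-boxed} equality: two labelled boxes (an $H(2)$ and the ``half'' gadget) are connected via white spiders to an arbitrary number of legs, and the claim is that together they cancel. Your discussion of ``placement of stars relative to H-boxes'' and ``scalar bookkeeping'' suggests you are thinking of this as a scalar identity, but the real content is the arbitrary-arity Schur-product cancellation, not the zero-arity numerics.

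The paper's proof has a two-step structure that you do not mention. First, it proves the !-box-free (single-leg) instance by a short direct calculation. Second, it observes that the generalisation to arbitrary arity follows \emph{exactly} the argument used for Lemma~\ref{prop:multiply-bb}: one applies \HCompRule to gather the legs through a single white spider, reduces to the single-leg case just proved, and then unfolds again. Without this second step, your proposal has no mechanism for handling the !-box at all; the scalar-cancellation lemmas you cite (Lemmas~\ref{lem:scalarcancelstars}--\ref{lem:scalarcancelznot}, \ref{lem:scalar-2}, \ref{lem:scalar-cancel-2}) are all zero-arity identities and do nothing to discharge the arbitrary-arity wiring.
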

\end{multicols}

\begin{proof}[Proof of Lemma~\ref{lem:intro-half}]Going from right to left:
	\ctikzfig{intro-half-proof-1}
	\[\tikzfig{intro-half-proof-2}\qedhere\]
\end{proof}

\begin{proof}[Proof of Lemma~\ref{lem:cancel-two-half}]
	We prove the !-box-free version,
    with the generalisation following the same proof as for Lemma~\ref{prop:multiply-bb}.
	\[\tikzfig{cancel-two-half-proof} \qedhere\]
\end{proof}

\begin{lemma}\label{lem:intro-phasefree}
    Let $a \in \{-1,0,1,2\}$. Then the ZH-calculus proves $I_a$ and its !-boxed generalisation:
    \begin{equation*}
        \tikzfig{intro-rule-bangboxed-prime}
    \end{equation*}
\end{lemma}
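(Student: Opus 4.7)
The plan is to verify the intro rule $I_a$ separately for each of the four values $a \in \{-1, 0, 1, 2\}$, exploiting the algebraic properties that make each labelled H-box especially tractable. In every case the work reduces to a short diagrammatic calculation built from the spider rules, the scalar-cancellation lemmas of Section~\ref{s:basic-derived}, and Lemmas~\ref{lem:intro-half}--\ref{lem:cancel-two-half}.

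For $a = 1$ and $a = 0$, I would begin by simplifying the labelled H-boxes appearing on both sides using their defining identities \eqref{eq:unit} and \eqref{eq:zero-high-arity} respectively. In both cases the labelled H-box decouples from any wire it touches, so the right-hand side (which contains two copies of $H(a)$ connected to a white spider via one plain and one NOT edge) collapses to a diagram of Z-spiders, NOTs, and scalar dots. The left-hand side collapses in the same way. The remaining identity of scalar/spider diagrams is then closed using \SpiderRule, \IDRule, \NotCopyRule, and the scalar-cancellation lemmas (Lemmas~\ref{lem:scalarcancelstars}--\ref{lem:scalarcancelznot}) to balance the $\dstar$ factors.

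For $a = 2$, I would substitute the definition \eqref{eq:H-box-2-def} on both sides, so that every occurrence of $H(2)$ becomes the explicit gadget built from an $H(-1)$ paired with a half-labelled H-box. The two $H(2)$ gadgets produced on the right-hand side then contain a repeated motif that is exactly the content of Lemma~\ref{lem:cancel-two-half}; applying it eliminates the duplicated structure. What remains can be brought to the left-hand side using \HFuseRule, \HHRule, and the colour-change commutation lemmas (Lemmas~\ref{lem:h-not-commute}, \ref{lem:h-z-commute}).

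For $a = -1$, the H-box is the standard unlabelled one, and this is the genuinely diagrammatic case. Here I would introduce a half-labelled H-box on the right via Lemma~\ref{lem:intro-half} to create a symmetric configuration, then push the NOT edge through the unlabelled H-box using the copy lemmas \ref{lem:copy-xnot-h} and \ref{lem:copy-znot-h}. After a spider-fusion step with \HFuseRule and cancellation using \HHRule, the diagram reduces to a single unlabelled H-box matching the left-hand side. The hardest step will be this $a = -1$ case, since it is the only one that implicitly uses \OrthoRule (through Lemma~\ref{lem:copy-znot-h}); the $a = 2$ case is second-hardest because of the composite definition of $H(2)$, while $a \in \{0,1\}$ amount to scalar bookkeeping.
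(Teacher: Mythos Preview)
Your overall strategy---reduce to the unboxed $I_a$ via Lemma~\ref{lem:intro-bangboxed} and then do a case split on $a\in\{-1,0,1,2\}$---is exactly what the paper does. The difference lies in how you allocate the work across the four cases, and here your assessment is essentially inverted relative to the paper.

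In the paper, $a=-1$ and $a=1$ are the two short, direct cases, handled first with just spider fusion, \HHRule, and the NOT/H commutation lemmas; no appeal to Lemma~\ref{lem:intro-half} or Lemma~\ref{lem:copy-znot-h} is needed for $a=-1$. The $a=0$ case is the longest of the four (the paper's derivation is long enough to need rescaling), so it is more than the ``scalar bookkeeping'' you anticipate: after unfolding \eqref{eq:zero-high-arity} the two grey states and the NOT interact nontrivially with the fan-out spiders before everything collapses. Finally, $a=2$ is the case that genuinely relies on the preparatory Lemma~\ref{lem:intro-half} (and through it on \OrthoRule via Lemma~\ref{lem:copy-znot-h}), not $a=-1$ as you claim.

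Your proposed route for $a=-1$---introduce a half-gadget via Lemma~\ref{lem:intro-half} and then push NOTs with Lemma~\ref{lem:copy-znot-h}---may be salvageable, but it is a detour: Lemma~\ref{lem:copy-znot-h} copies a $\neg$-\emph{state} through an H-box, which is not the same as pushing a NOT \emph{edge}, so you would need to bend wires and introduce extra states to make that step fire. The paper avoids all of this for $a=-1$. Conversely, your plan for $a=2$ using Lemma~\ref{lem:cancel-two-half} is closer to the paper's, which indeed leans on the two helper lemmas~\ref{lem:intro-half}--\ref{lem:cancel-two-half} precisely for this case. In short: the skeleton is right, but swap your difficulty estimates for $a=-1$ and $a=0$, and reserve the \OrthoRule-dependent machinery for $a=2$.
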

\begin{proof}
	By Lemma~\ref{lem:intro-bangboxed} it suffices to show Eq.~\eqref{eq:intro-def}.
    We prove by case distinction on $a\in\{-1,0,1,2\}$:
    \ctikzfig{intro-minone-proof}
    \ctikzfig{intro-one-proof}
    \[\scalebox{0.95}{\tikzfig{intro-zero-proof}}\]

    \[\tikzfig{intro-2-pf} \qedhere\]
\end{proof}

\subsection{The average rule}
\label{s:average}

The final family of rewrite rules we will need is the following.
\begin{equation*}\label{eq:average-def}
    \tag{$A_{a,b}$}\qquad \tikzfig{average-rule-prime}
\end{equation*}
We call this rule the \emph{average} rule, because on the right-hand side it adds the numbers together, but also multiplies it with a `NOT 2' H-box, that acts like multiplying by a half; cf.~Lemma~\ref{lem:average-true-form}.
We will prove that this rule holds for all integers in Section~\ref{sec:completeness}. For now, we will prove it for certain combinations of integers.

\begin{lemma}\label{prop:avg-bb}
 If the ZH-calculus proves \eqref{eq:average-def}, then it also proves the following:
 \ctikzfig{avg-lemma-prime}
\end{lemma}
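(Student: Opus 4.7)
The plan is to follow exactly the same template as the proof of Lemma~\ref{prop:multiply-bb}: reduce the !-boxed situation to the single-wire case so that the hypothesised instance $A_{a,b}$ applies, and then unwind the reduction. The key observation is that in the !-boxed statement both H-boxes $H(a)$ and $H(b)$ are attached to the \emph{same} list of $n$ external white spiders, so the only relevant difference from the base case is that a single shared wire has been replaced by $n$ shared wires routed through Z-spiders.

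First I would use spider fusion \SpiderRule (applied in reverse) on each of the $n$ white spiders to split it into a 3-ary spider connected to $H(a)$, $H(b)$, and the external wire. This puts the diagram in a form where each external spider mediates between the two H-boxes exactly as in the single-wire case. Next, I would bring the two H-boxes together by applying the $H$-bialgebra \HCompRule (or equivalently \StrongCompRule inside a Z-compositor structure, in the manner used in the proof of Lemma~\ref{prop:multiply-bb}) to fuse the $n$ parallel connections into a single wire connecting $H(a)$ and $H(b)$. At this point the configuration is precisely the LHS of $A_{a,b}$, so by hypothesis we may replace it with the RHS, obtaining a single $H(a+b)$ together with the scalar `NOT $2$' gadget.

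Finally, I would reverse the bialgebra step to spread the $H(a+b)$ back over the $n$ wires, and reverse the spider-splitting using \SpiderRule to reassemble the original $n$ external white spiders. The scalar `NOT $2$' gadget has no open wires and so commutes past all the intermediate manipulations without interacting with the spider surgery — this is the reason the RHS of the !-boxed average rule carries exactly one copy of the scalar, regardless of $n$.

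The main obstacle I anticipate is purely bookkeeping: because $A_{a,b}$ introduces a NOT-edge in its RHS, one must check that when the spider-splitting is unwound the NOT-edges end up in the correct position, \ie\ sitting between $H(a+b)$ and the $H(2)$ scalar rather than on any of the $n$ external wires. This can be handled using Lemma~\ref{lem:iota-copy} and the commutation lemmas (\XZCommute, \NotCopyRule) to slide NOTs through white spiders as needed. Once that is verified, the derivation is a straightforward mirror of Lemma~\ref{prop:multiply-bb} and requires no new rewrite rules beyond \SpiderRule, \HCompRule, $A_{a,b}$, and the basic derived lemmas of Section~\ref{s:basic-derived}.
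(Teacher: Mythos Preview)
Your plan is in the right spirit (reduce to the single-wire base case and apply $A_{a,b}$), but it misses the shortcut the paper actually takes, and some of the details you flag as ``bookkeeping'' are based on a slight misreading of the rule.

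The paper's proof is a one-line derivation that does not touch the bialgebra rules at all. The point is that higher-arity \emph{labelled} H-boxes are not primitive: by definition~\eqref{eq:labelledHboxhigherarity}, an arity-$n$ box $H_n(a)$ is literally the arity-$1$ state $H_1(a)$ fed into a fixed arity-changing gadget that is independent of $a$. Unfolding both $H(a)$ and $H(b)$ this way, the gadget is identical on both sides and simply sits between the $H_1$'s and the !-boxed white spiders; the hypothesis $A_{a,b}$ then applies verbatim to the exposed $H_1(a)$ and $H_1(b)$, and folding~\eqref{eq:labelledHboxhigherarity} back gives the !-boxed conclusion. No spider-splitting, no \HCompRule/\StrongCompRule surgery, no collapse of $n$ wires to one.

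Two smaller remarks on your write-up. First, \HCompRule and \StrongCompRule are genuinely different bialgebra laws (Z--H versus Z--X), so ``or equivalently'' is not right; if you did go the bialgebra route you would need to say which one and why it matches the generators actually present. Second, your anticipated obstacle about NOT-edges landing on the external wires does not arise: the NOT on the RHS of $A_{a,b}$ lives on the wire joining $H(a{+}b)$ to the $H(2)$ gadget, and neither of those touches the !-boxed spiders, so there is nothing to commute past. If you rework the argument using~\eqref{eq:labelledHboxhigherarity} directly, the whole thing collapses to two or three rewrite steps, matching the paper.
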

\begin{proof}
     \[ \tikzfig{avg-lemma-prime-pf} \qedhere\]
\end{proof}

\begin{lemma}\label{lem:average-with-0}
	The ZH-calculus proves $(A_{a,0})$ for all $a\in \mathbb{Z}$.
	\ctikzfig{average-rule-prime-0}
\end{lemma}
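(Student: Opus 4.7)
By Lemma~\ref{prop:avg-bb} it suffices to prove the basic case where the shared Z-spider has only a single wire, i.e.\ the plain version of $A_{a,0}$ before !-boxing. The plan is to exploit the special structure of the $0$-labelled H-box and reduce the LHS down to an $H(a)$ branch times the $H(-2)$ scalar factor.

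First I would expand the $H(0)$ on the LHS using \eqref{eq:zero-high-arity} together with the defining equations for labelled H-boxes in \eqref{eq:labelledHboxes} and \eqref{eq:labelledHboxhigherarity}. This reveals that the $0$-labelled H-box factors through a simple Z-spider gadget (essentially a $\ket 0$-state glued onto the wires via a white spider). Next I would use the NOT-commutation and copy lemmas of Section~\ref{s:basic-derived} --- in particular Lemma~\ref{lem:not-commute}, Lemma~\ref{lem:copy-z-x}, and Lemma~\ref{lem:white-not-cancel} --- to slide this state through the NOT-edges appearing in the average-rule configuration. After these slides, the $H(0)$ branch detaches from the shared spider and collapses to a scalar subdiagram, while the $H(a)$ branch is left in essentially the same shape as on the RHS.

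The remaining task is to verify the resulting scalar equals the $H(-2)$ factor on the RHS. Here I would apply \SpiderRule to fuse leftover Z-spiders, use \MultRule together with Lemma~\ref{lem:mult-simple-values} (specifically \multzerozero and \multzerominus) to clean up doubled H-boxes, and invoke Lemma~\ref{lem:scalar-2} or Lemma~\ref{lem:scalar-cancel-2} to identify the leftover scalar with the canonical form of $H(-2)$. Since $a+0=a$, the surviving labelled H-box on the RHS has the correct label $a+b=a$.

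The main obstacle is scalar bookkeeping: the LHS and RHS of the average rule must agree on their overall scalar, and the $H(-2)$ on the RHS is precisely what compensates for the factor of $2$ lost when the two H-boxes of the LHS collapse to one. Carefully matching this scalar using the cancellation lemmas \ref{lem:scalarcancelstars}--\ref{lem:scalar-cancel-2} will be the trickiest part; the topological and spider-fusion manipulations themselves are routine.
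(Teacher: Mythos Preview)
Your overall strategy---exploit the special form of $H(0)$, simplify the LHS until only the $H(a)$-branch remains, and match the leftover scalar against the $H(2)$-with-NOT factor on the RHS---is correct and is exactly what the paper does. Two remarks, however.

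First, your opening sentence is backwards. Lemma~\ref{prop:avg-bb} says that the basic (non-!-boxed) $A_{a,b}$ \emph{implies} the !-boxed version, not the other way round. The statement you are proving \emph{is} the basic $A_{a,0}$, so there is nothing to reduce; you can delete that sentence.

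Second, your plan is considerably more elaborate than necessary. The paper's proof is a short diagram chase: unfold the arity-one $H(0)$ via \eqref{eq:labelledHboxes} as a grey unit state, let it interact with the X-spider in the average-rule configuration (using the X-spider fusion of Lemma~\ref{lem:x-spider} and the NOT-absorption of Lemma~\ref{lem:x-with-xnot}), and observe that what remains is precisely the RHS. You do not need \eqref{eq:zero-high-arity} (the $H(0)$ here has arity one), you do not need the multiply rules \multzerozero or \multzerominus, and the ``tricky scalar bookkeeping'' you anticipate does not arise---the $H(2)$-with-NOT factor appears directly once the $H(0)$-branch has been absorbed, with no separate cancellation argument required. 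Your route would work, but it takes a detour through machinery that the direct approach avoids.
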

\begin{proof}
\[\tikzfig{average-rule-prime-0-proof} \qedhere\]
\end{proof}

\begin{lemma}\label{lem:avg-neg}
 Let $a$ be an integer for which we have proven~\eqref{eq:intro-def}. Then we can prove $(A_{a,-a})$.
 \ctikzfig{avg-neg}
\end{lemma}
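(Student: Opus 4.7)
The plan is to convert the left-hand side of $A_{a,-a}$ into the right-hand side by trading the $H(-a)$ box for an $H(a)$ box dressed with a Z gate, then using the hypothesis $I_a$ to fold the resulting two $H(a)$ configurations together. The key observation is that the right-hand side of $A_{a,-a}$ has an $H(0)$ spider (since $a+(-a)=0$), and by Eq.~\eqref{eq:zero-high-arity} an $H(0)$ is essentially a $\bra{0}$ effect, so the target state after simplification must contain such an effect.

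First, I would apply Eq.~\eqref{eq:def-negative-numbers} to the $H(-a)$ on the left-hand side, rewriting it as an $H(a)$ with a Z gate (negate white) on the connecting wire. Then I would use the commutation lemmas (Lemmas~\ref{lem:not-commute} and~\ref{lem:z-commute}) to slide this Z gate across the shared white spider that mediates between the two H-boxes in the average rule, so that one copy of the negation ends up on each $H(a)$ branch (using that white spiders copy through NOTs by Lemma~\ref{lem:iota-copy}). The NOT edges already present on one side of the average rule's left-hand side interact with the Z gates via Lemmas~\ref{lem:znots-cancel} and~\ref{lem:white-not-cancel}, collapsing to a clean configuration in which the two $H(a)$ boxes are connected exactly in the pattern produced by $I_a$.

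Next, I would apply the assumed rule $I_a$ (or rather its bangboxed generalization Lemma~\ref{lem:intro-bangboxed}) in the right-to-left direction to contract the two $H(a)$'s back into a single $H(a)$ box, picking up residual Z gates from the commutation in Step~1. At this point I expect an $H(a)$ with two Z gates (or equivalently an effective Z gate) attached. Using the definition of negation \eqref{eq:def-negative-numbers} in reverse, combined with $H(a) \cdot Z \cdot Z = H(a)$ from Lemma~\ref{lem:znots-cancel}, the surviving Z-structure together with what remains of the $H(a)$ should collapse to the $H(0)$ box that the right-hand side of $A_{a,-a}$ demands. The scalar factor $H(2)$ with a NOT on the RHS should come out naturally from the scalar bookkeeping along the way, possibly with a final invocation of Lemma~\ref{lem:scalar-2} or Lemma~\ref{lem:intro-half} to put the factor in standard form.

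The main obstacle will be the bookkeeping: keeping the NOT edges, Z gates, and scalar factors straight as they are pushed across the spider structure, and making sure that the pattern into which the two $H(a)$'s settle matches the left-hand side of $I_a$ exactly (so that $I_a$ can be applied). A secondary subtlety is that the RHS of the average rule involves the \emph{half} factor $H(2)$-with-NOT; isolating this scalar cleanly — and not absorbing it into something that then fails to cancel — will require either the scalar cancellation of Lemma~\ref{lem:scalar-cancel-2} or an explicit insertion via Lemma~\ref{lem:intro-half}. A cleaner alternative strategy worth trying in parallel is to start from the RHS, expand $H(0)$ using \eqref{eq:zero-high-arity}, and then apply $I_a$ forwards in a way that generates the $a$-and-$-a$ pair on the left — but I expect the symmetry-breaking between $a$ and $-a$ makes the forward direction from the LHS cleaner.
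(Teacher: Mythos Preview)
Your high-level strategy is correct and matches the paper: convert $H(-a)$ into $H(a)$ with a Z attached via \eqref{eq:def-negative-numbers}, rearrange so that the two $H(a)$ copies fit the pattern on the right-hand side of $I_a$, and fold them back together by applying $I_a$ in reverse. The paper's proof is exactly a short chain of diagrammatic rewrites along these lines, with Lemma~\ref{lem:scalar-cancel-2} (respectively \eqref{eq:labelledHboxes}, for the middle diagram of the lemma statement) used at the final step to tidy the scalar factor --- precisely the bookkeeping you anticipated.

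One point to correct in your narrative of the final step: you write that ``the surviving Z-structure together with what remains of the $H(a)$ should collapse to the $H(0)$ box'', but this cannot be the mechanism. Composing $H(a)$ with any number of Z gates yields only $H(\pm a)$ by \eqref{eq:def-negative-numbers} and Lemma~\ref{lem:znots-cancel}; it never produces $H(0)$ for $a\neq 0$. The $H(0)$ on the right-hand side of $A_{a,-a}$ instead comes from the \emph{disconnected white-dot wire} that the left-hand side of $I_a$ supplies, combined with the residual grey/NOT structure of the average-rule diagram (recall from \eqref{eq:labelledHboxes} that $H(0)$ is exactly the grey-NOT state). The single surviving $H(a)$ box is disposed of separately as a scalar independent of $a$; it is not what becomes the output. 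So the fold via $I_a$ is indeed the key move, but the roles of the pieces afterwards are swapped relative to your description.
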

\begin{proof}
 The first diagram is transformed into the last one as follows:
 \ctikzfig{average-rule-a-minus-a}
 The equality to the middle diagram follows by applying Eq.~\eqref{eq:H-box-0} instead of Lemma~\ref{lem:scalar-cancel-2} in the sixth rewrite step.
\end{proof}

\begin{lemma}\label{lem:avg-equal}
 Let $a$ be an integer for which we have proven~\eqref{eq:intro-def}. Then we can simplify the left-hand side of $(A_{a,a})$ to the diagram in the middle. If we also have proven $(M_{2,a})$, then we can prove $(A_{a,a})$ outright.
 \ctikzfig{avg-equal}
\end{lemma}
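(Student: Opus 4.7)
The plan is to establish both equalities separately, mirroring the two conditional parts of the statement. For the first equality, start with the LHS of $A_{a,a}$ and apply the intro rule $I_a$ (which we are assuming has been proven) to expose extra structure around the two $H(a)$ boxes. Because both H-boxes carry the same label, the copies produced by $I_a$ are symmetric, so after applying spider fusion \SpiderRule and H-fusion \HFuseRule, together with the bialgebra rules \StrongCompRule and \HCompRule, most of the introduced wires will collapse; the NOT-edges can be moved around using Lemma~\ref{lem:not-commute} and Lemma~\ref{lem:iota-copy}, and cancellations of opposing NOT pairs will simplify things further via Lemmas~\ref{lem:znots-cancel} and~\ref{lem:xnots-cancel}. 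The resulting intermediate diagram should be exactly the middle picture of \tikzfig{avg-equal}, which I expect to contain a residual $H(2)$ (or \NOT~$2$) spider alongside a single $H(a)$ box sharing the same white-spider neighbourhood.

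For the second equality, I would invoke Lemma~\ref{prop:multiply-bb} using $M_{2,a}$: since the $H(2)$ and $H(a)$ boxes in the middle diagram are connected to exactly the same set of white spiders, the multiply rule fuses them into a single $H(2a)$ box. Combined with the remaining $H(2)$-like factor that appears on the RHS of the average rule (recall $a+a=2a$ and the \NOT~$2$ factor acts as a multiplication by $\tfrac{1}{2}$, as noted in the definition and confirmed by Lemma~\ref{lem:average-true-form}), this produces exactly the RHS of $A_{a,a}$.

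The main obstacle is in the first part: executing the simplification after applying $I_a$ requires careful bookkeeping. In particular, $I_a$ produces $2^n$ copies of $H(a)$ decorated with all possible patterns of NOT-edges, and one must organise these so that the bialgebra rewriting and NOT-cancellations yield exactly the middle diagram rather than a more tangled form. I would handle this by applying $I_a$ only to one of the two H-boxes first, then using the symmetry between the two H-boxes to pair up NOT-decorated wires and cancel them in pairs via Lemma~\ref{lem:xnots-cancel}, leaving behind the desired residual $H(2)$ structure. The remaining work is routine spider fusion and scalar bookkeeping using Lemmas~\ref{lem:scalarcancelstars}--\ref{lem:scalar-cancel-2}.
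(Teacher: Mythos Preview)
Your second half is right and matches the paper: once the middle diagram has an $H(2)$ and an $H(a)$ sharing the same white-spider neighbourhood, Lemma~\ref{prop:multiply-bb} with $M_{2,a}$ fuses them into $H(2a)=H(a+a)$, which is exactly what the right-hand side of $A_{a,a}$ requires.

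For the first equality you are running $I_a$ in the wrong direction. The left-hand side of $A_{a,a}$ already consists of two copies of $H(a)$ attached to a common white spider, one through a plain edge and one through a NOT-edge; this is precisely the pattern on the \emph{right-hand side} of the intro rule. The paper therefore applies $I_a$ from right to left, collapsing the two $H(a)$'s into a single $H(a)$ in one step; the white spider they were attached to then separates from $H(a)$ and provides the residual factor that shows up as the $H(2)$ in the middle diagram. No forward expansion, bialgebra passes, or NOT-edge bookkeeping via Lemmas~\ref{lem:not-commute}, \ref{lem:iota-copy}, \ref{lem:xnots-cancel} are needed.

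Your forward-then-cancel route may be salvageable in principle, but as written it is too vague to check: the phrase ``most of the introduced wires will collapse'' is exactly the place where an argument is required, and applying $I_a$ forward to one $H(a)$ gives you \emph{three} $H(a)$ boxes to reconcile rather than one. You are also reaching for Lemma~\ref{lem:average-true-form}, which is not available here: it lives in Section~\ref{sec:alternative-intro-avg}, assumes $\tfrac{1}{2}\in R$, and is established only after completeness; at this point in the paper you are in the phase-free setting and must work with the NOT-$2$ box as it stands.
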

\begin{proof}~
 \[
    \tikzfig{avg-equal-proof} \qedhere
 \]
\end{proof}

\begin{lemma}\label{lem:average-phasefree}
    Let $a,b\in\{0,1,-1\}$. Then the ZH-calculus proves~\eqref{eq:average-def}.
    \begin{equation*}
        \tikzfig{average-rule-prime}
    \end{equation*}
\end{lemma}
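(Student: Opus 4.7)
The plan is to enumerate all nine cases $(a,b) \in \{0,1,-1\}^2$ and show that each is already covered by a lemma proved earlier in the section, together with the intro rules established in Lemma~\ref{lem:intro-phasefree} and the multiply rules established in Lemma~\ref{lem:mult-simple-values}. Before doing so, I would record the observation that the statement of $A_{a,b}$ is symmetric under swapping the labels $a$ and $b$ (the two H-boxes appear in an interchangeable configuration in the LHS of the equation, and the RHS is symmetric in $a+b$ and a single $-2$ H-box), so proving one case of an unordered pair $(a,b)$ immediately gives the other.

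With that in hand, the casework proceeds as follows. First, every case with $b=0$ (namely $A_{0,0}$, $A_{1,0}$, $A_{-1,0}$) is exactly Lemma~\ref{lem:average-with-0}, and by the symmetry above this also gives $A_{0,1}$ and $A_{0,-1}$. Next, $A_{1,-1}$ and $A_{-1,1}$ are instances of Lemma~\ref{lem:avg-neg} once we supply $I_{1}$ and $I_{-1}$, both of which are provided by Lemma~\ref{lem:intro-phasefree} (since $\pm 1 \in \{-1,0,1,2\}$). Finally, the two diagonal cases $A_{1,1}$ and $A_{-1,-1}$ are instances of Lemma~\ref{lem:avg-equal}, which requires both $I_a$ and $M_{2,a}$; we again take $I_{\pm 1}$ from Lemma~\ref{lem:intro-phasefree}, and the multiplication rules $M_{2,1}$ and $M_{2,-1}$ are covered by Lemma~\ref{lem:mult-simple-values} (which proves $M_{a,b}$ for arbitrary $a \in \mathbb{Z}$ and $b \in \{-1,0,1\}$, in particular with $a=2$).

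There is really no main obstacle here beyond careful bookkeeping: the proof is essentially a dispatch table from the nine values of $(a,b)$ to the correct preceding lemma. The only subtlety worth highlighting in the writeup is verifying the symmetry of $A_{a,b}$ in its two labels, since the lemmas used treat the ``first'' and ``second'' H-box asymmetrically. One can justify this either by appealing to the \textbf{only topology matters} meta-rule on the LHS (the two labelled H-boxes sit symmetrically in the diagram) together with the obvious symmetry $a+b = b+a$ of the RHS label, or by expanding the LHS via Lemma~\ref{prop:avg-bb} and noting that the result is manifestly invariant under the swap $a \leftrightarrow b$. With that symmetry in hand, the three explicit lemma applications above exhaust all nine cases, completing the proof.
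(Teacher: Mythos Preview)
Your proposal is correct and follows essentially the same approach as the paper: dispatch the $b=0$ (and by symmetry $a=0$) cases to Lemma~\ref{lem:average-with-0}, the opposite-sign cases to Lemma~\ref{lem:avg-neg}, and the equal-sign cases to Lemma~\ref{lem:avg-equal}, supplying $I_{\pm 1}$ from Lemma~\ref{lem:intro-phasefree} and $M_{2,\pm 1}$ from Lemma~\ref{lem:mult-simple-values} as needed. The paper's version is terser and leaves the symmetry implicit, but the content is identical.
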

\begin{proof}
	If one of $a$ and $b$ is $0$ this follows from Lemma~\ref{lem:average-with-0}.
	The remaining cases satisfy $a,b\in\{1,-1\}$, in particular either $a=b$ or $a=-b$.

	We have $I_1$ and $I_{-1}$ by Lemma~\ref{lem:intro-phasefree}, and we have $M_{2,1}$ and $M_{2,-1}$ by Lemma~\ref{lem:mult-simple-values}.
	Hence the result follows from Lemmas~\ref{lem:avg-neg} and~\ref{lem:avg-equal}.
\end{proof}

\section{Normal forms}\label{sec:normal-forms}

Our completeness proof relies on showing that every diagram can be reduced to a certain unique normal form. In this section we will introduce this normal form, and establish a strategy for reducing diagrams to normal form. This strategy requires that we have proven the multiply, intro, and average rule for all integers. Hence, this section results in a completeness proof that is `conditional' on the multiply, intro, and average rule being provable in the ZH-calculus for all integers.

\subsection{Normal form diagrams}
\label{sec:normal-form}

Our normal form diagrams will be `state-like' in that all of the wires will be connected to the top of the diagram.
This is because we can convert between states and operators by simply bending inputs to become outputs and vice versa. This transforming of an operator into a state is known as map-state duality or the Choi-Jamio\l{}kowski isomorphism.

A key part of this normal form, is that we can easily represent the \emph{Schur product} of two states, i.e.\ the pointwise product, in the ZH-calculus.
For states $\psi,\phi$, write $\psi * \phi$ for their Schur product. Then representing states by large white boxes we have the following identity:
\ctikzfig{schur}
Here the $i$-th output of $\psi$ and $\phi$ is plugged into a \dotmult{white dot} for each $i$.
It follows from \SpiderRule that $*$ is associative and commutative, so we can write $k$-fold Schur products $\psi_1 * \psi_2 * \ldots * \psi_k$ without ambiguity.
For any finite set $J$ with $|J| = k$, let $\prod_{j\in J} \psi_j$ be the $k$-fold Schur product.

As shown in the next lemma, the indexing maps $\iota_{\vec{b}}$ from Definition~\ref{def:indexing-map} interact nicely with the Schur product.

\begin{lemma}\label{lem:convolution-iota}
 The ZH-calculus enables the computation of the Schur product of two maps of the form $\iota_{\vec{b}}\circ H_n(x)$ and $\iota_{\vec{b}}\circ H_n(y)$ for any $\vec{b}\in\mathbb B^n$ and $x,y$ for which we have proven~$(M_{x,y})$:
 \ctikzfig{convolution-iota}
\end{lemma}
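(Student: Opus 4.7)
The plan is to reduce the Schur product in two steps: first, collapse the two copies of $\iota_{\vec b}$ that appear between the H-boxes and the Schur product spiders into a single copy on the combined output, and second, fuse the two H-boxes via the multiply rule.

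First I would draw out the Schur product explicitly. By definition it consists of two state diagrams $\iota_{\vec b}\circ H_n(x)$ and $\iota_{\vec b}\circ H_n(y)$ placed side by side, with their $i$-th outputs joined for each $i$ by a two-input, one-output white spider (as in the definition of $*$ from the beginning of Section~\ref{sec:normal-form}). Thus the bottom of the diagram has two labelled H-boxes $H(x)$ and $H(y)$, each with $n$ output wires; each wire then passes through a NOT or an identity depending on $b_i$; and at each position $i$ the two wires meet at a single Z-spider with a single output wire going up. I would organize the repetition over $i$ using an annotated !-box indexed by $\{1,\ldots,n\}$, as described in Section~\ref{sec:annotated-bb}.

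Next, I would process the two copies of $\iota_{\vec b}$ position by position. When $b_i = 1$ there are no NOTs at the spider and nothing to do. When $b_i = 0$, each of the two legs of the Schur spider carries a NOT, and I would use the NOT-copy rule for white spiders (Lemma~\ref{lem:iota-copy}, or equivalently Lemma~\ref{lem:not-commute} followed by the involution Lemma~\ref{lem:xnots-cancel}) to push both NOTs through the spider and collapse them into a single NOT on the outgoing leg. In both cases the net effect at position $i$ is a NOT exactly when $b_i = 0$, so altogether the two $\iota_{\vec b}$'s below have been absorbed into one $\iota_{\vec b}$ above the $n$ Schur spiders.

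At this point the diagram has $H(x)$ and $H(y)$ connected by plain edges to exactly the same set of $n$ white spiders, with $\iota_{\vec b}$ sitting on the outputs. This is the hypothesis of Lemma~\ref{prop:multiply-bb} with $a = x$ and $b = y$; since we have assumed $M_{x,y}$, we may apply it to fuse the two H-boxes into a single $H(xy)$, yielding $\iota_{\vec b}\circ H_n(xy)$ as desired. The main obstacle is the first step: one must argue carefully that the two copies of $\iota_{\vec b}$ genuinely collapse into one, rather than cancelling entirely or leaving two separate NOTs, and this hinges on the fact that both NOT edges at a given position meet at the same 2-to-1 Z-spider. After that the remainder is immediate from Lemma~\ref{prop:multiply-bb} and the assumed instance of the multiply rule.
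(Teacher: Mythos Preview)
Your proof is correct and follows exactly the paper's approach: apply Lemma~\ref{lem:iota-copy} to merge the two copies of $\iota_{\vec b}$ into one above the Schur spiders, then invoke Lemma~\ref{prop:multiply-bb} (available since $M_{x,y}$ is assumed) to fuse $H(x)$ and $H(y)$ into $H(xy)$.
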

\begin{proof}
Apply Lemma~\ref{lem:iota-copy}, followed by Lemma~\ref{prop:multiply-bb}.

\end{proof}

\begin{definition}
	A \emph{normal-form diagram} in the ZH-calculus is a diagram of the form
	\begin{equation}\label{eq:nf-formula}
	  \tikzfig{star}^k\prod_{\vec{b} \in \mathbb B^n} \big( \iota_{\vec{b}} \circ H_n(a_{\vec{b}}) \big)
	\end{equation}
	where $H_n(a_{\vec{b}})$ is the arity-$n$ H-box (considered as a state) labelled by some values $a_{\vec{b}}$ for all $\vec{b}\in \mathbb B^n$, and $k\in \N$.
	We say the normal-form is \emph{reduced} when either $k=0$ or at least one of the labels $a_{\vec{b}}$ cannot be factored by two (i.e.\ is odd).
\end{definition}

A normal form diagram can be seen as a collection of $n$ spiders, fanning out to $2^n$ H-boxes, each with a distinct configuration of NOT's corresponding to the $2^n$ bitstrings in $\mathbb B^n$, together with a global scalar. Diagrammatically, normal forms are:
\begin{equation} \label{eq:normal-form}
\tikzfig{nf-bbox-star}\ \ :=\ \
\tikzfig{nf-picture-star}
\end{equation}

\begin{remark}
 Note that the number of outputs in a normal form diagram is the same as the exponent of $\mathbb{B}$ in the !-box annotation. To avoid cluttering the notation too much, we will only rarely make this explicit in diagrams.
\end{remark}

The additional condition for being reduced ensures uniqueness of normal form diagrams.

\begin{theorem}\label{thm:nf-unique}
Reduced normal forms are unique. In particular:
\begin{equation}\label{eq:nf-concrete}
\intf{ \, \tikzfig{star}^k \prod_{\vec{b} \in \mathbb B^n} \big( \iota_{\vec{b}} \circ H_n(a_{\vec{b}}) \big) } =
\sum_{\vec{b} \in \mathbb B^n} \frac{1}{2^k}a_{\vec{b}} \ket{\vec{b}}.
\end{equation}
\end{theorem}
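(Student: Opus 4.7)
The plan is to split the theorem into two independent tasks: first, establishing the concrete linear-algebraic formula~\eqref{eq:nf-concrete}, and second, deducing uniqueness of reduced normal forms from that formula. The second task is purely arithmetic once the first is done.

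For the formula, I would first compute $\intf{H_n(a_{\vec{b}})}$ viewed as a state with $n$ outputs: by the definition of labelled H-boxes in~\eqref{eq:labelledHboxhigherarity}, this state is $\sum_{\vec{c} \neq \vec{1}} \ket{\vec{c}} + a_{\vec{b}} \ket{\vec{1}}$, i.e.\ it has coefficient $a_{\vec{b}}$ at $\ket{\vec{1}}$ and $1$ on every other computational basis vector. Next I would examine $\iota_{\vec{b}}$ from Definition~\ref{def:indexing-map}: it is a tensor product of NOT gates applied at position $i$ precisely when $b_i = 0$, so as a linear map it permutes the basis and sends $\ket{\vec{1}}$ to $\ket{\vec{b}}$. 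Consequently $\iota_{\vec{b}} \circ H_n(a_{\vec{b}})$ is the state with coefficient $a_{\vec{b}}$ at $\ket{\vec{b}}$ and $1$ at every other basis vector. The white multiplications \dotmult{white dot} that implement the Schur product act as pointwise multiplication on coefficient vectors (an immediate consequence of the Z-spider definition). Hence the Schur product over all $\vec{b} \in \mathbb B^n$ has coefficient at $\ket{\vec{c}}$ equal to $\prod_{\vec{b}}(\text{coeff of } \iota_{\vec{b}} \circ H_n(a_{\vec{b}}) \text{ at } \vec{c}) = a_{\vec{c}}$, because every factor except the one with $\vec{b} = \vec{c}$ is $1$. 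Finally, $k$ copies of the star contribute $1/2^k$ by the definition of the star generator, yielding~\eqref{eq:nf-concrete}.

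For uniqueness, suppose two reduced normal forms with parameters $(k, (a_{\vec{b}}))$ and $(k', (a'_{\vec{b}}))$ have equal interpretations. By linear independence of the computational basis, $a_{\vec{b}}/2^k = a'_{\vec{b}}/2^{k'}$ for every $\vec{b}$. Without loss of generality $k \le k'$, giving $2^{k'-k} a_{\vec{b}} = a'_{\vec{b}}$ for all $\vec{b}$. If $k < k'$ then every $a'_{\vec{b}}$ is even and $k' \ge 1$, contradicting the definition of \emph{reduced}. Hence $k = k'$ and $a_{\vec{b}} = a'_{\vec{b}}$ throughout.

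There is no serious obstacle: the content is essentially an unpacking of matrix representations combined with the standard fact that white spiders act pointwise on the computational basis. The only real care required is in the uniqueness argument, to handle the edge case $k = k' = 0$ separately (where the \emph{reduced} condition is vacuous but equality of vectors still forces $a_{\vec{b}} = a'_{\vec{b}}$) and to note that the Schur-product computation relies implicitly on the convention that $H_n$ for a vector indexed by $\mathbb B^n$ is the \emph{state} form, so that the $n$ output wires correspond cleanly to the $n$-bit index.
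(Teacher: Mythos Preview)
Your proposal is correct and follows essentially the same approach as the paper: first compute that $\iota_{\vec b}\circ H_n(a_{\vec b})$ is the state with $a_{\vec b}$ in position $\vec b$ and $1$ elsewhere, then take the Schur product and multiply by the star scalars; for uniqueness, compare two reduced normal forms via the resulting coefficient equations and use divisibility by $2$ to force $k=k'$. The paper's argument is slightly terser but structurally identical, and your explicit mention of the $k=k'=0$ edge case is a reasonable bit of extra care.
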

\begin{proof}
  The map $\iota_{\vec b}$ is a permutation that acts on computational basis elements as $\ket{\vec c} \mapsto \ket{\vec c \oplus \vec b \oplus \vec 1}$. In particular, it sends the basis element $\ket{\vec 1}$ to $\ket{\vec b}$. Hence $\iota_{\vec b} \circ H_n(a_{\vec b})$ is a vector with $a_{\vec b}$ in the $\vec b$-th component and $1$ everywhere else. The Schur product of all such vectors indeed gives the RHS of ~\eqref{eq:nf-concrete} up to the global scalar of $\frac{1}{2^k}$ that is added by $\tikzfig{star}^k$.

  So now suppose two different normal form diagrams have equal interpretation~\eqref{eq:nf-concrete}. We need to show that the diagrams are then equal, which boils down to showing $k=k'$ and $a_{\vec{b}}=a_{\vec{b}}'$ for all $\vec{b}\in \mathbb B^n$.
  Suppose without loss of generality that $k\leq k'$. Then $a_{\vec{b}} = \frac{2^k}{2^{k'}}a_{\vec{b}}'$. The left-hand side is an integer, and so the right-hand side must be so as well.
  Hence, if $k<k'$, then all the $a_{\vec{b}}'$ must be divisible by 2, which contradicts the assumption of being a reduced normal form (since in this case $k'\neq 0$). Hence, $k=k'$, and hence $a_{\vec{b}}=a_{\vec{b}}'$ as required.
\end{proof}

Because reduced normal forms are unique, two diagrams in reduced normal form are equal if and only if the linear maps the diagrams represent are equal, and hence, the calculus is complete if we can bring all diagrams to reduced normal form.

Some simple diagrams can readily be brought to reduced normal form:

\begin{lemma}\label{lem:H-box-nf}
 Any H-box can be brought into normal form using the rules of the ZH-calculus.
\end{lemma}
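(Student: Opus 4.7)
The plan is to observe that the meta-rule ``only topology matters'' lets us bend all inputs of the given H-box to outputs, so it suffices to treat an H-box as a state $H_N(a)$ with $N := m+n$ output wires and integer label $a$. The target normal form is then the one with no stars ($k=0$), $a_{\vec 1} = a$, and $a_{\vec b} = 1$ for every $\vec b \in \mathbb B^N \setminus \{\vec 1\}$. The case $N = 0$ is immediate since $H_0(a)$ is already the $n=0$ normal form. From now on assume $N \ge 1$.

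Rather than building the normal form up from $H_N(a)$, I would prove the reverse direction and invoke reversibility: start from the normal form diagram and simplify it back to $H_N(a)$. First, for each of the $2^N - 1$ H-boxes labelled 1 use the (bang-boxed form of) equation \eqref{eq:unit} to rewrite $H_N(1)$ as $N$ disconnected Z-spider units $\whiteunit$. These units attach to the $N$ top-row Z-spiders either directly (where the relevant bit is $1$) or via a NOT edge coming from $\iota_{\vec b}$ (where the bit is $0$). Since $\grayphase{\neg}$ acts as the identity on $\ket{+}$, a NOT edge terminating in a $\whiteunit$ is provably a plain $\whiteunit$; this is a short consequence of Lemma~\ref{lem:not-commute} (or equivalently of the fact that NOT copies through a Z-spider) combined with \eqref{eq:unit}. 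So every one of these $\whiteunit$'s ends up attached directly to a Z-spider leg.

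Now apply \SpiderRule repeatedly: each $\whiteunit$ adjacent to a Z-spider is the unit of that spider and can be absorbed, decreasing its arity by one. After all $2^N - 1$ label-$1$ H-boxes and their $N(2^N-1)$ associated units are absorbed, each of the $N$ Z-spiders has exactly two legs, namely one external output and one plain-edge connection to the lone remaining $H_N(a)$ (which was connected via $\iota_{\vec 1} = \mathrm{id}$, hence by plain edges). A final use of \IDRule on each two-legged Z-spider yields precisely $H_N(a)$ viewed as a state. Because every step uses a ZH-equation, the derivation is reversible, so conversely $H_N(a)$ rewrites to the normal form.

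The main obstacle is not any single rewrite but the bookkeeping around the annotated !-box indexed over $\mathbb B^N$: one needs to unfold it uniformly, track which NOT edges come from which $\iota_{\vec b}$, and verify that the absorption of units into the $N$ Z-spiders can indeed be performed inside and across !-box boundaries. Lemmas~\ref{lem:annotated-expansion} and~\ref{lem:annotated-split} handle the !-box manipulations, while \SpiderRule, \IDRule, \eqref{eq:unit}, and Lemma~\ref{lem:not-commute} handle the local rewrites; no heavier machinery (such as intro, multiply, or average) is required at this stage.
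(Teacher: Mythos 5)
Your proposal is correct and takes essentially the same approach as the paper: the paper's proof brings $H_n(a)$ to normal form by introducing ``dummy'' $H(1)$-boxes for every other matrix entry using \eqref{eq:unit}, \SpiderRule, \IDRule, and the absorption of NOTs into units, which are exactly the rewrites you perform, merely read in the opposite direction (immaterial, since all rules are equations, and the paper itself presents, e.g., Lemma~\ref{lem:cup-nf} in this reversed style). The only difference is presentational: the paper demonstrates the principle on a binary H-box and declares the general arity analogous, whereas you spell out the general case and the annotated-!-box bookkeeping explicitly.
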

\begin{proof}
The matrix of an H-box $H_n(a)$ has 1's in every entry but the very last one. Hence, to bring an H-box into normal form, we just need to introduce `dummy' 1's for every other matrix entry. We demonstrate the principle using a binary H-box but the argument is analogous for any other arity:
\[
\tikzfig{H-nf-example} \qedhere
\]
\end{proof}

\begin{lemma}\label{lem:cup-zeroes}~
	\ctikzfig{cup-zeroes}
\end{lemma}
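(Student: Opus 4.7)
My plan is to exhibit the cup as a normal form diagram with label vector $(1,0,0,1)$ corresponding to its matrix representation $\ket{00}+\ket{11}$, and to justify each step using rules already proven. The first move is to rewrite the cup as a Z-spider with zero inputs and two outputs: both have the same interpretation $\ket{00}+\ket{11}$, and the identification follows from the meta-rule ``only topology matters'' together with \IDRule, which lets us absorb a trivial wire into a spider.

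Next, I would build up the four H-boxes indexed by $\mathbb B^2$ demanded by the normal form. The diagonal positions $\vec{b}\in\{00,11\}$ want trivial $H_2(1)$ boxes, which can be introduced for free via Eq.~\eqref{eq:unit} since a $1$-labelled H-box is equivalent to nothing. The off-diagonal positions $\vec{b}\in\{01,10\}$ need $H_2(0)$ boxes whose NOT-edges realise the indexing maps $\iota_{01}$ and $\iota_{10}$; these are produced by two applications of the intro rule $I_0$, which we already have in hand via Lemma~\ref{lem:intro-phasefree}. Along the way, Lemma~\ref{lem:iota-copy} is used to push NOT-edges through white spiders so that the resulting indexing maps line up with the normal-form template in Eq.~\eqref{eq:normal-form}.

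Alternatively, and perhaps more economical, one can go in the reverse direction: start from the candidate normal-form diagram with labels $1,0,0,1$, collapse the $H(1)$ boxes using Eq.~\eqref{eq:unit}, and then reduce the remaining structure to a cup. The two $H_2(0)$ boxes at positions $01$ and $10$ together encode ``the two outputs are not $(0,1)$ and not $(1,0)$,'' which is just the cup $\bra{00}+\bra{11}$ read as a state; this should follow by the spider fusion and bialgebra rules once the NOT-edges are consolidated. Either direction is essentially the same proof seen from two sides.

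The main obstacle is purely clerical: arranging the NOT-edges so that the composed diagram matches the normal-form template exactly, rather than ending up with NOT-edges in the wrong places after fusion. I expect repeated uses of Lemma~\ref{lem:iota-copy} and Lemma~\ref{lem:not-commute}, together with the meta-rule, to handle this, while the scalar side is kept clean with the scalar-cancellation lemmas of Section~\ref{s:basic-derived} (in particular \ScalarRule), and the soundness check against Theorem~\ref{thm:nf-unique} provides a sanity check that the labels $(1,0,0,1)$ are the right target.
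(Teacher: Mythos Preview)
Your proposal is on the right track conceptually, but you appear to be sketching a proof of Lemma~\ref{lem:cup-nf} (the cup in full normal form, with all four H-boxes indexed over $\mathbb B^2$) rather than the helper Lemma~\ref{lem:cup-zeroes}. The paper factors the argument: cup-zeroes isolates the non-trivial step of expressing the cup via the two $H(0)$-boxes with complementary NOT-edge patterns, and then cup-nf tacks on the trivial $H(1)$-boxes and the !-box packaging. Your write-up blends these together.

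For the actual content of cup-zeroes, your ``reverse direction'' sketch is essentially what the paper does: a short, direct diagrammatic calculation rather than an appeal to the intro rule. The paper's derivation unfolds the $H(0)$-boxes using their definition in terms of grey spiders and NOTs (Eqs.~\eqref{eq:labelledHboxes} and~\eqref{eq:zero-high-arity}), then collapses the resulting structure back to a bare cup using the elementary lemmas of Section~\ref{s:basic-derived}. Your alternative route through $I_0$ from Lemma~\ref{lem:intro-phasefree} would also work, but it is circuitous: the proof of $I_0$ itself already passes through Lemma~\ref{lem:intro-half} and several other steps, so you would be importing heavier machinery than the handful of elementary rewrites actually needed here.

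The genuinely hand-wavy part of your proposal is the clause ``this should follow by the spider fusion and bialgebra rules once the NOT-edges are consolidated.'' That is precisely where the work lies, and the paper writes out that chain of rewrites explicitly. There is no conceptual gap in your plan, but as written it names the destination without giving the route.
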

\begin{proof}
\[\tikzfig{cup-zeroes-proof} \qedhere\]
\end{proof}

\begin{lemma}\label{lem:cup-nf}
The diagram of a single cup can be brought into normal form:
\[ \tikzfig{cup-nf} \]
\end{lemma}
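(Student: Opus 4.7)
The goal is to show the cup, which interprets to $\ket{00}+\ket{11}$, equals a normal-form diagram in the sense of \eqref{eq:normal-form} with $n=2$, $k=0$, and coefficients $a_{00}=a_{11}=1$, $a_{01}=a_{10}=0$. The starting observation is that a cup and a Z-spider with zero inputs and two outputs have the same interpretation $\ket{00}+\ket{11}$, and are topologically equivalent under the meta-rule `only topology matters', so we may replace the cup with such a Z-spider. Using \SpiderRule we can split this single Z-spider into two connected Z-spiders, each of which we subsequently fan out into four output legs so that we have one leg available for each $\vec{b}\in\mathbb B^2$.

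The next phase is to attach an appropriately labelled H-box at each of the four $\vec{b}$-positions via the indexing map $\iota_{\vec{b}}$. At the two positions where we want the coefficient to be $1$, namely $\vec{b}=00$ and $\vec{b}=11$, we introduce a $1$-labelled H-box by running \eqref{eq:unit} in reverse; this inserts an H-box connected to both Z-spiders as a unit without changing the underlying state. The NOT-edge pattern demanded by $\iota_{11}$ is trivial, whereas the pattern demanded by $\iota_{00}$ is obtained by inserting pairs of cancelling NOTs via Lemma~\ref{lem:xnots-cancel}.

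At the remaining two positions $\vec{b}=01$ and $\vec{b}=10$, where the coefficient should be $0$, we invoke Lemma~\ref{lem:cup-zeroes} (in the reverse direction) to introduce $0$-labelled H-boxes connected to the two Z-spiders with precisely the NOT-edge configurations required by $\iota_{01}$ and $\iota_{10}$. This is the step that genuinely accounts for the `zeros' in the normal form, since Lemma~\ref{lem:cup-zeroes} is exactly tailored to insert $0$-labelled H-boxes into a cup-like structure without changing its interpretation.

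The main obstacle is purely combinatorial bookkeeping: we must ensure that after fanning out the Z-spiders and introducing the four H-boxes, the NOT-edges end up distributed so that the H-box attached at position $\vec{b}$ is connected via exactly $\iota_{\vec{b}}$, and not via some cousin with NOTs on the wrong wires. Once the NOT-edges are correctly placed, the remaining manipulations reduce to routine applications of Lemmas~\ref{lem:not-commute} and~\ref{lem:xnots-cancel} and the spider fusion rule~\SpiderRule.
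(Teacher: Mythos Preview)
Your proposal is correct and follows essentially the same approach as the paper. The paper's proof simply works in the opposite direction---starting from the normal form and rewriting back to the cup---but the key ingredients are identical: Lemma~\ref{lem:cup-zeroes} handles the two $H(0)$ boxes at positions $01$ and $10$, and \eqref{eq:unit} handles the trivial $H(1)$ boxes at positions $00$ and $11$, with \SpiderRule and NOT-cancellation tidying up the spider and edge structure. One small quibble: identifying the cup with the two-output Z-spider is not purely ``only topology matters''; you also need \IDRule to collapse the intermediate two-legged spider.
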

\begin{proof}
    Starting from the normal form we work our way back:
    \[\tikzfig{normal-cup-john} \qedhere\]
\end{proof}

Before we continue to the more general procedure to reduce diagrams to normal form we have to prove some results about how to deal with scalars and factors of 2 in normal forms in order to reduce the diagram.

\begin{lemma}\label{lem:reduce-two-from-nf}
	Given integers $a_{\vec{b}}$ for which we have proven $(M_{2, a_{\vec{b}}})$ we can prove:
	\ctikzfig{reduce-two-from-nf}
\end{lemma}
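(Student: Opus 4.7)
The plan is to show that the overall factor of $2^{|\mathbb{B}^n|}$ hidden in the labels $2a_{\vec{b}}$ collapses to a single scalar factor of $2$, which then cancels against one $\dstar$ on the left-hand side. The natural way to expose this factor of $2$ is to split each H-box using the hypothesised multiply rules.

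First I would apply Lemma~\ref{prop:multiply-bb} to $M_{2,a_{\vec{b}}}$ in reverse, once for each $\vec{b} \in \mathbb{B}^n$. This rewrites each subdiagram $\iota_{\vec{b}} \circ H_n(2 a_{\vec{b}})$ into the Schur product $(\iota_{\vec{b}} \circ H_n(2)) * (\iota_{\vec{b}} \circ H_n(a_{\vec{b}}))$: the single H-box with label $2a_{\vec{b}}$ is replaced by a pair of H-boxes with labels $2$ and $a_{\vec{b}}$, both connected to the top $n$ white spiders via exactly the same NOT pattern $\iota_{\vec{b}}$. After this, each of the $n$ top white spiders has $2^{n+1}$ wires going downwards.

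Second, I would use \SpiderRule to split each top white spider into two white spiders joined by a single wire: one spider carries the $2^n$ wires into the $H_n(2)$ H-boxes, the other carries the $2^n$ wires into the $H_n(a_{\vec{b}})$ H-boxes, and they share the output wire. Applying Lemma~\ref{lem:annotated-split} then separates the annotated !-box into two disjoint halves, giving a (Schur) product of two subdiagrams on the same $n$ outputs: the \emph{$2$-part}, consisting of $n$ white spiders fanning out to $2^n$ H-boxes all labelled $2$ (indexed by $\vec{b}$ via $\iota_{\vec{b}}$), and the \emph{$a$-part}, which is exactly the normal-form body with labels $a_{\vec{b}}$.

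The crux of the proof, and what I expect to be the main obstacle, is to simplify the $2$-part to a bare scalar $\dstar^{-1}$ (i.e.\ a factor of $2$) times the trivial diagram on $n$ wires. Semantically this is clear: the Schur product of the vectors $\iota_{\vec{b}} \circ H_n(2)$ over all $\vec{b}\in\mathbb{B}^n$ is $2 \sum_{\vec{c}} \ket{\vec{c}}$, i.e.\ scalar $2$ times the all-ones state, and the all-ones state on $n$ outputs of white spiders is absorbed by \SpiderRule back into the $a$-part's top spiders. To do this diagrammatically with only the tools we have (the basic rules, Lemma~\ref{lem:scalar-cancel-2}, and the already proven $M_{2,1}$ from Lemma~\ref{lem:mult-simple-values}), I would argue by induction on $n$: the base case $n=0$ is precisely Lemma~\ref{lem:scalar-cancel-2}, and for the inductive step I would peel off one top spider by expanding the annotated !-box along its most significant bit (Lemma~\ref{lem:annotated-expansion}), pairing the two resulting halves via \SpiderRule so that the $2^n$ H-boxes from the $0$-half and the $2^n$ H-boxes from the $1$-half fuse pairwise along $M_{2,1}$ and Lemma~\ref{lem:convolution-iota} into $2^n$ H-boxes with label $2$ indexed by $\mathbb{B}^{n-1}$, which is the inductive hypothesis. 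Finally, the lone scalar $2$ produced combines with $\dstar^{k}$ via Lemma~\ref{lem:scalar-cancel-2} to give $\dstar^{k-1}$, and re-merging the split spiders with \SpiderRule yields the right-hand side.
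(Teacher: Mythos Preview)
Your overall plan—split each $H(2a_{\vec b})$ into $H(2)*H(a_{\vec b})$ via $M_{2,a_{\vec b}}$, separate off the ``$2$-part'', and collapse it to a single scalar $2$—is sound and is essentially the paper's argument run in reverse. The paper goes the other direction: it starts from the $a_{\vec b}$-side with a disconnected scalar $H(2)$, spreads that scalar into $2^n$ copies attached via all the $\iota_{\vec b}$ using the \emph{!-boxed intro rule} $I_2$ (Lemma~\ref{lem:intro-bangboxed} together with $I_2$ from Lemma~\ref{lem:intro-phasefree}), and then merges each copy with its matching $H(a_{\vec b})$ via Lemma~\ref{lem:convolution-iota} and $M_{2,a_{\vec b}}$. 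No induction is needed because the !-boxed $I_2$ is already proved.

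The gap in your proposal is the inductive step for collapsing the $2$-part. After expanding along one bit you have, for each $\vec c\in\mathbb B^{n-1}$, a \emph{pair} of H-boxes both labelled $2$, attached via $\iota_{0\vec c}$ and $\iota_{1\vec c}$ respectively. These differ in the NOT on the peeled-off wire, so Lemma~\ref{lem:convolution-iota} does not apply (it requires the \emph{same} $\iota$-pattern on both factors), and fusing two $H(2)$ boxes via multiplication would in any case yield $H(4)$, not $H(2)$; $M_{2,1}$ is simply not the relevant identity here. The correct tool for merging such a pair into a single $H(2)$ with one fewer leg is precisely $I_2$ applied in reverse, and iterating this over all bits is exactly the content of the !-boxed $I_2$. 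So rather than re-deriving it by induction, cite Lemmas~\ref{lem:intro-bangboxed} and~\ref{lem:intro-phasefree} (case $a=2$) to identify your $2$-part with a bare scalar $H(2)$; the rest of your argument then goes through and coincides with the paper's proof read backwards.
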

\begin{proof}
	\[\tikzfig{reduce-two-from-nf-pf} \qedhere\]
\end{proof}

\begin{proposition}\label{prop:normal-to-reduced}
	Suppose we have proven $(M_{2,a})$ for all integers $a\in\mathbb{Z}$.
	Given a diagram in normal form, we can bring it to reduced normal form.
\end{proposition}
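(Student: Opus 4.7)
The plan is a straightforward induction on the number $k$ of star generators, with Lemma~\ref{lem:reduce-two-from-nf} serving as the sole rewrite engine. Given a normal-form diagram with $k$ stars and labels $\{a_{\vec{b}}\}_{\vec{b}\in \mathbb B^n}$, I would first check whether it is already reduced, i.e.\ whether $k=0$ or at least one $a_{\vec{b}}$ is odd. If so, there is nothing to do.

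Otherwise, $k \geq 1$ and every $a_{\vec{b}}$ is even, so I can write $a_{\vec{b}} = 2 a'_{\vec{b}}$ with each $a'_{\vec{b}} \in \mathbb{Z}$. By the standing assumption that $M_{2,a}$ holds for every integer $a$, we in particular have $M_{2, a'_{\vec{b}}}$ for every $\vec{b}$, which is exactly the hypothesis required to invoke Lemma~\ref{lem:reduce-two-from-nf}. Applying that lemma once converts the diagram into an equal normal-form diagram with $k-1$ stars and labels $\{a'_{\vec{b}}\}$. Both the number of stars and the labels are again integers, so we are still within the class of normal-form diagrams and may repeat the test.

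Termination is immediate: each application of Lemma~\ref{lem:reduce-two-from-nf} strictly decreases $k$, so the process halts after at most $k$ steps. At the halting step, either $k$ has reached $0$ (in which case the diagram is reduced by definition), or some label has become odd (in which case the diagram is likewise reduced). Either outcome satisfies the definition of a reduced normal form, and the chain of equalities produced along the way is a derivation inside the ZH-calculus.

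There is really no serious obstacle here: the entire argument rests on repeated application of a single lemma whose hypothesis is guaranteed by the proposition's assumption, plus the obvious well-foundedness of decreasing a non-negative integer. The only thing worth being careful about is to note that the labels remain in $\mathbb{Z}$ at every stage (since halving an even integer yields an integer), so the hypothesis $M_{2, a'_{\vec{b}}}$ can be legitimately reinvoked at each iteration.
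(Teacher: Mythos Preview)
Your proposal is correct and follows essentially the same approach as the paper: check whether the diagram is already reduced, and if not, factor out a $2$ from every label and cancel one star, iterating until termination. The only minor point is that Lemma~\ref{lem:reduce-two-from-nf} (applied in reverse) does not by itself remove a star---it extracts a scalar $H(2)$, which must then be converted to a white dot via Lemma~\ref{lem:scalar-2} and cancelled against a star via Lemma~\ref{lem:scalarcancelstars}; the paper makes these two auxiliary steps explicit, whereas you absorb them into a single invocation of the lemma.
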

\begin{proof}
	If there are no $\tikzfig{star}$'s present we are done. Furthermore, if any of the $a_{\vec{b}}$ is odd, we are also done. So suppose there is at least one $\tikzfig{star}$ and that all the $a_{\vec{b}}$ are even.
	Then we can factor them all as $2\frac{a_{\vec{b}}}{2}$ and apply Lemma~\ref{lem:reduce-two-from-nf} in reverse to introduce a scalar $2$ to cancel a star (using Lemma~\ref{lem:scalar-2} and Lemma~\ref{lem:scalarcancelstars}).
	We keep repeating this procedure until either the $\tikzfig{star}$'s run out, or one of the $a_{\vec{b}}$ becomes odd.
\end{proof}

\subsection{Conditional reduction to normal form}\label{sec:normal-form-conditional}

For bigger diagrams we will show that if a part of the diagram is in normal form, we can `consume' the rest of the diagram generator by generator to write the entire diagram into normal form.

This procedure requires that we have proven the multiply, average and intro rule for all integers. Since this is difficult to do, we will first show the general procedure for bringing a diagram to normal form \emph{assuming} we have proven these identities. We will then be able to `bootstrap' the proof for the general multiply, average and intro rule using the simple cases of these rules we have already proven.

We need to show that any generator of the ZH-calculus --- H-boxes, Z-spiders, and any type of wiring --- can be brought to normal form, that tensor products of normal forms can be brought to normal form, and that any diagram consisting of a normal form composed with a generator in some way can also be brought to normal form. Once we have this we know that any ZH-diagram can be brought to normal form, and hence, if two ZH-diagrams represent the same linear map they must then be equal to the same ZH-diagram.
\begin{proposition}\label{prop:extension}
 Let $D$ be a diagram consisting of a normal form diagram for which we have proven~\eqref{eq:intro-def}for every $a$ in the normal form, juxtaposed with \whiteunit. Then $D$ can be brought into normal form using the rules of the ZH-calculus:
 \ctikzfig{extension}
\end{proposition}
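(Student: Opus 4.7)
The juxtaposition of the given normal form with $\whiteunit$ represents the state $\tfrac{1}{2^k}\sum_{\vec{b}\in\mathbb{B}^n} a_{\vec{b}} \ket{\vec{b}} \otimes (\ket{0}+\ket{1})$, which is naturally the normal form on $n+1$ outputs with labels $a'_{\vec{b} 0} = a'_{\vec{b} 1} = a_{\vec{b}}$ for each $\vec{b}\in\mathbb{B}^n$. My plan to reach this normal form syntactically is to apply the intro rule $I_{a_{\vec{b}}}$ to each H-box in the input diagram, thereby adding a single new wire to each, and then to merge the resulting fresh white spiders with the original $\whiteunit$ into the new $(n+1)$-th output spider via spider fusion.

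Concretely, I would first ``fan out'' $\whiteunit$ using \SpiderRule into a tree of connected white spiders with $2^n+1$ legs: $2^n$ internal legs reserved for subsequent connection to the fresh spiders, and one external leg remaining as the new output. Next, for each H-box $H_n(a_{\vec{b}})$, apply the assumed intro rule $I_{a_{\vec{b}}}$; this replaces the H-box with two copies $H_{n+1}(a_{\vec{b}})$ connected to a fresh white spider via a NOT-edge and a plain edge, respectively. Finally, connect each fresh spider to one of the reserved internal legs of the fan-out and use \SpiderRule repeatedly to collapse the entire web of connected spiders into a single white spider which becomes the $(n+1)$-th output of the enlarged normal form.

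The main obstacle is a bookkeeping one: I would need to verify that the pattern of NOT-edges and plain edges produced by the intro rule, composed with the pre-existing indexing maps $\iota_{\vec{b}}$, yields exactly $\iota_{\vec{b}b'}$ over $\mathbb{B}^{n+1}$, and that the labels on the resulting $2^{n+1}$ H-boxes match $a'_{\vec{b}b'} = a_{\vec{b}}$ as computed above. This should follow directly from the definition of $\iota_{\vec{b}}$ as a composition of NOTs on individual wires (Definition~\ref{def:indexing-map}), but presenting it cleanly might require annotated !-box manipulations via Lemmas~\ref{lem:annotated-expansion} and~\ref{lem:annotated-split}.
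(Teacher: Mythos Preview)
Your proposal is correct and takes essentially the same approach as the paper: apply the intro rule $I_{a_{\vec{b}}}$ to each H-box in the normal form to introduce a new wire, then fuse all the fresh white spiders with $\whiteunit$ via \SpiderRule to form the $(n{+}1)$-th output spider. The paper does exactly this but packages the per-H-box application and the bookkeeping you flag into a single annotated !-box calculation, so your concern about verifying that $\iota_{\vec{b}}$ together with the intro rule's NOT/plain edges yields $\iota_{\vec{b}b'}$ is precisely what the !-box notation absorbs.
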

\begin{proof}
 Starting from the left-hand side, which we expand using the indexed !-box notation, we calculate:
 \ctikzfig{extension-proof}
 The last diagram is a normal form diagram with $n+1$ outputs, i.e.\ the desired result.
\end{proof}

\begin{proposition}\label{prop:convolution}
 The Schur product of two normal form diagrams can be brought into normal form using the rules of the ZH-calculus, when we have proven~\eqref{eq:mult-def} for every pair $a$ and $b$ occurring in the normal forms.
 \ctikzfig{convolution-nf}
\end{proposition}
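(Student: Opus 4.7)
The plan is to reduce the claim to $2^n$ applications of Lemma~\ref{lem:convolution-iota}, one for each bitstring $\vec b \in \mathbb B^n$. Write the two normal forms as
$\tikzfig{star}^k \prod_{\vec b \in \mathbb B^n} (\iota_{\vec b} \circ H_n(a_{\vec b}))$ and
$\tikzfig{star}^{k'} \prod_{\vec b \in \mathbb B^n} (\iota_{\vec b} \circ H_n(a'_{\vec b}))$.
By definition of the Schur product, each of the $n$ outputs of the first diagram is joined to the corresponding output of the second by a white multiplication spider, and the scalars $\tikzfig{star}^k$ and $\tikzfig{star}^{k'}$ sit alongside, so the overall scalar contribution becomes $\tikzfig{star}^{k+k'}$ with no further work.

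First I would use \SpiderRule to fuse, for each $i\in\{1,\dots,n\}$, the $i$-th output white spider of the first normal form, the intervening multiplication spider, and the $i$-th output white spider of the second normal form into a single white spider. After this fusion the diagram consists of $n$ white output spiders, each connected to the $2^{n+1}$ H-boxes contributed by the two normal forms through the original $\iota_{\vec b}$ indexing patterns. Crucially, for every $\vec b \in \mathbb B^n$ there are now two H-boxes $H_n(a_{\vec b})$ and $H_n(a'_{\vec b})$ that reach exactly the same set of $n$ white spiders through exactly the same NOT pattern $\iota_{\vec b}$. This is precisely the hypothesis of Lemma~\ref{lem:convolution-iota}, which (using the assumed instance $M_{a_{\vec b}, a'_{\vec b}}$) collapses this pair into a single factor $\iota_{\vec b} \circ H_n(a_{\vec b} \cdot a'_{\vec b})$. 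Applying this once per $\vec b$ yields a normal form diagram
$\tikzfig{star}^{k+k'} \prod_{\vec b \in \mathbb B^n} (\iota_{\vec b} \circ H_n(a_{\vec b}\cdot a'_{\vec b}))$, as required.

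The main obstacle is purely bookkeeping with annotated !-boxes. The starting diagram carries two !-boxes indexed over $\mathbb B^n$ (one per normal form), and the argument above needs to pair them up so that Lemma~\ref{lem:convolution-iota} can be applied inside a single !-box indexed over $\mathbb B^n$. This is handled by using Lemma~\ref{lem:annotated-split} in reverse to merge the two !-boxes (whose contents share no inputs or outputs of the diagram as a whole — all inputs/outputs pass through the fused output white spiders, which lie outside the !-box) into one !-box containing both H-box gadgets, and then applying Lemma~\ref{lem:convolution-iota} uniformly for each $\vec b$. Once this reorganisation is done, the substantive rewriting is entirely delegated to Lemma~\ref{lem:convolution-iota}, so no new non-trivial reasoning about labels is required beyond the assumed multiply instances.
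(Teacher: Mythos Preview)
Your proposal is correct and follows exactly the paper's approach: the paper's proof is simply ``This follows from \SpiderRule and Lemma~\ref{lem:convolution-iota},'' which is precisely the fusion-then-multiply argument you spell out in detail. Your additional bookkeeping about merging the two annotated !-boxes via Lemma~\ref{lem:annotated-split} in reverse is a reasonable elaboration that the paper leaves implicit.
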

\begin{proof}
 This follows from \SpiderRule and Lemma~\ref{lem:convolution-iota}.
\end{proof}

\begin{corollary}\label{cor:tensor-product}
 The tensor product of two normal form diagrams can be brought into normal form using the rules of the ZH-calculus, when we have proven~\eqref{eq:intro-def} for all $a$ occurring in the normal forms, and~\eqref{eq:mult-def} for all $a$ and $b$ in the normal forms.
\end{corollary}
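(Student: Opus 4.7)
The plan is to reduce the tensor product to a Schur product, exploiting the fact that $\whiteunit$ is a unit for the Schur product. Write $\psi$ for the normal form with $m$ outputs and labels $\{a_{\vec a}\}_{\vec a\in\mathbb B^m}$, and $\phi$ for the normal form with $n$ outputs and labels $\{b_{\vec b}\}_{\vec b\in\mathbb B^n}$.

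First, I would apply Proposition~\ref{prop:extension} a total of $n$ times to $\psi$, each step juxtaposing one additional $\whiteunit$, to obtain a normal-form diagram for $\psi\otimes\whiteunit^{\otimes n}$ on $m+n$ outputs. Inspecting the derivation of Proposition~\ref{prop:extension}, each extension merely duplicates the current labels into the new pair of indexed positions, so after $n$ successive steps the labels of the extended diagram are still precisely $\{a_{\vec a}\}$; hence every instance of $I_a$ required along the way is furnished by hypothesis. Symmetrically, apply Proposition~\ref{prop:extension} $m$ times to $\phi$ to obtain a normal form for $\whiteunit^{\otimes m}\otimes\phi$ on $m+n$ outputs, with labels still drawn from $\{b_{\vec b}\}$.

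Next, I would rewrite the tensor product $\psi\otimes\phi$ as the Schur product
\[ \bigl(\psi\otimes\whiteunit^{\otimes n}\bigr)\, * \,\bigl(\whiteunit^{\otimes m}\otimes\phi\bigr), \]
where the Schur product pairs corresponding outputs position by position. This identity holds because $\whiteunit$ is a unit for the Schur product: a $\whitemult$ fused with a $\whiteunit$ on one of its inputs collapses to a plain wire by \SpiderRule, so each pairing involving a $\whiteunit$ simply passes the other factor through unchanged. Finally, apply Proposition~\ref{prop:convolution} to merge the two extended normal forms into a single normal form with $m+n$ outputs; the multiplications required by that step are exactly of the form $M_{a_{\vec a},b_{\vec b}}$, all of which are provided by the hypothesis.

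The only even mildly subtle point is the passage from tensor product to Schur product in the middle step, and that reduces to a one-line use of \SpiderRule. Beyond this, the argument is a direct composition of Propositions~\ref{prop:extension} and~\ref{prop:convolution}, and crucially no instances of $I_a$ or $M_{a,b}$ beyond those assumed are invoked along the way, since the extension steps only replicate existing labels rather than introduce new ones.
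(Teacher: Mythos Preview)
Your proposal is correct and follows essentially the same approach as the paper: express the tensor product as a Schur product of the two normal forms, each extended by the appropriate number of $\whiteunit$'s via Proposition~\ref{prop:extension}, and then apply Proposition~\ref{prop:convolution}. Your additional observation that the extension steps only duplicate existing labels (so the hypotheses $I_a$ and $M_{a,b}$ indeed suffice) is a nice point that the paper leaves implicit.
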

\begin{proof}
 A tensor product can be expressed as
 \ctikzfig{tensor-product}
 The diagram NF$_1$ and the leftmost $m$ copies of \whiteunit\ can be combined into one normal-form diagram with $(n+m)$ outputs by successive applications of Proposition~\ref{prop:extension}.
 Similarly, the rightmost $n$ copies of \whiteunit\ and NF$_2$ can be combined into one normal-form diagram with $(n+m)$ outputs.
 The desired result then follows by Proposition~\ref{prop:convolution}.
\end{proof}

The most difficult step is to show that contraction with a white dot preserves normal forms. This requires a couple of lemmas.

First, we present a more general form of the \OrthoRule{} rule that allows us to disconnect wires in a more general way.

\begin{lemma}\label{lem:splitting}
    Instead of just wires, white spiders between an arbitrary number of H-boxes can be split using \OrthoRule:
\ctikzfig{splitting-rule}
\end{lemma}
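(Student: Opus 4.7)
The plan is to prove this by induction on the number of H-boxes joined by the shared white spiders, with the base case of two H-boxes being \OrthoRule itself (possibly after normalising NOT edges using Lemma~\ref{lem:not-commute} and other basic rules from Section~\ref{s:basic-derived} so that the wire configuration matches the one in \OrthoRule exactly). This base case is essentially verbatim the rule we already have, just stated for the more general case where the ``wires'' are white spiders of arity two.

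For the inductive step, given $k+1$ H-boxes sharing some common white spiders, I would first use \SpiderRule in reverse to split each shared spider into two: one copy that connects only to a distinguished H-box, and another copy that retains the connections to the other $k$ H-boxes. This isolates the distinguished H-box so that it is joined to each of its neighbours only through single wires (possibly carrying a NOT). Then I would apply \OrthoRule to detach the distinguished H-box from one of the other H-boxes; the choice of which other H-box is arbitrary and can be guided by the NOT-edge pattern that makes \OrthoRule directly applicable.

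After this single application of \OrthoRule, the distinguished H-box should be entirely separated from the rest, and the residual diagram consists of $k$ H-boxes still sharing the original spider structure (with possibly some extra Z-spiders and NOTs that can be absorbed via Lemmas~\ref{lem:not-commute} and~\ref{lem:znots-cancel}). The inductive hypothesis then splits the remaining $k$ H-boxes apart, completing the argument.

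The main obstacle will be keeping the NOT-edge configuration and the arity of each H-box in the exact form required for \OrthoRule to apply at each step, and for the inductive hypothesis to apply afterwards. I expect the cleanest bookkeeping uses the annotated !-box notation from Section~\ref{sec:annotated-bb}, indexing the $k+1$ H-boxes (say by a bitstring) so that the ``peeling off'' step formally corresponds to splitting the indexing set via Lemmas~\ref{lem:annotated-expansion} and~\ref{lem:annotated-split}. With that in hand, the argument amounts to a finite sequence of local applications of \SpiderRule, \OrthoRule, and the NOT-commutation lemmas from Section~\ref{s:basic-derived}.
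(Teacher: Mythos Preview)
Your inductive strategy has a gap at the crucial step. After using \SpiderRule to isolate the distinguished H-box $A$, what remains on the other side of the connecting spider is not a single H-box but a white spider joined to the $k$ remaining H-boxes. The rule \OrthoRule matches a configuration with exactly one H-box on each side of the shared spider; it does not apply to ``$A$ versus a spider carrying $k$ H-boxes''. Your line ``apply \OrthoRule to detach the distinguished H-box from one of the other H-boxes'' cannot work as stated: selecting just one $B_i$ to pair with $A$ leaves $A$ still connected, through the very spider you are trying to split, to all the other $B_j$, so \OrthoRule does not match. The bookkeeping you flag as ``the main obstacle'' is in fact the entire content of the lemma, and the sketch does not address it.

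The paper's proof is not inductive. It is a direct chain of rewrites in !-box notation: one first collapses each !-box of H-boxes into a single H-box using \HFuseRule (enabled by \SpiderRule on the shared white spiders, since all H-boxes in one group are connected to the same spiders), then applies \OrthoRule once to the resulting pair of H-boxes, and finally unfuses with \HFuseRule in reverse. Because the !-box notation carries the ``arbitrary number'' through uniformly, no peeling or induction is needed. If you wanted to repair your approach, the missing ingredient is precisely this \HFuseRule-fusion step --- but once you have it, the induction becomes redundant and you arrive at the paper's argument.
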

\begin{proof}~
\ctikzfig{splitting-proof-1}
\ctikzfig{splitting-proof-2}
\ctikzfig{splitting-proof-3}
\[\tikzfig{splitting-proof-4} \qedhere\]
\end{proof}

In practice, Lemma~\ref{lem:splitting} is usually applied in diagrams that are close to normal form, so that the H-boxes are connected to multiple white spiders.
This allows the lemma to be applied multiple times in succession.

\begin{example}
 Consider the following derivation, which will arise in the proof of Proposition~\ref{prop:average-integer}.
 For each rewrite step, the parts of the diagram that are uninvolved have been greyed out to clarify the process.

 In the first application of Lemma~\ref{lem:splitting}, the top leftmost spider is the important one.
 The rewrite step splits the bottom spider to separate the first H-box from the others, as the first H-box is the only one that is connected to the leftmost spider via an X node:
  \ctikzfig{splitting-ex1}
 Now, applying the lemma to the middle spider would have no effect as all H-boxes are connected to the middle spider via the same kind of wire (namely, simple ones).
 Thus, it remains to apply Lemma~\ref{lem:splitting} according to the top rightmost spider.
 This step splits the bottom spider to separate the second H-box from the others because the second H-box is connected to the rightmost spider via an X node while the others are connected by simple wires:
  \ctikzfig{splitting-ex2}
\end{example}

\begin{lemma}\label{lem:big-disconnect}
    Disconnect lemma:\footnote{A similar but incorrectly scaled result is given as Lemma~B.3 in \cite{backens2018zhcalculus}. Since any application of the lemma in that paper is always combined with an also incorrectly-scaled application of the average rule, the scalars work out there overall anyway. We give the correct scalar factors here.}
  \begin{equation}\label{eq:contraction-sep}
    \tikzfig{contraction-sep} \quad\ =\ \tikzfig{contraction-sep-rhs}
  \end{equation}
\end{lemma}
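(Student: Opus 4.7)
The plan is to reduce the disconnect equation to repeated applications of the splitting lemma (Lemma~\ref{lem:splitting}), combined with careful scalar bookkeeping. In the LHS the bottom white spider (the one being ``contracted'') is wired via a combination of plain edges and NOT edges to each of the H-boxes labelled $a_{\vec b}$ for $\vec b \in \mathbb{B}^n$, and these H-boxes are also wired through the indexing maps $\iota_{\vec b}$ to the $n$ spiders carrying the outputs. The goal on the RHS is to see this as a product of smaller subdiagrams in which the contracting spider has been split away from each H-box independently, up to a controlled global scalar.

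First I would apply Lemma~\ref{lem:splitting} once for each of the $n$ output-spiders, in sequence. At step $i$, the $i$-th output-spider partitions the set $\mathbb{B}^n$ of H-boxes into the two subsets where the $i$-th bit is $0$ versus $1$ (equivalently, the H-boxes attached to that output-spider via a NOT-edge versus a plain edge). Applying Lemma~\ref{lem:splitting} along the $i$-th output-spider therefore splits the bottom contracting spider into two subspiders, each connected only to the H-boxes on one side of the partition, at the cost of the scalar H-boxes produced by the splitting rule. After all $n$ iterations, the contracting spider has been fragmented into $2^n$ pieces, one for each $\vec b$, so that each H-box is connected to its own isolated copy of the contracting spider; this gives exactly the factorised form on the RHS, up to scalars.

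Second, I would collect the scalars. Every application of \OrthoRule{} (used inside Lemma~\ref{lem:splitting}) introduces prescribed scalar H-boxes, and Lemma~\ref{lem:splitting} packages these into its RHS. Combining the $n$ successive applications, one obtains a specific product of such scalars which must be simplified using Lemmas~\ref{lem:scalarcancelstars}--\ref{lem:scalar-cancel-2} and the labelled-H-box identities from Section~\ref{sec:labelledHboxes} until it matches exactly the scalar shown on the RHS of \eqref{eq:contraction-sep}. This is essentially the same computation attempted in \cite{backens2018zhcalculus}, and the explicit purpose of this lemma is to get it right this time: hence the footnote.

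The main obstacle is precisely this scalar bookkeeping. The diagrammatic ``disconnection'' part of the argument is a routine iteration of Lemma~\ref{lem:splitting}, but the combinatorial factor produced by $n$ successive splittings must be tracked against the stars and the labelled H-box scalars on the RHS. My plan is to do the split for $n=1$ by hand to fix the scalar in the base case, then induct on $n$, at each step feeding the inductive hypothesis into a single further application of Lemma~\ref{lem:splitting} along the new output-spider, using Lemma~\ref{lem:scalar-2} and Lemma~\ref{lem:scalar-cancel-2} to absorb the extra factor of two produced. If the scalars at each inductive step collapse as expected, the right-hand side of \eqref{eq:contraction-sep} drops out.
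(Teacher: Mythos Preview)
Your overall strategy---iterate Lemma~\ref{lem:splitting} to break up the contracting spider, then count the resulting scalars---is exactly what the paper does. However, your counting is off in a way that would derail the argument if carried through as written.

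You say you will apply Lemma~\ref{lem:splitting} ``once for each of the $n$ output-spiders'' and thereby fragment the bottom spider into $2^n$ pieces. But one application of the splitting lemma splits \emph{one} bottom spider into two. After the first split you have two bottom spiders, and the next output-spider must be used to split \emph{each} of them, costing two applications; then four, and so on. The total number of applications is $1+2+\cdots+2^{n-2}=2^{n-1}-1$, not $n$. In the inductive framing of your last paragraph, this means that after the single splitting at level $n$ you must invoke the inductive hypothesis \emph{twice}, once on each half---not ``a single further application'' as you wrote. The recurrence is $S(n)=1+2S(n-1)$ with $S(1)=0$, giving $S(n)=2^{n-1}-1$ stars.

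Relatedly, the RHS of \eqref{eq:contraction-sep} does not have $2^n$ isolated H-boxes. The splitting process uses the $n-1$ remaining output-spiders (not $n$), and terminates when each fragment of the bottom spider has arity two, connecting a pair of H-boxes that differ only in the contracted bit. These arity-two spiders are then replaced by cups via \IDRule, yielding the $2^{n-1}$ cups indexed over $\mathbb{B}^{n-1}$ that appear on the RHS.

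Finally, the scalar bookkeeping is simpler than you fear: each application of Lemma~\ref{lem:splitting} contributes exactly one~$\tikzfig{star}$, and these simply accumulate to the $2^{n-1}-1$ stars on the RHS. There is no need for Lemmas~\ref{lem:scalar-2} or~\ref{lem:scalar-cancel-2} here.
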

\begin{proof}
If $n=1$, the !-boxes on both sides are indexed over the one-element set, so the result follows straightforwardly from the definition of annotated !-boxes in Section~\ref{sec:annotated-bb} and removal of the white spider via \IDRule. Note that there are zero stars in this case.

For $n>1$, by Lemma~\ref{lem:annotated-expansion}, we can equivalently express the LHS as
\[
\scalebox{0.83}{\tikzfig{contraction-msb-index1}}
\ \namedeq{\eqref{eq:labelledHboxhigherarity}}
\scalebox{0.83}{\tikzfig{contraction-msb-index2}}
\]
We can then apply Lemma~\ref{lem:splitting} to split the bottom spider in two:
\[
\cdots \ \ \namedeq{\ref{lem:splitting}}\
\scalebox{0.8}{\tikzfig{contraction-msb-index3}}
\ \namedeq{\eqref{eq:labelledHboxhigherarity}}\
\scalebox{0.8}{\tikzfig{contraction-msb-index3p}}
\]
By Lemma~\ref{lem:annotated-split}, we can then split the !-box into two parts, indexed over the same set to obtain:
\[
\scalebox{0.9}{\tikzfig{contraction-msb-index4}}
\]
We now have two copies of a graph which is very similar to the LHS of \eqref{eq:contraction-sep}, but for one fewer bit. We can thus repeat the process above to split each of the two spiders using the second bit of the bitstring, then the third, and so on, until $\whitedot$-spiders only connect pairs of $H$-spiders that disagree on the least significant bit.

Each application of Lemma~\ref{lem:splitting} introduces a star, so we need to count how many times that lemma is used in total.
For the first bit of $\mathbb{B}^{n-1}$, the lemma is applied once.
Afterwards, the diagram has separated into two indexed !-boxes and Lemma~\ref{lem:splitting} needs to be applied to each part.
Continuing in this way, the number of !-boxes which need to be split doubles in each step.
Hence we introduce $1+2+\ldots+2^{n-2} = 2^{n-1}-1$ stars.

Note that at the end of this process, each !-box will be indexed over a one-element set, so according to the definition of annotated !-boxes we can simply drop them.
By replacing 2-legged $\whitedot$-spiders with cups  using \IDRule{}, and applying the definition of annotated !-boxes to re-introduce indexing over $\mathbb{B}^{n-1}$, we obtain the RHS of \eqref{eq:contraction-sep}.
\end{proof}

\begin{lemma}\label{lem:nf-avg}
 For any $a_{\vec{b}}$ we have
 \ctikzfig{nf-avg-lemma}
\end{lemma}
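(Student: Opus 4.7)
The plan is to reduce the left-hand side to the right-hand side by combining the Disconnect lemma with iterated applications of the average rule. First I would apply Lemma~\ref{lem:big-disconnect} to the LHS, which takes the structure of white spiders connecting an indexed !-box of labelled H-boxes and breaks it into independent pairs: each pair consists of two H-boxes whose indices differ only in the least significant bit of $\vec{b}$, connected via a single cup. According to the proof of Lemma~\ref{lem:big-disconnect}, this rewrite produces a fixed, known number of stars.

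Next, each isolated pair $\bigl(H_n(a_{\vec{b}0}), H_n(a_{\vec{b}1})\bigr)$ now matches exactly the left-hand side of the average rule $A_{a_{\vec{b}0}, a_{\vec{b}1}}$, in the bang-boxed form of Lemma~\ref{prop:avg-bb}. Since we are assuming average has been proved for all integer pairs, applying it to each pair in parallel collapses every pair into a single H-box labelled $a_{\vec{b}0}+a_{\vec{b}1}$ together with an $H(-2)$ scalar. Reindexing over the remaining $\vec{b}\in\mathbb{B}^{n}$ (one bit shorter), this yields the averaged normal-form shape on the right-hand side.

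The final step is scalar bookkeeping. Each application of the average rule contributes an $H(-2)$ scalar, which by Lemma~\ref{lem:scalar-2} equals two stars times a cancelling Hadamard gadget, and these combine with the stars already produced by Lemma~\ref{lem:big-disconnect}. Using Lemma~\ref{lem:scalarcancelstars} to merge or cancel the stars then gives the precise scalar coefficient demanded on the RHS. This last point is the main obstacle: one must count stars from Lemma~\ref{lem:big-disconnect} and from the $2^{n}$ average-rule applications very carefully — the footnote on Lemma~\ref{lem:big-disconnect} explicitly warns that in earlier work this combination was mis-scaled, so the arithmetic here deserves particular attention.
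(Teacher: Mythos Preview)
Your proposal misidentifies what the lemma is asking you to prove. You are describing (a sketch of) Proposition~\ref{prop:contraction}: start from a normal form with a $\whitecounit$ plugged in, apply the Disconnect lemma, then apply the average rule pairwise. But in the paper those two steps are carried out \emph{before} Lemma~\ref{lem:nf-avg} is invoked; the lemma's left-hand side is already the output of that process. Concretely, the LHS of Lemma~\ref{lem:nf-avg} is an indexed !-box of H-boxes, each labelled by some $a_{\vec b}$, with the residual ``$\text{NOT}\,2$'' gadget from the right-hand side of \AvgRule still attached to every H-box. There is no $\whitecounit$ and no single white spider fanning into all H-boxes, so Lemma~\ref{lem:big-disconnect} does not match, and there are no paired H-boxes to which $A_{a,b}$ could be applied.

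The actual argument is orthogonal to disconnect/average: one introduces a scalar $H(2)$, distributes it into the !-box via the !-boxed intro rule $I_2$ (Lemma~\ref{lem:intro-bangboxed}), and then inside each copy uses Lemma~\ref{lem:cancel-two-half} to cancel the introduced $H(2)$ against the residual $\text{NOT}\,2$ structure. What remains is the desired normal form together with a pile of scalar white dots (not stars); counting them --- $2\cdot 2^n$ in total, one of which is cancelled by the single $\tikzfig{star}$ produced along the way --- gives the scalar on the RHS. So the lemma uses $I_2$ and Lemma~\ref{lem:cancel-two-half}, not the Disconnect lemma or the average rule.
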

\begin{proof}
First, we introduce a scalar $H(2)$ and get this in the !-box using Lemma~\ref{lem:intro-bangboxed} (the !-boxed form of the Intro rule):
\[\scalebox{0.9}{\tikzfig{nf-avg-lemma-proof-1}}\]
And then we prepare the diagram to apply Lemma~\ref{lem:cancel-two-half}:
\[\tikzfig{nf-avg-lemma-proof-2}\]
Note that, in the last step, \eqref{eq:labelledHboxhigherarity} introduces a scalar white dot in each !-box, resulting in $2^n$ white dots, one of which is cancelled by the $\tikzfig{star}$ outside the !-box.
\end{proof}

\begin{proposition}\label{prop:contraction}
 The diagram resulting from applying \whitecounit\ to an output of a normal form diagram can be brought into normal form, when we have proven~\eqref{eq:average-def} for all $a$ and $b$ in the normal form:
 \ctikzfig{whitecounit-nf}
\end{proposition}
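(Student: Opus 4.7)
The proof plan is to rewrite the normal-form diagram with a $\whitecounit$ attached to one output into a normal-form diagram with one fewer output, by pairing up H-boxes that differ only in the bit corresponding to the contracted output. By the symmetry of the normal form (which is preserved under permuting the $n$ output coordinates and, via Lemma~\ref{lem:not-commute}, under pushing a NOT through a white spider), we may without loss of generality take the $\whitecounit$ to be applied to the wire corresponding to the most significant bit of the bitstring indexing the !-box.

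The first real step is to prepare the diagram so that the splitting/disconnect machinery can be applied cleanly. Apply Lemma~\ref{lem:nf-avg} to introduce, inside the annotated !-box, an $H(2)$ gadget attached to each H-box $H_n(a_{\vec b})$, together with compensating stars. This puts each $H_n(a_{\vec b})$ next to the kind of $H(2)$ factor that appears on the right-hand side of the average rule $(A_{a,b})$, and it does not change the interpretation of the diagram. Next, apply Lemma~\ref{lem:big-disconnect} (the disconnect lemma) to the $\whitedot$-spider on which the $\whitecounit$ sits. This splits that spider, pairing up H-boxes whose indices agree on the last $n-1$ bits but differ on the first, and introduces $2^{n-1}-1$ additional stars. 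After these rewrites, the $\whitecounit$ together with the two-legged $\whitedot$-spider connecting each pair simplifies via \IDRule to a cup with an odd number of NOTs, so each pair is now connected by exactly the substructure appearing on the left-hand side of $(A_{a_{0\vec c},\,a_{1\vec c}})$ for $\vec c\in\mathbb B^{n-1}$.

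Now apply the average rule $(A_{a_{0\vec c},\,a_{1\vec c}})$ — available by hypothesis — inside the annotated !-box, once per pair $\vec c\in\mathbb B^{n-1}$. Each application fuses a pair of H-boxes into a single H-box $H_{n-1}(a_{0\vec c}+a_{1\vec c})$ attached to the $n-1$ remaining $\whitedot$-spiders via the NOT pattern $\iota_{\vec c}$, at the cost of producing the $H(2)$-decorated gadget that we already prepared in the first step. Those gadgets annihilate via Lemma~\ref{lem:cancel-two-half}, leaving only stars behind. What remains is precisely $\tikzfig{star}^{k'}\prod_{\vec c\in\mathbb B^{n-1}}\bigl(\iota_{\vec c}\circ H_{n-1}(a_{0\vec c}+a_{1\vec c})\bigr)$ for some $k'\in\mathbb N$, i.e.\ a normal-form diagram with $n-1$ outputs, as required.

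The main obstacle is the scalar bookkeeping: several independent sources of factors appear (the original normal-form stars, the $H(2)$s and stars introduced by Lemma~\ref{lem:nf-avg}, the $2^{n-1}-1$ stars produced by Lemma~\ref{lem:big-disconnect}, and the $H(2)$s released by each application of the average rule), and one must verify by soundness and by judicious use of Lemmas~\ref{lem:scalarcancelstars}, \ref{lem:scalar-2} and~\ref{lem:scalar-cancel-2} that they cancel to produce the correct single power of $\tikzfig{star}$ consistent with the semantic interpretation of the counit, namely the factor of $\tfrac12$ from the $\whitecounit$ acting semantically as $\bra 0+\bra 1$ on the $\frac{1}{2^k}$ coefficient in Theorem~\ref{thm:nf-unique}. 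Everything else is a routine composition of previously established lemmas.
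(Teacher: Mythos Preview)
Your proposal identifies exactly the right ingredients—the disconnect lemma (Lemma~\ref{lem:big-disconnect}), the average rule, and Lemma~\ref{lem:nf-avg}—but you apply Lemma~\ref{lem:nf-avg} at the wrong end of the argument, and the ``annihilation'' you describe does not occur.

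The problem is your first step. Applying Lemma~\ref{lem:nf-avg} in reverse decorates each of the $2^n$ H-boxes with the NOT-$H(2)$ gadget that sits on the right-hand side of $(A_{a,b})$. After the disconnect lemma and the average rule, each of the $2^{n-1}$ resulting H-boxes \emph{also} carries such a gadget, produced by the average rule itself. Both of these gadgets act semantically as multiplication by $\tfrac12$; Lemma~\ref{lem:cancel-two-half} cancels an $H(2)$ against its NOT-companion, not two copies of the same gadget, so nothing annihilates. Worse, once the extra decoration is present the left-hand pattern of Lemma~\ref{prop:avg-bb} no longer matches the paired H-boxes directly, so even the application of the average rule becomes obstructed.

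The paper's argument is simply the natural reordering of your steps: expand the annotated !-box on the contracted bit, apply Lemma~\ref{lem:big-disconnect}, then apply the average rule to each pair (this is what \emph{produces} the NOT-$H(2)$ gadgets), and only then invoke Lemma~\ref{lem:nf-avg} in the forward direction to absorb those gadgets and land in a genuine normal form on $n-1$ outputs. The stars coming from the disconnect lemma cancel exactly against the white dots introduced by Lemma~\ref{lem:nf-avg}, so no separate scalar bookkeeping is required. In short: drop your first step and move Lemma~\ref{lem:nf-avg} to the end; what remains is precisely the paper's proof.
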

\begin{proof}
  Starting from an arbitrary normal form, with a \whitecounit plugged into the right most output, we first expand the annotated !-box:
  \[ \tikzfig{contraction-thm-pf-prime} \]
  Now the diagram is ready for application of Lemma~\ref{lem:big-disconnect}, followed by the average rule:
  \[\scalebox{0.9}{\tikzfig{contraction-thm-pf2-prime}}\]
  This diagram can now be brought to normal form by application of Lemma~\ref{lem:nf-avg}.
  Note the stars exactly cancel the white dots introduced by that lemma.
\end{proof}

Our strategy will now be to show that any diagram can be decomposed into H-boxes, combined via the operations of extension, convolution, and contraction. This will give our completeness proof.

To simplify the decomposition of diagrams into H-boxes, we prove a few corollaries.

\begin{corollary}\label{cor:whitemult-nf}
 The diagram resulting from applying \whitemult\ to a pair of outputs of a normal form diagram can be brought into normal form, when we have proven~\eqref{eq:intro-def} and~\eqref{eq:average-def} for all $a$ and $b$ in the normal form.
 \begin{equation}\label{eq:whitemult-nf}
    \tikzfig{whitemult-nf}
 \end{equation}
\end{corollary}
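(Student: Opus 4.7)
The plan is to fuse the \whitemult\ with the two top white spiders of the normal form via \spiderrule, and then reduce the resulting intermediate diagram to a proper normal form of arity $n-1$ using the intro rule $I_a$ and the average rule $A_{a,b}$.

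First, applying the \whitemult\ to the two specified outputs and invoking \spiderrule\ merges the \whitemult\ with the two top white spiders at those outputs, yielding a diagram with $n-1$ top spiders, one of which (the ``merged spider'') is connected to each of the $2^n$ H-boxes of the original normal form via two wires rather than one. The NOT pattern on each pair of wires going into $H_n(a_{\vec{b}})$ is determined by the bits $b_i$ and $b_j$ of the original index $\vec{b}$ through the indexing map $\iota_{\vec{b}}$.

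Next, I would reduce, for each H-box, its pair of legs to the merged spider down to a single leg. In the \emph{diagonal} case where $b_i = b_j$, both wires carry the same NOT pattern, and a combination of \hfuserule, \spiderrule, and the intro rule $I_{a_{\vec{b}}}$ collapses the two legs into one, producing $H_{n-1}(a_{\vec{b}})$ with the matching NOT on the remaining output wire. In the \emph{off-diagonal} case where $b_i \neq b_j$, one wire has a NOT and the other does not; since the Boolean product $b\cdot\neg b$ is always $0$, the H-box reduces to $H_{n-1}(1)$, which by $I_1$ is just a tensor of \whiteunit\ states and can be absorbed into the neighbouring top spiders via \spiderrule. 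After this, the $2^{n-1}$ trivialised off-diagonal H-boxes disappear and the $2^{n-1}$ diagonal H-boxes become the H-boxes of the target arity-$(n-1)$ normal form, with labels and NOT patterns correctly indexed by $\vec{b}' \in \mathbb{B}^{n-1}$.

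The average rule $A_{a,b}$ enters at intermediate stages to combine H-boxes whose patterns coincide after the above reductions (in particular where the same diagonal residue can be reached from two sources) and to manage the scalar factors introduced by each application of the intro rule. The main obstacle I expect is the careful execution of the two-legs-to-one-leg reduction, especially the off-diagonal trivialisation, together with the bookkeeping of scalar factors across the procedure: each use of $I_a$ and $A_{a,b}$ introduces compensating stars that must be tracked so that the final diagram matches the expected reduced normal form given by Theorem~\ref{thm:nf-unique}.
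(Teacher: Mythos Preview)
Your approach differs substantially from the paper's and has real gaps in its diagrammatic execution.

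The paper's proof is a two-line reduction: observe that applying $\whitemult$ to two outputs is the same as first taking the Schur product with a cup on those two outputs (after extending by $\whiteunit$ on the remaining ones) and then applying $\whitecounit$ to one of them. This decomposes the problem entirely into pieces already established: Lemma~\ref{lem:cup-nf} supplies the normal form of the cup, Proposition~\ref{prop:extension} handles the $\whiteunit$ extensions, Proposition~\ref{prop:convolution} performs the Schur product, and Proposition~\ref{prop:contraction} does the $\whitecounit$ contraction. The hypotheses $I_a$ and $A_{a,b}$ are inherited from those propositions---in particular, $A_{a,b}$ is what Proposition~\ref{prop:contraction} needs.

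Your direct manipulation is semantically on the right track---after fusing, diagonal entries survive and off-diagonal ones vanish---but the diagrammatic steps you sketch do not obviously go through. In the diagonal case, collapsing two parallel wires (same NOT pattern) between a white spider and an H-box down to a single wire is \emph{not} a consequence of \SpiderRule, \HFuseRule, and $I_a$ alone; this is essentially the content of Lemma~\ref{lem:dedup}, whose proof is nontrivial and goes through \OrthoRule via Lemma~\ref{lem:splitting}. In the off-diagonal case, ``since $b\cdot\neg b=0$ the H-box reduces to $H_{n-1}(1)$'' is a semantic observation, not a diagrammatic derivation; turning it into one again requires machinery of the Lemma~\ref{lem:splitting}/Lemma~\ref{lem:AND-1} kind that you do not invoke. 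Finally, your account of where $A_{a,b}$ enters is vague: your outline, taken at face value, would not use the average rule at all (the $2^{n-1}$ surviving diagonal H-boxes land on distinct indices, so there are no collisions to average), which should make you suspicious that the off-diagonal elimination and scalar bookkeeping are doing more work than you credit them with. The paper sidesteps all of this by routing through Proposition~\ref{prop:contraction}, where the hard work has already been packaged.
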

\begin{proof}
Applying a \whitemult\ to a pair of outputs has the same result as convolving with a cup, then contracting one of the outputs. That is, we can decompose \eqref{eq:whitemult-nf} as follows:
\ctikzfig{whitemult-decomp}
then apply Lemma~\ref{lem:cup-nf} and Propositions \ref{prop:extension}, \ref{prop:convolution}, and \ref{prop:contraction}.
\end{proof}

\begin{corollary}\label{cor:cap-nf}
A diagram consisting of a cap applied to a normal form diagram can be transformed into a normal form diagram, when we have proven~\eqref{eq:intro-def}, \eqref{eq:average-def} and $(A_{a', b'})$ for all $a$ and $b$ in the normal form and for every $a'$ and $b'$ that are sums of labels in the normal form:
\ctikzfig{cap-nf}
\end{corollary}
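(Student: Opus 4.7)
The plan is to decompose the cap on two outputs of the normal form as a \whitemult applied to those two outputs, followed by a \whitecounit on the merged output. This decomposition is valid in the ZH-calculus: composing a \whitemult with a \whitecounit yields, by \SpiderRule, a single Z-spider with two inputs and zero outputs, which is exactly the cap. Once the cap is replaced by this composite, the result will follow by applying Corollary~\ref{cor:whitemult-nf} to the \whitemult step and then Proposition~\ref{prop:contraction} to the \whitecounit step.

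By hypothesis, we have $I_a$ and $A_{a,b}$ for all labels $a, b$ in the original normal form, so Corollary~\ref{cor:whitemult-nf} applies and produces an intermediate normal form on $n - 1$ outputs. Tracing through its proof, the labels of this intermediate normal form arise from a Schur product with a (normal-form version of a) cup and a subsequent contraction via the average rule, so they are either original labels of the given normal form or sums of pairs of such labels.

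Next, I would apply Proposition~\ref{prop:contraction} to contract the merged output via the \whitecounit. That proposition requires $A_{a',b'}$ for all pairs of labels $a', b'$ appearing in the intermediate normal form. The two kinds of labels that appear were just analysed: original labels, covered by the hypothesis $A_{a,b}$ on original labels, and sums of pairs of original labels, covered precisely by the hypothesis $A_{a',b'}$ on sums of labels. The final diagram is then a normal form on $n - 2$ outputs, as required.

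The only step of real substance is recognising the cap as \whitecounit composed with \whitemult; the remainder is a straightforward chaining of two already-established results. The main thing to take care of is verifying that the label hypotheses match what appears in the intermediate normal form, and this is essentially bookkeeping --- there is no deeper obstacle.
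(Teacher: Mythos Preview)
Your proposal is correct and matches the paper's proof exactly: the paper also decomposes the cap as $\whitecounit \circ \whitemult$ and then invokes Corollary~\ref{cor:whitemult-nf} followed by Proposition~\ref{prop:contraction}. Your additional discussion of why the label hypotheses suffice for the intermediate normal form is more explicit than the paper, which leaves this bookkeeping implicit.
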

\begin{proof}
Since the cap can be decomposed as $\whitecounit \circ \whitemult$, the result follows immediately from Corollary~\ref{cor:whitemult-nf} and Proposition~\ref{prop:contraction}.
\end{proof}

Thanks to Corollaries~\ref{cor:tensor-product} and \ref{cor:cap-nf}, we are able to turn any diagram of normal forms into a normal form. It only remains to show that the generators of the ZH-calculus can themselves be made into normal forms. We have already shown the result for H-boxes (Lemma~\ref{lem:H-box-nf}), so the following will be sufficient:

\begin{lemma}\label{lem:Z-spider-nf}
Any Z-spider can be brought into normal form using the rules of the ZH-calculus.
\end{lemma}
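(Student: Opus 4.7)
By map--state duality (fold inputs to outputs using cups, each absorbed into the Z-spider by \SpiderRule), it suffices to show that the $n$-output Z-spider state can be brought to normal form for every $n \geq 0$. The target normal form takes $k = 0$ and labels $a_{\vec{0}} = a_{\vec{1}} = 1$, $a_{\vec{b}} = 0$ otherwise; by Theorem~\ref{thm:nf-unique} this represents $\ket{0^n} + \ket{1^n}$. Since all labels appearing lie in $\{0,1\}$, the required instances $I_0$, $I_1$, $A_{a,b}$ and $M_{a,b}$ for $a,b\in\{0,1\}$ have already been proved (Lemmas~\ref{lem:intro-phasefree}, \ref{lem:average-phasefree}, \ref{lem:mult-simple-values}), so Propositions~\ref{prop:extension}--\ref{prop:contraction} and Corollary~\ref{cor:tensor-product} apply unconditionally for our purposes.

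The case $n = 2$ is Lemma~\ref{lem:cup-nf}. For $n \in \{0,1\}$, start from the cup in normal form (Lemma~\ref{lem:cup-nf}) and apply Proposition~\ref{prop:contraction} one or two times to contract outputs of the cup with $\whitecounit$. By \SpiderRule and \IDRule the resulting diagrams agree with the 1-leg $\whitedot = \whiteunit$ and the 0-leg $\whitedot$ respectively, yielding a normal form for each.

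For $n \geq 3$, I would write the $n$-output Z-spider as an iterated Schur product
\[ P_1 * P_2 * \cdots * P_{n-1}, \]
where $P_j$ is the diagram consisting of a cup between outputs $j$ and $j+1$ tensored with $\whiteunit$s on the remaining $n-2$ outputs. Semantically the coefficient of $\ket{\vec b}$ in this Schur product equals $\prod_{j=1}^{n-1} [b_j = b_{j+1}] = [b_1 = \cdots = b_n]$, which matches the $n$-output Z-spider. Diagrammatically, the cups, the $\whiteunit$s, and the $\whitemult$ nodes introduced by each Schur product form a single connected component of white spiders and collapse via iterated use of \SpiderRule and \IDRule to a single $\whitedot$ with $n$ output legs. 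Each $P_j$ is in normal form by Lemma~\ref{lem:cup-nf} and Corollary~\ref{cor:tensor-product}, and iteratively applying Proposition~\ref{prop:convolution} puts their Schur product in normal form.

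The main obstacle is the diagrammatic verification that $P_1 * \cdots * P_{n-1}$ really equals the $n$-output Z-spider; this should reduce to routine bookkeeping, but care is needed to ensure that after forming the Schur products the resulting white-spider network fuses (via \SpiderRule) into a single $n$-legged $\whitedot$ with all $\whiteunit$s correctly absorbed (via \IDRule).
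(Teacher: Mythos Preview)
Your proposal is correct and follows essentially the same strategy as the paper: decompose the $n$-legged Z-spider into cups and reassemble via the normal-form operations, noting that only labels in $\{0,1\}$ arise so the needed instances of $I$, $M$, $A$ are already available. The paper's version differs only in packaging---it observes directly that $\whiteunit = H(1)$ is a labelled H-box for $n=1$, treats $n=0$ via $\whitedot = \textrm{cap}\circ\textrm{cup}$ and Corollary~\ref{cor:cap-nf}, and for $n\geq 2$ tensors the $(n-1)$ cups and then applies $\whitemult$ between neighbours using Corollary~\ref{cor:whitemult-nf}, whereas you pad each cup with $\whiteunit$'s first and take Schur products via Proposition~\ref{prop:convolution}.
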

\begin{proof}
As shown in Eq.~\eqref{eq:H-box-1}, \whiteunit{} is already a labelled H-box and thus is in normal form.
By \spiderrule, $\whitedot = \tikzfig{dot-nf}$, and hence it can be brought into normal form using Corollaries~\ref{cor:tensor-product} and \ref{cor:cap-nf}.
This covers the cases of Z-spiders with 0 or 1 incident wires.

We can decompose any Z-spider with $n\geq 2$ incident wires as a tensor product of $(n-1)$ cups, with each cup \whitemult-ed with its neighbours:
\ctikzfig{n-ary-Z-decomposition}
If $n=2$, no \whitemult are needed and the equality is by \IDRule instead of \SpiderRule.
In either case, the diagram can be brought into normal form by applying Lemma~\ref{lem:cup-nf} and Corollaries~\ref{cor:tensor-product} and \ref{cor:whitemult-nf}. The intermediate normal forms only involve H-boxes labelled by a $0$ and $1$ and hence we have already proven the necessary cases of multiply, intro and average to prove this.
\end{proof}

\begin{proposition}\label{prop:completeness-conditional}
If the ZH-calculus proves \eqref{eq:mult-def}, \eqref{eq:intro-def}, and \eqref{eq:average-def} for all integers $a$ and $b$, then the ZH-calculus is complete, i.e.\ for any two ZH-diagrams $D_1$ and $D_2$, if $\llbracket D_1 \rrbracket = \llbracket D_2 \rrbracket$ then $D_1$ is transformable into $D_2$ using the rules of the ZH-calculus.
\end{proposition}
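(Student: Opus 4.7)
The plan is to assemble all the previously established ingredients into a structural induction on diagram-building. By Theorem~\ref{thm:nf-unique}, reduced normal forms are unique, so it suffices to show that every diagram $D$ can be rewritten into a normal form representing $\llbracket D\rrbracket$, and then reduced. Given $\llbracket D_1\rrbracket=\llbracket D_2\rrbracket$, both would then rewrite to the same reduced normal form, yielding $D_1 = D_2$ in the calculus.

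First, I would recall that every ZH-diagram can be built from the generators (Z-spiders, H-boxes, stars) and the compact-closed structure (identities, swaps, cups, caps) using tensor product and sequential composition. Using the compact closed structure, any sequential composition $D_2\circ D_1$ can in turn be rewritten as a tensor product $D_1\otimes D_2$ followed by applying caps to the appropriate pairs of wires (sliding the outputs of $D_1$ down to meet the inputs of $D_2$). So every diagram can be expressed purely in terms of generators, cups, and caps, combined via tensor product and cap-application.

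Second, I would run the induction. The base cases are: H-boxes (Lemma~\ref{lem:H-box-nf}), Z-spiders including \whiteunit{} and \whitedot{} (Lemma~\ref{lem:Z-spider-nf}), cups (Lemma~\ref{lem:cup-nf}), stars (already a normal form with $n=0$, $k=1$), and identity wires (a cup with a cap applied, or equivalently handled by Lemma~\ref{lem:Z-spider-nf} via $\tikzfig{dot-nf}$). All these normal forms use only labels in $\{0,1,-1\}$, so the instances of multiply, intro, and average needed to produce them are already established. The inductive steps are: tensor products of normal forms give normal forms by Corollary~\ref{cor:tensor-product}, and applying a cap to a normal form gives a normal form by Corollary~\ref{cor:cap-nf}. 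In both cases the required instances of $I_a$, $M_{a,b}$, and $A_{a,b}$ are available by hypothesis, since we assume they hold for all integers.

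Third, once the whole diagram has been brought to a (not necessarily reduced) normal form by this induction, I would invoke Proposition~\ref{prop:normal-to-reduced}, which uses $M_{2,a}$ for all $a\in\mathbb{Z}$ (also available by hypothesis) to eliminate factors of $2$ against stars until either no stars remain or some label is odd. This yields a reduced normal form. Applying this to both $D_1$ and $D_2$ and using soundness (Proposition~\ref{prop:phasefree-sound}) to conclude that the two reduced normal forms still have the same interpretation, Theorem~\ref{thm:nf-unique} forces them to be the same diagram, giving $D_1 = D_2$ as derivable in the calculus. The only real subtlety in the plan is bookkeeping the compact-closed rewriting of $\circ$ into caps plus $\otimes$ so that the inductive hypothesis applies uniformly; everything else is a direct citation of the lemmas already proved in this section.
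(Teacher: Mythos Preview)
Your proposal is correct and follows essentially the same approach as the paper's own proof: reduce every diagram to normal form by induction on the diagram structure (generators via Lemmas~\ref{lem:H-box-nf} and~\ref{lem:Z-spider-nf}, tensor products via Corollary~\ref{cor:tensor-product}, wiring via Corollary~\ref{cor:cap-nf}), then reduce via Proposition~\ref{prop:normal-to-reduced} and invoke uniqueness from Theorem~\ref{thm:nf-unique}. You are somewhat more explicit than the paper about the compact-closed decomposition of $\circ$ into $\otimes$ plus caps, and about the star and cup base cases, but these are details the paper leaves implicit rather than genuine differences in strategy.
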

\begin{proof}
By Theorem~\ref{thm:nf-unique}, it suffices to show that any ZH diagram can be brought into reduced normal form. Lemmas~\ref{lem:H-box-nf} and~\ref{lem:Z-spider-nf} suffice to turn any generator into normal form. Then using Corollary~\ref{cor:tensor-product} we can turn any tensor product of generators into a normal form and with Corollary~\ref{cor:cap-nf} we can then apply any sort of wiring and reduce it to normal form. Finally, Proposition~\ref{prop:normal-to-reduced} allows us to bring the normal form to reduced normal form.
\end{proof}

To prove completeness, it now remains to prove \eqref{eq:mult-def}, \eqref{eq:intro-def}, and \eqref{eq:average-def} for all integers $a$ and $b$. To assist in this, let us prove a few lemmas that will help reduce diagrams to normal form when we have these rules for certain values of $a$ and $b$.
As we already have many individual cases of those rules --- see Lemma~\ref{lem:mult-simple-values} for~\eqref{eq:mult-def} and Lemma~\ref{lem:intro-phasefree} for~\eqref{eq:intro-def}, as well as Section~\ref{s:average} for~\eqref{eq:average-def} --- these lemmas can already be applied to many diagrams.

\begin{lemma}\label{lem:Hadamard-nf}
 Suppose we have $(I_{a_{b\vec{c}}})$ for all $b\in\mathbb{B}$, $\vec{c}\in\mathbb{B}^{n-1}$.
 Then we can consume a Hadamard gate into the normal form:
 \ctikzfig{hadamard-lemma-prime}
 If we also have $A_{a_{0\vec{c}},a_{1\vec{c}}}$ and $A_{a_{0\vec{c}},-a_{1\vec{c}}}$ for all $\vec{c}\in\mathbb{B}^{n-1}$, then in fact:
 \ctikzfig{hadamard-lemma}
\end{lemma}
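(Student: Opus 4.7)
The plan is as follows. Conceptually, applying the single-wire H-box to the first output of a normal form
\(\sum_{\vec{b}}\frac{1}{2^k}a_{\vec{b}}\ket{\vec{b}}\) yields
\(\sum_{\vec{c}}\frac{1}{2^k}\bigl[(a_{0\vec{c}}+a_{1\vec{c}})\ket{0\vec{c}}+(a_{0\vec{c}}-a_{1\vec{c}})\ket{1\vec{c}}\bigr]\).
The two equations in the lemma are the diagrammatic witnesses of this computation: the first one pushes the Hadamard
through so that each \(H(a_{0\vec{c}})\) becomes paired with \(H(a_{1\vec{c}})\), and the second uses the average rule
to combine each such pair into a single H-box labelled by the corresponding sum or difference.

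First I would split the annotated !-box indexed by \(\vec{b}\in\mathbb{B}^{n}\) into two !-boxes indexed by \(0\vec{c}\)
and \(1\vec{c}\) respectively, using Lemma~\ref{lem:annotated-expansion}. The indexing map \(\iota_{1\vec{c}}\) puts a
NOT on the leg attached to the first output's white spider while \(\iota_{0\vec{c}}\) does not, so after this split the
two halves differ precisely in that NOT. I would then push the single-wire H-box into the diagram: by Lemma~\ref{lem:x-spider}
it turns the top white spider on that output into a gray spider connected to all H-boxes, and \StrongCompRule disconnects
this gray spider from the H-box legs. The residual local structure at each \(H(a_{b\vec{c}})\) is exactly the shape in
which the bangboxed intro rule (Lemma~\ref{lem:intro-bangboxed}, requiring \(I_{a_{b\vec{c}}}\)) lets me introduce the
new pairing legs, producing the intermediate diagram on the right-hand side of the first claim.

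For the second claim, each pair \(\{H(a_{0\vec{c}}),H(a_{1\vec{c}})\}\) now sits in the exact configuration to which the
average rule applies. On the \(0\vec{c}\) side, invoking \(A_{a_{0\vec{c}},a_{1\vec{c}}}\) (in the bangboxed form of
Lemma~\ref{prop:avg-bb}) collapses the pair into a single H-box labelled \(a_{0\vec{c}}+a_{1\vec{c}}\); on the \(1\vec{c}\)
side, the extra NOT coming from \(\iota_{1\vec{c}}\) effectively flips the sign of one label via \eqref{eq:def-negative-numbers},
so applying \(A_{a_{0\vec{c}},-a_{1\vec{c}}}\) yields \(a_{0\vec{c}}-a_{1\vec{c}}\). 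Finally I would re-merge the two
annotated !-boxes back into one indexed by \(\vec{b}\in\mathbb{B}^{n}\), reaching a diagram of the normal-form shape
\eqref{eq:normal-form}.

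The main obstacle is scalar bookkeeping. Each application of \eqref{eq:average-def} produces an \(H(2)\) and a fresh
configuration of indexing NOTs that must be reconciled with the stars introduced by the splitting step, by \StrongCompRule,
and by the reassembly of the !-box. I expect to mimic the counting used in Lemma~\ref{lem:big-disconnect} (where the number
of stars doubles at each level of splitting) and the balancing move in Lemma~\ref{lem:nf-avg}, so that after all \(H(2)\)-halves
are absorbed into the global scalar one is left with exactly the star count demanded by the claimed normal form. Getting this
counting right, rather than any individual rewrite, is where the real technical work sits.
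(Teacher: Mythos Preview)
Your overall strategy matches the paper's: expand the annotated !-box along the first bit, use the intro rule $I_{a_{b\vec{c}}}$ to bring each pair $(H(a_{0\vec{c}}),H(a_{1\vec{c}}))$ into the configuration required by the average rule, and then apply $A_{a_{0\vec{c}},\pm a_{1\vec{c}}}$ to collapse the pairs. That is exactly the shape of the paper's four-line proof.

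Two small points where your write-up drifts from what actually works. First, the step ``by Lemma~\ref{lem:x-spider} it turns the top white spider on that output into a gray spider, and \StrongCompRule disconnects\ldots'' misattributes the mechanism: Lemma~\ref{lem:x-spider} is X-spider fusion and \StrongCompRule is the ZX bialgebra, neither of which directly pushes a single H-box through a white spider connected to \emph{labelled} H-boxes. The paper instead works via the grey-spider definition \eqref{eq:grey-spider-dfn} and the higher-arity labelled H-box decomposition \eqref{eq:labelledHboxhigherarity}, after which the intro rule (Lemma~\ref{lem:intro-bangboxed}) does the real pairing. Second, your worry about scalar bookkeeping is overblown: there is no recursive splitting \`a la Lemma~\ref{lem:big-disconnect} here, and the stars/white dots balance in a single step rather than through a $2^{n-1}-1$ count. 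The paper's proof is short precisely because the scalar accounting is trivial once the intro and average rules are in hand.
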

\begin{proof}
 We start by manipulating the !-boxes and layout, and then perform the rewriting:
 \ctikzfig{hadamard-lemma-proof1}
 \ctikzfig{hadamard-lemma-proof2}
 \ctikzfig{hadamard-lemma-proof3}
 \[\tikzfig{hadamard-lemma-proof4} \qedhere\]
\end{proof}

The following lemma allows us to (dis)connect integer-labelled H-boxes that are connected to a superset of a given $0$-labelled H-box.
\begin{lemma}\label{lem:AND-1}
    For any integer $a$,
    \ctikzfig{AND-lemma-1-ext}
\end{lemma}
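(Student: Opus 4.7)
The plan is to exploit the fact that both $M_{0,a}$ and $I_0$ are available at this point in the paper (from Lemma~\ref{lem:mult-simple-values} and Lemma~\ref{lem:intro-phasefree}, respectively), so the $0$-labelled H-box can be freely reshaped and then used to annihilate any $a$-labelled neighbour it fully covers. The underlying semantic intuition is clear: wherever the $a$-H-box would contribute its non-trivial value $a$, its connected spiders are all set to $1$; but this case forces the spiders attached to the $0$-H-box also to be $1$, and there the $0$-H-box kills the component. Hence the label $a$ is never ``seen'' and the $a$-H-box can be absorbed into the $0$-H-box.

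Concretely, I would first use the !-boxed intro rule $I_0$ (Lemma~\ref{lem:intro-bangboxed} applied to the instance of $I_0$ from Lemma~\ref{lem:intro-phasefree}) to extend the $0$-labelled H-box's connections so that it is attached to exactly the same superset of wires as the $a$-labelled H-box. This step uses precisely the hypothesis that the $a$-H-box is connected to a superset of the $0$-H-box's wires. With both H-boxes now attached to the same collection of white spiders through identical (NOT-free) patterns, apply the !-boxed multiplication rule (Lemma~\ref{prop:multiply-bb}) in the form $M_{a,0}$, which is valid for every $a \in \mathbb{Z}$ by Lemma~\ref{lem:mult-simple-values}. The two H-boxes fuse into a single H-box whose label is $a \cdot 0 = 0$. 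Finally, apply $I_0$ in reverse to contract this enlarged $0$-labelled H-box back down to its original set of wires, yielding the right-hand side of the lemma.

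The main obstacle I anticipate is bookkeeping around the !-box and NOT-edge structure: the intro rule produces a doubled H-box configuration with a NOT edge in the expansion, and one must verify that the two branches of this expansion both collapse correctly when the $0$-label is carried through, so that the extended $0$-H-box really does reduce to a plain $0$-H-box on the superset of wires. This is ultimately a routine application of \spiderrule{}, the identity rule, and the fact that $H(0)$ placed at the end of a wire propagates through NOTs without incident, but the diagrammatic manipulations must be laid out carefully. Since the argument only invokes cases of $I$ and $M$ already established unconditionally for all integers, there is no circularity, and the proof works uniformly in $a$.
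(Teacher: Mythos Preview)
Your overall strategy is correct and matches the paper's proof: apply the !-boxed form of $I_0$ to the $H(0)$-box to connect it to the extra white spiders, use $M_{a,0}$ (Lemma~\ref{prop:multiply-bb} together with Lemma~\ref{lem:mult-simple-values}) to fuse the $a$-box with the NOT-free copy of $H(0)$, and then undo $I_0$. There is no circularity, and the paper's one-line derivation follows exactly this route.

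One point of clarification in your final paragraph: the $I_0$ expansion does \emph{not} collapse to ``a plain $0$-H-box on the superset of wires'' --- a single $H(0)$ on $T$ is semantically different from $H(0)$ on $S$ tensored with white units on $T\setminus S$. What actually happens is simpler than you anticipate: after $I_0$ you have $2^{|T\setminus S|}$ copies of $H(0)$, each connected to all wires with a distinct NOT pattern on the new ones. Only the NOT-free copy is connected identically to the $H(a)$-box, so $M_{a,0}$ merges just that pair, producing $H(a\cdot 0)=H(0)$. This leaves the configuration literally unchanged from the post-$I_0$ state, so $I_0$ applies in reverse without any further simplification of the NOT-carrying copies. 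The ``bookkeeping obstacle'' you flag therefore does not arise.
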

\begin{proof}~
    \[\scalebox{0.9}{\tikzfig{AND-lemma-1-ext-proof}} \qedhere\]
\end{proof}

\begin{lemma}\label{lem:zero-box-nf}
 Suppose we have proven $(I_{a_{b\vec{c}}})$ for all $b\in\mathbb{B}$, $\vec{c}\in\mathbb{B}^{n-1}$ and $(A_{a_{0\vec{c}},a_{1\vec{c}}})$ for all $\vec{c}\in\mathbb{B}^{n-1}$, then we can consume an $H(0)$ into the normal form:
 \ctikzfig{zero-box-nf2}
\end{lemma}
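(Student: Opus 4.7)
The plan is to show that attaching the $0$-labelled H-box to the first spider of the normal form collapses pairs of integer-labelled H-boxes along the first bit index, producing a normal-form diagram on $n-1$ bits (up to a scalar of $\dstar$ and an auxiliary $H(2)$ factor). The argument parallels that of Lemma~\ref{lem:nf-avg} and Lemma~\ref{lem:Hadamard-nf}: it combines the AND-style simplification of Lemma~\ref{lem:AND-1} with the hypothesized average-rule instances, this time avoiding the $A_{a,-b}$ pairs required by the Hadamard variant.

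First I would use Lemma~\ref{lem:annotated-expansion} to split the annotated !-box of the normal form according to the most significant bit, so that the H-boxes indexed by $0\vec{c}$ and $1\vec{c}$ for each $\vec{c}\in\mathbb{B}^{n-1}$ appear explicitly side by side. The $0$-labelled H-box is connected to the first spider (the one being expanded), so after this split it sits between the two halves of the diagram. Using the hypotheses $I_{a_{b\vec{c}}}$ via the !-boxed intro rule (Lemma~\ref{lem:intro-bangboxed}), I would introduce additional wires on the integer-labelled H-boxes so that the configuration around the $0$-box matches the left-hand side of Lemma~\ref{lem:AND-1}.

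Next I would apply Lemma~\ref{lem:AND-1} to absorb the $0$-box, turning the pair of H-boxes indexed by $0\vec{c}$ and $1\vec{c}$ into the setup required by the average rule, in much the same way that Lemma~\ref{lem:nf-avg} prepares H-box pairs for pairwise fusion. Then, for each $\vec{c}\in\mathbb{B}^{n-1}$, apply $A_{a_{0\vec{c}},a_{1\vec{c}}}$ to fuse the pair into a single H-box with a new label, producing an $H(2)$ factor per pair; re-folding the result into an annotated !-box over $\mathbb B^{n-1}$ and collecting the stars that have accumulated gives the claimed RHS.

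The main obstacle will be the bookkeeping: tracking how the stars and $H(2)$ factors accumulate when Lemma~\ref{lem:big-disconnect} and Lemma~\ref{lem:splitting} are invoked to expose the pairs, and ensuring the annotated !-boxes are split and re-merged consistently (using Lemmas~\ref{lem:annotated-expansion} and~\ref{lem:annotated-split}) as the indexing set shifts from $\mathbb{B}^n$ to $\mathbb{B}^{n-1}$. Since the scalar template is already established by Lemma~\ref{lem:nf-avg}, once the AND-1 and intro manipulations bring the diagram into the averaging configuration, the remainder is a direct application of the hypothesized $A_{a_{0\vec{c}},a_{1\vec{c}}}$ together with the star/$H(2)$ cancellations available from Lemmas~\ref{lem:scalarcancelstars}, \ref{lem:scalar-2}, and \ref{lem:scalar-cancel-2}.
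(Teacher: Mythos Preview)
Your proposal is correct and follows essentially the same route as the paper: the paper's proof is a single sentence stating that the argument is analogous to that of Lemma~\ref{lem:Hadamard-nf}, with Lemma~\ref{lem:AND-1} replacing the use of $M_{-1,a_{1\vec{c}}}$, which is precisely the substitution you identify. Your write-up simply unpacks that analogy in more detail; the one minor point is that Lemma~\ref{lem:big-disconnect} is not actually needed here (it belongs to the contraction argument of Proposition~\ref{prop:contraction}), but this does not affect the correctness of your outline.
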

\begin{proof}
 The proof is analogous to that of Lemma~\ref{lem:Hadamard-nf}, using Lemma~\ref{lem:AND-1} instead of $(M_{-1,a_{1\vec{c}}})$.
\end{proof}

\section{Arithmetic}\label{sec:arithmetic}

In this section we show we can do simple arithmetic with labelled H-boxes.
Namely we will show first that the following equation holds for all integers $a$ and $b$.
\begin{equation}\label{eq:addition-gadget}
    \tikzfig{H-box-addition}
\end{equation}
Hence, we can add H-box labels together. This will be crucial for when we will prove the general average rule in Section~\ref{sec:completeness}. Secondly, we will prove the multiply rule for all integers.

\subsection{Proving addition for natural numbers}

First we will prove \eqref{eq:addition-gadget} for natural numbers, which requires showing that \eqref{eq:addition-gadget} acts as expected when $b=1$, and that furthermore the `addition gadget' of \eqref{eq:addition-gadget} is associative. For this we need a couple of lemmas.

The next lemma, while looking deceptively simple, has a quite involved proof, requiring repeated use of Lemma~\ref{lem:splitting}. It was found by translating both sides to normal form, and then simplifying the intermediate steps.
\begin{lemma}\label{lem:dedup}~
    \ctikzfig{had-Z-cancel}
\end{lemma}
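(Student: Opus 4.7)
The plan is to massage the left-hand side into a form where the splitting lemma can be applied, then exploit the fact that duplicated H-box connections against a Z-spider collapse via the Hopf-style interaction between the Z and H structures. Since the paragraph preceding the statement advertises "repeated use of Lemma~\ref{lem:splitting}", I expect the proof to proceed by isolating the duplicated substructure piece by piece, rather than in a single global rewrite.

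First I would unpack any annotated !-boxes using Definition~\ref{def:annotated-bb} (or Lemma~\ref{lem:annotated-expansion}) so that the configuration of H-boxes around the offending Z-spider is written out explicitly. This exposes the duplicated H-box connection to the Z-spider, which is the content that the name ``had-Z-cancel'' refers to. Next, I would apply Lemma~\ref{lem:splitting} to peel the Z-spider apart along each axis where the H-boxes disagree: this produces a small, local subdiagram consisting of a Z-spider sandwiched between two parallel H-boxes, at the cost of introducing a number of $\tikzfig{star}$ scalars (as in the proof of Lemma~\ref{lem:big-disconnect}).

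The local subdiagram can then be simplified. I would combine \HFuseRule to fuse the pair of H-boxes into one, then use \HHRule together with the bialgebra rules \StrongCompRule and \HCompRule (or directly the Hopf-style identity of Lemma~\ref{lem:hopf-rule}) to collapse the resulting loop. At this point the duplication has been removed and only scalar clutter remains. Recombining the split pieces via \SpiderRule brings the diagram back to a single Z-spider connected to the deduplicated set of H-boxes, matching the RHS up to scalars.

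The main obstacle will be scalar bookkeeping. Each invocation of Lemma~\ref{lem:splitting} produces a $\tikzfig{star}$, and each cancellation of paired H-boxes via \HHRule or the Hopf rule produces its own scalar (an extra $\whitedot$ or $H(2)$ somewhere). I would plan to collect these at the end and eliminate them using Lemmas~\ref{lem:scalarcancelstars}, \ref{lem:scalarcancelhh}, and \ref{lem:scalar-cancel-2} to match the RHS exactly; getting these counts right (rather than merely proportionality) is the delicate part of the argument, and mirrors the care taken in Lemma~\ref{lem:big-disconnect} (whose footnote flags precisely this kind of scalar pitfall in an earlier version of the calculus).
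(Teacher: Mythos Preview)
Your high-level plan matches the paper: the proof does hinge on repeated applications of Lemma~\ref{lem:splitting}, followed by local simplification and scalar cleanup. However, your framing via annotated !-boxes and ``each axis where the H-boxes disagree'' misreads the statement. The diagram is a small, concrete one --- essentially the COPY-then-AND composite described later in Section~\ref{sec:o-rule}: a single Z-spider with two parallel wires into a single H-box, followed by a Hadamard. There is no !-box indexing and no family of labelled H-boxes to peel apart bit by bit; your step~1 is therefore largely vacuous here, and the analogy with Lemma~\ref{lem:big-disconnect} is misleading.

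The part you gloss over is also the non-obvious part. The raw left-hand side contains no NOT edges at all, so Lemma~\ref{lem:splitting} (which separates H-boxes according to whether they reach the Z-spider through a NOT or through a plain wire) has nothing to bite on. The paper's five-step derivation spends its early moves rearranging the diagram --- unfolding definitions, inserting Hadamard pairs via \HHRule, applying bialgebra --- precisely to manufacture the asymmetric NOT/plain configuration that the splitting lemma needs. Only once that scaffolding is in place does the repeated splitting fire, after which the pieces collapse much as you outline. So your steps 2--5 are fine in spirit; what is missing is a concrete step~0 that bridges from the bare LHS to a diagram matching the hypothesis of Lemma~\ref{lem:splitting}. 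Without it the plan stalls at the very first claimed invocation of splitting.
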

\begin{proof}
\[\tikzfig{dedup-proof-1}\]
\[\tikzfig{dedup-proof-2}\]
\[\tikzfig{dedup-proof-3}\]
\[\tikzfig{dedup-proof-4}\]
\[\tikzfig{dedup-proof-5} \qedhere\]
\end{proof}

\begin{multicols}{3}
\begin{lemma}\label{lem:zero-double-connection}~
    \ctikzfig{zero-double-connection}
\end{lemma}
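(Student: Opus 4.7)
The plan is to reduce the claim to Lemma~\ref{lem:dedup} (the Hadamard--Z deduplication rule) by first exposing the Hadamards hidden inside the $0$-labelled H-box. Concretely, I would unfold the $H(0)$ at one end of the double connection using~\eqref{eq:zero-high-arity}, which expresses $H(0)$ as a $\whiteunit$ joined to each of its legs through a Hadamard via the $\hadamult$ combinator. After this unfolding, the two parallel wires from the Z-spider into $H(0)$ become two parallel Hadamard-mediated wires into a shared $\whiteunit$ core, which is precisely the configuration that Lemma~\ref{lem:dedup} is designed to simplify.

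At this point I would apply Lemma~\ref{lem:dedup} to collapse the duplicated Hadamard--Z structure into a single such joint, then re-fold the resulting subdiagram back into $H(0)$ form by applying~\eqref{eq:zero-high-arity} read right to left. Along the way I would use spider fusion \SpiderRule, the identity rule \IDRule, and the meta-rule that only topology matters to tidy up any intermediate Z-spider gadgets; if needed, the interaction lemmas with Hadamards (e.g.~Lemma~\ref{lem:h-z-commute}) should suffice to move things into position.

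The main obstacle I foresee is the scalar bookkeeping, since both~\eqref{eq:zero-high-arity} and Lemma~\ref{lem:dedup} introduce auxiliary scalar generators (extra copies of $\whiteunit$, Hadamard scalars, and stars) when applied in this direction, so the global normalisation has to be tracked carefully through each rewrite. I would reconcile any stray scalars using Lemma~\ref{lem:scalarcancelstars} together with the other scalar-cancellation lemmas from Section~\ref{s:basic-derived} (in particular Lemmas~\ref{lem:scalarcancelhh} and~\ref{lem:scalarcancelzx}), to verify that no spurious scalar remains on either side of the claimed identity.
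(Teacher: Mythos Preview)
Your proposal is correct and matches the paper's approach: the paper's proof is a short diagrammatic chain that unfolds the $H(0)$ via~\eqref{eq:zero-high-arity} to expose the doubled Hadamard-mediated connection to a Z-spider, invokes Lemma~\ref{lem:dedup} to collapse it, and then refolds, with the scalar bookkeeping handled exactly by the lemmas you cite.
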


\begin{lemma}\label{lem:zero-gray-push}~
    \ctikzfig{zero-gray-push}
\end{lemma}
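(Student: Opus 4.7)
The plan is to unfold the gray spider via \GreyDef, reducing the statement to a purely white-and-H-box computation, and then apply the newly-established Lemma~\ref{lem:dedup} together with the bialgebra and spider rules to simplify. Since the name \emph{zero-gray-push} suggests pushing a $0$-labelled H-box past a gray spider, the strategy is to commute the $H(0)$ through the white spider hidden inside the gray spider's decomposition, then re-fold the surviving structure on the other side.

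First I would rewrite every gray leg on the left-hand side using \GreyDef, producing a white spider with a Hadamard on each of its incident wires and a \dstar\ for each leg. The $0$-labelled H-box is now adjacent to one of those Hadamards and to the central white spider. Using \HCompRule (the ZH-bialgebra) to let the H-box structure and the white spider pass through each other, combined with \HFuseRule to fuse any H-boxes that become composed, the $H(0)$ can be moved to the other side. Along the way, any double connections created by the bialgebra step can be collapsed by the just-proved Lemma~\ref{lem:zero-double-connection}, while any Hadamard--Z interactions that appear are absorbed by Lemma~\ref{lem:dedup}, \HHRule, and \IDRule.

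Once the $H(0)$ has been relocated, I would re-fold the remaining white spider with its attached Hadamards back into a gray spider by running \GreyDef in reverse, yielding the promised right-hand side up to scalars. Scalar equality is then achieved by matching the number of \dstar\ generators on each side using Lemma~\ref{lem:scalarcancelstars}, Lemma~\ref{lem:scalar-2}, and Lemma~\ref{lem:scalar-cancel-2}, possibly together with Lemma~\ref{lem:scalarcancelxh} to absorb any stray $H(-1)$-cap produced by the rewriting.

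The main obstacle will be the scalar bookkeeping. Each invocation of \GreyDef injects a factor of $\half$ in the form of a \dstar, Lemma~\ref{lem:dedup} can both introduce and absorb such factors, and each application of \HCompRule or \StrongCompRule tends to leave behind extra scalar white spiders that need to be cleared. Ensuring the net count on the two sides agrees exactly — particularly when the arity of the gray spider forces several unfoldings, so that the \dstar\ accounting grows with $n$ — is the delicate part of the argument and will likely dictate the order in which the rewrite steps are taken.
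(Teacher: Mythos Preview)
Your proposal is a strategy outline rather than a proof. You describe a plan---unfold the gray spider via \GreyDef, push the $H(0)$ through with \HCompRule, clean up with Lemmas~\ref{lem:zero-double-connection} and~\ref{lem:dedup}, then refold---but the individual steps are phrased speculatively (``any double connections \ldots\ can be collapsed'', ``any Hadamard--Z interactions \ldots\ are absorbed''), and you explicitly defer the scalar accounting. For a lemma at this level of the development the scalar bookkeeping is not an afterthought; it is most of the work, and until it is carried out the argument is incomplete. Moreover, unfolding every gray leg via \GreyDef is precisely what makes the scalars daunting: each leg contributes a \dstar\ and a Hadamard that must later be matched and cancelled, so the accounting grows with the arity before you have even started the real manipulation.

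The paper's proof is a single chain of diagram rewrites that avoids this global unfold--refold. It works with the gray spider directly, using the basic copy/commutation lemmas for gray spiders and H-boxes (Lemmas~\ref{lem:copy-x-h}, \ref{lem:h-x-commute}, \ref{lem:copy-xnot-h}) together with the just-established Lemma~\ref{lem:zero-double-connection}, so that the only scalars that appear are the ones intrinsic to those lemmas rather than an arity-dependent cloud of \dstar's from \GreyDef. If you want to rescue your approach, the fix is not to unfold the gray spider wholesale but to isolate the single wire carrying the $H(0)$, handle that interaction locally, and leave the rest of the gray spider intact.
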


\begin{lemma}\label{lem:zero-triangle-disconnect}~
    \ctikzfig{zero-triangle-disconnect}
\end{lemma}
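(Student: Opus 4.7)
The plan is to prove Lemma~\ref{lem:zero-triangle-disconnect} by unfolding the triangle gadget and the $0$-labelled H-box into primitive ZH generators, and then leveraging the two immediately preceding lemmas together with the basic derived rules from Section~\ref{s:basic-derived}.

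First I would replace the triangle by its explicit decomposition as a successor gadget from \eqref{eq:succesor}, which exposes an H-box with a specific pattern of white spiders and NOT edges. Simultaneously, using \eqref{eq:zero-high-arity}, I would rewrite the $0$-labelled H-box in terms of a \grayphase{\neg} node sitting on a white unit, so the $0$-label is realised as a NOT edge applied to a trivial state. At this point the whole diagram is expressed purely in terms of Z-spiders, H-boxes and NOT edges, and the goal is to show that the piece coming from the triangle can be absorbed into the rest.

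The main rewriting step is to push the NOT edge freed up by unfolding $H(0)$ through the white spiders of the successor gadget using Lemma~\ref{lem:not-commute} (together with Lemma~\ref{lem:iota-copy} if the NOT has to be copied along several wires). This should place a $0$-labelled H-box right next to either a gray node or a double wire, which is exactly the situation handled by Lemma~\ref{lem:zero-gray-push} or Lemma~\ref{lem:zero-double-connection}. Applying one (or both) of those lemmas collapses the relevant H-box and lets me invoke \HHRule and \IDRule to remove the remaining wires of the triangle gadget, leaving the disconnected diagram on the right-hand side. Where several wires are involved, I would follow the pattern of Lemma~\ref{prop:multiply-bb} to bang-box the argument.

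I expect the main obstacle to be the scalar bookkeeping. Each unfolding of the triangle and each application of the bialgebra rules \StrongCompRule, \HCompRule, or of Lemma~\ref{lem:splitting}, introduces stars, and the NOT-edge manipulations also produce scalar contributions via \eqref{eq:labelledHboxes}. Making the stars line up with those on the right-hand side of the statement will require a final clean-up using Lemma~\ref{lem:scalarcancelstars}, Lemma~\ref{lem:scalarcancelznot} and Lemma~\ref{lem:scalar-cancel-2}. Beyond that, the argument should be a relatively short diagrammatic calculation because the heavy combinatorial work has already been discharged into Lemmas~\ref{lem:zero-double-connection} and~\ref{lem:zero-gray-push}.
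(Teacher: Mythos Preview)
Your core strategy is right and matches the paper: unfold the successor gadget via \eqref{eq:succesor}, then lean on the two preceding lemmas (in particular Lemma~\ref{lem:zero-gray-push}) together with the basic derived rules to collapse the structure. The paper's proof is exactly a short diagrammatic chain of this kind.

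That said, you are importing far more machinery than the lemma needs. There are no !-boxes in this statement, so the reference to Lemma~\ref{prop:multiply-bb} is irrelevant; Lemma~\ref{lem:splitting} is also unnecessary here. The scalar bookkeeping you worry about is minimal: once the successor is unfolded, a single application of Lemma~\ref{lem:zero-gray-push} together with routine spider fusion and \HHRule/\IDRule already does the job, and the stars match up without invoking the full list of scalar lemmas you cite. In short, commit to the direct calculation rather than hedging with heavy tools; the proof is a handful of rewrite steps, not a multi-stage argument.
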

\end{multicols}

\begin{proof}[Proof of Lemma~\ref{lem:zero-double-connection}]
    \[\tikzfig{zero-double-connection-pf} \qedhere\]
\end{proof}

\begin{proof}[Proof of Lemma~\ref{lem:zero-gray-push}]
	\[\scalebox{0.9}{\tikzfig{zero-gray-push-proof}}\]
\end{proof}

\begin{proof}[Proof of Lemma~\ref{lem:zero-triangle-disconnect}]
	\[\tikzfig{zero-triangle-disconnect-proof} \qedhere\]
\end{proof}

\begin{lemma}\label{lem:addition-1-is-successor}
    Adding 1 is the same as applying the successor.
    \ctikzfig{addition-1-is-successor}
\end{lemma}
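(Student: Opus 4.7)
The plan is to specialise equation \eqref{eq:addition-gadget} by taking $b=1$, substitute the explicit form of $H(1)$ from \eqref{eq:unit} (which reduces it to a bare white spider/unit), and then show the resulting diagram reduces to the successor gadget defined in \eqref{eq:succesor} applied to $H(a)$.

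First I would expand $H(1)$ inside the addition gadget. Because $H(1)$ is simply a collection of white units by \eqref{eq:unit}, several of the connection wires in the addition gadget collapse: the spider rule \SpiderRule lets me absorb the unit into whatever Z-spider is adjacent, and any leftover arity-1 Z-spiders are removed by \IDRule. This should strip away roughly half of the addition gadget's internal structure, leaving a small configuration around the surviving $H(a)$ on the other input.

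The next step is the bookkeeping phase: I would use Lemmas~\ref{lem:h-z-commute}, \ref{lem:h-x-commute}, and \ref{lem:h-not-commute} to move any surviving Hadamards and NOTs into a canonical position, and apply the dedup lemma (Lemma~\ref{lem:dedup}) to collapse any doubled H-Z wires introduced by the substitution. If $H(0)$ subdiagrams appear anywhere (which they may, depending on how the addition gadget is built around $H(2)$), Lemmas~\ref{lem:zero-double-connection} and~\ref{lem:zero-triangle-disconnect} will disconnect them, and scalar factors can be cleaned up by the lemmas in Section~\ref{s:basic-derived} and Lemmas~\ref{lem:scalar-2}, \ref{lem:scalar-cancel-2}. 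At this point the remaining diagram should be the successor gadget of \eqref{eq:succesor} plugged onto $H(a)$, giving $H(a+1)$ as required.

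The main obstacle I expect is the scalar bookkeeping: the addition gadget presumably carries an $H(2)$-multiplier (cf.\ Lemma~\ref{lem:average-true-form}, referenced elsewhere), and making sure that when $b=1$ the surplus factors of $\tfrac12$ cancel exactly against stars produced by the splitting and dedup steps will require care. A secondary challenge is the combinatorics of the rewrite itself: the addition gadget has a symmetric two-sided structure, so it must be made manifest that plugging $H(1)$ into either side gives the successor, which may require one application of the symmetry of the underlying Z-spider connections before the specialisation step. Once both issues are handled, the identification with \eqref{eq:succesor} is syntactic.
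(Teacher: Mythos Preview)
Your approach is essentially the paper's: substitute $H(1)$ via \eqref{eq:unit}, collapse the resulting white unit through spider fusion, and simplify the leftover structure down to the successor gadget of \eqref{eq:succesor}. The paper carries this out as a short concrete diagrammatic calculation using precisely the lemmas set up immediately beforehand (Lemmas~\ref{lem:zero-double-connection}--\ref{lem:zero-triangle-disconnect}); your inventory of tools is right, though the parenthetical appeal to Lemma~\ref{lem:average-true-form} is a forward reference into Section~\ref{sec:alternative-intro-avg} and should be dropped.
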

\begin{proof}
    \[\tikzfig{addition-1-is-successor-pf} \qedhere\]
\end{proof}

\begin{lemma}\label{lem:addition-associative}
    Addition is associative.
    \ctikzfig{addition-associative}
\end{lemma}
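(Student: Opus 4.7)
The plan is to expand both sides of the claimed associativity using the addition gadget from \eqref{eq:addition-gadget}, and then show that both nested expansions can be rewritten into a common symmetric three-way gadget. Since the addition gadget is a specific diagrammatic construction built from H-boxes, Z-spiders, and the successor/triangle pieces of \eqref{eq:succesor}, the left-hand side $(a+b)+c$ expands to one nested configuration and the right-hand side $a+(b+c)$ expands to a mirrored nested configuration; the goal is to show both reduce to the same canonical sum diagram.

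First, I would unfold each side by substituting the definition of the gadget, so that each side becomes two gadgets composed. Then I would apply the meta-rule that only topology matters to redraw the diagrams in a standardised layout. At this point the relevant simplifications are: (i) fusing Z-spiders using \SpiderRule wherever two adjacent Z-spiders arise from the gadget construction, (ii) fusing H-boxes via \HFuseRule where appropriate, and (iii) using the commutation and copy lemmas from Section~\ref{s:basic-derived} (in particular Lemmas~\ref{lem:not-commute} and~\ref{lem:copy-x-h}, together with the bialgebra rules \StrongCompRule and \HCompRule) to move pieces past one another. I would also expect Lemma~\ref{lem:addition-1-is-successor} to provide a clean reformulation of the gadget in a form that makes the symmetry between the three summands manifest.

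The key step is to identify a symmetric normal form for the iterated addition gadget, namely a diagram in which all three summands $a$, $b$, $c$ are attached to a common structure in a symmetric way; once such a form exists, both $(a+b)+c$ and $a+(b+c)$ rewrite into it and equality follows. A natural candidate for this common form is a single shared Z-spider/H-box backbone carrying three labelled H-boxes in parallel, with the three $+1$-type gadget actions distributed symmetrically; Lemma~\ref{lem:addition-1-is-successor} combined with (\HFuseRule) suggests that iterating the successor construction does commute past itself up to topological rearrangement.

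The main obstacle will be the bookkeeping. Diagrammatically, the addition gadget is a compound of several generators, and nesting it creates a thicket of H-boxes, spiders, and NOT edges whose correct cancellation requires applying the basic derived rules in the right order. There is also a risk of scalar discrepancies: the addition gadget may introduce stars or factors of 2, and these must balance on both sides. I expect that the cleanest route is to peel off one $+1$ at a time by repeated use of Lemma~\ref{lem:addition-1-is-successor}, reducing associativity of general addition to the statement that applying two successors in either order agrees, which follows from the meta-rule together with \SpiderRule since the successor gadget is built from symmetric spiders.
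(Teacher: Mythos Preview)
Your core strategy---reduce both nested compositions of the addition gadget to a single diagram that is manifestly symmetric in the three input wires---is exactly what the paper does. The paper rewrites the left-hand side step by step (using spider fusion, H-box fusion, and the basic derived lemmas) until it reaches a diagram invariant under permuting the three inputs, then observes that the right-hand side reaches the same diagram by the mirror-image derivation.

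The one place your proposal goes astray is the final paragraph. The ``peel off one $+1$ at a time via Lemma~\ref{lem:addition-1-is-successor}'' idea does not apply here: the lemma is a statement about the addition gadget as a map with three \emph{free} input wires, not about specific integer labels. There is no integer to decompose into successors; the inputs are open wires that will only later be plugged with $H(a)$, $H(b)$, $H(c)$. So associativity must be established purely by diagrammatic manipulation of the gadget itself, which is your earlier plan. The successor lemma plays no role in this proof; it is used afterwards (in Lemma~\ref{lem:addition-natural}) once associativity is already available.

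Your worry about scalar bookkeeping is well-founded but manageable: the paper's derivation does pick up and cancel stars along the way, and the symmetric target diagram carries the same scalar on both sides automatically.
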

\begin{proof}
    We can reduce the LHS to a diagram that is symmetric in the three inputs:
    \[\scalebox{0.9}{\tikzfig{addition-associative-pf}}\]
    By symmetry we can also bring the RHS to this last diagram, and hence the LHS and RHS are equal.
\end{proof}

\begin{lemma}\label{lem:addition-natural}
    Let $a,b\in \N$. Then the following holds:
    \ctikzfig{H-box-addition}
\end{lemma}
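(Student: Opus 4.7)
The plan is to prove this by induction on $b$, using Lemmas~\ref{lem:addition-1-is-successor} and~\ref{lem:addition-associative} as the main ingredients, with the definition of natural-number labels from \eqref{eq:def-natural-number} giving the bridge between the arithmetic ``$+1$'' on labels and the successor gadget \eqref{eq:succesor} inside the diagram.

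For the base case $b=0$, I need to show that feeding $H(a)$ and $H(0)$ into the addition gadget of \eqref{eq:addition-gadget} yields $H(a)$. Unfolding the $0$-labelled H-box via \eqref{eq:zero-high-arity}, its appearance inside the gadget collapses via Lemma~\ref{lem:zero-triangle-disconnect} (and/or Lemmas~\ref{lem:zero-double-connection} and~\ref{lem:zero-gray-push}, depending on exactly how the H-boxes and spiders of the gadget are wired), letting the $b$-side of the gadget disconnect and leaving $H(a)$ (modulo trivial $\whitedot$ contractions via \IDRule and \SpiderRule). This is likely to be the single most delicate step, because one has to verify by a direct rewrite that the gadget really does behave as identity at $0$; I would work it out once as a standalone sublemma so the induction step is clean.

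For the inductive step, assume $H(a) + H(b) = H(a+b)$ (where ``$+$'' denotes the gadget applied to the two states). By the recursive definition of natural-number labels \eqref{eq:def-natural-number}, $H(b+1)$ is by definition the successor gadget applied to $H(b)$. By Lemma~\ref{lem:addition-1-is-successor}, the successor of $H(b)$ equals $H(b) + H(1)$. Substituting, the addition gadget applied to $H(a)$ and $H(b+1)$ equals the gadget applied to $H(a)$ and $H(b) + H(1)$. Apply Lemma~\ref{lem:addition-associative} to reparenthesise as $(H(a) + H(b)) + H(1)$; the inductive hypothesis turns this into $H(a+b) + H(1)$, and a final application of Lemma~\ref{lem:addition-1-is-successor} identifies this with the successor of $H(a+b)$, which is $H(a+b+1)$ by \eqref{eq:def-natural-number}. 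This closes the induction.

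The main obstacle is really only the base case; once the zero-labelled H-box is shown to act as an identity for the gadget (which is a purely combinatorial rewriting exercise using the zero-box manipulation lemmas already established in this section), the inductive step is a three-line chain: successor-as-$+1$, then associativity, then successor-as-$+1$ again. Note that this proof yields only the natural-number case; extending to all integers $a,b \in \mathbb{Z}$ requires separately handling negations via \eqref{eq:def-negative-numbers}, which presumably is done after this lemma.
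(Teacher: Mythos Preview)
Your proposal is correct and follows essentially the same approach as the paper: induction on $b$, with the base case $b=0$ handled by a direct simplification of the gadget (the paper does this in one short diagrammatic line using that $H(0)$ is the grey unit), and the inductive step driven by Lemma~\ref{lem:addition-associative} together with Lemma~\ref{lem:addition-1-is-successor}. The only cosmetic difference is in how the inductive step is parenthesised: the paper uses the symmetry of the gadget (via ``only topology matters'') to move the successor onto the $a$-side, applies the inductive hypothesis with $a+1$ in place of $a$, and then invokes commutativity of ordinary addition on $\mathbb N$; you instead associate to obtain $(H(a)+H(b))+H(1)$ and apply the hypothesis first---both routes are equally valid and equally short.
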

\begin{proof}
    We prove this by induction on $b$.
    For any $a$, the case $b=0$ is straightforward:
    \ctikzfig{H-box-add-0}
    Furthermore, for any $a$, Lemma~\ref{lem:addition-1-is-successor} shows the case $b=1$.

    Now, suppose there is some $b$ such that the desired result holds for any $a$, and consider $b+1$.
    Then
    \ctikzfig{H-box-add-proof}
    where the step marked (*) is by `only topology matters', and the step marked (**) is the inductive hypothesis.
    Thus, the result follows by commutativity of addition.
\end{proof}

\subsection{Proving addition for all integers}

To generalise addition to arbitrary integers, we need two more lemmas.
\begin{lemma}\label{lem:triangle-Z}~
  \ctikzfig{triangle-Z}
\end{lemma}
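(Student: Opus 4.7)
My plan is to prove this by fully unfolding both sides using the definitions of the triangle gadget from \eqref{eq:succesor} and the negate white spider from \eqref{eq:Z-triangle-dfn}, and then reducing to known commutation lemmas. This is a natural choice because at this point in the paper the triangle is still a shorthand for a specific Z-spider/H-box gadget rather than a primitive, so the most robust strategy is to work entirely at the level of the underlying diagram.

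Concretely, the first step is to expand the triangle on each side, producing a small diagram built from a Z-spider, an $H(2)$, a NOT-edge, and a Z (negate) generator. I would then push the Z through the connected Z-spider using Lemma~\ref{lem:z-commute}, so that the Z migrates toward the H-box. Once it is adjacent to the H-box, I would use Lemma~\ref{lem:h-z-commute} together with Lemma~\ref{lem:copy-znot-h} (the only place where an $H$-box actually interacts nontrivially with a negated state under copying) to rewrite the Z as a NOT-edge plus a scalar correction on the $H(2)$. The resulting diagram should, after reorganising with \SpiderRule and Lemma~\ref{lem:not-commute}, match the expanded form of the right-hand side of the lemma.

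The likely obstacle is not the diagrammatic topology but the bookkeeping of scalars. Each time Lemma~\ref{lem:copy-znot-h} is used, a star and an H(2) (or the equivalent $\tikzfig{star}$) appear, and the two sides of the target equation have to end up with exactly the same scalar contribution. I would therefore track the scalars carefully using Lemmas~\ref{lem:scalarcancelstars}, \ref{lem:scalar-2}, and \ref{lem:scalar-cancel-2}, collapsing any leftover $\tikzfig{star}\cdot H(2)$ pairs via Lemma~\ref{lem:scalar-cancel-2}. If the accounting does not immediately close, an alternative approach is to prove the identity in the reverse direction: start from the right-hand side, apply Lemma~\ref{lem:h-not-commute} to move the NOT off the Z-wire into the $H(2)$, and then reassemble a triangle on top.

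A shorter symmetry-based variant, which I would try first if the direct rewrite becomes messy, is to recognise that the desired identity is essentially the statement $Z \circ \triangle \circ Z = \triangle^{-1}$, and that both $\triangle$ and $\triangle^{-1}$ admit explicit ZH diagrams. In that case one can reduce the lemma to showing that the two ZH representations are equal under the already-proved basic rules, which is a finite computation involving only \StrongCompRule, \HCompRule, Lemma~\ref{lem:copy-znot-h}, and Lemma~\ref{lem:dedup}. Either route should close the proof; the dedup step from Lemma~\ref{lem:dedup} is what I expect to be the real workhorse, since everything else is essentially commutation and scalar hygiene.
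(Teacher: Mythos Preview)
Your primary strategy---unfold the successor gadget via \eqref{eq:succesor} and the negate via \eqref{eq:Z-triangle-dfn}, then push the $Z$ through using Lemma~\ref{lem:z-commute} and the interaction of the negate state with the H-box (Lemma~\ref{lem:copy-znot-h})---is essentially what the paper does. The paper's argument is a direct diagram chain along these lines, and the scalar bookkeeping closes without incident.

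Where the proposal falls short is that it is a plan rather than a proof: you sketch three alternative routes, hedge on whether the scalars will cancel, and never commit to a single chain of rewrites. In particular, your closing suggestion that Lemma~\ref{lem:dedup} might be ``the real workhorse'' is off the mark. That lemma is heavy machinery---its own proof requires repeated applications of Lemma~\ref{lem:splitting}---and it plays no role here. The identity follows from the basic commutation lemmas of Section~\ref{s:basic-derived} alone, and invoking \HCompRule or Lemma~\ref{lem:dedup} would be a significant detour. Likewise, the ``symmetry-based variant'' recasting the statement as $Z\circ\triangle\circ Z=\triangle^{-1}$ conflates this lemma with Lemma~\ref{lem:triangle-inverse}, which is a separate (and harder) result that \emph{uses} the present lemma. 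Drop the alternatives, execute your first route concretely, and you will have the paper's proof.
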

\begin{proof}~
  \ctikzfig{triangle-Z-proof-1}
  \[\tikzfig{triangle-Z-proof-2}\qedhere\]
\end{proof}

\begin{lemma}\label{lem:triangle-inverse}The successor has an inverse.
	\ctikzfig{triangle-inverse}
\end{lemma}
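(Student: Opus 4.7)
The approach is to exhibit an explicit candidate inverse triangle and verify that the composite with the original triangle reduces to a bare identity wire. At the matrix level the triangle is $\bigl(\begin{smallmatrix} 1 & 0 \\ 1 & 1 \end{smallmatrix}\bigr)$, whose inverse $\bigl(\begin{smallmatrix} 1 & 0 \\ -1 & 1 \end{smallmatrix}\bigr)$ equals $Z\,T\,Z$. This strongly suggests that the inverse triangle in the ZH-calculus is the original triangle with a negate-white spider placed on its input and on its output, equivalently the successor gadget of \eqref{eq:succesor} conjugated by the Z gate of \eqref{eq:Z-triangle-dfn}.

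My first step is to stack this candidate on top of the original triangle and unfold both gadgets according to \eqref{eq:succesor}, producing a concrete ZH-diagram with a handful of H-boxes, white spiders, and a pair of negate spiders in the middle. I would then apply Lemma~\ref{lem:triangle-Z} to push the internal Z gates across the triangle, after which pairs of negate spiders meet on common wires and cancel via Lemma~\ref{lem:znots-cancel}. Spider fusion \SpiderRule and \HFuseRule, the bialgebra rule \HCompRule, and the Hopf rule (Lemma~\ref{lem:hopf-rule}) should then collapse the internal network, leaving behind only the identity wire together with some scalar debris. That debris would be eliminated via the scalar-cancellation lemmas of Section~\ref{s:basic-derived}, in particular Lemmas~\ref{lem:scalarcancelstars}, \ref{lem:scalarcancelzx}, and \ref{lem:scalar-2}.

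The main obstacle is choreographing these local rewrites: the triangle is an internally nontrivial gadget containing both an H-box and several white spiders, so $T \circ T^{-1}$ has many generators that must be paired off in just the right order. Getting the scalar factors to balance so that the composite is exactly the identity wire, rather than the identity with a stray $\tikzfig{star}$ or $H(2)$ attached, requires careful bookkeeping and is where I expect most of the work to go. Once one direction of the inverse identity is established, the other composite follows either by the evident left-right symmetry of the candidate inverse under the meta-rule `only topology matters', or by repeating the same calculation with the roles of $T$ and $T^{-1}$ swapped.
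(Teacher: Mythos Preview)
Your candidate inverse $Z\,T\,Z$ is correct, and singling out Lemma~\ref{lem:triangle-Z} as the key ingredient matches the paper's approach. There is one small incoherence in your plan: once you have unfolded \emph{both} gadgets via \eqref{eq:succesor}, there is no intact triangle left on which to invoke Lemma~\ref{lem:triangle-Z}; you must either apply that lemma at the macro level \emph{before} unfolding, or commit to the low-level route and drop it. The paper's four-line derivation does the former, using Lemma~\ref{lem:triangle-Z} together with Lemma~\ref{lem:znots-cancel} so that most of the Hopf and bialgebra machinery you anticipate is not needed, though scalar bookkeeping of the kind you flag does appear.
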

\begin{proof}~
\ctikzfig{triangle-inverse-proof-1}
\ctikzfig{triangle-inverse-proof-2}
\ctikzfig{triangle-inverse-proof-3}
\[\tikzfig{triangle-inverse-proof-4} \qedhere \]
\end{proof}
This lemma actually only shows that the successor operation has a one-sided inverse. But by conjugating the top and bottom using a negate spider, we see that this is in fact a two-sided inverse.

\begin{proposition}\label{prop:addition}
	Let $a,b\in \mathbb{Z}$. Then the following holds:
	\ctikzfig{H-box-addition}
\end{proposition}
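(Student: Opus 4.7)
The plan is to extend Lemma~\ref{lem:addition-natural}, which proves the identity for $a, b \in \N$, to all integers by handling signs via the definition $H(-n) = Z \cdot H(n)$ from Eq.~\eqref{eq:def-negative-numbers} together with the invertibility of the successor from Lemma~\ref{lem:triangle-inverse}.

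I would begin by dispatching the case $a, b \geq 0$ with Lemma~\ref{lem:addition-natural}. The crux of the proof is to identify the operation of adding $H(-1)$ via the gadget with the inverse successor $T^{-1}$. By Lemma~\ref{lem:addition-1-is-successor}, adding $H(1)$ via the gadget equals the successor $T$, which is invertible by Lemma~\ref{lem:triangle-inverse}. To conclude that adding $H(-1)$ is $T^{-1}$, I would establish two auxiliary diagrammatic facts: (i) the addition gadget with $H(0)$ plugged into one input reduces to the identity wire on the other (a direct diagrammatic calculation using Eq.~\eqref{eq:zero-high-arity}), and (ii) adding $H(-1)$ to $H(1)$ via the gadget yields $H(0)$. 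Combining (i) and (ii) with associativity (Lemma~\ref{lem:addition-associative}), for any diagram $x$ the operation of first adding $H(-1)$ and then adding $H(1)$ returns $x$, so by uniqueness of inverses the operation of adding $H(-1)$ coincides with $T^{-1}$. Iterating yields that adding $H(-m)$ coincides with $T^{-m}$ for all $m \in \N$, with consistency against the definition $H(-m) = Z H(m)$ verified using Lemma~\ref{lem:triangle-Z}.

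For arbitrary $a, b \in \mathbb{Z}$, I would then decompose $H(a)$ and $H(b)$ as suitable compositions of $T$'s and $T^{-1}$'s applied to $H(0)$ and use associativity to rearrange the addition-gadget application into $T^{a+b} H(0) = H(a+b)$, closing the argument.

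The main obstacle is fact (ii): the diagrammatic identity that the gadget applied to $H(-1)$ and $H(1)$ yields $H(0)$. I would prove it by unfolding $H(-1) = Z \cdot H(1)$ via Eq.~\eqref{eq:def-negative-numbers}, pushing the resulting $Z$ through the triangle/successor components of the addition gadget using Lemma~\ref{lem:triangle-Z}, and then collapsing the configuration to $H(0)$ using the zero-handling lemmas (Lemmas~\ref{lem:zero-double-connection}, \ref{lem:zero-gray-push}, \ref{lem:zero-triangle-disconnect}) together with Lemma~\ref{lem:dedup}. This delicate diagrammatic collapse is the technically hardest step.
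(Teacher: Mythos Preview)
Your approach is valid but takes a more circuitous route than the paper's. The paper splits into three cases: both non-negative (Lemma~\ref{lem:addition-natural}); both non-positive, handled by pushing the two $Z$'s from the negation definition~\eqref{eq:def-negative-numbers} simultaneously through the addition gadget (via Lemma~\ref{lem:z-commute}) to reduce to the natural-number case; and mixed signs, handled by a direct induction on the magnitude $b$ of the negative argument using Lemma~\ref{lem:triangle-inverse} in the inductive step. In particular, the paper never isolates your fact~(ii) as a standalone lemma: the ``delicate diagrammatic collapse'' you anticipate for $\mathrm{add}(H(-1),H(1)) = H(0)$ is bypassed entirely.

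Your strategy of identifying $A_{-1} = T^{-1}$ via facts~(i), (ii), associativity, and uniqueness of inverses is sound, and the subsequent reduction of arbitrary integers to $T^{\pm 1}$-words gives a clean monoid-theoretic picture. Its cost is that fact~(ii) and the consistency check $Z \cdot T^{n} \cdot H(0) = T^{-n} \cdot H(0)$ each require their own diagrammatic arguments (the latter via Lemma~\ref{lem:triangle-Z}, as you note), whereas the paper's case split avoids both. What your approach buys is uniformity: once $A_{-1} = T^{-1}$ is in hand, all sign combinations fall to the same associativity-and-cancellation mechanism, rather than the paper's separate treatment of the both-negative and mixed-sign cases.
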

\begin{proof}
	If $a,b\geq 0$ this is just Lemma~\ref{lem:addition-natural}.
	If both $a,b\leq 0$ we simply do:
	\ctikzfig{H-box-addition-negative}
	Now suppose $a>0$ and $b\leq 0$, i.e.\ $-b\geq 0$. We prove by induction on $-b\in\N$. If $-b=0$ this is trivial. Suppose it holds for $-b$. In the calculation below we denote the induction step by (*):
	\ctikzfig{H-box-addition-negative2}
\end{proof}

\subsection{Proving multiplication}

In this section we will show that the multiply rule introduced in Section~\ref{sec:mult-rule} holds for all integers:
\begin{equation}\label{eq:mult-gadget}
    \tikzfig{multiply-rule-phased}
\end{equation}

It is in any case clear that this operation is commutative, associative and the unit is the $1$-labelled H-box.
To prove that this indeed acts as multiplication it then suffices to prove it distributes over addition, which requires a bit of set-up to prove.

\begin{multicols}{3}
\begin{lemma}\label{lem:zero-projector}~
    \ctikzfig{zero-projector}
\end{lemma}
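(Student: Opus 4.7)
The lemma is named \emph{zero projector}, so I expect the statement to assert an idempotence-type property of the $0$-labelled H-box, diagrammatically expressing the Boolean identity $0 \cdot 0 = 0$ (or the fact that AND composed with itself on shared inputs is idempotent). This kind of lemma is typically needed immediately before showing that the multiplication gadget distributes over addition, because the distributive law $a\cdot(b+c) = a\cdot b + a\cdot c$ forces $0\cdot x = 0$ to be derivable.

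My plan is to first unfold the $0$-labelled H-box according to Eq.~\eqref{eq:zero-high-arity}, which rewrites it as a plain H-box connected through a grey spider structure. With the $0$ replaced by its defining gadget, I would then push the grey spiders across the remaining H-boxes using the bialgebra rule \HCompRule, reducing the diagram to one in which several H-boxes are connected to a common set of white spiders. At that point I would fuse H-boxes via \HFuseRule, and appeal to Lemma~\ref{lem:AND-1}, which is precisely designed to handle an integer-labelled H-box sitting on a superset of the wires of a $0$-labelled H-box; this should collapse the duplicated $0$-box down to a single one.

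For the scalar side, I expect to need Lemmas~\ref{lem:scalar-2} and~\ref{lem:scalar-cancel-2}, together with the star cancellations from Lemma~\ref{lem:scalarcancelstars}, to account for the global factor. A further auxiliary step likely to be useful is Lemma~\ref{lem:zero-triangle-disconnect} (or~\ref{lem:zero-double-connection}), which lets me detach redundant wires carrying a $0$-box through a successor gadget.

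The main obstacle will be scalar bookkeeping rather than structural rewriting: each application of \HCompRule or of the splitting technology from Lemma~\ref{lem:splitting} introduces a $\tikzfig{star}$, and the stars on both sides of the equation must match exactly after all rewrites, since we are not yet in a position to invoke the reduction of Proposition~\ref{prop:normal-to-reduced} (the multiply rule is not yet proved in full generality at this point in the paper). I therefore expect to set the rewriting up so that every splitting on the LHS is mirrored by a corresponding manoeuvre on the RHS, keeping the star counts synchronised from the start rather than reconciling them at the end.
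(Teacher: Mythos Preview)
Your plan reaches for the right auxiliary result (Lemma~\ref{lem:AND-1}) but misses the tool that makes the paper's proof short: the intro rule $I_0$, already available from Lemma~\ref{lem:intro-phasefree}. The paper starts from the RHS, applies $I_0$ to the bottom $H(0)$-box---this fans it out into several $H(0)$ copies connected to the white spiders via all NOT/plain patterns---and then uses Lemma~\ref{lem:AND-1} three times (once against the top binary $H(0)$, then against each side) to absorb the redundant copies, landing on the LHS. There is no unfolding via Eq.~\eqref{eq:zero-high-arity}, no bialgebra push via \HCompRule, and no use of Lemma~\ref{lem:splitting}.

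Consequently, the scalar bookkeeping you flag as the main obstacle never arises: both $I_0$ and Lemma~\ref{lem:AND-1} are scalar-clean in this application. Your route---unfolding $H(0)$ and rebuilding the structure via bialgebra---would essentially re-derive a special case of the intro rule inline; it may be salvageable, but the proposal does not actually carry it through to a concrete rewrite sequence, and the star-counting worries are a symptom of not using the already-packaged $I_0$.
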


\begin{lemma}\label{lem:zero-becomes-two}~
    \ctikzfig{zero-becomes-two}
\end{lemma}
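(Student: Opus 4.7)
The plan is to reduce both sides to a common form by unfolding the shorthand notation for the labelled H-boxes into primitive generators. First I would expand the $0$-labelled H-box using its definition in \eqref{eq:zero-high-arity}, and the $2$-labelled H-box using \eqref{eq:H-box-2-def}. After this expansion, only Z-spiders, H-boxes, NOTs and stars remain, so the problem reduces to a finite diagrammatic rewrite between two explicitly drawn primitive diagrams.

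From there I would massage the two sides together using the basic commutation and cancellation lemmas from Section~\ref{s:basic-derived}: Lemmas~\ref{lem:h-z-commute}--\ref{lem:h-not-commute} for moving Hadamards and NOTs past each other, Lemma~\ref{lem:not-commute} for copying NOTs through white spiders, and the bialgebra rules \StrongCompRule and \HCompRule for rearranging the internal structure. The just-proved Lemma~\ref{lem:zero-projector} is very likely to appear at an intermediate step where a $0$-labelled H-box meets a spider and collapses to a simpler gadget. Small-argument instances of intro and multiply are already available via Lemma~\ref{lem:intro-phasefree} (for $a\in\{-1,0,1,2\}$) and Lemma~\ref{lem:mult-simple-values} (for $b\in\{0,1,-1\}$), and these should suffice for any labelled H-boxes with small labels that arise during the derivation.

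I expect the main obstacle to be scalar bookkeeping rather than the structural rewriting. The definition of $H(0)$ via \eqref{eq:zero-high-arity} introduces an extra $\dstar$, and the definition of $H(2)$ via \eqref{eq:H-box-2-def} likewise comes packaged with specific scalar factors. Aligning the two so that no leftover star or stray white spider remains will require careful use of the scalar-cancellation lemmas, in particular Lemmas~\ref{lem:scalarcancelstars}, \ref{lem:scalar-2}, and \ref{lem:scalar-cancel-2}. A secondary difficulty is choosing the right ordering of bialgebra and commutation moves; if the forward derivation is not obvious, working from both ends toward a common symmetric intermediate form is likely to be more efficient than pushing in a single direction.
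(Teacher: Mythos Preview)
Your plan is plausible as an outline but misses the structural shortcut the paper exploits, and as written it is too vague to count as a proof: ``massage the two sides together'' with bialgebra and commutation moves is a hope, not a derivation, and on diagrams of this size such brute-force rewriting tends to blow up without a guiding invariant.

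The paper takes a different and much more organised route. Rather than unfolding $H(0)$ and $H(2)$ into primitives, it first performs a short rewrite on the left-hand side so that, up to a bent wire and some trivial $H(1)$-boxes, the diagram has the shape of an $H(0)$-box applied to one output of a \emph{normal-form} diagram whose labels all lie in $\{0,1\}$. Since $I_0$, $I_1$ and the relevant $A_{x,y}$ are already available (Lemmas~\ref{lem:intro-phasefree} and~\ref{lem:average-phasefree}), a single application of Lemma~\ref{lem:zero-box-nf} absorbs the $H(0)$ into the normal form. The new labels are then read off from a $2\times 4$ matrix multiplication, and the resulting normal form simplifies directly to the right-hand side. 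The scalar bookkeeping you flag as the main obstacle is handled automatically by Lemma~\ref{lem:zero-box-nf}.

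Note in particular that the lemma you single out as ``very likely to appear'', Lemma~\ref{lem:zero-projector}, is not used in the paper's proof at all; the key tool is Lemma~\ref{lem:zero-box-nf} from the normal-form section. If you want to salvage your low-level approach you would need to exhibit an explicit rewrite sequence, but the normal-form route is both shorter and conceptually clearer.
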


\begin{lemma}\label{lem:zero-push-addition}~
    \ctikzfig{zero-push-addition}
\end{lemma}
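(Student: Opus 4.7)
The plan is to unfold both sides of the equation using the definition of the addition gadget from Proposition~\ref{prop:addition} (whose building blocks are a multi-input H-box, a successor, white spiders and Z generators) together with the explicit form of the 0-labelled H-box from equation \eqref{eq:zero-high-arity}. Once everything is opened up into ``basic'' generators, the goal is to show both diagrams collapse to the same shape by means of local rewrites.

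First I would replace the 0-labelled H-box by its defining form so that we can treat it as a white spider connected to an H-box with a \whiteunit. This puts the left-hand side into a configuration where Lemma~\ref{lem:zero-projector} applies: the 0-H-box behaves like a projector on the wires it touches, which lets us eliminate half of the branches coming out of the addition gadget. Dually, on the other side of the equation, I would apply Lemma~\ref{lem:zero-becomes-two} at the point where a 0-label meets the successor/triangle substructure inside the addition gadget; this is the crucial transformation that lets the label ``0'' be absorbed into the gadget rather than staying stuck as an external H-box.

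After these two rewrites the two sides should differ only by bookkeeping: a handful of fusions by \SpiderRule and \HFuseRule, a commutation of NOT's past white spiders via Lemma~\ref{lem:not-commute}/Lemma~\ref{lem:iota-copy}, and possibly a use of one of the bialgebra rules \StrongCompRule{} or \HCompRule{} to rearrange the spider-H-box web. Any leftover $H(1)$ factors are removed using \eqref{eq:unit}, and any $H(-1)$ factors are handled by \ZDef together with Lemma~\ref{lem:znots-cancel}. Since addition has already been shown associative and commutative (Lemma~\ref{lem:addition-associative}, Proposition~\ref{prop:addition}), I can freely reorder the summands in the final addition gadget to match the target shape.

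The main obstacle I expect is scalar bookkeeping. Applications of Lemma~\ref{lem:zero-projector} and Lemma~\ref{lem:zero-becomes-two}, together with the internal $H(2)$'s living inside each addition gadget, tend to produce stray factors of \tikzfig{star} or spurious \whitedot's that need to be absorbed using Lemma~\ref{lem:scalarcancelstars}, Lemma~\ref{lem:scalar-2} and Lemma~\ref{lem:scalar-cancel-2}; the earlier footnote attached to Lemma~\ref{lem:big-disconnect} makes clear that getting the star counts correct is the subtle point in this style of argument. The combinatorial rewriting itself is routine, but ensuring the stars on both sides match exactly is where care is required.
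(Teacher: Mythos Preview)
Your plan correctly identifies the two key ingredients: Lemmas~\ref{lem:zero-projector} and~\ref{lem:zero-becomes-two} are exactly the results the paper sets up immediately before this lemma, and they do the real work in the paper's proof as well. In that sense your high-level route matches the paper's.

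That said, what you have written is a strategy sketch rather than a proof, and it over-anticipates the difficulty. The paper's argument is a single short chain of diagrammatic equalities: one unfolds the addition gadget once, applies Lemma~\ref{lem:zero-projector} to collapse the relevant branch, then uses Lemma~\ref{lem:zero-becomes-two} to absorb the zero into the successor structure, with only elementary spider/H-box fusion afterwards. There is no need to invoke Proposition~\ref{prop:addition} or Lemma~\ref{lem:addition-associative} to ``reorder summands'' --- the statement concerns a single $H(0)$ interacting with the gadget, not a multi-term sum --- and the scalar bookkeeping you flag as the main obstacle simply does not arise here (the stars introduced and consumed balance locally without any of the subtlety present in Lemma~\ref{lem:big-disconnect}). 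If you actually carry out the rewrites you outline, you will find the derivation collapses much faster than your plan suggests; the extra machinery you list (Lemma~\ref{lem:iota-copy}, \eqref{eq:unit}, Lemma~\ref{lem:znots-cancel}, reordering via associativity) is not needed.
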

\end{multicols}

\begin{proof}[Proof of Lemma~\ref{lem:zero-projector}]
 Starting from the RHS, we apply \introzero to the bottom $H(0)$-box.
 The newly-introduced $H(0)$-boxes can then be removed using Lemma~\ref{lem:AND-1}: first with the top binary $H(0)$-box, then with the one on the right, and finally with the one on the left.
 \ctikzfig{zero-projector-proof}
 After replacing the red dashed lines with NOT gates and applying \idrule, this is equal to the LHS.

\end{proof}
\begin{proof}[Proof of Lemma~\ref{lem:zero-becomes-two}]
 First:
 \ctikzfig{zero-becomes-two-new-pf1}
 Note that -- up to one bent wire and the introduction of trivial $H(1)$-boxes -- the resulting diagram consists of an $H(0)$-box on the left output, applied to a normal form diagram.
 The H-box parameters in the normal-form diagram are
 \[
  a_{b\vec{c}} = \begin{cases}0&\text{if } b=0,\,\vec{c}=11 \text{ or } b=1,\,\vec{c}=10 \\ 1 &\text{otherwise.}\end{cases}
 \]
 By Lemma~\ref{lem:intro-phasefree}, we have both $I_0$ and $I_1$, and by Lemma~\ref{lem:average-phasefree}, we have $A_{x,y}$ for any $x,y\in\{0,1\}$.
 Thus we may apply Lemma~\ref{lem:zero-box-nf} (where we have left out the bent wire for simplicity):
 \ctikzfig{zero-becomes-two-new-pf2}
 For the step denoted (*), note that
 \[
  \begin{pmatrix}1&1\\1&0\end{pmatrix} \begin{pmatrix}1&1&1&0\\1&1&0&1\end{pmatrix}
  = \begin{pmatrix}2&2&1&1\\1&1&1&0\end{pmatrix}
 \]
 so
 \[
  a_{0\vec{c}}+(1-b)a_{1\vec{c}} = \begin{cases} 2 &\text{if } b=0 \text{ and } \vec{c}\in\{00,01\} \\ 0 &\text{if } b=1, \vec{c}=11 \\ 1 &\text{otherwise,}\end{cases}
 \]
 and that $H(1)$-boxes can be left out of the diagram by \eqref{eq:unit} and Lemma~\ref{lem:white-not-cancel}.
 Hence, re-introducing the bent wire, we find:
 \[
  \tikzfig{zero-becomes-two-new-pf3} \qedhere
 \]

\end{proof}
\begin{proof}[Proof of Lemma~\ref{lem:zero-push-addition}]
    \[\tikzfig{zero-push-addition-pf-1} \qedhere\]
\end{proof}

\begin{lemma}\label{lem:n-copy}
    For any natural number $n$:
    \ctikzfig{n-copy-lemma}
\end{lemma}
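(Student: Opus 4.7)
The plan is to prove this by induction on $n\in\mathbb{N}$, where the inductive step leverages the successor definition~\eqref{eq:succesor} together with Lemma~\ref{lem:addition-1-is-successor} to convert `$H(n+1)$ connected somewhere' into `$H(n)$ plus $H(1)$ connected via the addition gadget'~\eqref{eq:addition-gadget}.

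For the base cases, $n=0$ should reduce via Lemma~\ref{lem:zero-projector} (and possibly Lemma~\ref{lem:zero-push-addition}) to identify the left-hand side as the degenerate empty-sum version of the right-hand side; I expect this to produce a diagram where all the interesting structure collapses to $H(0)$-boxes whose behaviour matches the empty copy. The case $n=1$ should be immediate from~\eqref{eq:unit} and the trivial identification of a single copy.

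For the inductive step, assume the result holds for $n$. Rewrite $H(n+1)$ using Lemma~\ref{lem:addition-1-is-successor} as the result of the addition gadget applied to $H(n)$ and $H(1)$. Then use associativity of the addition gadget (Lemma~\ref{lem:addition-associative}) and possibly Lemma~\ref{lem:zero-push-addition} to push the copy-producing structure through the addition, applying the inductive hypothesis to the $H(n)$-branch to extract $n$ copies and combining with the additional $H(1)$-branch to obtain the final copy, yielding $n+1$ copies overall.

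The main obstacle will be diagram bookkeeping at the inductive step: showing that `$n$ copies plus one more copy' can be reorganised into the canonical `$n+1$ copies' shape that the right-hand side of the lemma prescribes. This likely requires careful use of the associativity lemma together with \SpiderRule to merge the spider structure of the $n$-copy output with the extra copy, and some care with the scalar bookkeeping coming from the successor gadget. The $n=0$ base case is also a potential subtle point, since the right-hand side becomes an empty product/sum and one has to verify that $H(0)$ on the left indeed reduces to exactly the correct empty-copy scalar via Lemma~\ref{lem:zero-projector} rather than to an unintended diagram.
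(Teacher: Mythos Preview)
Your inductive strategy matches the paper's: both prove the result by induction on $n$, with the base case $n=0$ handled directly (the paper simply calls it ``straightforward'') and the inductive step unfolding $H(n+1)$ via the successor gadget~\eqref{eq:succesor}.

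The difference is in how the inductive step is executed. You propose to first convert the successor into the addition gadget using Lemma~\ref{lem:addition-1-is-successor} and then invoke associativity of addition (Lemma~\ref{lem:addition-associative}) to reorganise the pieces. The paper instead works with the successor gadget directly and uses Lemma~\ref{lem:zero-push-addition} (proved immediately before, and evidently introduced precisely for this purpose) as the key rewrite to push the copy structure through. Your route should also work, but it introduces the addition gadget only to immediately dismantle it again, whereas the paper's route stays closer to the successor definition throughout. Your mention of ``possibly Lemma~\ref{lem:zero-push-addition}'' is in fact the central ingredient, not an optional extra; the associativity lemma, by contrast, is not needed. The separate base case $n=1$ you propose is also unnecessary once the $n=0$ case and the inductive step are in place.
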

\begin{proof}
    For $n=0$ this is straightforward. We prove by induction, so assume it holds for some $n$. In the calculation below we denote the induction step by (*).
    \[\scalebox{0.95}{\tikzfig{n-copy-lemma-proof}}\]

\end{proof}

\begin{lemma}\label{lem:distributivity-natural}
    The multiplication gadget distributes over addition, i.e.\ for any natural numbers $n\in \N$:
    \ctikzfig{mult-distributes-addition}
\end{lemma}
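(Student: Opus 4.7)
The plan is to proceed by induction on $n\in\N$. In the base case $n=0$, the claim arithmetically reads $0\cdot(a+b) = 0\cdot a + 0\cdot b$, and I expect both sides to collapse after pushing the $H(0)$-label through the multiplication and addition gadgets. Concretely, I would apply Lemma~\ref{lem:zero-push-addition} to the left-hand side to move the $H(0)$ across the addition gadget, then use Lemma~\ref{lem:zero-projector} (and, if needed, Lemma~\ref{lem:zero-becomes-two}) to collapse the resulting $H(0)$-labelled H-boxes. For the right-hand side, each summand $H(0)\cdot H(a)$ and $H(0)\cdot H(b)$ should individually simplify via the same lemmas; the addition of the two simplified pieces should then match the LHS, perhaps after a final appeal to Proposition~\ref{prop:addition}.

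For the inductive step, assume the identity holds for some $n\in\N$ and consider $n+1$. I would rewrite $H(n+1)$ via the successor construction of equation~\eqref{eq:succesor}, and then invoke Lemma~\ref{lem:addition-1-is-successor}, which says the successor is precisely addition by $H(1)$. The strategy is to follow the arithmetic blueprint $(n+1)(a+b) = n(a+b) + (a+b)$, apply the inductive hypothesis to rewrite the first summand as $na + nb$, and then re-group using associativity (Lemma~\ref{lem:addition-associative}) and commutativity of the addition gadget into $(na+a) + (nb+b) = (n+1)a + (n+1)b$. Each arithmetic identity here corresponds to a concrete diagrammatic rewrite once $H(n+1)$ has been decomposed into $H(n)$ and an addition of $H(1)$.

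The main obstacle will be translating these clean arithmetic manipulations into diagrammatic rewrites without the two gadgets getting tangled. Specifically, peeling off a single $H(1)$ from $H(n+1)$ \emph{inside} a multiplication gadget, so as to expose $n\cdot(a+b)$ to the inductive hypothesis, requires carefully manoeuvring the successor through the Z-spider that implements multiplication. I anticipate needing auxiliary cancellation lemmas from Section~\ref{s:basic-derived} and Section~\ref{sec:labelledHboxes}, together with Lemma~\ref{lem:n-copy}, which supplies the copy-like behaviour of $H(n)$ through a Z-spider and is the natural diagrammatic counterpart of the fact that multiplication by $n$ distributes across a sum formed by spider-fusion.
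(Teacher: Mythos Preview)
Your proposal takes a different route from the paper, and the detour hides the real content of the argument.

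The paper does \emph{not} induct on $n$ here. All of the induction is already packaged into Lemma~\ref{lem:n-copy}: that lemma establishes, for every natural number $n$, the copy-like behaviour of the state $H(n)$ through the grey spider appearing in the addition gadget. With that in hand, the distributivity proof is a single direct derivation: push $H(n)$ through the addition gadget using Lemma~\ref{lem:n-copy}, and the two resulting copies of $H(n)$ land exactly on the two multiplication Z-spiders, yielding the right-hand side. No case split, no induction, no base case.

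Your inductive plan is not wrong in spirit, but the step you flag as ``the main obstacle'' --- peeling the successor off $H(n{+}1)$ \emph{inside} the multiplication gadget to expose $(n{+}1)(a{+}b) = n(a{+}b) + (a{+}b)$ --- is essentially the entire lemma in a special case. None of the cited results (Lemma~\ref{lem:addition-1-is-successor}, Lemma~\ref{lem:addition-associative}, Proposition~\ref{prop:addition}) tells you how the successor gadget commutes past the multiplication Z-spider; that interaction is precisely what the chain Lemma~\ref{lem:zero-push-addition} $\to$ Lemma~\ref{lem:n-copy} is built to handle. You mention Lemma~\ref{lem:n-copy} at the end as an auxiliary, but it is the main tool: once you invoke it, the induction scaffolding becomes redundant and you are really carrying out the paper's direct proof. (Minor point: $H(n)$ copies through the \emph{grey} spider in the addition gadget, not through the Z-spider of the multiplication; the Z-spider side is the trivial part.) Your base case $n=0$ is also over-engineered: multiplying by $H(0)$ collapses both sides immediately via $M_{0,a}$ (Lemma~\ref{lem:mult-simple-values}) and the $b=0$ case of Lemma~\ref{lem:addition-natural}, without needing Lemmas~\ref{lem:zero-projector} or~\ref{lem:zero-becomes-two}.
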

\begin{proof}
	\[\tikzfig{mult-distributes-addition-proof}\qedhere\]
\end{proof}

\begin{proposition}\label{prop:mult-rule-rational}
    The multiply rule holds for any integers $a$ and $b$:
    \ctikzfig{multiply-rule-bb}
\end{proposition}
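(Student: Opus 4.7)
The plan is to bootstrap the multiplication rule from the simple cases in Lemma~\ref{lem:mult-simple-values} and the addition identity in Proposition~\ref{prop:addition}, using distributivity (Lemma~\ref{lem:distributivity-natural}). I would proceed in two stages: first establishing $M_{a,b}$ for $b \in \N$ and arbitrary $a \in \mathbb{Z}$ by induction, and then extending to negative $b$ using the Z-gate definition of negative labels from \eqref{eq:def-negative-numbers}.

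For the first stage, I induct on $b \in \N$. The base cases $b = 0$ and $b = 1$ are given directly by Lemma~\ref{lem:mult-simple-values}. For the inductive step, I apply Proposition~\ref{prop:addition} in reverse to rewrite $H(b+1)$ as the diagrammatic sum of $H(b)$ and $H(1)$ sitting at the appropriate slot of the multiplication gadget. Lemma~\ref{lem:distributivity-natural} then converts the gadget applied to $H(a)$ and $H(b) + H(1)$ into the addition of the gadgets applied to the pairs $(H(a), H(b))$ and $(H(a), H(1))$; the inductive hypothesis and the $b=1$ base case rewrite this to $H(ab) + H(a)$, and a final application of Proposition~\ref{prop:addition} collapses it to $H(ab + a) = H(a(b+1))$, closing the induction.

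For the second stage, I use $H(-b) = Z \circ H(b)$ from \eqref{eq:def-negative-numbers} together with $M_{-1,a}$ from Lemma~\ref{lem:mult-simple-values} (which tells us that inserting an $H(-1)$ into the gadget negates the label). Since the multiplication gadget is built by fusing the two H-boxes onto a common white spider, \SpiderRule gives an associative three-box version of it, which is essentially the bang-boxed form in Lemma~\ref{prop:multiply-bb}. Substituting $H(-b) = H(-1) \cdot H(b)$ inside the gadget and re-bracketing yields $H(a) \cdot H(-b) = (H(a) \cdot H(-1)) \cdot H(b) = H(-a) \cdot H(b)$, and the first stage applied to the pair $(-a, b)$ with $b \in \N$ delivers $H(-ab) = H(a \cdot (-b))$. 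Symmetry of the gadget in its two H-box slots handles the case $a < 0$, $b \in \N$, and iterating once more covers $a, b$ both negative.

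The main obstacle will be the re-bracketing in stage two: strictly speaking, $M_{a,b}$ as stated combines exactly two H-boxes, so to make sense of $(H(a) \cdot H(-1)) \cdot H(b)$ I need to verify diagrammatically that fusing three labelled H-boxes onto a common white spider agrees with either nested two-box fusion. This should follow from \SpiderRule together with the observation that feeding an $H(-1)$ through one of the gadget's slots produces exactly the Z-gate by which $H(-c)$ is defined, so no actual new content is needed beyond careful diagram surgery. Once this bookkeeping is handled, the argument reduces to a clean combination of induction, the integer addition identity from Proposition~\ref{prop:addition}, and the previously-established small cases.
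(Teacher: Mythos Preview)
Your approach is essentially the paper's: induct on $b\in\N$ using addition (Proposition~\ref{prop:addition}) and distributivity (Lemma~\ref{lem:distributivity-natural}) for the step, then extend to negatives via the Z-gate definition~\eqref{eq:def-negative-numbers}. Two small remarks. First, in stage one you claim arbitrary $a\in\mathbb{Z}$, but Lemma~\ref{lem:distributivity-natural} is only stated for natural $n$ in the copied slot (its proof goes through Lemma~\ref{lem:n-copy}); the paper accordingly restricts stage one to $a,b\in\N$, and so should you---this costs nothing since stage two handles signs anyway. Second, your stage two is more roundabout than necessary: rather than re-bracketing a three-box gadget via $M_{-1,a}$, the paper simply observes that the Z-gate on $H(-a)$ sits between the H-box and the white spider and can be commuted straight through to the output (essentially Lemma~\ref{lem:z-commute}), so $H(-a)\cdot H(b)$ reduces directly to $Z\circ(H(a)\cdot H(b))=H(-ab)$; two such Z's cancel by Lemma~\ref{lem:znots-cancel}. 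Your version works but the direct commutation avoids the associativity bookkeeping you flagged as the main obstacle.
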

\begin{proof}
	By Lemma~\ref{prop:multiply-bb} it suffices to show the non--!-boxed version, i.e.\ \eqref{eq:mult-def}.

    First we prove \eqref{eq:mult-def} for $a,b\in \N$.
    The proof is by induction on $b$ for any $a\in\N$.
    The base cases of $b=0$ and $b=1$ have already been shown in Lemma~\ref{lem:mult-simple-values}.

    For the induction step, assume there exists some $b$ such that \eqref{eq:mult-def} holds for any $a$. Then:
    \ctikzfig{mult-rule-inductive}
    where the step labelled (*) uses the base case $b=1$ and the inductive hypothesis.

    Now for negative numbers, we have:
    \ctikzfig{mult-rule-neg}
    and similarly for $b$, so the multiplication reduces to that of non-negative numbers.
    If there are two copies of \whitephase{\neg}, they cancel by Lemma~\ref{lem:znots-cancel}.
\end{proof}

\section{Completeness}\label{sec:completeness}

We write $\zh \vdash D_1 = D_2$ when $D_1$ and $D_2$ can be proven to be equal using the rules of the ZH-calculus. The ZH-calculus is \emph{complete} when $\intf{D_1}=\intf{D_2}$ implies that ${\zh \vdash D_1=D_2}$.

In Proposition~\ref{prop:completeness-conditional}, a `conditional' completeness result was proven, that shows that if the ZH-calculus can prove the average, intro and multiply rule for all integers, then the ZH-calculus is complete. Proposition~\ref{prop:mult-rule-rational} shows that the multiply rule holds for all integers, so that it remains to prove the intro and average rule. That is what we will do in this section.

First, we will establish the average rule.

\begin{lemma}\label{lem:grey-dot-nf}
	The ternary X-spider can be brought to normal form.
\end{lemma}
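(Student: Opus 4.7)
My plan is to exploit the definition of the grey spider in \GreyDef, which expresses the ternary X-spider as a ternary Z-spider with an H-box on each of its three legs. By map-state duality and the meta-rule ``only topology matters,'' we may as well regard all three legs as outputs, so the task reduces to showing that a ternary Z-spider with three H-boxes attached to its outputs can be brought to normal form.

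The first step is to apply Lemma~\ref{lem:Z-spider-nf} to put the bare ternary Z-spider into normal form. This yields a normal-form diagram with labels $a_{000}=a_{111}=1$ and $a_{\vec b}=0$ for the other six bitstrings. From here, I would absorb the three H-boxes into the normal form one leg at a time by three successive applications of Lemma~\ref{lem:Hadamard-nf} (permuting which leg is ``the'' Hadamarded one is free by topology). Each application requires $I_a$ for every input label $a$ together with $A_{a,b}$ and $A_{a,-b}$ for each pair $(a_{0\vec c}, a_{1\vec c})$ from the current labelling.

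The main (mild) obstacle is the bookkeeping: one must verify that the intermediate labels never escape the ranges where the needed $I$ and $A$ rules have already been proven. A direct computation of the Hadamard transform on the eight labels shows that after the first H-box the labels lie in $\{-1,0,1\}$, after the second in $\{-1,1\}$, and after the third in $\{0,2\}$. In particular, every input label fed into Lemma~\ref{lem:Hadamard-nf} lies in $\{-1,0,1\}$, so the required cases of $I_a$ are covered by Lemma~\ref{lem:intro-phasefree} and the required pairs $A_{a,b}$, $A_{a,-b}$ are covered by Lemma~\ref{lem:average-phasefree}; crucially, the awkward value $I_{-2}$ is never invoked.

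After the third application, the diagram is already in the form \eqref{eq:normal-form}, so the lemma is proved. (If desired, one can then use Proposition~\ref{prop:normal-to-reduced} together with $M_{2,a}$ for $a\in\{0,2\}$ from Lemma~\ref{lem:mult-simple-values} to reach the unique reduced normal form, in which the labels are $\{0,1\}$ on the even-parity bitstrings and $k$ is adjusted accordingly, but this is not needed for the statement as given.)
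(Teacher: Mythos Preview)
Your argument is correct, but it takes a substantially different route from the paper's own proof. The paper proceeds directly: it unfolds the X-spider via \GreyDef into a single white spider with three Hadamards, and then applies Lemma~\ref{lem:splitting} three times to break that white spider apart. After inserting trivial $H(1)$ boxes, the result is already in normal form. No appeal to $I_a$, $M_{a,b}$, or $A_{a,b}$ is needed at all; the only nontrivial ingredient is the splitting lemma, which rests on \OrthoRule.

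Your approach instead routes through the heavier normal-form machinery: first Lemma~\ref{lem:Z-spider-nf} (which already implicitly invokes Propositions~\ref{prop:extension}, \ref{prop:convolution}, and \ref{prop:contraction} for labels in $\{0,1\}$), and then three applications of Lemma~\ref{lem:Hadamard-nf}. Your label-tracking is accurate --- the successive Hadamard transforms do land in $\{-1,0,1\}$, $\{-1,1\}$, and $\{0,2\}$, so Lemmas~\ref{lem:intro-phasefree} and~\ref{lem:average-phasefree} cover every case you need, and no circularity arises. What your approach buys is that it is systematic and reuses existing machinery without any ad hoc diagram chasing; what the paper's approach buys is a much lighter dependency footprint and a one-shot argument that avoids iterating and bookkeeping intermediate label sets.
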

\begin{proof}
	\begin{equation*}
     \tikzfig{grey-dot-nf-proof}
    \end{equation*}
    where there are three applications of Lemma~\ref{lem:splitting} in the second-to-last step to break up the white spider. This last diagram is in normal form if we introduce the necessary trivial $H(1)$ boxes.
\end{proof}

\begin{proposition}\label{prop:average-integer}
    The ZH-calculus proves \eqref{eq:average-def} for all integers $a$ and $b$.
\end{proposition}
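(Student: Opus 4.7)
The plan is to bring both sides of $A_{a,b}$ into agreement by exploiting Lemma~\ref{lem:grey-dot-nf} together with the addition gadget from Proposition~\ref{prop:addition} and the now-fully-established multiply rule of Proposition~\ref{prop:mult-rule-rational}. Notice that Lemma~\ref{lem:grey-dot-nf} gives an explicit ZH-normal form for the ternary X-spider, and the X-spider is exactly the XOR combinator through which the two labelled H-boxes are connected on the LHS of $A_{a,b}$. So the first thing I would do is rewrite the LHS by replacing its X-spider with the equivalent normal-form diagram, obtaining a purely H-box/Z-spider configuration containing $H(a)$ and $H(b)$ together with some trivial $H(0)$- and $H(1)$-boxes and at most a small, fixed number of $\tikzfig{star}$ scalars.

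Next, I would fuse the two labelled H-boxes $H(a)$ and $H(b)$. The configuration coming out of Step 1 should, up to applications of the basic derived rules of Section~\ref{s:basic-derived} and of Proposition~\ref{prop:mult-rule-rational} (to absorb any parallel H-boxes attached to the same white spider), be exactly of the shape to which the addition gadget of Proposition~\ref{prop:addition} applies. Applying Proposition~\ref{prop:addition} then replaces $H(a)$ and $H(b)$ by a single $H(a+b)$-box. This is the crucial step that lifts the average rule from the phase-free base cases of Lemma~\ref{lem:average-phasefree} to all integers, because it turns the two-parameter combination into a single labelled H-box whose label we already know how to manipulate.

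Third, I would clean up the residual scalar structure. After fusion, the diagram should consist of $H(a+b)$ (in the desired configuration of the RHS) plus some combination of scalar $H(2)$-boxes, $H(0)$-boxes, and $\tikzfig{star}$s. Using Lemmas~\ref{lem:scalar-2}, \ref{lem:scalar-cancel-2}, \ref{lem:scalarcancelstars}, \ref{lem:scalarcancelxh-gen}, together with \introzero and \introone for introducing trivial H-boxes where needed, these should collapse to exactly the single $H(-2)$ (`NOT $2$') scalar prescribed by the RHS.

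The main obstacle will be scalar bookkeeping: every application of Lemma~\ref{lem:grey-dot-nf} and of Proposition~\ref{prop:addition} introduces or consumes factors of $\tikzfig{star}$ (each representing $\tfrac{1}{2}$) and possibly extra $H(2)$-boxes, and the final $H(-2)$ on the RHS must appear exactly once. A secondary difficulty is that the normal-form produced by Lemma~\ref{lem:grey-dot-nf} comes with a fixed NOT-edge pattern fanning out over $\mathbb{B}^n$, and one has to check that this pattern aligns topologically with the addition gadget's specific configuration of $H(0)$-boxes and successor subdiagrams from Proposition~\ref{prop:addition}; this should ultimately be a matter of `only topology matters' combined with Lemma~\ref{lem:iota-copy} for pushing NOTs through white spiders, but verifying the alignment is where most of the bookkeeping will lie.
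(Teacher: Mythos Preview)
Your plan assembles the right ingredients --- the addition gadget, Lemma~\ref{lem:grey-dot-nf}, and the multiply rule --- but Step~2 contains a genuine gap that the tools you list cannot close. After you replace the X-spider by its normal form, $H(a)$ and $H(b)$ are each attached to a \emph{single} white spider, whereas the $H(0)$-boxes of the normal form are attached to all three; these are not parallel in the sense required by Lemma~\ref{prop:multiply-bb}, so the multiply rule does not fire here. Worse, if you instead try to absorb $H(a)$ into the normal form via the contraction machinery of Section~\ref{sec:normal-forms}, you need instances of $I_a$ and $A_{a,\cdot}$ for general $a$, neither of which is available at this point in the development ($I_a$ is only established \emph{after} the present proposition, in Proposition~\ref{prop:intro-rational}). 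So Step~2 is circular as stated, and ``only topology matters'' plus Lemma~\ref{lem:iota-copy} will not rescue it.

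The paper's manoeuvre is to strip off $H(a)$ and $H(b)$ first and prove a \emph{parameter-free} equation~\eqref{eq:avg-as-addition}: the average gadget equals the addition gadget tensored with a fixed scalar, both regarded as maps with two open input wires. Because no $a,b$ remain, the only H-box labels appearing are in $\{0,\pm 1,2\}$, and both sides can be driven to the same normal form using Lemma~\ref{lem:grey-dot-nf} on the addition side and two passes of Lemma~\ref{lem:Hadamard-nf} (together with Lemmas~\ref{lem:avg-neg} and~\ref{lem:avg-equal}) on the average side --- all of which rely only on the small-integer instances already secured in Lemmas~\ref{lem:mult-simple-values}, \ref{lem:intro-phasefree} and~\ref{lem:average-phasefree}. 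Once~\eqref{eq:avg-as-addition} is in hand, plugging $H(a)$ and $H(b)$ back in and invoking Proposition~\ref{prop:addition} yields $A_{a,b}$ for arbitrary integers. The missing idea in your proposal is precisely this separation into a parameter-free core and a final substitution step.
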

\begin{proof}
    It suffices to show that:
    \begin{equation}\label{eq:avg-as-addition}
    \tikzfig{avg-as-addition}
    \end{equation}
    As then
    \ctikzfig{avg-as-addition-2}
    as required.
    We show this by bringing both sides of Eq.~\eqref{eq:avg-as-addition} to normal form.

    For the RHS of \eqref{eq:avg-as-addition}, first recall that we can bring the X-spider to normal form using Lemma~\ref{lem:grey-dot-nf}, and hence:
    \ctikzfig{avg-proof-rhs}

    For the LHS of \eqref{eq:avg-as-addition}, bending the wires up for simplicity, we first have
    \ctikzfig{avg-proof-lhs1-prime}
    Ignoring the white dot, this is a normal form diagram with $a_{011}=a_{110}=0$, $a_{111}=-1$ and $a_{\vec{b}}=1$ otherwise.
    Lemma~\ref{lem:average-phasefree} proves $(A_{a,b})$ for all combinations of those values, thus we can apply the second part of Lemma~\ref{lem:Hadamard-nf} to show
    \ctikzfig{avg-proof-lhs2-prime}
    Up to a partial transpose (for easier legibility of the matrices), this corresponds to
    \[
     \begin{pmatrix}1&1\\1&-1\end{pmatrix} \begin{pmatrix}1&1&1&0\\1&1&0&-1\end{pmatrix} = \begin{pmatrix}1+1&1+1&1+0&0-1\\1-1&1-1&1+0&0+1\end{pmatrix} = \begin{pmatrix}2&2&1&-1\\0&0&1&1\end{pmatrix}.
    \]
    Again, this is a normal form diagram, this time with $a_{000}=a_{001}=2$, $a_{011}=-1$, $a_{100}=a_{101}=0$ and $a_{010}=a_{110}=a_{111}=1$.
    Now we apply the first part of Lemma~\ref{lem:Hadamard-nf} (note the changes due to the H-box being on the third output):
    \ctikzfig{avg-proof-lhs3-prime}
    We have $a_{\vec{c}0} = \pm a_{\vec{c}1}$ for all $\vec{c}\in\mathbb B^2$, so the two H-boxes in each copy of the !-box contain either equal or opposite integers.
    By Lemma~\ref{lem:intro-phasefree} we have $(I_{a})$ for all $a\in\{-1,0,1,2\}$, so we can use Lemma~\ref{lem:avg-neg}.
    Thus,
    \ctikzfig{avg-proof-lhs4-prime}
    Up to a partial transpose, this sequence of rewriting steps corresponds to
    \[
    \begin{pmatrix}2&2\\1&-1\\0&0\\1&1\end{pmatrix} \begin{pmatrix}1&1\\1&-1\end{pmatrix} = \begin{pmatrix}2+2&2-2\\1-1&1+1\\0&0\\1+1&1-1\end{pmatrix} = 2 \begin{pmatrix}2&0\\0&1\\0&0\\1&0\end{pmatrix}
    \]
    Combining everything and bending back the legs, we find
    \ctikzfig{avg-proof-lhs5-prime}
    This is the same as the RHS.
\end{proof}

It now remains to prove~\eqref{eq:intro-def} for all integers $a$. We can reduce this problem to proving it for natural numbers using the following lemma.

\begin{lemma}\label{lem:intro-mult}
    Suppose we have proven $(I_a)$ and $(I_b)$ for some integers $a$ and $b$. Then we can also prove $(I_{a\cdot b})$. In particular, if we have $(I_a)$, then we also get $(I_{-a})$.
\end{lemma}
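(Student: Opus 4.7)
The second part of the statement is an immediate consequence of the first: by Lemma~\ref{lem:intro-phasefree} we already have $I_{-1}$, so applying the main claim with $b=-1$ gives $I_{(-1)\cdot a}=I_{-a}$.

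For the main claim, the strategy is to exploit the multiply rule, which is now unconditional by Proposition~\ref{prop:mult-rule-rational}: we split the $H(ab)$ featuring in $I_{ab}$ into a pair $H(a)\cdot H(b)$ sharing a common white spider, apply $I_a$ and $I_b$ separately to the two factors, and then refuse via the multiply rule once more.

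Concretely, we start with the left-hand side of $I_{ab}$, which features an $H(ab)$ attached via a plain edge to a white spider with the usual !-boxed external wires. Using Lemma~\ref{prop:multiply-bb} read backwards, we rewrite this $H(ab)$ as two H-boxes $H(a)$ and $H(b)$, both plain-attached to the same spider. We then apply $I_a$ to the $H(a)$-subdiagram and $I_b$ to the $H(b)$-subdiagram; each application leaves the rest of the diagram untouched and simply attaches an additional $H(1)$ via a NOT edge to the shared spider structure. The diagram now contains, all sitting on the same spider, two plain-attached H-boxes $H(a)$ and $H(b)$, and two NOT-attached copies of $H(1)$. A final double application of the multiply rule fuses the plain pair into $H(ab)$ and the two NOT-attached $H(1)$'s into a single $H(1)$ (the latter invoking $M_{1,1}$, which is covered by Lemma~\ref{lem:mult-simple-values}). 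The resulting diagram is precisely the right-hand side of $I_{ab}$.

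The main technical obstacle is the bookkeeping at the final step: we have to verify that the two auxiliary $H(1)$'s produced by the separate applications of $I_a$ and $I_b$ are genuinely attached to the same spider in a way that makes the multiply rule applicable. Because both intro rules attach their new boxes in the same positional pattern relative to the shared spider (plain edge for the labelled box, NOT edge for the $H(1)$), this alignment holds; nonetheless some care is needed when handling the NOT edges, which may first require being absorbed or redistributed across the spider using Lemma~\ref{lem:not-commute} so that the two NOT-attached $H(1)$'s are in the exact configuration demanded by Lemma~\ref{prop:multiply-bb}.
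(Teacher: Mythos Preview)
Your overall strategy is exactly the paper's: split $H(ab)$ into $H(a)$ and $H(b)$ via the multiply rule, apply the two intro rules separately, then recombine with multiply. The second claim about $I_{-a}$ is also handled identically to the paper.

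However, there is a genuine misconception about the content of the intro rule. You describe the left-hand side of $I_{ab}$ as having $H(ab)$ already attached to the target white spider, and you say that applying $I_a$ ``simply attaches an additional $H(1)$ via a NOT edge''. Neither is correct. The left-hand side of $I_a$ has the $H(a)$-box \emph{disconnected} from the new spider, and the right-hand side has \emph{two copies of $H(a)$}, one plain-connected and one NOT-connected to that spider (with the original legs fanned out through white spiders). No $H(1)$ appears. Semantically, your version would be the triviality $\bigl(\begin{smallmatrix}1&1\\1&a\end{smallmatrix}\bigr) = \bigl(\begin{smallmatrix}1&1\\1&a\end{smallmatrix}\bigr) * \bigl(\begin{smallmatrix}1&1\\1&1\end{smallmatrix}\bigr)$, which is true for every $a$ and hence cannot be the non-trivial hypothesis $I_a$.

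With this corrected picture, your plan repairs itself immediately: after splitting $H(ab)$ and applying $I_a$, $I_b$, you have four boxes sharing the same white spiders --- $H(a)$ plain, $H(a)$ NOT, $H(b)$ plain, $H(b)$ NOT. Two applications of Lemma~\ref{prop:multiply-bb} (the NOT-edge case handled via Lemma~\ref{lem:not-commute} as you anticipated) fuse the plain pair into $H(ab)$ and the NOT pair into $H(ab)$, yielding the right-hand side of $I_{ab}$. That is precisely the paper's proof.
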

\begin{proof}
    We bend all the wires up for a more easy presentation. The proof is then straightforward:
    \ctikzfig{intro-for-mult-pf}
    Because we have $(I_{-1})$ (Lemma~\ref{lem:intro-phasefree}), we get $(I_{-a})$ if we have $(I_a)$.
\end{proof}

\begin{proposition}\label{prop:intro-rational}
    The ZH-calculus proves \eqref{eq:intro-def} for all integers $a$.
\end{proposition}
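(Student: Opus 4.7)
The plan is to reduce to natural numbers using Lemma~\ref{lem:intro-mult}, then induct on natural numbers, with the inductive step relying on a key lemma that the intro rule is closed under addition of labels.

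First, by Lemma~\ref{lem:intro-phasefree} we have $I_{-1}$, so Lemma~\ref{lem:intro-mult} gives $I_{-a}$ from $I_{a}$. Thus it suffices to prove $I_n$ for all $n\in\mathbb{N}$. The base cases $I_0$, $I_1$, $I_2$ are given by Lemma~\ref{lem:intro-phasefree}. For the inductive step, I would establish a key lemma of the following shape: if the ZH-calculus proves $I_a$ and $I_b$, then it also proves $I_{a+b}$. Applying this with $a=n$ and $b=1$ gives the standard successor step and hence $I_n$ for every $n\geq 0$ by induction (mirroring the structure of the induction used for $M_{a,b}$ in Proposition~\ref{prop:mult-rule-rational}).

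To prove the key lemma, I would start from the LHS of $I_{a+b}$, which contains a single $H(a+b)$-box, and replace it via Proposition~\ref{prop:addition} (the addition gadget) with a subdiagram built from $H(a)$, $H(b)$, and a connecting gadget of white spiders and NOTs. Inside the resulting diagram each of $H(a)$ and $H(b)$ sits on a configuration to which the hypotheses $I_a$ and $I_b$ can be applied (perhaps after first applying the !-boxed form from Lemma~\ref{lem:intro-bangboxed} and using Lemmas~\ref{lem:iota-copy} and \ref{lem:convolution-iota} to move the indexing maps $\iota_{\vec{b}}$ past white spiders). Once $I_a$ and $I_b$ have been applied, what remains is a combination of labelled H-boxes that should collapse using the now-available multiply rule (Proposition~\ref{prop:mult-rule-rational}) and average rule (Proposition~\ref{prop:average-integer}), together with the basic lemmas of Section~\ref{s:basic-derived}, to the RHS of $I_{a+b}$.

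The main obstacle is the key lemma itself: the addition gadget has a specific nested structure (an X-spider combined with the triangle/successor), and it is not a priori obvious that applying intro to the two embedded H-boxes produces a diagram that simplifies to the RHS of $I_{a+b}$ rather than something structurally larger. The most delicate part is tracking scalars (stars) and H-boxes labelled $2$ and $-2$ that arise when moving between the average gadget and ordinary addition, in analogy with the scalar bookkeeping seen in Lemma~\ref{lem:nf-avg} and Proposition~\ref{prop:contraction}. If that direct combinatorial argument proves too unwieldy, a fallback strategy is to first expand $H(n)$ entirely into $H(0)$s and $H(1)$s via repeated application of the addition gadget, so that each H-box appearing in the LHS of $I_n$ carries only a label for which intro is already known (by Lemma~\ref{lem:intro-phasefree}), and then reduce both sides to the same normal form using Propositions~\ref{prop:extension}, \ref{prop:convolution}, and \ref{prop:contraction}; this bypasses the clever lemma at the cost of combinatorial blow-up.
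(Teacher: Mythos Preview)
Your overall scaffold matches the paper: reduce to naturals via Lemma~\ref{lem:intro-mult}, then induct, deriving $I_{n+1}$ from $I_n$ by peeling off one successor gadget. The place where your proposal remains a sketch is exactly the step the paper works hardest on, and your fallback has a circularity.

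For the main approach, the heart of the matter is that after expanding $H(n+1)$ as the successor of $H(n)$ on both sides and invoking the hypothesis $I_n$, what remains to prove is a \emph{fixed} diagram equation---the paper's \eqref{eq:intro-induction}---which does not involve $n$ at all and contains only H-boxes with labels in $\{-1,0,1\}$. The paper then reduces both sides of that equation to normal form using Lemmas~\ref{lem:Hadamard-nf} and~\ref{lem:zero-box-nf}, relying only on the already-available instances $I_{-1},I_0,I_1,I_2$ and $A_{a,b}$ for $a,b\in\{-1,0,1\}$. Your plan of ``apply $I_a$ and $I_b$ inside the addition gadget and simplify with $M$ and $A$'' is morally the same move, but you have not isolated the resulting equation as one free of $a,b$; without that observation you have no guarantee that the labels appearing during simplification stay within the range where the needed instances of $I$, $A$, $M$ are already proved.

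Your fallback is genuinely circular. After fully expanding $H(n)$ into $H(0)$'s and $H(1)$'s and tensoring, the labels are indeed in $\{-1,0,1\}$, but each application of Corollary~\ref{cor:cap-nf} (via Corollary~\ref{cor:whitemult-nf}) invokes Proposition~\ref{prop:extension}, which requires $I_a$ for every label currently in the normal form. Since contractions sum labels, after a few caps the labels leave $\{-2,-1,0,1,2\}$, and you would need exactly the $I_a$ instances you are trying to establish. The paper avoids this trap precisely by confining the normal-form computation to the single $n$-free equation \eqref{eq:intro-induction}.
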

\begin{proof}
    By the previous lemma it suffices to prove $(I_n)$ for all natural numbers. Note that we have \introzero and \introone so that by induction it suffices to show that we can get $(I_{n+1})$ out of $(I_n)$. We will show that:
    \begin{equation}\label{eq:intro-induction}
    \tikzfig{intro-induction-step}
    \end{equation}
    This is sufficient because then:
    \ctikzfig{intro-induction-step-2}

    We prove Eq.~\eqref{eq:intro-induction} by reducing both sides of the equation to normal form, beginning with the part that is the same on both sides.
    To start, we bring the top part of the diagram into normal form and then apply the first part of Lemma~\ref{lem:Hadamard-nf} to the H-box on the bottom left wire.
    \ctikzfig{intro-ind-nf-proof1}
    For the final step, note that the coefficients $a_{\vec{c}bd}$ of the normal-form diagram are $\pm 1$, indeed $a_{0101}=a_{0111}=a_{1101}=a_{1111}=-1$ and $a_{\vec{c}bd}=1$ otherwise.
    We have $(I_1)$ and $(I_{-1})$ by Lemma~\ref{lem:intro-phasefree}.
    Thus we can use the first equalities from Lemmas~\ref{lem:avg-neg} and~\ref{lem:avg-equal}.
    This transformation corresponds to
    \[
     \begin{pmatrix}1&1&1&1\\1&1&-1&-1\\1&1&1&1\\1&-1&1&-1\end{pmatrix} \begin{pmatrix}1&0&1&0\\0&1&0&1\\1&0&-1&0\\0&1&0&-1\end{pmatrix}
     = \begin{pmatrix}1+1&1+1&1-1&1-1\\1-1&1-1&1+1&1+1\\1+1&1+1&1-1&1-1\\1+1&-1-1&1-1&-1+1\end{pmatrix}
     = 2\begin{pmatrix}1&1&0&0\\0&0&1&1\\1&1&0&0\\1&-1&0&0\end{pmatrix}
    \]
    Let $z_{\vec{c}bd}:=\frac{1}{2}\left(a_{\vec{c}0d}+(-1)^b a_{\vec{c}1d}\right)$, then
    \[
     z_{\vec{c}bd} = \begin{cases} -1 & \text{if } \vec{c}bd=1101 \\ 0 & \text{if } (\vec{c}\in\{00,10,11\}\wedge b=1) \vee (\vec{c}=01\wedge b=0) \\ 1 &\text{otherwise.} \end{cases}
    \]
    For each $\vec{c}$ and $b$, we have $z_{\vec{c}b0} = \pm z_{\vec{c}b1}$, and by Lemma~\ref{lem:intro-phasefree} we have $(I_0)$, $(I_1)$ and $(I_{-1})$.
    Hence we can again apply the first part of Lemma~\ref{lem:Hadamard-nf}, followed by the first equalities of Lemmas~\ref{lem:avg-neg} and \ref{lem:avg-equal}.
    \ctikzfig{intro-ind-nf-proof2}
    This step corresponds to:
    \[
     2 \begin{pmatrix}1&1&0&0\\0&0&1&1\\1&1&0&0\\1&-1&0&0\end{pmatrix} \begin{pmatrix}1&1&0&0\\1&-1&0&0\\0&0&1&1\\0&0&1&-1\end{pmatrix}
     = 2 \begin{pmatrix}1+1&1-1&0+0&0+0\\0+0&0+0&1+1&1-1\\1+1&1-1&0+0&0+0\\1-1&1+1&0+0&0+0\end{pmatrix}
     = 4 \begin{pmatrix}1&0&0&0\\0&0&1&0\\1&0&0&0\\0&1&0&0\end{pmatrix}
    \]
    Combining everything, we have
    \begin{equation}\label{eq:intro-ind-middle}
     \tikzfig{intro-ind-nf}
    \end{equation}
    where
    \[
     w_{\vec{c}bd} := \frac{1}{2}\left( z_{\vec{c}b0} + (-1)^d z_{\vec{c}b1} \right) = \begin{cases} 1 & \text{if } \vec{c}bd \in\{0000, 0110, 1000, 1101\} \\ 0 & \text{otherwise.} \end{cases}
    \]

    Now, for the left-hand side of Eq.~\eqref{eq:intro-induction}, let $y_{\vec{c}bd} := w_{\vec{c}(1-b)(1-d)}$, so that
    \[
     y_{\vec{c}bd} = \begin{cases} 1 & \text{if } \vec{c}bd \in\{0011, 0101, 1011, 1110\} \\ 0 & \text{otherwise.} \end{cases}
    \]
    Note that $y_{\vec{c}0d}y_{\vec{c}1d} = 0$ for all $\vec{c},d$, so we will be able to apply Lemma~\ref{lem:zero-box-nf} because we have already proved $(A_{y_{\vec{c}0d},y_{\vec{c}1d}})$ for all $\vec{c},d$.
    By absorbing the two NOTs on the bottom wires and relabelling, we find:

    \ctikzfig{intro-ind-lhs1}
    This corresponds to
    \[
     4 \begin{pmatrix}0&0&0&1\\0&1&0&0\\0&0&0&1\\0&0&1&0\end{pmatrix}
     \begin{pmatrix}1&0&1&0\\0&1&0&1\\1&0&0&0\\0&1&0&0\end{pmatrix}
     = 4 \begin{pmatrix}0&1&0&0\\0&1&0&1\\0&1&0&0\\1&0&0&0\end{pmatrix}
    \]
    Finally, let $x_{\vec{c}bd} := y_{\vec{c}0d}+(1-b)y_{\vec{c}1d}$, then
    \[
     x_{\vec{c}bd} = \begin{cases} 1 &\text{if } \vec{c}bd \in\{0001,0101,0111,1001,1100\} \\ 0 &\text{otherwise.} \end{cases}
    \]
    Note that $x_{\vec{c}b0}x_{\vec{c}b1}=0$ for all $\vec{c}b$, so we use the same process for the final 0-labelled H-box.
    \ctikzfig{intro-ind-lhs2}
    This rewrite step corresponds to:
    \[
     4 \begin{pmatrix}0&1&0&0\\0&1&0&1\\0&1&0&0\\1&0&0&0\end{pmatrix} \begin{pmatrix}1&1&0&0\\1&0&0&0\\0&0&1&1\\0&0&1&0\end{pmatrix}
     = 4 \begin{pmatrix}1&0&0&0\\1&0&1&0\\1&0&0&0\\1&1&0&0\end{pmatrix}
    \]
    Apart from the two white dots, the left-hand side diagram of Eq.~\eqref{eq:intro-induction} has now been brought into normal form.

    For the right-hand side of Eq.~\eqref{eq:intro-induction}, note that $w_{b0\vec{c}}w_{b1\vec{c}}=0$ for all $b,\vec{c}$, so we can again use Lemma~\ref{lem:zero-box-nf}, before absorbing the NOT and relabelling:
    \ctikzfig{intro-ind-rhs}
    This corresponds to
    \begin{align*}
     4 \begin{pmatrix}0&1&0&0\\1&0&0&0\\0&0&0&1\\0&0&1&0\end{pmatrix} \begin{pmatrix}1&1&0&0\\1&0&0&0\\0&0&1&1\\0&0&1&0\end{pmatrix} \begin{pmatrix}1&0&0&0\\0&0&1&0\\1&0&0&0\\0&1&0&0\end{pmatrix}
     &= 4 \begin{pmatrix}0&1&0&0\\1&0&0&0\\0&0&0&1\\0&0&1&0\end{pmatrix} \begin{pmatrix}1&0&1&0\\1&0&0&0\\1&1&0&0\\1&0&0&0\end{pmatrix} \\
     &= 4 \begin{pmatrix}1&0&0&0\\1&0&1&0\\1&0&0&0\\1&1&0&0\end{pmatrix}
    \end{align*}
    which is equal to the LHS, so \eqref{eq:intro-induction} is proved.
\end{proof}

\begin{theorem}\label{thm:ZH-completeness}
    The ZH-calculus is complete.
\end{theorem}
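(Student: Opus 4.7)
The proof of the final theorem is essentially a bookkeeping exercise that assembles the machinery built up throughout the paper, so my plan is simply to invoke the conditional completeness result together with the three families of rewrite rules whose general instances have now been established. Concretely, I would open the proof by citing Proposition~\ref{prop:completeness-conditional}, which reduces completeness to proving $M_{a,b}$, $I_a$, and $A_{a,b}$ for all integers $a,b$, and then close it by citing Propositions~\ref{prop:mult-rule-rational}, \ref{prop:intro-rational}, and \ref{prop:average-integer}, which supply exactly these three families.

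In a little more detail, the plan is: first, remark that by Theorem~\ref{thm:nf-unique} every diagram has a unique reduced normal form, so completeness follows as soon as we can reduce any diagram to its (reduced) normal form; second, note that Proposition~\ref{prop:completeness-conditional} already performs this reduction conditional on having $M$, $I$, and $A$ for all integers, via the generator-by-generator strategy built from Lemmas~\ref{lem:H-box-nf} and~\ref{lem:Z-spider-nf} together with Corollaries~\ref{cor:tensor-product} and~\ref{cor:cap-nf}; third, observe that the hypotheses of that proposition have now been discharged, with multiplication coming from Proposition~\ref{prop:mult-rule-rational} (Section~\ref{sec:arithmetic}), the average rule from Proposition~\ref{prop:average-integer}, and the introduction rule from Proposition~\ref{prop:intro-rational}. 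The conclusion $\zh\vdash D_1=D_2$ whenever $\intf{D_1}=\intf{D_2}$ then follows immediately.

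The \emph{hard part} of the argument is not located in this final theorem but rather in the three propositions it draws upon: the bootstrapping step in Section~\ref{sec:arithmetic}, where $M_{a,b}$ is proved by first encoding addition via the successor-gadget of Eq.~\eqref{eq:succesor} and then showing associativity and distributivity, and the intricate normal-form matrix calculations in the proofs of Propositions~\ref{prop:average-integer} and~\ref{prop:intro-rational}, which exploit the restricted cases of $I$, $A$, and $M$ already available from Lemmas~\ref{lem:intro-phasefree}, \ref{lem:average-phasefree}, and~\ref{lem:mult-simple-values} in order to unlock the general versions. By contrast, the theorem itself should be written as no more than a one-sentence reference to these results; I would not introduce any new lemmas here, and would explicitly avoid re-proving any of the intermediate steps in order to keep the logical dependency graph clean.
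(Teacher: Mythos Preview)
Your proposal is correct and matches the paper's own proof essentially verbatim: the paper simply cites Propositions~\ref{prop:mult-rule-rational}, \ref{prop:average-integer}, and \ref{prop:intro-rational} to discharge the hypotheses of Proposition~\ref{prop:completeness-conditional}. Your assessment that the theorem itself should be a one-line reference, with all the real work already done in those propositions, is exactly right.
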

\begin{proof}
    Propositions~\ref{prop:mult-rule-rational}, \ref{prop:average-integer} and \ref{prop:intro-rational} show that~\eqref{eq:mult-def}, \eqref{eq:average-def} and~\eqref{eq:intro-def} are provable in the ZH-calculus for all integers $a$ and $b$. Hence Proposition~\ref{prop:completeness-conditional} shows that the ZH-calculus is complete.
\end{proof}

\section{The ZH-calculus over arbitrary rings}\label{sec:zh-ring}

The ZH-calculus as defined above represents matrices over $\mathbb{Z}[\frac{1}{2}]$. To handle these efficiently, we introduced integer labels on H-boxes.
In this section, we extend this idea: we allow H-boxes to be labelled \emph{a priori} by elements of some ring $R$.
As a result, diagrams can express matrices in $R$-\textbf{bit},
the full subcategory of $R$-modules where the objects are of the form $(R\oplus R)^{\tensor n}$.
This means our diagrams are interpreted as matrices of shape $2^n\times 2^m$ with entries in $R$.
The conditions on $R$ and the ways in which we show completeness will be the core of this section.
We will dub this new calculus \ZHR, or `ZH over $R$',
with the original phase-free calculus simply called `ZH' without a subscript.
The ruleset of \ZHR\ will be a strict superset of the rules of Figure~\ref{fig:phasefree-rules}, and hence anything we have already proven for ZH will remain true in \ZHR.

The original ZH-calculus as introduced in \cite{backens2018zhcalculus} represented matrices over the complex numbers, and hence in our notation is called \ZHC.
The utility of \ZHC\ is that it can describe evolutions of qubits.
This does not need to be the only useful choice of ring:
The ZW-calculus is similarly parametrised by a commutative ring \cite{hadzihasanovic2017algebra},
and this flexibility is what allowed for the proof of completeness for
Clifford+T ZX via ZW$_{\mathbb{Z}[\half]}$ \cite{SimonCompleteness}\footnote{
    Due to these ideas emerging concurrently the authors of \cite{SimonCompleteness}
    do not explicitly use ZW$_{\mathbb{Z}[\half]}$ but something very similar.
}.

We will show that we can make \ZHR\ sound, complete, and universal into $R$-\textbf{bit} provided that
$R$ is commutative with $1  \neq 0$, and the element $2$ is not a zero-divisor.
Unitarity of the ring is required so that we can preserve the interpretation of the $\whitemult$ generator,
which uses $0$ and $1$.
Commutativity of $R$ is required because we require the following isometry of diagrams to be sound:
\[
    \scalar{hadamard}{a}\ \scalar{hadamard}{b} \simeq \scalar{hadamard}{b}\ \scalar{hadamard}{a}
    \implies \intf{\scalar{hadamard}{a}\ \scalar{hadamard}{b}} = \intf{\scalar{hadamard}{b}\ \scalar{hadamard}{a}}
    \implies a \times b = b \times a
\]

The condition on the element 2 is needed since otherwise we would no longer have a well-defined relationship allowing us to go between white spiders and grey spiders as in \NotDef and \GreyDef.
Likewise we would no longer be able to freely introduce pairs of Hadamard gates via the \HHRule rule.
These relationships are so important that we exclude the case where 2 is a zero-divisor from this paper and leave it to future work.

We will first deal with the case where $\half \in R$
in Section~\ref{sec:ring-with-half},
and afterwards we will adapt the results to work when $\half \not\in R$
in Section~\ref{sec:ring-without-half}.
When $\half\in R$ we will show that $\dstar$ is equal to the $H(\half)$ box with no inputs or outputs, and hence the $\dstar$ generator is superfluous.

The completeness result for when $\half \in R$ uses the rules given in Figure~\ref{fig:ZH-rules},
which are a superset of the rules given in Figure~\ref{fig:phasefree-rules}.
If the ring $R$ does not contain $\half$ but does contain $2$,
which is not a zero-divisor,
then the ring $R[\half]$ is well defined.
We can then use the completeness of $\ZHRhalf$
and the introduction of a meta-rule to eliminate all instances of the $\dstar$
generator to give completeness of \ZHR\ for $\half \notin R$.

Note that the ZH-calculus as introduced in Section~\ref{sec:phase-free-ZH}
is not the same as $\ZH_R$ for any $R$,
since in the ZH-calculus the labels are from $\mathbb{Z}$
but the interpretation is into $\mathbb{Z}[\half]$-\textbf{bit}.

\subsection{ZH-calculus over rings with a half}\label{sec:ring-with-half}

In this section we will take $R$ to be a commutative unital ring that contains $\half$ (and so $2$ is not a zero-divisor).
We generalise the labelled H-boxes of Section~\ref{sec:labelledHboxes} by letting the H-boxes in \ZHR\ be labelled by any element $r\in R$:
\[
 \intf{\bracedSpider{hadamard}{r}} := \sum r^{i_1\ldots i_m j_1\ldots j_n} \ket{j_1\ldots j_n}\bra{i_1\ldots i_m}
\]
As in Section~\ref{sec:ZH-generators}, the sum in the second equation is over all $i_1,\ldots, i_m, j_1,\ldots, j_n\in\{0,1\}$ so that the H-box above represents a matrix with all entries equal to 1, except the bottom right element, which is $r$.
If the label is $-1$ (so that it matches the usual H-box of Section~\ref{sec:ZH-generators}) we will usually leave out the label.
Besides these more general H-boxes, the other generators are identical to those found in Section~\ref{sec:ZH-generators}.
Note that because $\half \in R$, the star generator is redundant:
\[
\intf{\dstar} = \half = \intf{\scalar{hadamard}{\half}}
\]
We will derive the corresponding diagrammatic equality later in Lemma~\ref{lem:star-is-half}.
We could therefore use just the following generators when $\half \in R$:
\[
\spider{white dot}{} \ ,\ \spider{hadamard}{r}
\]
But in order to retain compatibility with the previous results, we will continue to write $\!\dstar\!$ to represent the scalar $\half$.

As before, we define the grey spiders and NOT gate of \eqref{eq:defx} as derived generators.
With these generators and derived generators established we can state our theorem of universality.

\begin{theorem}\label{thm:ZHR-universality}
The calculus \ZHR\ is universal over $R$-\textbf{bit},
shown using the unique normal form diagram:
\begin{equation}
\intf{ \, \prod_{\vec{b} \in \mathbb B^n} \big( \iota_{\vec{b}} \circ H_n(a_{\vec{b}}) \big) } =
\sum_{\vec{b} \in \mathbb B^n} a_{\vec{b}} \ket{\vec{b}}.
\end{equation}
\end{theorem}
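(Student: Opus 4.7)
The plan is to adapt the calculation from Theorem~\ref{thm:nf-unique}, which already establishes the analogous formula in the phase-free setting over $\mathbb{Z}[\half]$, and then to derive universality by map-state duality. Since $\half\in R$ the awkward scalar $\dstar^k$ disappears, so the only thing to verify is the Schur-product computation in the ring $R$, and that every morphism in $R$-\textbf{bit} (not just every state) can be expressed.

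First I would interpret each ingredient of the normal form separately. Using the definition of the generalised H-box, $H_n(a_{\vec{b}})$ considered as a state on $n$ wires evaluates to $\sum_{\vec{c}\neq \vec{1}} \ket{\vec{c}} + a_{\vec{b}}\ket{\vec{1}}$, since the exponent $j_1\cdots j_n$ is $1$ exactly when $\vec{c}=\vec{1}$ and $0$ otherwise. Next, from $\intf{\grayphase{\neg}} = \ket{0}\bra{1}+\ket{1}\bra{0}$ and Definition~\ref{def:indexing-map}, the indexing map $\iota_{\vec{b}}$ acts on the computational basis by $\ket{\vec{c}}\mapsto \ket{\vec{c}\oplus\vec{b}\oplus\vec{1}}$, so in particular $\iota_{\vec{b}}\ket{\vec{1}} = \ket{\vec{b}}$ while $\iota_{\vec{b}}$ permutes all other basis elements among themselves. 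Composing, $\iota_{\vec{b}}\circ H_n(a_{\vec{b}})$ is the vector whose $\vec{b}$-component is $a_{\vec{b}}$ and whose other components all equal $1$.

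Now I would evaluate the Schur product entrywise. For any $\vec{c}\in\mathbb{B}^n$, the $\vec{c}$-component of $\prod_{\vec{b}\in\mathbb{B}^n}(\iota_{\vec{b}}\circ H_n(a_{\vec{b}}))$ is $\prod_{\vec{b}\in\mathbb{B}^n} v_{\vec{b}}(\vec{c})$, where $v_{\vec{b}}(\vec{c}) = a_{\vec{b}}$ if $\vec{c}=\vec{b}$ and $v_{\vec{b}}(\vec{c}) = 1$ otherwise. Exactly one factor (the one with $\vec{b}=\vec{c}$) contributes $a_{\vec{c}}$, while all others contribute $1$; commutativity of $R$ is used to reorder these scalars. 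Therefore the product equals $a_{\vec{c}}$, yielding $\sum_{\vec{c}\in\mathbb{B}^n} a_{\vec{c}}\ket{\vec{c}}$, which is the claimed formula.

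Universality for states of arity $n$ is then immediate: given any vector $\sum_{\vec{c}} r_{\vec{c}} \ket{\vec{c}} \in (R\oplus R)^{\otimes n}$, set $a_{\vec{c}} := r_{\vec{c}}$. To lift this to arbitrary morphisms in $R$-\textbf{bit}, I would invoke map-state duality: an $R$-linear map $(R\oplus R)^{\otimes m}\to (R\oplus R)^{\otimes n}$ corresponds bijectively to a state on $(R\oplus R)^{\otimes(m+n)}$ by bending all input wires into outputs using the cups $\tikzfig{wire-cup}$. Realise the corresponding state as a normal-form diagram with $m+n$ outputs, then bend $m$ of those outputs back down as inputs with cups; the resulting diagram has interpretation equal to the given map. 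The only step that requires care is checking that this bending operation is itself faithfully represented in $\ZH_R$, but cups and caps are built from Z-spiders exactly as in the phase-free case, so nothing extra needs to be verified. There is no real obstacle here; the result is essentially a rerun of the forward direction of Theorem~\ref{thm:nf-unique} with $\dstar^k$ omitted, combined with the standard Choi-Jamiołkowski argument.
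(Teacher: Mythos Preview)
Your proposal is correct and follows essentially the same approach as the paper: the paper's proof simply cites Theorem~\ref{thm:nf-unique} for the Schur-product computation you spell out in detail, and notes (as you do) that $\half\in R$ makes the $\dstar^k$ scalar unnecessary. Your explicit treatment of map-state duality for arbitrary morphisms is not repeated in the paper's proof but is the intended reading, having been set up at the start of Section~\ref{sec:normal-form}.
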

\begin{proof}  \label{prf:thm:ZHR-universality}
The equation above was demonstrated in the proof of Theorem~\ref{thm:nf-unique}.
The lack of stars in this version comes from noting that every element in $R$ is divisible by $2$ (as $\half \in R$)
and so stars are always subsumed into the ring elements $a_{\vec{b}}$ in the reduced normal form.
\end{proof}

Now let us establish our ruleset for the ZH-calculus over $R$.
The rules for \ZHR\ are given in Figure~\ref{fig:ZH-rules}
along with the same `only topology matters' meta rule and all the same symmetry conditions as before:
\ctikzfig{generator-symmetries}
Here, $a\in R$ is arbitrary.

\begin{figure}[!ht]
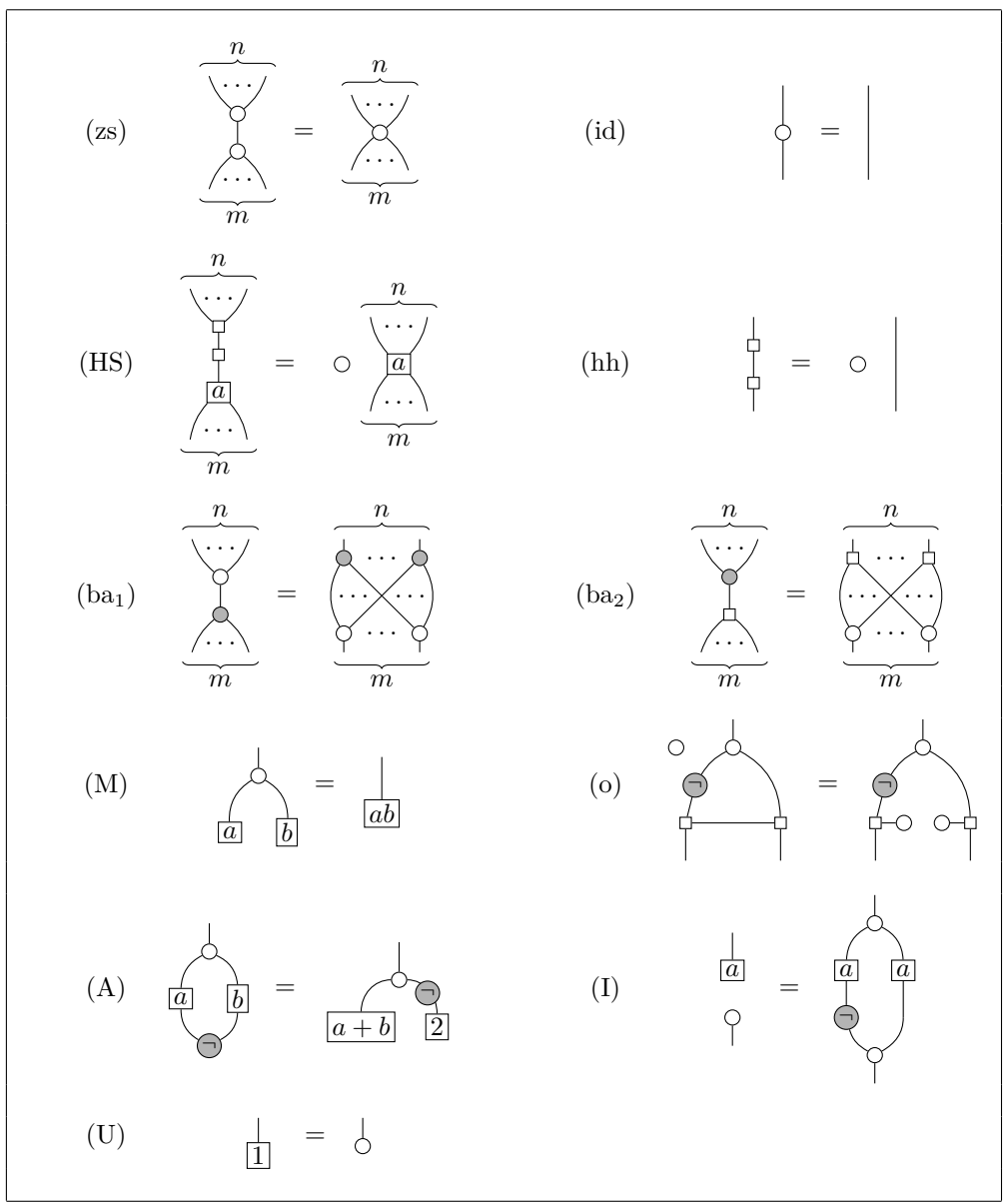

 \centering
 \scalebox{1.0}{
 \begin{tabular}{|ccccc|}
 \hline
 &&&&\\
  \qquad(zs) & \tikzfig{Z-spider-rule} & \qquad\qquad & (id) & \tikzfig{Z-special} \\ &&&& \\
  \qquad(HS) & \tikzfig{H-spider-rule-phased} & & (hh) & \tikzfig{H-identity} \\ &&&& \\
  \qquad(ba$_1$) & \tikzfig{ZX-bialgebra} & & (ba$_2$) & \tikzfig{ZH-bialgebra} \qquad\\ &&&& \\
  \qquad(M) & \tikzfig{multiply-rule-phased} & & (o) & \tikzfig{ortho-rule} \\ &&&& \\
  \qquad(A) & \tikzfig{average-rule-prime} & & (I) & \tikzfig{intro-rule} \qquad\\ &&&& \\
  \qquad(U) & \tikzfig{unit-rule} & & & \\
  &&&&\\
  \hline
 \end{tabular}}
 \caption{The set of rules for the ZH$_R$-calculus.
 Throughout, $m,n$ are nonnegative integers and $a,b$ are arbitrary elements of the commutative unital ring $R$ that contains a half.
 The lowercased rules are the same as those in Figure~\ref{fig:phasefree-rules}.
 The rules (M), (A), (U), and (I) are pronounced \textit{multiply}, \textit{average}, \textit{unit}, and \textit{intro} respectively.
 }
 \label{fig:ZH-rules}
\end{figure}

In comparison to the rules of Figure~\ref{fig:phasefree-rules},
\HPhaseRule has been generalised from \HFuseRule to allow arbitrary labelled H-boxes and
\UnitRule has been added to relate the white dot to an H-box labelled by the $1$ element of the ring.
The rules \MultPhaseRule, \AvgRule and \IntroRule already appeared in Section~\ref{sec:avg-intro-mult} as \eqref{eq:mult-def}, \eqref{eq:intro-def} and \eqref{eq:average-def},
but there they were derived rules that applied to integer labelled H-boxes,
while here they are axioms that relate the ring elements of $R$.
Note that it is not immediately obvious that the $\mathbb{Z}$-labelled H-boxes of Section~\ref{sec:labelledHboxes}
correspond to the $R$-labelled H-boxes of the \ZHR-calculus.
We will prove this imminently, though first we observe the soundness of \ZHR.

\begin{proposition} \label{prop:ZHR-sound}
The rules of Figure~\ref{fig:ZH-rules} are sound.
\end{proposition}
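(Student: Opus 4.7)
The plan is to reduce soundness of \ZHR\ to a finite collection of concrete matrix checks, leveraging the soundness proof already given in Proposition~\ref{prop:phasefree-sound}. The lowercase rules (zs), (id), (hh), (ba$_1$), (ba$_2$), (o) coincide with their namesakes in Figure~\ref{fig:phasefree-rules}, and (HS) restricted to the $-1$ label is (hs). So their soundness is immediate once one observes that the interpretation of the generators, reinterpreted with H-boxes labelled by ring elements, still satisfies the Frobenius axioms (for white spiders) and the bialgebra axioms (for the white/grey and white/$H(-1)$ interactions), since these only depend on the ring having $0$ and $1$. Hence nothing new needs to be verified for them.

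That leaves the five additional axioms (HS), (U), (M), (I), (A). I would handle (U) first as an immediate calculation: the interpretation of a unary $H(1)$-box is $\ketbra{0}{0}+\ketbra{1}{1}$, i.e.\ the identity, which matches the interpretation of a bare wire (and hence of a $2$-legged white spider by (zs)). Next for (HS), I would follow the template used for (hs) in Proposition~\ref{prop:phasefree-sound}: reduce the infinite family to a small set of base cases (for instance $H_1(a)\cdot H_1(b) = H_1(ab)$ together with H-box comultiplication/copy laws) and then induct on $m+n$. The base identity $H_1(a)H_1(b) = H_1(ab)$ is exactly where commutativity of $R$ is needed, since $\intf{H_1(a)H_1(b)} = \mathrm{diag}(1,ab)$ but only equals $\mathrm{diag}(1,ba)$ by commutativity; this is consistent with the remark preceding Theorem~\ref{thm:ZHR-universality} that commutativity of $R$ is \emph{necessary} for soundness.

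The remaining three rules (M), (I), (A) are finite equations between concrete ring-valued matrices, so in principle they can each be checked by direct expansion. Concretely, for (M) one computes that both sides of \eqref{eq:mult-def} represent the $2\times 2$ diagonal matrix $\mathrm{diag}(1, ab)$ using the definition of the H-box and Z-spider interpretations. For (I), one checks that $\intf{H_0(a)} = a$ on both sides of \eqref{eq:intro-def}, since the additional white spider and H-box on the right-hand side act as a partial trace that leaves the scalar $a$ unchanged (this uses that $\intf{\whitemult}\circ\intf{\hadamult}$ is the identity on scalars). For (A), one verifies in the $2\times 2$ case that both sides realise the row vector $(1,\tfrac{a+b}{2})$; note this is the only rule where the hypothesis that $2$ is not a zero-divisor is implicitly used, since the interpretation of the $H(-2)$-scalar gadget corresponds to multiplication by $\tfrac12$.

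The main obstacle is (HS): the other rules are finite matrix identities, but (HS) is an infinite schema in $(m,n)$, so one must choose a small generating set of base cases and argue that the full family follows by induction using (zs), (id) and the already-established finite instances. Once the base cases are in place the induction is completely routine, entirely analogous to the induction already alluded to in Proposition~\ref{prop:phasefree-sound} for (hs) and the bialgebra rules.
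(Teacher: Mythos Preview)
Your overall strategy matches the paper's: invoke Proposition~\ref{prop:phasefree-sound} for the lowercase rules and then verify the new axioms \HPhaseRule, \UnitRule, \MultPhaseRule, \IntroRule, \AvgRule by direct calculation. The paper simply says ``this is done by straightforward calculation'' and leaves it at that, so your level of detail is already more than what is expected.

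That said, two of your concrete verifications misidentify the rules. Your treatment of \IntroRule is wrong: $I_a$ is not an equation between scalars involving $H_0(a)$; it is the rule that introduces an \emph{extra wire} onto an H-box (cf.\ Section~\ref{s:intro} and Lemma~\ref{lem:intro-bangboxed}), so both sides are genuine maps with at least one leg, and the check is that the two diagrams give the same $2^k$-dimensional vector, not that a partial trace fixes a scalar. Your description of \AvgRule is also slightly off: in the setting of Figure~\ref{fig:ZH-rules} we are assuming $\tfrac12\in R$, so the hypothesis ``$2$ is not a zero-divisor'' is automatic and not ``implicitly used'' here; moreover the right-hand side carries an $H(a{+}b)$ box together with a separate gadget, not a single $H(\tfrac{a+b}{2})$ box (that alternative form is Lemma~\ref{lem:average-true-form}, proved later). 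Finally, \MultPhaseRule as stated is a one-legged equation, so both sides are the vector $(1,ab)^T$ rather than $\mathrm{diag}(1,ab)$. None of these affect the strategy, but as written your proof does not actually verify the correct identities for \IntroRule and \AvgRule.
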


\begin{proof} \label{prf:prop:ZHR-sound}
Because of Proposition~\ref{prop:phasefree-sound}, we simply need to check that the rules not already present in ZH are sound;
these are \HPhaseRule, \MultPhaseRule, \AvgRule, \IntroRule, and \UnitRule.
This is done by straightforward calculation.
\end{proof}

To show how the labelled H-boxes of ZH and \ZHR\ relate,
we first show that $H(1)$ indeed satisfies \eqref{eq:unit}:
\begin{lemma}\label{lem:unit-bb}
 The phase-1 H-box of any arity decomposes:
 \ctikzfig{unit-bangboxed}
\end{lemma}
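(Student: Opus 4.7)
I would prove the result by induction on the arity $n$ of the $H(1)$-box, using \UnitRule{} for the base case and the phased H-spider rule \HPhaseRule{} to extend it to arbitrary arity. The case $n=0$ is the empty diagram equality and is immediate; the case $n=1$ is precisely the axiom \UnitRule{}, which identifies a single-legged $H(1)$-box with $\whiteunit$.

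For the inductive step, assume the claim holds for $H(1)_n$ and consider $H(1)_{n+1}$. The plan is to apply \HPhaseRule{} in reverse to split $H(1)_{n+1}$ into two connected $H(1)$-boxes joined by one internal wire, with exactly one external leg on the auxiliary box and the remaining $n$ external legs on the main box. Applying \UnitRule{} to the external leg of the auxiliary box, combined with \IDRule{} and \SpiderRule{} to collapse the internal wire and any residual white spider, should peel off a clean $\whiteunit$ on that leg while leaving behind a standalone $H(1)_n$ connected only to the remaining external legs. The inductive hypothesis then replaces this by $n$ copies of $\whiteunit$, yielding the desired total of $n+1$ copies of $\whiteunit$.

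The main obstacle is structural bookkeeping: reversing \HPhaseRule{} preserves total arity while adding a pair of internal wire endpoints, so each inductive step must eliminate the fresh internal wire using \UnitRule{} together with the Frobenius-style identities \IDRule{} and \SpiderRule{} without leaving a residual 2-ary $H(1)$ gadget behind. Since both sides of the equation interpret as $(\ket{0}+\ket{1})^{\otimes(n+1)}$ with no global scalar, soundness guarantees that no numerical prefactor needs to be reconciled; the difficulty is purely in engineering the rewrite sequence so that each application of the induction cleanly yields a standalone $H(1)_n$ together with one extra $\whiteunit$ on a new external leg. If such a direct peel-off sequence proves awkward to write down, a viable fallback is to apply \UnitRule{} first on one leg (replacing it by $\whiteunit$), then absorb the resulting $\whiteunit$ into the $H(1)_{n+1}$ via \HPhaseRule{} and \IDRule{} to produce $H(1)_n$ tensored with $\whiteunit$, to which the inductive hypothesis applies.
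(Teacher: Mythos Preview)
Your proposal is correct and takes essentially the same approach as the paper: both arguments use \HPhaseRule{} in reverse to peel the $H(1)$-box apart and \UnitRule{} to identify the resulting arity-$1$ piece with $\whiteunit$. The paper's proof compresses this into a single short !-box derivation rather than an explicit induction on $n$, but the underlying rewrite sequence is the same. One small point of precision: \UnitRule{} rewrites an entire arity-$1$ box $H(1)_1$, not an individual leg of a larger $H(1)$-box, so your split via \HPhaseRule{} should be arranged so that the auxiliary piece is genuinely $H(1)_1$ (zero external legs, one internal) before invoking \UnitRule{}; the cleanup with \SpiderRule{} and \IDRule{} then goes through as you indicate.
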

\begin{proof}
 \[
  \tikzfig{unit-bb-proof} \qedhere
 \]
\end{proof}

\begin{lemma}\label{lem:two-cancel}~
	\ctikzfig{two-cancel}
\end{lemma}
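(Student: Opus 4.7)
The plan is to reduce the lemma to an application of the multiply axiom \MultPhaseRule, since in $R$ we have the identity $2 \cdot \tfrac{1}{2} = 1$. Once the two H-box labels are fused into an $H(1)$-box, we can invoke Lemma~\ref{lem:unit-bb} to eliminate it entirely, leaving only the topology of the underlying wires.

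Concretely, I would proceed as follows. First, arrange the diagram so that the $H(2)$-box and the $H(\tfrac{1}{2})$-box appear as in the hypothesis of \MultPhaseRule, i.e.\ both connected to the same set of Z-spiders. A single application of \MultPhaseRule then replaces them by a single $H(2 \cdot \tfrac{1}{2})$-box, which is $H(1)$ because $R$ is assumed to be commutative and unital with $\tfrac{1}{2} \in R$. Second, apply the decomposition of Lemma~\ref{lem:unit-bb}, which turns the $H(1)$-box of any arity into a product of unit maps \whiteunit\ attached to each of its incident Z-spiders. Third, close the argument by fusing each \whiteunit\ into the adjacent Z-spider using \SpiderRule, and remove any resulting arity-$2$ Z-spiders via \IDRule.

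The main obstacle, such as it is, is essentially notational rather than mathematical: one has to verify that the two H-boxes are connected to exactly the same set of Z-spiders so that \MultPhaseRule applies directly, rather than in its more general bang-boxed form from Lemma~\ref{prop:multiply-bb}. Provided the figure \tikzfig{two-cancel} is drawn in the expected way (with both H-boxes sharing their connectivity), no further rewriting should be needed, and the whole derivation collapses to the two-line proof sketched above. Because the statement is purely about the interaction of $H(2)$ with its multiplicative inverse $H(\tfrac{1}{2})$, no use of the average rule \AvgRule or the intro rule \IntroRule should be required.
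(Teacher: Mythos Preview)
Your approach is correct but takes a genuinely different route from the paper. The paper's own proof proceeds via the average axiom \AvgRule --- this is made explicit later, in the proof of Lemma~\ref{lem:average-true-form}, where the authors remark that ``we proved Lemma~\ref{lem:two-cancel} using \AvgRule.'' By contrast, you obtain the result from \MultPhaseRule alone (fusing $H(2)$ and $H(\tfrac{1}{2})$ into $H(1)$) together with Lemma~\ref{lem:unit-bb} and spider cleanup. Your derivation is shorter and, more importantly, avoids \AvgRule entirely; this would actually dissolve the dependency the paper flags in the forward direction of Lemma~\ref{lem:average-true-form}, where the need for \AvgRule is traced precisely back to its use in Lemma~\ref{lem:two-cancel}. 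The paper's route, on the other hand, exercises the additive axiom and so serves as an early sanity check that \AvgRule interacts correctly with the ring element $\tfrac{1}{2}$. Your caveat about connectivity is well placed but not a real obstacle: once the two H-boxes share their Z-spider wires, \MultPhaseRule (or its !-boxed extension in Lemma~\ref{prop:multiply-bb}) applies directly, and the rest is bookkeeping.
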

\begin{proof}
	\[\tikzfig{two-cancel-pf} \qedhere\]
\end{proof}

Now we can also establish that $H(0)$ and $H(2)$ are as in Section~\ref{sec:labelledHboxes}.
\begin{multicols}{3}
\begin{lemma}\label{lem:zero-is-grey}~
	\ctikzfig{zero-is-grey}
\end{lemma}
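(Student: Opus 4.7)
The plan is to derive the equality in \ZHR between the $0$-labelled H-box and the grey-spider diagram that served as the original ZH definition of $H(0)$ via \eqref{eq:zero-high-arity}. The key tool will be the average rule \AvgRule instantiated at $a = 1$, $b = -1$ (both present in any unital commutative ring $R$), whose right-hand side is $H(a + b) = H(0)$ multiplied by the $H(2)$-NOT half-scalar. The left-hand side of this instance is an $H(1)$ together with an $H(-1) = H$ combined through a white spider with NOT edges, exactly in the shape prescribed by the average rule.

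I would then apply \UnitRule (combined with Lemma~\ref{lem:unit-bb}, which was just proved) to rewrite $H(1)$ as the trivial white identity, after which the remaining configuration of $H$-boxes connected through white spiders with NOT edges matches precisely the definition \GreyDef of the grey X-spider. To finish, the $H(2)$-NOT half-scalar on the $H(0)$ side is eliminated using Lemma~\ref{lem:two-cancel}, collapsing the extra factor and yielding the stated identity. If the lemma is stated at arbitrary arity, the !-boxed form of the average rule (as in Lemma~\ref{prop:avg-bb}) is invoked and Lemma~\ref{lem:unit-bb} is applied in !-boxed form to match arities.

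The main obstacle I expect is the scalar bookkeeping: the average rule introduces an explicit $H(2)$-NOT gadget on the $H(0)$ side that must be balanced against the scalars produced when $H(1)$ is absorbed into white structure via \UnitRule. An alternative, cleaner route would be to avoid \AvgRule entirely and apply \IntroRule with $a = 0$ directly, using the $a=0$ intro pattern together with the bialgebra rule \StrongCompRule to manufacture the grey spider from copies of $H(0)$; however, the average-rule route is likely shorter given that \UnitRule and Lemma~\ref{lem:two-cancel} are already placed immediately above this lemma in the text.
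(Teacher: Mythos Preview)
Your proposal is correct and takes essentially the same approach as the paper: instantiate \AvgRule at $a=1,\ b=-1$, simplify the $H(1)$ via \UnitRule/Lemma~\ref{lem:unit-bb}, and remove the $H(2)$-NOT gadget with Lemma~\ref{lem:two-cancel}. The paper's proof is exactly this short chain of rewrites, and your placement observation (that Lemmas~\ref{lem:unit-bb} and~\ref{lem:two-cancel} sit just above) correctly identifies the intended tools.
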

\vfill\null
\columnbreak
\begin{lemma}\label{lem:two-as-diagram}~
	\ctikzfig{two-as-diagram}
\end{lemma}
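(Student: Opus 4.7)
The plan is to derive the stated identity by instantiating \AvgRule with $a=b=1$ and then collapsing the resulting $H(1)$ boxes using the fact that labelled H-boxes with phase $1$ are trivial in \ZHR.

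First, I would apply \AvgRule (from Figure~\ref{fig:ZH-rules}) with both parameters set to $1\in R$. On one side of the axiom this yields a single H-box labelled $a+b=2$, typically accompanied by an $H(-2)$ scalar factor produced by the rule; on the other side, the only nontrivial H-boxes are two copies of $H(1)$, connected via the usual X-spider/NOT-edge pattern that \AvgRule prescribes.

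Next, I would simplify the $H(1)$-side: by \UnitRule, and more generally by Lemma~\ref{lem:unit-bb}, each $H(1)$ of any arity reduces to a white spider, so that side collapses into a diagram built purely from white spiders, X-spiders, NOT-edges, and the ordinary $H(-1)$ generator. On the opposite side, the spurious $H(-2)$ scalar is cancelled via Lemma~\ref{lem:two-cancel}, isolating the desired $H(2)$.

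Finally, a short sequence of cosmetic rewrites using the basic derived rules of Section~\ref{s:basic-derived} (\xspiderrule for X-spider fusion, \notcopyrule for pushing NOTs across spiders, and the scalar-cancellation Lemmas~\ref{lem:scalarcancelstars}--\ref{lem:scalarcancelznot}) should bring the diagram into the exact form given on the RHS of the lemma statement, matching the convention of \eqref{eq:H-box-2-def}. The main obstacle is bookkeeping: reconciling the NOT-edge pattern produced by \AvgRule$\bigl|_{a=b=1}$ with the RHS of the lemma, and ensuring the scalar factors cancel exactly. The cancellation of the $H(-2)$ ultimately relies on $2$ not being a zero-divisor in $R$, which is precisely our standing hypothesis, so Lemma~\ref{lem:two-cancel} is available throughout.
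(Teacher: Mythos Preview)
Your approach is correct and matches the paper's: instantiate \AvgRule at $a=b=1$, collapse the resulting $H(1)$ boxes via \UnitRule and Lemma~\ref{lem:unit-bb}, and then remove the residual scalar piece using Lemma~\ref{lem:two-cancel}. One small terminological slip worth flagging: the extra factor that \AvgRule produces on its right-hand side is the ``NOT-$2$'' gadget (an $H(2)$ with a NOT on its leg, acting like multiplication by $\tfrac12$), not an $H(-2)$; but since you already point to Lemma~\ref{lem:two-cancel} as the cancellation tool, this does not affect the argument.
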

\vfill\null
\columnbreak
\begin{lemma}\label{lem:two-scalar}~
	\ctikzfig{scalar-2}
\end{lemma}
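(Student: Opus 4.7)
The plan is to reduce this claim to the already-established Lemma~\ref{lem:scalar-2} from the phase-free section. The only real content of the statement is a bookkeeping issue: in the phase-free ZH of Section~\ref{sec:labelledHboxes}, the symbol $\scalar{hadamard}{2}$ denoted a derived gadget built from Z-spiders, H-boxes, and stars via the successor construction~\eqref{eq:succesor} applied twice to $H(0)$, whereas in \ZHR it is a primitive 0-input, 0-output H-box labelled by the ring element $2 = 1_R + 1_R$. We need to know these two readings agree before we can import the earlier proof.

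First, I would invoke Lemma~\ref{lem:two-as-diagram}, which is precisely the statement that the \ZHR-primitive $H(2)$ box equals the phase-free gadget for $H(2)$. Rewriting the left-hand side of the target equation via this lemma replaces the $R$-labelled $2$-box with its Section~\ref{sec:labelledHboxes} gadget form, and the equation becomes identical in shape to that of Lemma~\ref{lem:scalar-2}.

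Next, observe that the ruleset of Figure~\ref{fig:ZH-rules} is a strict superset of the ruleset of Figure~\ref{fig:phasefree-rules} (the new rules \HPhaseRule, \MultPhaseRule, \AvgRule, \IntroRule and \UnitRule only enlarge what is derivable). Since the proof of Lemma~\ref{lem:scalar-2} used only phase-free rules, every step of that derivation remains valid in \ZHR. Applying it verbatim to the rewritten left-hand side yields the desired right-hand side.

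There is no real obstacle here; the only thing to be careful about is the order of the three lemmas in the \texttt{multicols} block. Lemma~\ref{lem:two-as-diagram} is listed before Lemma~\ref{lem:two-scalar}, so the reduction is legitimate. In short, the proof amounts to: apply Lemma~\ref{lem:two-as-diagram} to convert the \ZHR presentation of $H(2)$ into the phase-free presentation, then quote Lemma~\ref{lem:scalar-2}.
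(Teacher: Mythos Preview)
Your proposal is correct and matches the paper's approach: the paper's proof simply reads ``Identical to the proof of Lemma~\ref{lem:scalar-2}.'' You are in fact being slightly more careful than the paper, since you make explicit that the first step of that identical derivation---unfolding the phase-free definition of $H(2)$---is justified in \ZHR by Lemma~\ref{lem:two-as-diagram}, whereas the paper leaves this implicit.
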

\end{multicols}

\begin{proof}[Proof of Lemma~\ref{lem:zero-is-grey}]
	\[\tikzfig{zero-is-grey-pf} \qedhere\]
\end{proof}

\begin{proof}[Proof of Lemma~\ref{lem:two-as-diagram}]
	\[\tikzfig{two-as-diagram-pf} \qedhere\]
\end{proof}

\begin{proof}[Proof of Lemma~\ref{lem:two-scalar}]
Identical to the proof of Lemma~\ref{lem:scalar-2}.
\end{proof}

With these equivalences established, we will now use induction to show
that the labelled H-boxes in \ZH\ are equivalent (using the rules of \ZHR)
to the corresponding H-boxes in \ZHR.

\begin{proposition} \label{prop:ZH-and-ZHR-same-labels}
All the integer labelled H-boxes of the \ZH-calculus (Section~\ref{sec:labelledHboxes})
are provably equal in \ZHR\ to the corresponding labelled H-box in the ZH$_R$-calculus.
\end{proposition}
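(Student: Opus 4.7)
My plan is to induct on the absolute value of the integer label, having first reduced the problem to the zero-input (state) case. For the arity reduction, note that \eqref{eq:labelledHboxhigherarity} defines the arity-$m$ \ZH-labelled H-box in terms of the state $H(n)$ composed with white spiders, and the rule \HPhaseRule in \ZHR\ allows the arity-$m$ $R$-labelled H-box of \ZHR\ to be decomposed in exactly the same way. So once the state versions are shown to be provably equal in \ZHR, the general arity case follows automatically, and I only need to handle the state case.

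For the base cases $n \in \{-1, 0, 1, 2\}$ I appeal to existing results: the plain H-box is by convention the $(-1)$-labelled H-box in both calculi; $H(0)$ is equated by Lemma~\ref{lem:zero-is-grey}; $H(1)$ is equated by Lemma~\ref{lem:unit-bb} together with the rule \UnitRule; and $H(2)$ is equated by Lemma~\ref{lem:two-as-diagram}. For positive integers $n \geq 2$ the induction proceeds via \eqref{eq:succesor}: \ZH's $H(n+1)$ is by definition the successor gadget applied to \ZH's $H(n)$, which by the inductive hypothesis is provably equal in \ZHR\ to the successor gadget applied to \ZHR's $H(n)$. It then suffices to prove, purely in \ZHR, the auxiliary lemma that applying the successor gadget to an arbitrary $R$-labelled $H(r)$ yields $H(r+1)$; I would derive this by unfolding the gadget and rewriting with \AvgRule, \MultPhaseRule, and \UnitRule to collapse the resulting diagram into a single H-box with label $r+1$.

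For negative integers, \ZH's $H(-n)$ is defined via \eqref{eq:def-negative-numbers} by post-composing $H(n)$ with a Z gate. I would establish the parallel fact in \ZHR, namely that post-composing $H(r)$ with a Z gate equals $H(-r)$; this follows by another short derivation using \HPhaseRule and \MultPhaseRule together with the interpretation of the Z gate as the $(-1)$-labelled scalar. This reduces the negative case to the positive one. The main obstacle I anticipate is the key auxiliary lemma in the positive inductive step: although semantically transparent, since the successor gadget implements the state map $\ket{0}\bra{0} + \ket{1}\bra{0} + \ket{1}\bra{1}$ and hence sends the state $H(r)$ to $H(r+1)$, pushing this through the axioms of \ZHR\ syntactically requires assembling \AvgRule and \MultPhaseRule in just the right order. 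Once this single lemma is secured, the rest of the induction is bookkeeping.
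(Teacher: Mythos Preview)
Your approach is correct and closely parallels the paper's, but the induction step is organised differently. The paper does not isolate an auxiliary ``successor lemma''; instead it runs the average rule \emph{bidirectionally} in a single chain: starting from the \ZHR\ generator $H(n{+}1)$, it applies Lemma~\ref{lem:two-cancel} and then \AvgRule as an axiom (right-to-left) to expose an intermediate diagram containing $H(n)$ and $H(1)$, invokes the inductive hypothesis on those, and then applies the \emph{derived} average rule (Proposition~\ref{prop:average-integer}) to collapse the result to the \ZH-labelled $H((n{+}1)')$. Your plan instead unfolds the \ZH\ successor gadget and proves directly that it sends the \ZHR\ box $H(r)$ to $H(r{+}1)$.

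Both routes rest on the same content: \AvgRule encodes label addition, and the bridge between the \ZH\ successor construction and \ZHR\ arithmetic is made by matching the diagram produced by \AvgRule with the one produced by the \ZH\ addition/successor machinery. The paper's version is a little slicker because it recycles Proposition~\ref{prop:average-integer} wholesale rather than reopening the successor gadget; your version is more modular and makes the role of the successor explicit. Your handling of the arity reduction via \HPhaseRule and of negative labels via \MultPhaseRule with $-1$ matches the paper's treatment essentially verbatim. The one place to be careful in your write-up is the auxiliary lemma: unfolding the successor gadget directly and collapsing it with \AvgRule, \MultPhaseRule, \UnitRule requires first identifying the \ZH-style $H(0)$ inside the gadget with the \ZHR\ generator $H(0)$ (Lemma~\ref{lem:zero-is-grey}), or else routing through Lemma~\ref{lem:addition-1-is-successor} as an intermediate step; either works, but it is a few more lines than the paper's two-sided use of \AvgRule.
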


\begin{proof} \label{prf:prop:ZH-and-ZHR-same-labels}
We will denote a labelled \ZH\ H-box of Section~\ref{sec:labelledHboxes} by $H(n')$
to distinguish it from a generator of the ZH$_R$-calculus.
We have for every $n \in \{ 1, 2, 3, \dots \} \subset R$:
\ctikzfig{n-induction}
Here the induction step is denoted by (*).
Note that we first used the average rule as an axiom of \ZHR\ acting on the generators (written $\avgrule$),
and then as a derived rule from the original ZH-calculus (Proposition~\ref{prop:average-integer}).
For negative integers we simply note that the definition of negation \eqref{eq:def-negative-numbers}
corresponds precisely to multiplying by $-1$ in \MultPhaseRule.
\end{proof}

As in Section~\ref{sec:avg-intro-mult}, we can prove !-boxed versions of the axioms \MultPhaseRule, \AvgRule and \IntroRule. In particular, the !-boxed version of \MultPhaseRule and \UnitRule (i.e.\ Lemma~\ref{lem:unit-bb}) where the !-box is expanded $0$ times give:
\begin{equation}\label{eq:scalar-cancel}
\tikzfig{scalar-mult} \qquad\qquad\qquad\qquad\qquad\qquad
  \tikzfig{one-cancellation}
\end{equation}
These rules enable us to multiply scalars at will, and in particular eliminate scalars by multiplying by the inverse
(when the inverse exists in $R$).
Finally we will prove that the star generator is equal to the scalar $\half$ in \ZHR.

\begin{lemma} \label{lem:star-is-half}
If $\half \in R$, then $\dstar = \scalar{hadamard}{\half}$.
\end{lemma}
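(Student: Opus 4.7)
The plan is to show that both $\dstar$ and $\scalar{hadamard}{\half}$ act as multiplicative inverses of the scalar H-box $\scalar{hadamard}{2}$, and then identify them through a short chain of equalities.

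First, I would observe that by the multiply rule \MultPhaseRule applied to two arity-zero H-boxes, $\scalar{hadamard}{\half}\,\scalar{hadamard}{2} = \scalar{hadamard}{1}$, since $\half \cdot 2 = 1$ in $R$. Combined with \UnitRule (equivalently, the arity-zero instance of Lemma~\ref{lem:unit-bb}), which identifies $\scalar{hadamard}{1}$ with the empty diagram, this tells us that $\scalar{hadamard}{\half}$ is a right inverse of the scalar $\scalar{hadamard}{2}$.

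Second, Lemma~\ref{lem:scalar-cancel-2} already establishes that $\scalar{hadamard}{2}\,\dstar$ equals the empty diagram. Although that lemma was originally derived within the phase-free ZH-calculus, the rules of \ZHR\ strictly contain those of ZH, and by Proposition~\ref{prop:ZH-and-ZHR-same-labels} the integer-labelled H-boxes of ZH coincide with the corresponding arity-zero H-boxes in \ZHR, so the derivation transfers verbatim.

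The final step is then the calculation
\[
\scalar{hadamard}{\half} \;=\; \scalar{hadamard}{\half}\,\scalar{hadamard}{1} \;=\; \scalar{hadamard}{\half}\,\scalar{hadamard}{2}\,\dstar \;=\; \scalar{hadamard}{1}\,\dstar \;=\; \dstar,
\]
justified in order by \UnitRule, Lemma~\ref{lem:scalar-cancel-2}, \MultPhaseRule (run backwards), and \UnitRule once more. I do not anticipate any real obstacle here: the substantive work has already been packaged into Lemma~\ref{lem:scalar-cancel-2} and into the availability of \MultPhaseRule and \UnitRule as axioms of \ZHR; the only thing to keep in mind is that the inversion in \MultPhaseRule requires $2$ and $\half$ to exist in $R$, which is exactly the hypothesis of the lemma.
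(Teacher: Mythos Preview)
Your proof is correct and follows essentially the same approach as the paper. The paper's chain is
\[
\scalar{hadamard}{\half}
\;=\;
\dstar\,\scalar{white dot}{}\,\scalar{hadamard}{\half}
\;=\;
\dstar\,\scalar{hadamard}{2}\,\scalar{hadamard}{\half}
\;=\;
\dstar\,\scalar{hadamard}{1}
\;=\;
\dstar,
\]
using Lemma~\ref{lem:scalarcancelstars} and Lemma~\ref{lem:two-scalar} where you use Lemma~\ref{lem:scalar-cancel-2}; since the latter is proved precisely from the former two, the arguments are the same up to packaging. One small slip: in your third step you apply \MultPhaseRule in the forward direction (fusing $H(\half)$ and $H(2)$ into $H(1)$), not backwards.
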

\begin{proof} \label{prfLem:star-is-half}
\[
\scalar{hadamard}{\half}
\namedeq{\ref{lem:scalarcancelstars}}
\dstar \scalar{white dot}{} \scalar{hadamard}{\half}
\namedeq{\ref{lem:two-scalar}}
\dstar \scalar{hadamard}{2}\scalar{hadamard}{\half}
\namedeq{\eqref{eq:scalar-cancel}}
\dstar \scalar{hadamard}{1}
\namedeq{\eqref{eq:scalar-cancel}}
\dstar \qedhere
\]
\end{proof}

With the correspondence to the standard ZH-calculus now firmly established, we can easily adapt our previous results to prove that \ZHR\ is complete.

\begin{theorem}\label{thm:ZHRHalf-complete}
Let $R$ be a commutative unital ring with $\half \in R$.
Then the ZH$_R$-calculus is complete,
i.e.\ for any ZH$_R$-diagrams $D_1$ and $D_2$, if $\llbracket D_1 \rrbracket = \llbracket D_2 \rrbracket$,
then $D_1$ can be transformed into $D_2$ using the rules of Figure~\ref{fig:ZH-rules}.
\end{theorem}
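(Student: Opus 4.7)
The plan is to adapt the conditional completeness argument of Proposition~\ref{prop:completeness-conditional} to \ZHR. By Theorem~\ref{thm:ZHR-universality}, every matrix in $R$-\textbf{bit} is the interpretation of a unique normal-form diagram, so completeness reduces to showing that every \ZHR-diagram rewrites to such a normal form.

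First I would observe that the entire rewrite machinery of Sections~\ref{s:basic-derived} and~\ref{sec:normal-forms} transfers essentially verbatim to \ZHR. Proposition~\ref{prop:ZH-and-ZHR-same-labels} identifies the integer-labelled H-boxes of the phase-free calculus with their \ZHR\ counterparts, and the derived rules of Section~\ref{s:basic-derived} use only rules that are also present in Figure~\ref{fig:ZH-rules}. In particular Lemmas~\ref{lem:H-box-nf} and~\ref{lem:Z-spider-nf}, Propositions~\ref{prop:extension},~\ref{prop:convolution} and~\ref{prop:contraction}, and Corollaries~\ref{cor:tensor-product} and~\ref{cor:cap-nf} all go through, since their hypotheses $M_{a,b}$, $A_{a,b}$ and $I_a$ — which previously had to be proved case-by-case for every integer label encountered — now hold for arbitrary $a,b\in R$ by the axioms \MultPhaseRule, \AvgRule and \IntroRule.

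The one small adjustment needed is the treatment of the scalar $\dstar$, since the \ZHR\ normal form of Theorem~\ref{thm:ZHR-universality} is star-free. By Lemma~\ref{lem:star-is-half} we have $\dstar = \scalar{hadamard}{\half}$, and by \MultPhaseRule\ this scalar can be absorbed into any labelled H-box, multiplying its parameter by $\half\in R$. Hence after running the reduction-to-normal-form procedure of Section~\ref{sec:normal-form-conditional} on an arbitrary \ZHR-diagram $D$, any residual stars can be pushed one at a time into the $a_{\vec{b}}$ coefficients of the normal form until none remain, yielding a diagram of the form $\prod_{\vec{b}\in\mathbb B^n}(\iota_{\vec{b}}\circ H_n(a_{\vec{b}}))$ with $a_{\vec{b}}\in R$.

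Putting the pieces together: given any \ZHR-diagram $D$, Lemmas~\ref{lem:H-box-nf} and~\ref{lem:Z-spider-nf} put each generator into normal form, Corollary~\ref{cor:tensor-product} handles tensor products, Corollary~\ref{cor:cap-nf} handles the remaining wiring, and the star-elimination above removes any surplus $\dstar$. Uniqueness of normal forms (Theorem~\ref{thm:ZHR-universality}) then forces any two \ZHR-diagrams with equal interpretations to rewrite to the same normal form, which gives completeness. The main obstacle, such as it is, is bookkeeping: checking that every invocation of a conditional lemma from Section~\ref{sec:normal-forms} is supplied the right axiom instance and that scalar factors are tracked consistently under the identification $\dstar = H(\half)$; the genuinely hard combinatorial work was already done in proving Proposition~\ref{prop:completeness-conditional}.
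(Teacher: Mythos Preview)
Your proposal is correct and follows essentially the same route as the paper's proof: the \ZHR\ rules contain the phase-free rules, so the conditional reduction to normal form from Section~\ref{sec:normal-form-conditional} goes through once \MultPhaseRule, \AvgRule and \IntroRule supply the required instances as axioms, after which uniqueness of (reduced) normal forms gives completeness. One small imprecision worth flagging: a scalar $H(\tfrac12)$ cannot be absorbed into a single higher-arity H-box $H_n(a)$ via \MultPhaseRule\ alone to yield $H_n(\tfrac a2)$, since $\llbracket H_n(\tfrac a2)\rrbracket \neq \tfrac12\,\llbracket H_n(a)\rrbracket$; the correct way to eliminate stars is the mechanism behind Proposition~\ref{prop:normal-to-reduced} (or, equivalently, \IntroRule\ followed by \MultPhaseRule\ applied to every component of the normal form), which always runs to completion in \ZHR\ because every element of $R$ is divisible by~$2$.
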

\begin{proof}
As the rules of Figure~\ref{fig:ZH-rules} are a superset of those of Figure~\ref{fig:phasefree-rules}, anything we have proven for ZH remains provable in \ZHR.
We have also shown in Proposition~\ref{prop:ZH-and-ZHR-same-labels} that all the labelled (\ZH) H-boxes used in the completeness proof
are equivalent to their \ZHR-calculus counterparts.
In particular, the reduction to normal form conditional on having proved \eqref{eq:mult-def}, \eqref{eq:intro-def} and \eqref{eq:average-def} of Section~\ref{sec:normal-form-conditional} remains valid.
But now, instead of having to prove the multiplication, introduction and average rules, we have them assumed as axioms.
Thus, we can transform any \ZHR-diagram into reduced normal form, and hence the calculus is complete.

\end{proof}

\subsection{The ZH-calculus over rings without a half} \label{sec:ring-without-half}

The \ZHR-calculus when $R$ does not contain a half is slightly trickier to deal with.
As before, we will assume that $R$ is a commutative unital ring.
But instead of assuming that $\half\in R$, we will merely assume that $2$ is not a zero-divisor.

With these conditions, $R$ embeds faithfully into $R[\half]$,
and so \ZHR\ as defined in Section~\ref{sec:ring-with-half} has a non-universal interpretation into $R[\half]$-\textbf{bit}.
We could then add the $\dstar$ generator
creating a universal interpretation into $R[\half]$-\textbf{bit}.
This is essentially the process one follows to go from $\ZH_\mathbb{Z}$ to our `phase-free' $\ZH$,
but also raises the question of `why not just start with \ZHRhalf\ instead?'
We present in this subsection a definition of \ZHR, with $\half \notin R$,
that is universal for $R$-\textbf{bit} and is complete using the same rules as \ZHR\ when $\half \in R$,
with the exception of an additional `scalar cancellation' meta-rule.

Since we want to have an interpretation into $R$-\textbf{bit}
we will no longer be able to use the $\dstar$ generator,
which in turn makes the definition of the derived generators \GreyDef and \NotDef invalid.
Instead we will promote the grey spiders from being \emph{derived} generators to, simply, generators.
Hence, our list of generators becomes:
\[
\spider{white dot}{}\ ,\ \spider{hadamard}{r} \ ,\ \spider{gray dot}{} \ ,\ \spider{gray dot}{\neg}
\]

The ruleset of the \ZHR-calculus when $\half\not\in R$ consists of the rules of Figure~\ref{fig:ZH-rules}, where now the grey spiders are considered to be actual generators, and the usual symmetries of the generators.
In addition, as the grey spiders are now no longer defined in terms of the other generators, we introduce the following rules to relate the generators to one another,
which are scaled versions of the definitions of the grey spiders \eqref{eq:defx}:
\begin{equation}\label{eq:def2x}
  \tikzfig{X-spider-dfn-free-doubled} \tag{2X}
\end{equation}
\begin{equation}\label{eq:not2x}
\tikzfig{negate-dfn-free-doubled} \tag{2NOT}
\end{equation}

Finally, we will also need the following cancellation meta-rule to get completeness when $\half \notin R$:
\begin{equation}\label{eq:2cancel}
\whitedot D_1 = \whitedot D_2 \implies D_1 = D_2 \tag{2Cancel}
\end{equation}
In words, this rule says that if two diagrams are provably equal in \ZHR\ and they both contain a scalar white dot, then we can cancel the white dot on both sides and retain an equality.
(When $\dstar$ is in the calculus, this follows easily by an application of Lemma~\ref{lem:scalarcancelstars}.)
We call this implication a meta-rule because it is a statement about admissible deductive logic rather than diagram equality.
Note that since 2 is not a zero-divisor in $R$ this rule is sound.
A similar meta-rule was also used for proving completeness
of rational-angle fragments of the ZX calculus \cite{ZXNormalForm}.

The plan for showing completeness of \ZHR\ when $\half \notin R$
is to show that the \ZHRhalf-derivation from a \ZHR-diagram $D$ to its normal
form can be used as a \ZHR-derivation, up to rescaling.
While this idea is straightforward, formalising it takes some care.

\begin{remark} \label{rem:grey-spider-unpacking-explicit}
Every \ZHR\ diagram can be seen as a \ZHRhalf\ diagram by translating each diagram component to their obvious counterpart.
For grey spiders we need to be careful about this translation,
since they are generators in \ZHR, but derived generators in \ZHRhalf.
We will be using this embedding in the lemmas that follow,
noting that we will still explicitly `unpack' grey spiders
using the rules \NotDef, \GreyDef, \TwoNot and \TwoX when we need to (and not leave such unpacking implicit).
We are treating the $\dstar$ element as syntactic sugar to represent the 0-arity $H(\half)$
box when it arises.
\end{remark}

\begin{lemma} \label{lem:rescaled-derived-generators}
Suppose $D_1$ and $D_2$ are \ZHR-diagrams.
Let $*$ denote either of the \ZHRhalf-rules \NotDef or \GreyDef,
with $2*$ denoting the corresponding rescaled \ZHR\ version \TwoNot or \TwoX.
Then if the \ZHRhalf\ rule application
\[D_1 \namedeq{*} \dstar D_2\]
is a valid diagrammatic rewrite,
so is the \ZHR\ rule application
\[ \whitedot D_1 \namedeq{$2*$} D_2 \]
\end{lemma}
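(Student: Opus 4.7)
The plan is to exploit the fact that \TwoX\ and \TwoNot\ are literally obtained from \GreyDef\ and \NotDef\ by multiplying both sides by $2$: the scalar $\dstar$ appearing on one side of the \ZHRhalf\ definitional equation is traded for a scalar $\whitedot$ on the opposite side in the \ZHR\ version. So a single-step \ZHRhalf\ rewrite that creates a fresh $\dstar$ should correspond exactly to a single-step \ZHR\ rewrite that consumes a $\whitedot$, with the diagrammatic context untouched.

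First I would unpack the hypothesis. A single application of \GreyDef\ (the argument for \NotDef\ is identical) means there is a diagram context $C[-]$ and patterns $L,R$ (the LHS and RHS of \GreyDef, i.e.\ the grey-spider gadget and its decomposition into white spider, Hadamards and $\dstar$) such that $D_1 = C[L]$ and the rewrite produces $C[\dstar R] = \dstar\, C[R]$; comparing with the hypothesis $D_1 \namedeq{*} \dstar D_2$ yields $D_2 = C[R]$. The step of pulling the $\dstar$ out of the context is justified by the fact that $\dstar$ is a $0\to 0$ morphism, hence central in the symmetric monoidal category of diagrams, and so by the meta-rule ``only topology matters'' it can be relocated arbitrarily.

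Now I would mirror this argument in \ZHR. Starting from $\whitedot D_1 = \whitedot\, C[L]$, centrality of the $0\to0$ scalar $\whitedot$ lets me push it inside the context next to $L$, giving $C[\whitedot\, L]$. By inspection, the \ZHR\ rule \TwoX\ is exactly the equation $\whitedot\, L = R$ (with no $\dstar$ anywhere), so one in-context application of \TwoX\ rewrites $C[\whitedot\, L]$ to $C[R] = D_2$, as required.

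The only conceptual obstacle is really a matter of bookkeeping: one has to justify that scalar sub-diagrams can be inserted into and extracted from arbitrary contexts to align them with the pattern to be rewritten. This is standard (scalars in a symmetric monoidal category are central and move freely through any string diagram), and is built into the ``only topology matters'' meta-rule of the ZH-calculus, so no extra lemma is needed. No case analysis beyond distinguishing $*=\GreyDef$ from $*=\NotDef$ is necessary, and the two cases are treated uniformly.
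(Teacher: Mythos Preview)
Your proposal is correct and takes essentially the same approach as the paper, which simply says ``This follows immediately from the definitions.'' You have just spelled out in detail what that immediacy amounts to: the context/centrality-of-scalars argument is exactly the content hidden behind the one-line proof.
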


\begin{proof} \label{prf:lem:rescaled-derived-generators}
This follows immediately from the definitions.
\end{proof}

\begin{definition}
\label{def:ZHR-with-stars}
We say that a \ZHRhalf\ diagram is a \emph{$\ZHRstar$} diagram
(pronounced `ZHR with stars' diagram)
if it is of the form $D \tensor (\dstar)^{\tensor n}$, where $D$ is a \ZHR\ diagram.
\end{definition}

Recall that the rules of \ZHRhalf\ are the rules given in Figure~\ref{fig:ZH-rules}
and the rules relating the generators to derived generators: \GreyDef, \NotDef, and \ZDef.

\begin{lemma} \label{lem:leaving-ring-R-for-Rhalf}
The only \ZHRhalf\ rules that
can be applied to a \ZHRstars\ diagram to give
a diagram that is no longer a \ZHRstars\ diagram are
\MultPhaseRule, \AvgRule, and \HPhaseRule.
\end{lemma}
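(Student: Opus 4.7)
The plan is a case analysis on the rewrite rules of \ZHRhalf, which consist of the axioms in Figure~\ref{fig:ZH-rules} together with the derived-generator definitions \NotDef, \GreyDef, and \ZDef. For each rule I need to verify whether applying it (in either direction) to a \ZHRstars\ diagram can produce a diagram that fails to decompose as $D \otimes (\dstar)^{\otimes k}$ for some \ZHR-diagram $D$. In principle this could happen two ways: either a new H-box label is introduced which lies in $R[\half] \setminus R$, or a $\dstar$ is produced in a position where it cannot be separated as a tensor factor. The second failure mode never arises, since $\dstar$ is the zero-arity $H(\half)$-box and therefore has no free wires to entangle with anything; any $\dstar$ produced by rewriting is automatically a free-standing scalar. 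The question thus reduces to tracking labels.

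First I sweep the obviously label-insensitive rules. The rules \SpiderRule, \IDRule, \HHRule, \StrongCompRule, \HCompRule, and \OrthoRule mention no H-box labels at all, so they cannot introduce or modify any. The rule \UnitRule only mentions the fixed label $1 \in R$, and \IntroRule merely duplicates an existing $H(a)$-box into several copies with the same label. For the three derived-generator definitions, applying \NotDef, \GreyDef, or \ZDef either unpacks or repacks a grey spider or white negate in terms of white spiders, H-boxes, and $\dstar$; by the bookkeeping convention of Remark~\ref{rem:grey-spider-unpacking-explicit} the grey spiders in a \ZHR-diagram are left packed except when explicitly unpacked by one of these rules, and in either direction of application the $\dstar$'s that appear are free-standing and no label outside $R$ is introduced. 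All rules in this first group therefore preserve the \ZHRstars\ form.

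The three remaining rules \HPhaseRule, \MultPhaseRule, and \AvgRule are precisely the ones that perform non-trivial arithmetic on labels: \MultPhaseRule fuses $H(a)$ and $H(b)$ into $H(a b)$, \AvgRule combines $H(a)$ and $H(b)$ into a diagram whose central H-box carries the sum $a + b$, and \HPhaseRule similarly blends the labels of the H-spiders it acts on. Applied in the forward direction to labels in $R$, the new labels also lie in $R$; but these axioms are bidirectional, and reverse applications split a single $R$-label into pairs whose product, sum, or spider-composite reproduces it. Nothing forces the individual factors to lie in $R$, so such reverse applications can in general introduce labels in $R[\half] \setminus R$. This is exactly how \HPhaseRule, \MultPhaseRule, and \AvgRule can escape the \ZHRstars\ fragment, justifying their singling out.

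The main obstacle is not technical depth but meticulous bookkeeping around the grey spiders, which are primitive in \ZHR\ but abbreviations containing $\dstar$ in \ZHRhalf; crucially, the \ZHRstars\ condition is about the syntactic form of the \ZHR-part, not its further unfolding. Once Remark~\ref{rem:grey-spider-unpacking-explicit} is taken at face value, every case reduces to a one-line inspection, and the substantive content of the lemma is the observation that arithmetic on labels is the unique mechanism by which a half can appear in a diagram.
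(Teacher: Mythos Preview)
Your approach matches the paper's: a case analysis over the \ZHRhalf\ rules, checking which ones can produce an H-box label outside $R$ or otherwise break the \ZHRstars\ form. Your treatment of the label-insensitive rules, of \UnitRule\ and \IntroRule, and of the derived-generator definitions is correct and in fact slightly more explicit than the paper, which subsumes \GreyDef, \NotDef, \ZDef\ under ``rules that do not involve H-box labels at all''.

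However, your explanation of \emph{why} \HPhaseRule\ can leave the fragment is not right. You describe \HPhaseRule\ as a rule that ``blends the labels of the H-spiders it acts on'' and group it with \MultPhaseRule\ and \AvgRule\ as a label-arithmetic rule whose reverse application could split an $R$-label into pieces outside $R$. But \HPhaseRule\ does no arithmetic on labels: it rewrites an $H(a)$-box of one arity, connected through a Z-spider, into an $H(a)$-box of a different arity with the \emph{same} label $a$ throughout. The actual escape mechanism, and the one the paper gives, is more specific: since $\dstar$ is by convention the arity-$0$ $H(\half)$-box, applying \HPhaseRule\ right-to-left to a star yields an $H(\half)$-box of strictly positive arity, which is no longer a free-standing scalar and therefore cannot be absorbed into the $(\dstar)^{\otimes k}$ factor. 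So it is arity change acting on the star, not label arithmetic, that makes \HPhaseRule\ dangerous here. Your conclusion is correct, but the justification for the \HPhaseRule\ case needs this fix.
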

\begin{proof}
The \MultPhaseRule and \AvgRule rules (applied right-to-left)
can both result in H-box phases that are no longer in $R$.
Furthermore, recall from Remark~\ref{rem:grey-spider-unpacking-explicit} that the star element
is used here as short-hand for the 0-arity H-box labelled by $\half$.
Hence $\half$-labelled H-boxes of arity 0 are allowed in \ZHRstar\ diagrams, while $\half$-labelled H-boxes of higher arity are not allowed.
Yet the \HPhaseRule rule, when applied right-to-left to a $\half$-labelled H-box of arity 0,
yields a $\half$-labelled H-box with arity greater than 0.

All other rules preserve \ZHRstar\ diagrams: the only remaining rules that affect H-box labels are \UnitRule, which can only introduce the integer label 1, and \IntroRule, which does not change any labels and does not match H-boxes of arity 0.
The other rules do not involve H-box labels at all.
\end{proof}

\begin{lemma} \label{lem:all-star-boxes-are-arity-zero}
The $\ZHRhalf$ derivation from a \ZHR\ diagram $D$ to its normal form $N$ of Theorem~\ref{thm:ZHRHalf-complete}
is a \ZHRstar\ diagram at every step of the derivation.
\end{lemma}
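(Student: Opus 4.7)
The plan is to proceed by induction on the length of the \ZHRhalf\ derivation from $D$ to its normal form $N$. The base case is immediate: $D$ itself is a \ZHR\ diagram, hence a \ZHRstar\ diagram with zero stars.

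For the inductive step, assume the diagram at step $k$ lies in \ZHRstar. By Lemma~\ref{lem:leaving-ring-R-for-Rhalf}, the only rule applications that could fail to preserve this property are instances of \MultPhaseRule, \AvgRule, and \HPhaseRule, so these are the three cases to analyse.

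First I would argue that in the derivation used to prove Theorem~\ref{thm:ZHRHalf-complete}, \MultPhaseRule\ and \AvgRule\ are only invoked left-to-right, as axioms combining two $R$-labelled H-boxes into a single one labelled by $ab$ or $a+b$ respectively; both remain in $R$, and \AvgRule\ merely also introduces a $\dstar$ scalar, which is permitted in \ZHRstar. Next I would handle \HPhaseRule: this rule merges two H-boxes sharing a label $a$ across a connecting wire, and the only way for it to escape \ZHRstar\ would be to produce a positive-arity $H(\half)$ box. But in a \ZHRstar\ diagram the only $H(\half)$ box present is the arity-$0$ scalar $\dstar$, which has no wires, so no such merging is possible; hence \HPhaseRule\ always acts on a pair of $R$-labelled H-boxes and yields an $R$-labelled result.

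The main obstacle I expect is verifying that no $H(\half)$ box of positive arity arises transiently during the derivation, since several supporting lemmas used in the normal-form reduction (notably Lemma~\ref{lem:cancel-two-half} and its use in Lemma~\ref{lem:nf-avg}) explicitly involve $H(\half)$ boxes under !-boxes. To deal with these, I would carefully trace each use and show that when the derivation is unfolded concretely, every such $H(\half)$ either appears at arity $0$ (so that it becomes a legitimate $\dstar$ scalar) or can be eliminated immediately by pairing it with the $H(2)$ scalar that introduced it, using scalar cancellation (Lemmas~\ref{lem:scalarcancelhh} and~\ref{lem:scalarcancelstars}). With this bookkeeping in place, the intermediate diagrams remain in \ZHRstar\ throughout, and the induction closes.
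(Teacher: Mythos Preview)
Your inductive framework and appeal to Lemma~\ref{lem:leaving-ring-R-for-Rhalf} match the paper's approach, but there is a genuine gap in your treatment of \HPhaseRule. You argue that \HPhaseRule\ ``merges two H-boxes sharing a label $a$ across a connecting wire'' and then observe that the arity-$0$ scalar $\dstar$ has no wires, so it cannot participate in such a merge. But rules may be applied in either direction, and the problematic case identified in Lemma~\ref{lem:leaving-ring-R-for-Rhalf} is precisely the \emph{right-to-left} application of \HPhaseRule\ to an arity-$0$ $H(\tfrac12)$ box, which would \emph{unfuse} it into two positive-arity $H(\tfrac12)$ boxes connected by a white spider. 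Your argument does not rule this out. Similarly, your claim that \MultPhaseRule\ and \AvgRule\ are only invoked left-to-right is an assertion about the specific derivation that you do not justify.

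The paper's proof closes these gaps by a different and simpler observation: the derivation of Theorem~\ref{thm:ZHRHalf-complete} is literally the same sequence of rewrites used for phase-free \ZH\ in Section~\ref{sec:normal-forms}, where the star is a primitive generator (not an $H(\tfrac12)$ box) and all H-box labels are integers. Hence, by construction, \HPhaseRule\ is never applied to a star (that would be ill-typed in phase-free \ZH), and \MultPhaseRule\ and \AvgRule\ are only ever instantiated with labels in $R$. This makes your final paragraph unnecessary: Lemmas~\ref{lem:intro-half} and~\ref{lem:cancel-two-half} are phase-free lemmas whose diagrams contain no positive-arity $H(\tfrac12)$ boxes at all (the ``half'' in their names does not refer to an $H(\tfrac12)$ label), so there is no transient appearance of such boxes to chase down.
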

\begin{proof} \label{prf:lem:all-h0-boxes-are-arity-zero}
The starting diagram $D$ is a \ZHR\ diagram (and therefore also a \ZHRstar\ diagram).
Each step of the derivation to its normal form $N$ is a rule application, and we will show that
every rule application used preserves the condition of being a \ZHRstars\ diagram.
Intuitively, this is because the process for rewriting to normal form was derived in the phase-free ZH-calculus, which differs from $\ZH_\mathbb{Z}^{\dstar}$ only in that grey spiders are syntactic sugar rather than generators.

More formally: by Lemma~\ref{lem:leaving-ring-R-for-Rhalf} the only rule applications that could result
in a diagram that is no longer in \ZHRstars\ form are \MultPhaseRule, \AvgRule, and \HPhaseRule.
As can be verified by going through the steps of the derivation in Section~\ref{sec:normal-forms}, the derivation from $D$ to $N$ only uses \MultPhaseRule and \AvgRule
with phases that are elements of $R$, and hence they preserve the \ZHRstars\ form.

Also note that at no point in the derivation is the rule \HPhaseRule
applied to a star, since in ZH that would result in an invalid $\half$-phased H-box.
\end{proof}

The previous lemma shows that the derivation from a \ZHR-diagram $D$ to its normal form in \ZHRhalf\ consists of diagrams that are close to being \ZHR-diagrams, except for the star generators.
We shall next show that the only rules that interact with $\!\dstar\!$ are \NotDef and \GreyDef,
which are precisely the rules we replace when using $\ZHR$ instead of $\ZHRhalf$.

\begin{lemma} \label{lem:only-use-stars-as-stars}
In the $\ZHRhalf$ derivation from a \ZHR\ diagram $D$ to its normal form $N$ of Theorem~\ref{thm:ZHRHalf-complete},
the element $\dstar$ is only ever involved in a rewrite when the rule being applied is \NotDef or \GreyDef.
\end{lemma}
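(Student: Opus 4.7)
The plan is a structural audit of the ZH$_{R[\half]}$ axiom set. Its rules consist of those in Figure~\ref{fig:ZH-rules} together with the derived-generator definitions (NOT), (X), and (Z), the last of which is derived from the first two. The crucial observation is: among these axioms, only (NOT), (X), and (Z) contain the star generator $\dstar$ anywhere in their LHS or RHS; every rule of Figure~\ref{fig:ZH-rules} --- namely (zs), (id), (HS), (hh), (ba$_1$), (ba$_2$), (M), (o), (A), (I), (U) --- is built solely from white spiders, H-boxes labelled by ring elements, and Hadamards. Since rewrite matching is structural, a rule whose statement does not mention $\dstar$ cannot match any occurrence of $\dstar$ in the host diagram: any star present sits in the context and is left untouched.

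First I would unfold the normal-form derivation supplied by Theorem~\ref{thm:ZHRHalf-complete} into a sequence of atomic axiom applications. The derivation is assembled from many intermediate lemmas --- chiefly the star-manipulation lemmas of Section~\ref{s:basic-derived} and the reduction lemmas of Section~\ref{sec:normal-forms} --- but each of those lemmas is itself a finite composite of axiom applications. At this atomic level the above observation applies directly: any step in which $\dstar$ actually participates (i.e., is matched, introduced, or removed) must invoke a rule whose statement contains $\dstar$, and hence must be (NOT), (X), or (Z). Applications of (Z) are then expanded into applications of (NOT) and (X), since (Z) is a derived equality from those two.

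The main obstacle is verifying that the star-manipulation lemmas invoked along the way (for example Lemma~\ref{lem:scalarcancelstars}, Lemma~\ref{lem:scalar-2}, and Lemma~\ref{lem:cancel-two-half}) have proofs that themselves introduce and remove stars only via (NOT) and (X). This is a routine but somewhat tedious check of each such proof: since (NOT) and (X) are, by the audit above, the only axioms that mention $\dstar$, every step within any such proof that creates or destroys a star must invoke one of them. Once this verification is complete, the atomic-level expansion of the normal-form derivation yields the claim.
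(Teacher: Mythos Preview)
Your argument has a genuine gap. You assert that the rules of Figure~\ref{fig:ZH-rules} are ``built solely from white spiders, H-boxes labelled by ring elements, and Hadamards'' and therefore cannot match the $\dstar$ generator. But in the setting of $\ZHRhalf$, the paper explicitly treats $\dstar$ as \emph{syntactic sugar for the $0$-arity H-box labelled $\tfrac12$} (Remark~\ref{rem:grey-spider-unpacking-explicit} and Lemma~\ref{lem:star-is-half}). Consequently, rules that match labelled H-boxes of arbitrary arity \emph{can} match $\dstar$. In particular, the right-hand side of \HPhaseRule is a single labelled H-box whose arity may be zero; applying \HPhaseRule right-to-left to $\dstar$ produces a $\tfrac12$-labelled H-box of strictly positive arity. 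This is precisely the interaction flagged in Lemma~\ref{lem:leaving-ring-R-for-Rhalf}. Your purely structural audit therefore does not rule out \HPhaseRule, and the conclusion does not follow.

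The paper's proof closes this gap not by a structural argument but by appealing to the \emph{specific} derivation: the normal-form procedure of Theorem~\ref{thm:ZHRHalf-complete} was built for phase-free ZH, where applying \HPhaseRule to a star would yield an illegal $\tfrac12$-labelled higher-arity H-box, and hence that step simply never occurs in the derivation (this is the content noted in the proof of Lemma~\ref{lem:all-star-boxes-are-arity-zero}). To repair your argument you would need to add exactly this observation: identify \HPhaseRule as a rule that \emph{could} touch $\dstar$, and then argue that the particular derivation never invokes it on a star.
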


\begin{proof} \label{prf:lem:only-use-stars-as-stars}
The only \ZHRhalf\ rules that interact with the $\dstar$ element are \NotDef, \GreyDef,
and \HPhaseRule (viewing the star as a 0-arity H-box).
As noted in the proof of Lemma~\ref{lem:all-star-boxes-are-arity-zero}
the \HPhaseRule rule is never applied to a star in the course of the derivation,
since the derivation was designed to be applied to ZH diagrams where such a rewrite would be invalid.
\end{proof}

\begin{lemma} \label{lem:ZHRhalf-nf-derivation-to-ZHR-nf-derivation}
If the \ZHR\ diagram $D$ has normal form $N$, then there is a $m\in \N$ such that the equation $D \tensor (\whitedot)^{\tensor m} = N \tensor (\whitedot)^{\tensor m}$
is derivable in \ZHR.
\end{lemma}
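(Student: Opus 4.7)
The plan is to mirror the given $\ZHRhalf$-derivation of $D$ to its normal form $N$ (which exists by Theorem~\ref{thm:ZHRHalf-complete}) by a $\ZHR$-derivation, using a reserve of $\whitedot$-scalars in place of any $\dstar$-scalars that would otherwise appear at intermediate stages. Write the derivation as $D = D_0 \to D_1 \to \cdots \to D_k = N$. By Lemma~\ref{lem:all-star-boxes-are-arity-zero}, each intermediate diagram has the form $D_i = D_i' \tensor (\dstar)^{\tensor n_i}$ where $D_i'$ is a $\ZHR$-diagram; since $D$ is a $\ZHR$-diagram by hypothesis and the normal form $N$ also has labels in $R$ (because $\intf{D}$ is an $R$-valued matrix, forcing $a_{\vec{b}} \in R$), we have $n_0 = n_k = 0$. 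Define $m := \max_i n_i$, which is finite because the derivation has finitely many steps.

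The key idea is to maintain the invariant that after step $i$, the mirror $\ZHR$-diagram is $D_i' \tensor (\whitedot)^{\tensor (m-n_i)}$; this is a well-formed $\ZHR$-diagram precisely because $m \geq n_i$ always. I claim that every $\ZHRhalf$-step $D_i \to D_{i+1}$ lifts to a valid $\ZHR$-step between the corresponding mirror diagrams. By Lemma~\ref{lem:only-use-stars-as-stars} the $\dstar$-scalars are touched only by rewrites using \NotDef or \GreyDef, so three cases exhaust the possibilities. If the rule applied is neither \NotDef nor \GreyDef, then $n_{i+1} = n_i$ and the very same rule, being a shared axiom of both calculi, applies directly to $D_i'$ inside the mirror diagram. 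If it is \NotDef or \GreyDef used left-to-right then $n_{i+1} = n_i + 1$, and Lemma~\ref{lem:rescaled-derived-generators} furnishes a corresponding $\ZHR$-step via \TwoNot or \TwoX that consumes one $\whitedot$ from the reserve. The right-to-left case is symmetric: reading Lemma~\ref{lem:rescaled-derived-generators} backwards gives the inverse \TwoNot/\TwoX rewrite which returns one $\whitedot$ to the reserve. In every case the invariant is preserved, and concatenating these $\ZHR$-steps produces the desired $\ZHR$-derivation
\[
  D \tensor (\whitedot)^{\tensor m} \; = \; D_0' \tensor (\whitedot)^{\tensor (m-n_0)} \;\longrightarrow\; D_k' \tensor (\whitedot)^{\tensor (m-n_k)} \; = \; N \tensor (\whitedot)^{\tensor m}.
\]

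The main subtlety I anticipate is that Lemma~\ref{lem:rescaled-derived-generators} is phrased between entire diagrams $D_1, D_2$, whereas in the derivation we need to apply \TwoNot or \TwoX locally to a specific occurrence of a grey generator while the relevant $\whitedot$ from the reserve lives elsewhere in the tensor product. This is easily reconciled by invoking the meta-rule that ``only topology matters'', which allows the scalar $\whitedot$ to be slid adjacent to the subdiagram being rewritten, the rule applied, and the result slid back. Beyond this bookkeeping point, the two preceding lemmas do all the essential work and the argument goes through mechanically.
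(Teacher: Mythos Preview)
Your proof is correct and follows essentially the same approach as the paper's: tensor every diagram in the $\ZHRhalf$-derivation with enough copies of $\whitedot$, then replace each use of \NotDef/\GreyDef by its rescaled $\ZHR$ counterpart via Lemma~\ref{lem:rescaled-derived-generators}. Your version is in fact more explicit than the paper's---you spell out the choice $m=\max_i n_i$, the invariant that the mirror diagram at step~$i$ is $D_i'\tensor(\whitedot)^{\tensor(m-n_i)}$, and the justification that $n_0=n_k=0$---whereas the paper simply asserts that one composes with ``the same number of $\whitedot$ generators'' and ``cancel[s] all remaining instances of a $\dstar$ with a $\whitedot$'' without tracking the bookkeeping step by step.
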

\begin{proof} \label{prf:lem:ZHRhalf-nf-derivation-to-ZHR-nf-derivation}
By Lemma~\ref{lem:all-star-boxes-are-arity-zero} there is a \ZHRhalf-derivation from $D$ to $N$
such that every diagram in the chain is a \ZHRstar\ diagram.
As noted in Remark~\ref{rem:grey-spider-unpacking-explicit},
we are keeping the unpacking of grey spiders explicit,
and so applications of the \ZHRhalf{}-rules \NotDef and \GreyDef are explicit steps in this derivation.

We then simultaneously compose every diagram
in the derivation by the same number of $\whitedot$ generators
such that by Lemma~\ref{lem:rescaled-derived-generators} we can replace all applications of \NotDef with \TwoNot
and all applications of \GreyDef with \TwoX,
and then cancel all remaining instances of a $\dstar$ with a $\whitedot$.
This leaves us with a sound \ZHR-derivation of \ZHR\ diagrams from $D \tensor (\whitedot)^{\tensor m}$ to $N \tensor (\whitedot)^{\tensor m}$ for some $m \geq 0$.
\qedhere

\end{proof}

\begin{theorem}
Let $R$ be a commutative unital ring where $2$ is a zero-divisor and $\half\not\in R$. Then the \ZHR-calculus is complete.
In other words, for any \ZHR-diagrams $D_1$ and $D_2$, if $\llbracket D_1 \rrbracket = \llbracket D_2 \rrbracket$
then $D_1$ can be transformed into $D_2$ using the rules of Figure~\ref{fig:ZH-rules}, the rules \TwoNot and \TwoX, and the meta-rule \TwoCancel.
\end{theorem}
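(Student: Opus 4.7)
The plan is to parallel the structure of the preceding theorem in Section~\ref{sec:ring-without-half}: lift the \ZHRhalf{} normal-form derivation back into \ZHR{} by replacing every application of \NotDef{} and \GreyDef{} by its rescaled counterpart \TwoNot{} and \TwoX{}, and then use the meta-rule \TwoCancel{} to eliminate the accumulated $\whitedot$ scalars. Concretely, the first step is to verify that Lemmas~\ref{lem:rescaled-derived-generators}--\ref{lem:ZHRhalf-nf-derivation-to-ZHR-nf-derivation} survive verbatim under the weaker hypothesis on $R$, since each is a purely syntactic statement about derivability and none of them invokes cancellation by $2$. In particular, the inspection of which rules appear in the derivation of Section~\ref{sec:normal-forms} (Lemmas~\ref{lem:leaving-ring-R-for-Rhalf} and~\ref{lem:all-star-boxes-are-arity-zero}) is independent of the ring-theoretic status of $2$, so the lifted derivation $\whitedot^{\tensor m}\otimes D = \whitedot^{\tensor m}\otimes N$ still exists inside \ZHR{} for any \ZHR{}-diagram $D$, with $N$ a normal form of Theorem~\ref{thm:ZHR-universality}.

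Given $D_1,D_2$ with $\llbracket D_1\rrbracket=\llbracket D_2\rrbracket$, I would apply this lifting to both sides to obtain derivable equations $\whitedot^{\tensor m_i}\otimes D_i = \whitedot^{\tensor m_i}\otimes N_i$ in \ZHR{}. Soundness of the non-meta rules, together with uniqueness of the normal form, forces $N_1=N_2$ as diagrams. After padding with additional $\whitedot$ scalars to equalise the exponents at a common $M$, one obtains a derivation of $\whitedot^{\tensor M}\otimes D_1 = \whitedot^{\tensor M}\otimes D_2$ in \ZHR{}. Applying the meta-rule \TwoCancel{} exactly $M$ times then yields $D_1=D_2$.

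The main obstacle is the last step. Under the stated hypothesis that $2$ is a zero-divisor in $R$, the meta-rule \TwoCancel{} is \emph{not} semantically sound: there exist nonzero $a\in R$ with $2a=0$, so a derivable equation $\whitedot\otimes D_1 = \whitedot\otimes D_2$ no longer implies $\llbracket D_1\rrbracket = \llbracket D_2\rrbracket$. The soundness argument invoked in the preceding subsection therefore does not transfer, and the calculus extended by \TwoCancel{} fails to be sound on $R$-\textbf{bit}. I would expect the bulk of the technical work to consist in disentangling syntactic completeness from semantic soundness: since the completeness claim takes $\llbracket D_1\rrbracket = \llbracket D_2\rrbracket$ as a hypothesis, one needs to show only that every application of \TwoCancel{} produced by the lifting above is invoked between diagrams whose semantic equality is already guaranteed, so that the cancellations---although unsound as a general inference schema---are harmless in this particular chain of rewrites. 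Making this precise, and in particular arguing that the $\whitedot$ factors produced by Lemma~\ref{lem:ZHRhalf-nf-derivation-to-ZHR-nf-derivation} always come in matching pairs whose removal preserves the semantic identity under analysis, is the delicate point on which the argument turns.
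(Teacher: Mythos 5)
Your first two paragraphs are essentially the paper's own proof. The paper argues exactly as you do: completeness of \ZHRhalf (Theorem~\ref{thm:ZHRHalf-complete}) yields \ZHRhalf-derivations $D_1 = N$ and $N = D_2$ with a single common normal form $N$ (since $\intf{D_1}=\intf{D_2}$ and normal forms are unique, which is the same observation as your ``$N_1=N_2$'' step); Lemma~\ref{lem:ZHRhalf-nf-derivation-to-ZHR-nf-derivation} lifts these to \ZHR-derivations $(\whitedot)^{\tensor a}\tensor D_1 = (\whitedot)^{\tensor a}\tensor N$ and $(\whitedot)^{\tensor b}\tensor N = (\whitedot)^{\tensor b}\tensor D_2$; both are then padded to a common number $a+b$ of white dots and \TwoCancel is applied to conclude $D_1 = D_2$.

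The ``main obstacle'' of your final paragraph, however, is not a real one: the hypothesis ``$2$ is a zero-divisor'' in the theorem statement is a typo. The standing assumption announced at the start of Section~\ref{sec:ring-without-half}, the soundness remark made when \TwoCancel is introduced, the abstract, and the concluding section all require that $2$ is \emph{not} a zero-divisor, and the paper explicitly excludes the zero-divisor case, leaving it to future work. Under the intended hypothesis \TwoCancel is sound and your argument is already a complete proof; no further work is needed. Two clarifications are worth making. First, even under the literal reading, unsoundness of \TwoCancel would not by itself threaten \emph{completeness}: completeness only asserts that semantic equality implies derivability, and adjoining an extra rule (sound or not) can only enlarge the set of derivable equations. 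Your proposed programme of showing the cancellations are ``harmless'' is aimed at rescuing soundness, which is not what the theorem claims. Second, what genuinely fails when $2$ is a zero-divisor sits earlier in your argument, where you assert the lemmas ``survive verbatim'': in that case $R$ no longer embeds into $R[\half]$ (indeed $R[\half]$ may collapse, even to the zero ring), so the appeal to \ZHRhalf-completeness and the lifting of Lemma~\ref{lem:ZHRhalf-nf-derivation-to-ZHR-nf-derivation} break down --- an $R[\half]$-label only determines an $R$-label up to the kernel of $R \to R[\half]$, so the lifted derivations starting from $D_1$ and from $D_2$ need not terminate in the same \ZHR-diagram. That, and not the status of \TwoCancel, is why the paper restricts to rings in which $2$ is not a zero-divisor.
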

\begin{proof}
The completeness of \ZHRhalf\ gives us the \ZHRhalf-derivations $D_1 = N$ and $N = D_2$.
By Lemma~\ref{lem:ZHRhalf-nf-derivation-to-ZHR-nf-derivation}
there are \ZHR-derivations $(\whitedot)^{\tensor a} \tensor D_1 = (\whitedot)^{\tensor a} \tensor N$
and $(\whitedot)^{\tensor b} \tensor N = (\whitedot)^{\tensor b} \tensor D_2$ for some $a,b\in \N$.
We can therefore derive $(\whitedot)^{\tensor a+b} \tensor D_1 = (\whitedot)^{\tensor a+b} \tensor N = (\whitedot)^{\tensor a+b} \tensor D_2$
in \ZHR\ by multiplying every diagram in the derivations by a suitable number of $\tensor$-products of \whitedot.

By the \TwoCancel meta-rule we can then assert that $D_1 = D_2$ is provable in ZH$_R$.
\end{proof}

We have now shown that \ZHR\ is universal, sound, and complete whenever $R$ is a commutative, unital ring
where $2$ is not a zero-divisor.
In the case $\half \in R$, the calculus and ruleset is very similar to our basic ZH.
When $\half \notin R$ we needed to tweak the generators to avoid using the $\dstar$ generator,
and we needed to introduce a meta-rule for cancelling $\whitedot$ generators.

\section{Modifications to the ZH-calculus}\label{sec:alternative-rules}

In this section, we will discuss a couple of variations on the ZH-calculus. In particular, we present an alternative to \OrthoRule in Section~\ref{sec:o-rule}, and an alternative to \AvgRule and \IntroRule in Section~\ref{sec:alternative-intro-avg}. We show that \HHRule can actually be proved from the other rules in Section~\ref{sec:hh-rule-necessity}. In Section~\ref{sec:tof-had} we present an alternative parameter-free ZH-calculus which is complete over a subring of matrices over $\mathbb{Z}[\frac{1}{\sqrt{2}}]$ which exactly corresponds to those that can be implemented by the Toffoli+Hadamard gate set.

\subsection{Replacing the (o) rule}\label{sec:o-rule}

Of all the basic rules of the ZH-calculus in Figure~\ref{fig:phasefree-rules} the ortho rule \OrthoRule seems a bit out of place. In this section we will see that we can replace it with two other rules, namely Lemmas~\ref{lem:copy-znot-h} and~\ref{lem:dedup}. These two rules can be rephrased as statements about the AND gate.
Namely, up to some simple rewriting, Lemma~\ref{lem:copy-znot-h} says that if we post-select the AND gate with $\bra{1}$ that this post-selection copies through:
\ctikzfig{AND-postselect}
Lemma~\ref{lem:dedup} can be rephrased as stating that a COPY gate followed by an AND gate applied to its outputs is just the identity:
\ctikzfig{AND-COPY-ID}

Lemmas~\ref{lem:copy-znot-h} and~\ref{lem:dedup} show that these two rules can be proved by the basic ZH-calculus rules including \OrthoRule. In this section we will show the converse: that assuming just the rules of Figure~\ref{fig:phasefree-rules} except for \OrthoRule, together with the Lemmas~\ref{lem:copy-znot-h} and~\ref{lem:dedup} we can prove \OrthoRule.
Note that all the basic lemmas of Section~\ref{s:basic-derived} are proven without the usage of \OrthoRule (except for Lemma~\ref{lem:copy-znot-h}), and hence can be used in this section.

Before proving \OrthoRule, we need to prove a number of lemmas that are essentially statements in Boolean logic.

\begin{multicols}{2}
\begin{lemma}\label{lem:o-rule-alt1}$(x\oplus y)\wedge z = (x\wedge z)\oplus (y\wedge z)$:
	\ctikzfig{lem-o-rule-1}
\end{lemma}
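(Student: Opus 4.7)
The plan is to reduce both sides of the equation to a common normal form using the basic rules of Figure~\ref{fig:phasefree-rules} (minus \OrthoRule, since this alternative rule set is what we are currently working with). Since the lemma expresses distributivity of AND over XOR as an equation between Boolean functions, and distributivity is essentially what the \HCompRule bialgebra rule encodes at the level of generators (with Z-spiders implementing COPY and H-boxes providing the AND part), \HCompRule should be the central workhorse.

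First I would unfold both diagrams into their underlying generators: expand the gray X-spider (XOR) using the definition \GreyDef so that it becomes a Z-spider surrounded by Hadamards, and expand the AND gates as H-boxes post-composed with a Hadamard according to \eqref{eq:classical-interpretation}. After this translation, the RHS consists of a Z-spider on the $z$-wire whose two legs enter two H-boxes (one also connected to $x$, the other to $y$), with the AND outputs each passing through a Hadamard before entering a common Z-spider (the unfolded XOR) flanked by further Hadamards. The LHS, on the other hand, has the XOR Z-spider directly feeding into a single H-box together with $z$, followed by a Hadamard output.

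The main step will then be applying \HCompRule to the RHS: the Z-spider on the $z$-wire and the two H-boxes form exactly the pattern of the H-bialgebra, so this rule lets us collapse the pair of H-boxes into a single H-box connected to a Z-spider that now distributes over $x$ and $y$. After this collapse, I would use \HHRule to cancel the Hadamard pairs introduced by the XOR definition, and spider-fuse Z-spiders with \SpiderRule where they now meet along common wires. Lemmas from Section~\ref{s:basic-derived} (such as \XSpiderRule and the commutation lemmas for Hadamards through white spiders) will be handy in sliding Hadamards past Z-spiders so that everything lines up to fuse.

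The main obstacle I expect is the bookkeeping of Hadamards that come out of unfolding the X-spider and the AND gates: there will be several Hadamards distributed across multiple wires, and the proof will only close up once they are paired off cleanly via \HHRule and the Hadamard–spider commutation lemmas. Scalars should not cause issues since Lemma~\ref{lem:scalarcancelstars} and the other scalar-cancellation lemmas of Section~\ref{s:basic-derived} are available without \OrthoRule. Once the Hadamards are tidied up and the bialgebra application is complete, the resulting diagram should coincide with the LHS directly, completing the proof.
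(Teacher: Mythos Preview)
Your plan is correct and essentially follows the paper's route: unfold the grey XOR spider via \GreyDef, cancel the resulting Hadamards against those coming from the AND decomposition using \HHRule, apply \HCompRule as the central structural step to collapse the bipartite Z-spider/H-box pattern, and clean up with spider fusion. The paper carries this out as a single short chain of diagram equalities.
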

\begin{lemma}\label{lem:o-rule-alt2}$(\neg x)\wedge y = (x\wedge y)\oplus y$:
	\ctikzfig{lem-o-rule-2}
\end{lemma}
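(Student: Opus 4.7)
The plan is to deduce Lemma~\ref{lem:o-rule-alt2} from Lemma~\ref{lem:o-rule-alt1} by specialising one XOR input to the constant $\ket 1$, which converts an XOR into a NOT. The Boolean skeleton is transparent: $(\neg x)\wedge y = (x\oplus 1)\wedge y \stackrel{\ref{lem:o-rule-alt1}}{=} (x\wedge y)\oplus(1\wedge y) = (x\wedge y)\oplus y$, and each step has a direct diagrammatic counterpart that avoids \OrthoRule.

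First I would rewrite the NOT on the $x$-input of the LHS as a grey (XOR) spider with an additional leg carrying the state $\ket 1$, using the definition \NotDef of the grey NOT together with spider fusion \SpiderRule. Next I would apply Lemma~\ref{lem:o-rule-alt1} to the resulting XOR-before-AND configuration, producing an XOR of two AND gates sharing the input $y$: one with $x$ and $y$ as inputs, and the other with $\ket 1$ and $y$ as inputs. Finally, I would reduce the ``$1 \wedge y$'' branch to the identity on $y$. Diagrammatically the AND is an H-box followed by a Hadamard, cf.~\eqref{eq:classical-interpretation}; plugging $\ket 1$ into one input of the H-box collapses it (up to a scalar) to a Hadamard on the other input, which then cancels the output Hadamard via \HHRule. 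The remaining scalar bookkeeping is handled by the lemmas of Section~\ref{s:basic-derived} (in particular Lemmas~\ref{lem:scalarcancelstars}--\ref{lem:xnot-h-reduce}), none of which invoke \OrthoRule.

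The main care required is in this last simplification: one must verify that the ``$\ket 1 \wedge y$''-gadget really does reduce to a plain wire using only the rules of Figure~\ref{fig:phasefree-rules} other than \OrthoRule (plus Lemmas~\ref{lem:copy-znot-h} and~\ref{lem:dedup} if convenient). The algebraic structure of the argument is clear, so the only delicate point is auditing each diagrammatic rewrite in the reduction of the constant-$\ket 1$ branch to confirm that no step has implicitly used \OrthoRule.
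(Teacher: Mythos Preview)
Your proposal is correct and takes essentially the same approach as the paper: specialise Lemma~\ref{lem:o-rule-alt1} by feeding the constant $\ket{1}$ (the grey-NOT unit) into one XOR input, then simplify the resulting ``$1\wedge y$'' branch. For that last step the relevant basic lemma is Lemma~\ref{lem:copy-xnot-h}, which removes a $\ket{1}$-input from an H-box and is proved without \OrthoRule; together with \HHRule this collapses the branch to a bare wire, so your audit concern is resolved.
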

\vfill\null
\columnbreak
\begin{lemma}\label{lem:o-rule-alt3}$(\neg x)\wedge x = 0$:
	\ctikzfig{lem-o-rule-3}
\end{lemma}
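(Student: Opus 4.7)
The plan is to chain together Lemma~\ref{lem:o-rule-alt2} with an idempotency-of-AND step and an XOR-self-cancellation step, mirroring the Boolean reasoning $(\neg x)\wedge x = (x\wedge x)\oplus x = x\oplus x = 0$. Concretely, I would first feed the two inputs of Lemma~\ref{lem:o-rule-alt2} from a common Z-spider (so $y$ is identified with $x$). This rewrites the left-hand side of the claim diagrammatically as $(x\wedge x)\oplus x$, where the AND and the XOR are the composite gadgets introduced in~\eqref{eq:classical-interpretation}.

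Next, I would simplify the $(x\wedge x)$ subdiagram using idempotency of AND. This is exactly the reformulation of Lemma~\ref{lem:dedup} emphasised at the start of this subsection: a COPY (Z-spider) followed by an AND on its two outputs is the identity. Applying this rewrite to the copy+AND sub-gadget that arose from the previous step reduces the whole diagram to $x \oplus x$, i.e.\ $x$ copied through a Z-spider and XORed with itself.

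Finally, to conclude that $x\oplus x$ is the zero Boolean function, I would invoke the Hopf rule (Lemma~\ref{lem:hopf-rule}): it disconnects the Z- and X-spiders along the two parallel wires, leaving a Z-unit and an X-counit whose composition is the zero map $0$ appearing on the RHS. Since Lemmas~\ref{lem:o-rule-alt2}, \ref{lem:dedup} and \ref{lem:hopf-rule} have all been derived without invoking \OrthoRule, the overall derivation stays within the target fragment.

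The main obstacle I anticipate is scalar bookkeeping rather than any conceptual difficulty: each application of these lemmas introduces a combination of \dstar\ and Hadamard scalars, so real care will be needed to verify that the stars and $2$-labelled H-boxes on the two sides of the desired equation actually match. I expect this to be handled routinely using the scalar-cancellation lemmas from Section~\ref{s:basic-derived} (in particular Lemmas~\ref{lem:scalarcancelstars} and~\ref{lem:scalarcancelxh}) together with the basic arithmetic of labelled H-boxes from Section~\ref{sec:labelledHboxes}.
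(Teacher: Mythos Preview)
Your proposal is correct and follows essentially the same route as the paper: specialise Lemma~\ref{lem:o-rule-alt2} to $y=x$, collapse $x\wedge x$ to the identity via Lemma~\ref{lem:dedup}, and finish with the Hopf law (Lemma~\ref{lem:hopf-rule}) to turn $x\oplus x$ into the constant-$0$ map. One small slip: after Hopf you are left with a Z-\emph{counit} on the input side and an X-\emph{unit} on the output side (not the other way round), and the scalar residue is handled exactly as you anticipate using the Section~\ref{s:basic-derived} cancellation lemmas.
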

\begin{lemma}\label{lem:o-rule-alt4}$x\vee y = x\oplus (x\wedge y) \oplus y$:
	\ctikzfig{lem-o-rule-4}
\end{lemma}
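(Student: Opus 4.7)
The plan is to derive this as a diagrammatic instantiation of the Boolean chain
\[
x \vee y \;=\; \neg(\neg x \wedge \neg y) \;=\; (\neg x \wedge \neg y)\oplus 1 \;=\; x \oplus (x \wedge y) \oplus y,
\]
using only the basic rules of Figure~\ref{fig:phasefree-rules} without \OrthoRule, together with Lemmas~\ref{lem:copy-znot-h} and~\ref{lem:dedup} and the three preceding Boolean lemmas. In particular, I expect that in this section the OR gate on the LHS is represented (as is standard) as an AND gate flanked by three NOT gates on its inputs and output, obtained by combining the \NotDef definition with the AND-generator of Eq.~\eqref{eq:classical-interpretation}.

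First, I would absorb the NOT on the output of the AND into an XOR with the constant $1$, which is straightforward given the definition \NotDef\ and the XOR interpretation of grey spiders. This leaves a diagram of the form $(\neg x \wedge \neg y)\oplus 1$.

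Second, I would apply Lemma~\ref{lem:o-rule-alt2} to the inner configuration $\neg x \wedge \neg y$, first rewriting it as $(x\wedge \neg y)\oplus \neg y$ and then applying the same lemma a second time to the remaining $\neg y$ inside the AND, yielding $((x\wedge y)\oplus x)\oplus \neg y$. Re-expanding $\neg y$ as $y\oplus 1$ and combining with the outer $\oplus 1$ from step one, the two $1$'s cancel via XOR spider fusion (\SpiderRule applied to grey spiders, using \xspiderrule) and \IDRule. What remains is, up to reordering of XOR inputs (which is free thanks to the symmetry of the grey spider and `only topology matters'), exactly $x \oplus (x\wedge y)\oplus y$, which matches the RHS.

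The main obstacle will be the scalar bookkeeping rather than the logical content: each application of Lemma~\ref{lem:o-rule-alt2} and the NOT/XOR identifications introduces scalar factors (e.g.\ via \notdef, \greydef, and the cancellations in Lemmas~\ref{lem:scalarcancelxh}--\ref{lem:scalarcancelznot}), and these must be tracked carefully so that the LHS and RHS diagrams have matching overall scalars. A secondary subtlety is that the derivation must avoid any hidden use of \OrthoRule\ (since the whole point of this subsection is to derive it); this is guaranteed because all basic derived lemmas of Section~\ref{s:basic-derived} other than Lemma~\ref{lem:copy-znot-h} are \OrthoRule-free, and Lemmas~\ref{lem:o-rule-alt1}--\ref{lem:o-rule-alt3} only rely on those together with the two replacement rules.
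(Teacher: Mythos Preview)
Your approach is correct and matches the paper's proof: start from the De~Morgan representation of OR, apply Lemma~\ref{lem:o-rule-alt2} twice to eliminate the NOTs on the AND inputs, and then cancel the remaining pair of NOTs. The only cosmetic difference is that the paper cancels the outer NOT against the surviving $\neg y$ directly through the grey XOR spider (via Lemma~\ref{lem:x-with-xnot} and Lemma~\ref{lem:xnots-cancel}) rather than expanding both as $\oplus 1$ and cancelling constant states; the logical content and the lemma dependencies are identical, and in particular no use of \OrthoRule is needed.
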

\end{multicols}

\begin{proof}[Proof of Lemma~\ref{lem:o-rule-alt1}]
\[\tikzfig{lem-o-rule-1-pf} \qedhere\]
\end{proof}
\begin{proof}[Proof of Lemma~\ref{lem:o-rule-alt2}]~
\[\tikzfig{lem-o-rule-2-pf} \qedhere\]
\end{proof}
\begin{proof}[Proof of Lemma~\ref{lem:o-rule-alt3}]~
\[\tikzfig{lem-o-rule-3-pf} \qedhere\]
\end{proof}
\begin{proof}[Proof of Lemma~\ref{lem:o-rule-alt4}]~
\[\tikzfig{lem-o-rule-4-pf} \qedhere\]
\end{proof}

\begin{lemma}\label{lem:o-rule-alt5}We have $(x\wedge (\neg y))\vee (y \wedge z) = (x\wedge (\neg y))\oplus (y\wedge z)$:
	\ctikzfig{lem-o-rule-5}
\end{lemma}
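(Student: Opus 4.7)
The plan is to leverage the identity $a \vee b = a \oplus b$ whenever $a \wedge b = 0$. Setting $a := x \wedge \neg y$ and $b := y \wedge z$, we have $a \wedge b = (x \wedge z) \wedge (\neg y \wedge y) = 0$ by Lemma~\ref{lem:o-rule-alt3}, so the desired identity should follow from Lemma~\ref{lem:o-rule-alt4}, which already tells us $a \vee b = a \oplus (a \wedge b) \oplus b$.

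Concretely, I would first apply Lemma~\ref{lem:o-rule-alt4} to the LHS with the above choices of $a$ and $b$, turning the OR node into a XOR-and-AND sandwich with the middle copy being $(x \wedge \neg y) \wedge (y \wedge z)$. Next, using the associativity and commutativity of AND (rule \HFuseRule together with the symmetry properties of H-boxes, up to Hadamards), I would rearrange this middle term to factor out $\neg y \wedge y$ in the centre. Applying Lemma~\ref{lem:o-rule-alt3} then collapses that factor to a Boolean $0$ (i.e.\ a gray $\bra{0}$ effect fed into the remaining AND).

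The remaining step is to show that AND with $0$ yields $0$, after which XOR with $0$ is trivially removed using \SpiderRule (the unit law for XOR, which the gray-spider inherits from \GreyDef). Diagrammatically, absorbing $0$ through AND amounts to showing that a gray-$\bra{0}$ plugged into an H-box (after the Hadamard swap) propagates out through all other wires, which should follow from \HCompRule together with Lemmas~\ref{lem:copy-znot-h} and~\ref{lem:copy-x-h} that govern how states copy through H-boxes and through X-spiders respectively.

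The main obstacle I expect is the bookkeeping at the second step: the input wire $y$ appears in both $a = x \wedge \neg y$ and $b = y \wedge z$, so the rearrangement to $(x \wedge z) \wedge (\neg y \wedge y)$ requires careful use of \SpiderRule to copy and recombine the $y$ wire appropriately, and to route both copies into the central AND. Once the $\neg y \wedge y = 0$ step has been taken, the absorption-by-zero and the unit law for XOR are comparatively mechanical and follow the same template as Lemmas~\ref{lem:o-rule-alt1}--\ref{lem:o-rule-alt4}.
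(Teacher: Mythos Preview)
Your proposal is correct and follows essentially the same route as the paper: apply Lemma~\ref{lem:o-rule-alt4} to expand the OR into $a\oplus(a\wedge b)\oplus b$, then kill the middle term using associativity of AND (via \HFuseRule) and Lemma~\ref{lem:o-rule-alt3}, and finally remove the resulting XOR-with-$0$. The bookkeeping you flag is exactly the work the paper's diagrammatic derivation does; one small correction is that the $0$-absorption step (AND with $0$ yields $0$) is handled by Lemma~\ref{lem:copy-x-h} (the gray unit through an H-box), not Lemma~\ref{lem:copy-znot-h}.
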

\begin{proof}~
	\ctikzfig{lem-o-rule-5-pf}
\end{proof}

\begin{theorem}\label{thm:o-alt}
	In the presence of the other rules of the ZH-calculus, \OrthoRule is equivalent to the union of Lemmas~\ref{lem:copy-znot-h} and~\ref{lem:dedup}.
\end{theorem}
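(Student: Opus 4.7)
The forward direction, that Lemmas~\ref{lem:copy-znot-h} and~\ref{lem:dedup} follow from the rules of Figure~\ref{fig:phasefree-rules}, is already carried out in Section~\ref{s:basic-derived}, so the content of the theorem is the converse: assuming only the rules of Figure~\ref{fig:phasefree-rules} with \OrthoRule removed, and adding Lemmas~\ref{lem:copy-znot-h} and~\ref{lem:dedup} as axioms, we derive \OrthoRule. The first step is a bookkeeping one: I would verify by inspection that every result in Section~\ref{s:basic-derived} used below is proved without \OrthoRule (Lemma~\ref{lem:copy-znot-h} is the only exception, and it is assumed), and that Lemmas~\ref{lem:o-rule-alt1}--\ref{lem:o-rule-alt5} stated in this subsection are also proved solely from those basic derived rules plus Lemmas~\ref{lem:copy-znot-h} and~\ref{lem:dedup}. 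Thus all five Boolean-style identities are available in the restricted setting.

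With those identities in hand, the strategy follows the Boolean motivation of \OrthoRule given in Section~\ref{sec:motivation}. Writing $u := x \wedge \neg y$ and $v := y \wedge z$ for the two AND-shaped subterms appearing in \OrthoRule, the LHS is the diagrammatic predicate \emph{$u$ equals $v$} (an IS-EQUAL cap on the two outputs), while the RHS is the conjunction of the predicates \emph{$u = 0$} and \emph{$v = 0$} (two IS-0 effects). The crucial Boolean fact is that $u$ and $v$ are mutually exclusive, because their conjunction contains a $y \wedge \neg y$ subterm which vanishes by Lemma~\ref{lem:o-rule-alt3}. The plan is therefore to rewrite the IS-EQUAL cap on the LHS as the composite \emph{XOR followed by IS-0} (using \IDRule and the bialgebra-style decomposition of the cap through a white spider), then to invoke Lemma~\ref{lem:o-rule-alt5} to replace the XOR of $u$ and $v$ by their OR, and finally to decompose \emph{IS-0 of OR} into the tensor product of two IS-0 effects using Lemma~\ref{lem:o-rule-alt4} combined with Lemma~\ref{lem:o-rule-alt3}. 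This last decomposition captures the elementary fact that $u \vee v = 0$ iff $u = 0$ and $v = 0$ separately, and brings the LHS into the shape of the RHS of \OrthoRule.

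The hard part will be the middle step, where XOR is replaced by OR on the two AND-outputs, because Lemma~\ref{lem:o-rule-alt5} is stated as an identity between the full six-input diagrams $(x,y,z) \mapsto u \vee v$ and $(x,y,z) \mapsto u \oplus v$, and plugging its proof into a larger rewriting context requires one to first arrange the diagram so that the pattern of Lemma~\ref{lem:o-rule-alt5} is syntactically visible. Concretely, I expect to have to unfold the two AND gadgets via \HCompRule and the definition of NOT, commute the shared $y$-copy past the ANDs using the distributivity Lemma~\ref{lem:o-rule-alt1}, and then fold the H-boxes back into AND gadgets before Lemma~\ref{lem:o-rule-alt5} can be applied. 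Scalar factors introduced by \MultRule and the cancellation lemmas of Section~\ref{s:basic-derived} will also need to be tracked, since \OrthoRule as stated in Figure~\ref{fig:phasefree-rules} has specific global scalars on both sides that must match exactly.
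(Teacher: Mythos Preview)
Your approach is essentially the paper's: both establish the bookkeeping (only Lemma~\ref{lem:copy-znot-h} among the basic derived rules needs \OrthoRule), both route the converse through Lemma~\ref{lem:o-rule-alt5} as the key step, and both finish with a short diagrammatic chain. Two small corrections are worth making.

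First, your justification for step~3 does not work as written. Lemma~\ref{lem:o-rule-alt4} expresses $u\vee v$ as $u\oplus (u\wedge v)\oplus v$, and Lemma~\ref{lem:o-rule-alt3} kills the $u\wedge v$ term in the present context; combining them only re-derives $u\vee v = u\oplus v$, which is Lemma~\ref{lem:o-rule-alt5} itself and is circular with your step~2. What actually splits $\text{IS-0}(u\vee v)$ into $\text{IS-0}(u)\otimes\text{IS-0}(v)$ is the De~Morgan form of OR together with Lemma~\ref{lem:copy-znot-h}: $\text{IS-0}\circ\neg$ is $\text{IS-1}$, and $\text{IS-1}$ applied to an AND copies through to both inputs (this is precisely the content of Lemma~\ref{lem:copy-znot-h}, one of your assumed axioms), after which the two inner NOTs turn each $\text{IS-1}$ back into $\text{IS-0}$.

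Second, your worry about step~2 is misplaced. Lemma~\ref{lem:o-rule-alt5} is already stated for exactly the three-input shape $(x,y,z)\mapsto(x\wedge\neg y)\star(y\wedge z)$ that appears in \OrthoRule once the output wire is bent down to become the middle input (which is how the paper sets things up). No unfolding of AND gadgets or re-folding is needed; the lemma matches directly. A minor bookkeeping point: Lemma~\ref{lem:dedup} is proved in Section~\ref{sec:arithmetic}, not Section~\ref{s:basic-derived}, though this does not affect the argument.
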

\begin{proof}
	The proof of Lemmas~\ref{lem:copy-znot-h} and~\ref{lem:dedup} shows the forward direction: that the standard ruleset of the ZH-calculus, including \OrthoRule, implies the two lemmas.
	For the converse direction, recall that the only one of the basic derived rules in Section~\ref{s:basic-derived} whose proof required \OrthoRule was Lemma~\ref{lem:copy-znot-h}, which is now one of our axioms.
	Note that we have proven Lemma~\ref{lem:o-rule-alt5} using just Lemma~\ref{lem:dedup}, ZH-calculus axioms other than \OrthoRule, and the basic derived rules.
	Proving \OrthoRule from Lemma~\ref{lem:o-rule-alt5}, other ZH axioms, and basic derived rules will therefore give the desired result.
	To make the application of intermediate rules clearer, we re-arrange the diagrams of \OrthoRule so the wire which is usually an output is now the middle input instead:
	\[\tikzfig{ortho-proof} \qedhere\]
\end{proof}

\subsection{The (hh) rule is not necessary}\label{sec:hh-rule-necessity}

It turns out that \HHRule can actually be derived from the other rules. Before proving this, since the proof of Lemma~\ref{lem:scalarcancelstars} uses \HHRule, we will need to find a different way to cancel scalars first. The following turns out to suffice:
\begin{equation}\label{eq:hhh-scalars}
\tikzfig{hhh-scalars}
\end{equation}

We can now prove \HHRule:
\begin{equation}\label{eq:hh-proof}
  \tikzfig{hh-proof}
\end{equation}

The fact that we can derive \HHRule from the other rules raises the question whether any other rules are also not necessary. We hypothesise that they are in fact all needed.
It is at least the case that \SpiderRule and \HHRule are necessary: they are the only rules that relate higher-arity spiders, respectively H-boxes, to lower arity ones (this argument can be formalised by presenting alternative interpretations into linear maps that change what higher-arity spiders/H-boxes correspond to). \IDRule is also necessary as it is the only rule that relates a generator to the identity wire. At least one of \StrongCompRule and \HCompRule is necessary as they are the only ones that relate an empty diagram to a non-empty diagram (the $n=m=0$ case).
We do not know of any argument for the necessity of the other rules \MultRule and \OrthoRule, although it seems likely that they are both necessary.

\subsection{Merging the intro and average rule}\label{sec:alternative-intro-avg}

The intro rule \IntroRule and the average rule \AvgRule can be subsumed by a single larger rule. Before we present this rule, let us first present an alternative to \AvgRule that holds when the ring contains a half.
\begin{lemma}\label{lem:average-true-form}
	Let $R$ be a ring with a half. In the presence of the other rules in Figure~\ref{fig:ZH-rules}, \AvgRule is equivalent in \ZHR\ to the following rule:
	\begin{equation}\label{eq:average-rule-orig}
	\tikzfig{average-rule}
	\end{equation}
\end{lemma}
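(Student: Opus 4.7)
The two rules share the same left-hand side, so their equivalence in \ZHR\ reduces to showing that their right-hand sides are provably equal using the remaining rules of Figure~\ref{fig:ZH-rules} together with the assumption that $\half\in R$. Once this diagrammatic equality is in hand, each rule follows from the other by a single rewrite step composed with the assumed axiom.

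The right-hand side of \AvgRule features an H-box labelled $a+b$ combined with a NOT-2 scalar gadget whose interpretation is $\half$, while that of \eqref{eq:average-rule-orig} is a single H-box labelled $(a+b)/2$. The plan is to absorb the NOT-2 scalar into the adjacent H-box via a combination of the intro and multiply rules. First I would identify the NOT-2 scalar with $\scalar{hadamard}{\half}$, using Lemma~\ref{lem:two-as-diagram} together with \MultPhaseRule (specifically $M_{2,\half}=1$) and \UnitRule from Eq.~\eqref{eq:scalar-cancel}; this is essentially the same manoeuvre already used in Lemma~\ref{lem:star-is-half} to equate $\dstar$ with $\scalar{hadamard}{\half}$. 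Second, I would apply the !-boxed intro rule (Lemma~\ref{lem:intro-bangboxed}) instantiated with $I_{\half}$ to grow the scalar $H_0(\half)$ into an $H_n(\half)$ attached to the same $n$ white spiders that support $H_n(a+b)$. Third, I would apply the !-boxed multiply rule (Lemma~\ref{prop:multiply-bb}) with $M_{a+b,\half}$ to fuse the two H-boxes into $H_n((a+b)/2)$, matching the right-hand side of \eqref{eq:average-rule-orig}. The converse direction runs these steps in reverse: split $H_n((a+b)/2)$ into $H_n(a+b)$ and $H_n(\half)$ on the same spiders via $M_{a+b,\half}$ read backwards, detach $H_n(\half)$ via reverse intro, and rewrite the resulting scalar back into NOT-2 form.

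The main technical obstacle is the careful bookkeeping of the auxiliary white spiders and scalar factors introduced and absorbed by repeated applications of \IntroRule and \MultPhaseRule; in particular, growing a 0-arity scalar into an $n$-ary H-box using \IntroRule produces stray $\whiteunit$ and trivial $H(1)$ factors that must be matched against the existing supporting spiders of $H_n(a+b)$. Because $\half\in R$, the scalar identities of Eq.~\eqref{eq:scalar-cancel} combined with Lemma~\ref{lem:two-scalar} provide the flexibility to cancel any mismatched factors, so no further axioms are needed; nevertheless, this bookkeeping is where the real technical effort lies.
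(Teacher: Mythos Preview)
Your high-level reduction is right: since the two rules share their left-hand side, showing the right-hand sides equal \emph{using only the other axioms} would establish the equivalence in one shot. The gap is that your proposed derivation of this equality does not stay inside ``the other axioms''. You invoke Lemma~\ref{lem:two-as-diagram} to identify the ring-element $H(2)$ with its phase-free construction, but in \ZHR\ that identification is itself proved via \AvgRule (it is the $n=2$ instance of Proposition~\ref{prop:ZH-and-ZHR-same-labels}, whose induction step uses \AvgRule explicitly). So your chain actually establishes only $\AvgRule\Rightarrow\eqref{eq:average-rule-orig}$, and ``run these steps in reverse'' is not a valid argument for the converse: reversing a derivation that uses \AvgRule still uses \AvgRule. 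The paper runs into the same obstruction---its forward argument goes through Lemma~\ref{lem:two-cancel}, which depends on \AvgRule---and for that reason supplies a \emph{separate} converse proof that carefully avoids \AvgRule, relying instead on a phase-labelled generalisation of Lemma~\ref{lem:copy-xnot-h} and other basic lemmas whose antecedents are \AvgRule-free.

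A smaller point: your step~2 (growing a scalar $H_0(\tfrac12)$ via \IntroRule) indicates a misreading of the right-hand side of \AvgRule. The NOT-$H(2)$ gadget there is not a $0$-arity scalar but a unary box already attached to the same white spider as $H(a+b)$; if it were a global scalar $\tfrac12$ the rule would not be sound, since $\tfrac12\cdot(1,a+b)^T\neq(1,\tfrac{a+b}{2})^T$. Once you see this, the merge is a single application of \MultPhaseRule after rewriting NOT-$H(2)$ to $H(\tfrac12)$, with no intro step needed---which is exactly what the paper does for the forward direction.
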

\begin{proof}
	As the LHS of \AvgRule agrees with the LHS of \eqref{eq:average-rule-orig}, it suffices to show that the RHS of both are equal.
	Note first that:
	\begin{equation}\label{eq:not-two-is-half}
		\tikzfig{not-two-is-half}
	\end{equation}
	Hence:
	\ctikzfig{average-two-forms}
	Note that this only shows that \AvgRule implies \eqref{eq:average-rule-orig}, as we proved Lemma~\ref{lem:two-cancel} using \AvgRule.

	For the converse direction, we have
	\ctikzfig{average-prf-rev}
	which implicitly uses a straightforward generalisation of Lemma~\ref{lem:copy-xnot-h} to H-boxes with arities other than 3 and with phase labels.
	None of the lemmas invoked in this proof, nor their antecedents, rely on \AvgRule.
\end{proof}

\begin{theorem}[\cite{renaud}]\label{thm:renaud}
	Let $R$ be a ring with a half. In the presence of the other rules of Figure~\ref{fig:ZH-rules}, the combination of the rules  \IntroRule and \AvgRule is equivalent to the following rule:
	\begin{equation}\label{eq:average-renaud}
		\tikzfig{average-renaud}
	\end{equation}
\end{theorem}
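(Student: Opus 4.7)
The plan is to establish the two implications separately: the forward direction that \IntroRule together with \AvgRule suffices to derive \eqref{eq:average-renaud}, and the backward direction that \eqref{eq:average-renaud} alone recovers both \IntroRule and \AvgRule.

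For the forward direction, the strategy is to start from the left-hand side of \eqref{eq:average-renaud} and apply \AvgRule to fuse the two H-boxes $H(a)$ and $H(b)$ into a single H-box $H(a+b)$ attached to the characteristic $H(-2)$ scaling structure. Then I would absorb $H(-2)$ into $H(a+b)$ using \MultPhaseRule in conjunction with the identity \eqref{eq:not-two-is-half} (which relates $H(-2)$ to $H(\half)$), obtaining an H-box with label $\half(a+b)$ that matches the right-hand side of \eqref{eq:average-renaud}. Any remaining discrepancy in wire topology between the two sides should be resolved by invoking \IntroRule to add or reorganise the extra wire.

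For the backward direction, I would recover each rule in turn. The rule \IntroRule appears as the specialisation $a=b$: since $\half(a+a)=a$ in the ring $R$, the right-hand side of \eqref{eq:average-renaud} collapses to $H(a)$ equipped with exactly the extra wire structure that \IntroRule asserts. Having thus derived \IntroRule, I would derive \AvgRule by first establishing its reformulation \eqref{eq:average-rule-orig}, whose right-hand side already carries label $\half(a+b)$; this follows directly from \eqref{eq:average-renaud}, possibly after wire adjustments using the newly-available \IntroRule. Finally I would invoke Lemma~\ref{lem:average-true-form} to pass from \eqref{eq:average-rule-orig} to \AvgRule, which is legitimate because the proof of that direction of Lemma~\ref{lem:average-true-form} does not itself rely on \AvgRule (as explicitly noted there).

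The main obstacle is bookkeeping of wire topology and scalar factors across the three rules. \IntroRule, \AvgRule, and \eqref{eq:average-renaud} each involve slightly different arrangements of white spiders, grey NOT edges, and H-boxes, and reconciling these requires careful use of the basic derived rules of Section~\ref{s:basic-derived}. A secondary subtlety is that the specialisation $a=b$ used to obtain \IntroRule must be checked to produce the correct wire configuration and not just the correct label, and the chain of dependencies in the backward direction (using \IntroRule to derive \AvgRule) must be verified to avoid circularity with the auxiliary lemmas invoked along the way.
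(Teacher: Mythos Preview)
Your proposal is plausible in outline but diverges from the paper's argument in both directions, and one of your steps is underspecified.

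For the forward direction, the paper does not carry out an explicit derivation at all: it simply observes that \eqref{eq:average-renaud} is sound and that, with \IntroRule and \AvgRule present, the full ruleset of Figure~\ref{fig:ZH-rules} is available and complete, so \eqref{eq:average-renaud} is derivable (alternatively, one works in $R[a,b]$). Your explicit chain via \AvgRule, \MultPhaseRule, \eqref{eq:not-two-is-half}, and \IntroRule is a reasonable constructive alternative, though you would still need to check the wire bookkeeping carefully.

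For the backward direction the divergence is more substantial. The paper does \emph{not} specialise to $a=b$ to recover \IntroRule; it sets $b=0$. More importantly, the paper recovers \eqref{eq:average-rule-orig} \emph{before} recovering \IntroRule, and it does so by a trick you do not mention: plugging a white dot into the top wire of \eqref{eq:average-renaud} and then performing the inverse substitution $a:=\tfrac{a'+b'}{2}$, $b:=\tfrac{a'-b'}{2}$, which turns the labels on the right-hand side back into $a'$ and $b'$. Your plan instead derives \IntroRule first and then says \eqref{eq:average-rule-orig} ``follows directly from \eqref{eq:average-renaud}, possibly after wire adjustments using the newly-available \IntroRule''. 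That step is the real content of the backward direction, and without the substitution trick it is not clear how you intend to get from labels $\tfrac{a+b}{2},\tfrac{a-b}{2}$ back to arbitrary $a',b'$. Once \eqref{eq:average-rule-orig} is in hand, both you and the paper finish the same way via Lemma~\ref{lem:average-true-form}. So your overall architecture is workable, but the paper's inversion substitution is the missing concrete ingredient in your derivation of \AvgRule.
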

\begin{proof}
	As the \ZHR-calculus is complete with the rules presented in Figure~\ref{fig:ZH-rules}, and \eqref{eq:average-renaud} is easily shown to be sound for any $a$ and $b$, the equation \eqref{eq:average-renaud} can be proven using \IntroRule and \AvgRule for any particular choice of $a$ and $b$. Alternatively, by using the ZH-calculus over the polynomial ring in two variables $R[a,b]$, there is also a proof of it using the variables $a$ and $b$ directly.

	It hence remains to show that \eqref{eq:average-renaud} together with (Figure~\ref{fig:ZH-rules})${}\setminus\{\AvgRule,\IntroRule\}$ implies both \IntroRule and \AvgRule.
	We will first prove~\eqref{eq:average-rule-orig}.
  By plugging a white dot into the top wire of \eqref{eq:average-renaud} we get this, except for a scalar factor that will be shown to cancel afterwards:
  \begin{equation}\label{eq:renaud-to-average}
  \scalebox{0.9}{\tikzfig{renaud-to-average}}
  \end{equation}
	Bending the wires up, and taking $a:=\frac{a'+b'}{2}$ and $b:=\frac{a'-b'}{2}$ gives \eqref{eq:average-rule-orig} for $a'$ and $b'$ (up to scalar factors).
  Using this we note that:
	\begin{equation}\label{eq:zero-to-grey}
	 \tikzfig{zero-to-grey}
	\end{equation}
  We can use this to show that the scalar involving $H(0)$ cancels:
  \begin{equation}\label{eq:zero-scalar-cancel}
    \tikzfig{zero-scalar-cancel}
  \end{equation}
  Hence, \eqref{eq:renaud-to-average} and \eqref{eq:zero-to-grey} reduce to what we would expect. We still need to show that a scalar white dot is equal to $H(2)$.
  First we need to know how to decompose $H(\frac12)$:
  \begin{equation}\label{eq:half-to-hboxes}
    \tikzfig{half-to-hboxes}
  \end{equation}
  Now proving the equivalent of Lemma~\ref{lem:two-scalar} is straightforward:
  \ctikzfig{two-scalar-cancel-reprise}
  Combining~\eqref{eq:renaud-to-average} with this now gives us~\eqref{eq:average-rule-orig}.

	Finally, by taking $b=0$ in \eqref{eq:average-renaud}, we can derive \IntroRule:
	\[\scalebox{0.9}{\tikzfig{renaud-to-intro}}\]
	We can now apply the argument of Lemma~\ref{lem:average-true-form} to get \AvgRule.
\end{proof}

\subsection{ZH-calculus as a complete language for Toffoli+Hadamard circuits}\label{sec:tof-had}

The ZH-calculus presented in Section~\ref{sec:phase-free-ZH} is universal and complete for matrices over the ring $\mathbb{Z}[\half]$.
As shown in~\cite{Amy2020numbertheoretic}, the unitary matrices over $\mathbb{Z}[\half]$ correspond to circuits generated by Toffoli and $H\otimes H$, or in other words, circuits consisting of Toffoli gates and an even number of Hadamard gates.
The reason these matrices, and thus the ZH-calculus, are restricted to an even number of Hadamard gates is because of the normalisation of the individual binary H-boxes:
we have $\intf{\,\tikzfig{hada-phase}\,} = \left(\begin{smallmatrix}1&1\\1&-1\end{smallmatrix}\right)$, which differs from the unitary Hadamard gate by a factor of $\frac{1}{\sqrt{2}}$.
With all generators being interpreted as matrices over $\mathbb{Z}[\half]$, a diagram containing an odd number of binary H-boxes can hence not be normalised to be unitary.

We chose to define the interpretation of the \tikzfig{star} generator to be $\half$, because it led to the simplest possible calculus. By redefining the interpretation of this generator, we can make ZH-diagrams represent matrices $M=(\frac{1}{\sqrt{2}})^n A$ where $A$ is a matrix over $\mathbb{Z}$, which correspond precisely to the matrices generated by the Toffoli+Hadamard gate set~\cite{Amy2020numbertheoretic}.
We do this by redefining the \tikzfig{star} generator to represent $\frac{1}{\sqrt{2}}$ instead of $\frac{1}{2}$. By replacing every occurrence of \tikzfig{star} by two stars in the definitions of Section~\ref{sec:phase-free-ZH}, and in all the derivations of the preceding sections, all the derivations remain sound under this new interpretation.
In particular, the derivation that each diagram can be reduced to normal form remains true.
For almost any diagram we can also bring the diagram to reduced normal form, with one class of exceptions: if the diagram represents the zero matrix then the reduced normal form should contain no stars. However,
when we reinterpret the star (and thus double the number of occurrences in the rewrite rules) we see that all rewrite rules preserve the parity of the number of stars in the diagram. Hence, the following equality that says that $\frac{1}{\sqrt{2}}\cdot 0 = 0$ is not derivable with those rules:
\begin{equation}
  \text{Using $\intf{\tikzfig{star}} = \frac{1}{\sqrt{2}}$} \qquad
\tikzfig{star-zero-rule}
\end{equation}
Adding this as another rewrite rule solves this edge case and we still get a complete calculus,
but now for matrices of the form $(\frac{1}{\sqrt{2}})^n A$ where $A$ only contains integers.

\section{Conclusion}\label{sec:conclusion}

We studied the ZH-calculus, a graphical language for reasoning about qubit quantum computing. We found a small set of rewrite rules motivated by basic identities between Boolean functions that allowed us to prove completeness of the fragment that corresponds to the Toffoli-Hadamard gateset.
We then extended the ZH-calculus so that it can represent matrices over arbitrary rings where $2$ is not a zero divisor. We found an extended ruleset that is complete for any such ring.
We have argued that our calculus, both the parameter-free one, as well as the one complete over any ring, is the simplest complete graphical calculus for an approximately universal fragment of quantum computing found so far.

Of the rules of the ZH-calculus as presented in Figure~\ref{fig:phasefree-rules}, we showed that \HHRule is actually not necessary, as it can be proven from the others. We believe that the rest are all necessary for completeness.
In Section~\ref{sec:o-rule} we showed that \OrthoRule can be replaced by two smaller and simpler to understand rules. It might perhaps be possible to show that only one of these suffices, which would give us a simpler axiomatisation.
For the \ZHR-calculus we required a meta-rule to cancel scalars when $\half\not\in R$. It is not clear at the moment whether the calculus is complete without this meta-rule.

The phase-free calculus with its translation to Boolean functions elucidates the relationship between classical computation and universal quantum computation using the Toffoli-Hadamard gate set.
Correspondingly, the ZH-calculus over arbitrary rings should be useful for analysing universal quantum computation with multiply-controlled gates as well as quantum computation or error-correction schemes involving hypergraph states.
The Fourier-transform relationship between the ZH-calculus over $\mathbb{C}$ and the ZX-calculus~\cite{GraphicalFourier2019} also points towards the usefulness of combining the ZX- and ZH-calculus, and translating between them.
Indeed,
\cite{east2020akltstates} 
already exhibits a combined calculus called the ZXH-calculus.
Whether on its own or in combination with other graphical languages, the ZH-calculus has applications in a broad range of areas of quantum computing.

\paragraph{Acknowledgements} The authors wish to thank Renaud Vilmart for pointing out \eqref{eq:average-renaud} as an alternative to the intro and average rule, and Patrick Roy for pointing out the derivation~\eqref{eq:hh-proof} that showed that \HHRule is superfluous. The authors would also like to extend their gratitude to the anonymous reviewers that read the paper closely and found several small errors and oversights that we could subsequently fix.
JvdW and AK are supported in part by AFOSR grant FA2386-18-1-4028. For the majority of the work in this article, JvdW is supported by a Rubicon fellowship financed by the Dutch Research Council (NWO).
Additionally, JvdW acknowledges that this project has received funding from the European Union's Horizon 2020 research and innovation programme under the Marie Sklodowska-Curie grant agreement No 101018390.

\bibliographystyle{plainnat}
\bibliography{extract}

\end{document}